\providecommand{\remove}[1]{}
\newcommand{\Draft}[1]{\ifdefined\IsDraft\texttt{ #1} \fi}
    \newcommand{\authnote}[2]{{\bf [{\color{red} #1's Note:} {\color{blue} #2}]}}
    \newcommand{\authnote}[2]{}
\newcommand{\mparagraph}[1]{\paragraph{#1.}}
\newcommand{\sdotfill}{\textcolor[rgb]{0.8,0.8,0.8}{\dotfill}} 
\newenvironment{protocol}{\begin{proto}}{\end{proto}}
\newenvironment{algorithm}{\begin{algo}}{\vspace{-\topsep}\end{algo}}
\newenvironment{experiment}{\begin{expr}}{\vspace{-\topsep}\sdotfill\end{expr}}
\newcommand{\aka} {also known as\ }
\newcommand{\resp}{resp.,\ }
\newcommand{\ie}  {i.e.,\ }
\newcommand{\eg}  {e.g.,\ }
\newcommand{\wrt} {with respect to\ }
\newcommand{\wlg} {without loss of generality\ }
\newcommand{\cf}{{cf.,\ }}
\newcommand{\abs}[1]{\left\lvert #1 \right\rvert}
\newcommand{\ceil}[1]{\left\lceil #1 \right\rceil}
\newcommand{\set}[1]{\ens{#1}}
\newcommand{\floor}[1]{\left \lfloor#1 \right \rfloor}
\newcommand{\eqdef}{:=}
\newcommand{\R}{{\mathbb R}}
\newcommand{\N}{{\mathbb{N}}}
\newcommand{\Z}{{\mathbb Z}}
\newcommand{\zo}{\set{0,1}}
\newcommand{\oo}{\set{-1,1}}
\newcommand{\xor}{\oplus}
\newcommand{\eps}{\varepsilon}
\newcommand{\la}{\gets}
\newcommand{\poly}{\operatorname{poly}}
\newcommand{\Exp}{\operatorname*{E}}
\newcommand{\negl}{\operatorname{neg}}
\newcommand{\Supp}{\operatorname{Supp}}
\newcommand{\MathAlg}[1]{\mathsf{#1}}
\newcommand{\sign}{\MathAlg{sign}}
\renewcommand{\cref}{\Cref}
\newtheorem{theorem}{Theorem}[section]
\newaliascnt{lemma}{theorem}
\newtheorem{lemma}[lemma]{Lemma}
\crefname{lemma}{Lemma}{Lemmas}
\newaliascnt{claim}{theorem}
\newtheorem{claim}[claim]{Claim}
\crefname{claim}{Claim}{Claims}
\newaliascnt{corollary}{theorem}
\crefname{corollary}{Corollary}{Corollaries}
\newaliascnt{construction}{theorem}
\crefname{construction}{Construction}{Constructions}
\newaliascnt{fact}{theorem}
\newtheorem{fact}[fact]{Fact}
\crefname{fact}{Fact}{Facts}
\newaliascnt{proposition}{theorem}
\newtheorem{proposition}[proposition]{Proposition}
\crefname{proposition}{Proposition}{Propositions}
\newaliascnt{conjecture}{theorem}
\crefname{conjecture}{Conjecture}{Conjectures}
\newaliascnt{definition}{theorem}
\newtheorem{definition}[definition]{Definition}
\crefname{definition}{Definition}{Definitions}
\newaliascnt{remark}{theorem}
\crefname{remark}{Remark}{Remarks}
\newaliascnt{notation}{theorem}
\crefname{notation}{Notation}{Notation}
\newaliascnt{proto}{theorem}
\newtheorem{proto}[proto]{Protocol}
\crefname{proto}{protocol}{protocols}
\newaliascnt{algo}{theorem}
\newtheorem{algo}[algo]{Algorithm}
\crefname{algo}{algorithm}{algorithms}
\newaliascnt{expr}{theorem}
\newtheorem{expr}[expr]{Experiment}
\crefname{experiment}{experiment}{experiments}
\newcommand{\Stepref}[1]{Step~\ref{#1}}
\def\FullBox{$\Box$}
\def\qed{\ifmmode\qquad\FullBox\else{\unskip\nobreak\hfil
\penalty50\hskip1em\null\nobreak\hfil\FullBox
\parfillskip=0pt\finalhyphendemerits=0\endgraf}\fi}
\def\qedsketch{\ifmmode\Box\else{\unskip\nobreak\hfil
\penalty50\hskip1em\null\nobreak\hfil$\Box$
\parfillskip=0pt\finalhyphendemerits=0\endgraf}\fi}
\newcommand{\Ex}{{\mathrm E}}
\newcommand{\ex}[2]{\Exp_{#1}\left[#2\right]}
\newcommand{\eex}[1]{\ex{}{#1}}
\renewcommand{\Pr}{{\mathrm {Pr}}}
\newcommand{\ppr}[2]{\Pr_{#1}\left[#2\right]}
\newcommand{\pr}[1]{\ppr{}{#1}}
\newcommand{\Ac}{\mathsf{A}}
\newcommand{\Bc}{\mathsf{B}}
\newcommand{\Cc}{\mathsf{C}}
\newcommand{\Pc}{{\mathsf{P}}}
\newcommand{\Af}{{\mathbb{A}}}
\newcommand{\cC}{{\mathcal{C}}}
\newcommand{\ens}[1]{\{#1\}}
\newcommand{\size}[1]{\left|#1\right|}
\newcommand{\out}{{\operatorname{out}}}
\newcommand{\Uni}{{\mathord{\mathcal{U}}}}
\newcommand{\prob}[1]{\mathsf{\textsc{#1}}}
\newcommand{\SD}{\prob{SD}}
\newcommand{\cW}{{\cal{W}}}
\newcommand{\pptm}{{\sc pptm}\xspace}
\newcommand{\ppt}{{\sc ppt}\xspace}
\newcommand{\cH}{{\cal{H}}}
\newcommand{\cs}{{\cal{S}}}
\newcommand{\cY}{{\cal{Y}}}
\newcommand{\cX}{{\cal{X}}}
\newcommand{\cI}{{\cal{I}}}
\newcommand{\Tableofcontents}{
\thispagestyle{empty}
\pagenumbering{gobble}
\clearpage
\tableofcontents
\thispagestyle{empty}
\clearpage
\pagenumbering{arabic}
}
\newcommand{\vect}[1]{{ \bf #1}}
\newcommand{\Mat}[1]{{ \bf #1}}
\newcommand{\zot}{\set{0,1,2}}
\newcommand{\OT}{\ensuremath{\operatorname{OT}}\xspace}
\newcommand{\CoinToss}{\ensuremath{\MathAlg{CoinFlip}}\xspace}
\newcommand{\Aadv}{{\Ac}}
\newcommand{\iAadv}{{\Af}}
\newcommand{\Abort}{\ensuremath{\MathAlg{Abort}}\xspace}
\newcommand{\Ideal}{\operatorname{IDEAL}}
\newcommand{\Real}{\operatorname{REAL}}
\newcommand{\BerooSign}{{\mathcal{C}}}
\newcommand{\Beroo}[1]{{\BerooSign_{#1}}}
\newcommand{\vBeroo}[1]{{\widehat{\BerooSign}_{#1}}}
\newcommand{\sBias}[2]{\widehat{\BerooSign}^{-1}_{#1}(#2)}
\newcommand{\BerzoSign}{{\mathcal{B}er}}
\newcommand{\Berzo}[1]{\BerzoSign(#1)}
\newcommand{\sss}{{\mathsf{sum}}}
\newcommand{\xs}[2]{{\sss_{#1}(#2)}}
\newcommand{\xl}[2]{{\ell_{#1}(#2)}}
\newcommand{\ms}[1]{\xs{\rnd}{#1}}
\newcommand{\ml}[1]{\xl{\rnd}{#1}}
\newcommand{\NS}[1]{{\xs{n}{#1}}}
\newcommand{\NL}[1]{\xl{n}{#1}}
\newcommand{\HypSign}{\mathcal{HG}}
\newcommand{\Hyp}[1]{{\HypSign_{#1}}}
\newcommand{\vHyp}[1]{{\widehat{\HypSign}_{#1}}}
\newcommand{\w}{w}
\newcommand{\ist}{{i^\ast}}
\newcommand{\ShareGen}{\MathAlg{SharesGen}}
\newcommand{\TwoShareGen}{\MathAlg{TwoPartySharesGen}}
\newcommand{\HidTwoShareGen}{\MathAlg{HidTwoPartySharesGen}}
\newcommand{\ThreeShareGen}{\MathAlg{ThreePartySharesGen}}
\newcommand{\bias}{\MathAlg{Bias}}
\newcommand{\val}{\MathAlg{val}}
\newcommand{\eo}{\MathAlg{o}}
\newcommand{\veo}{O}
\newcommand{\ShVecT}[3]{\vect{#1^{#2,\# #3}}}
\newcommand{\ShVecO}[2]{\vect{#1^{\# #2}}}
\newcommand{\defn}[2]{{d^{#1}_{#2}}}
\newcommand{\sVS}[1]{\ShVecO{s}{#1}}
\newcommand{\SSS}[1]{\Mat{S^{#1}}}
\newcommand{\cVS}[1]{\ShVecO{c}{#1}}
\newcommand{\dVS}[2]{\ShVecT{d}{#1}{#2}}
\newcommand{\tp}[1]{\ceil{\log #1}}
\newcommand{\Sh}{S}
\newcommand{\sh}{s}
\newcommand{\advPartNum}{\ell}
\newcommand{\partNum}{t}
\newcommand{\secParam}{\kappa}
\newcommand{\rnd}{m}
\newcommand{\game}{\mathsf{G}}
\newcommand{\ratioo}{\MathAlg{ratio}}
\newcommand{\error}{\MathAlg{error}}
\newcommand{\ofs}{t}
\newcommand{\trnd}{{\widetilde{\rnd}}}
\newcommand{\z}{z}
\newcommand{\inn}{\mathsf{inner}}
\newcommand{\outt}{\mathsf{outer}}
\newcommand{\Pitwo}{\Pi^2}
\newcommand{\Pithree}{\Pi^3}
\newcommand{\Ptwo}{\Pc^2}
\newcommand{\Pthree}{\Pc^3}
\newcommand{\Ph}{{\widehat{\Pc}}}
\newcommand{\Pih}{{\widehat{\Pi}}}
\newcommand{\PifTwo}{\Pih^2}
\newcommand{\PifThree}{\Pih^3}
\newcommand{\PhTwo}{\Ph^2}
\newcommand{\PhThree}{\Ph^3}
\newcommand{\PiffTwo}{\widetilde{\Pi^2}}
\newcommand{\Pif}{{\widehat{\Pi}}}
\newcommand{\ovr}{{\textbf{r}}}
\newcommand{\aux}{H}
\newcommand{\aset}{\cH}
\newcommand{\paux}{h}
\newcommand{\const}{\lambda}
\newcommand{\xconst}{\xi}
\newcommand{\vct}{v}
\newcommand{\gameVars}{\set{X_i,Y_i,\aux_i,\veo_i,\veo_i^-}}
\newcommand{\orac}{\tilde{o}}
\newcommand{\MNS}{\mathsf{MNS}}
\title{An Almost-Optimally Fair Three-Party Coin-Flipping Protocol\thanks{The full version was published in the SIAM Journal on Computing 2017 \cite{HaitnerT17}. An extended abstract of this work appeared in the Annual Symposium on the Theory of Computing 2014 \cite{HaitnerT14}. \cite{HaitnerT14} claims the existence of an $\frac{O(\log^2 \rnd)}\rnd$-bias protocol, compared to $\frac{O(\log^3 \rnd)}\rnd$-bias stated here, but the proof for this bias was flawed.}
\remove{\Draft{\\{\small \sc Working Draft: Please Do Not Distribute}}}}
\author{Iftach Haitner\thanks{School of Computer Science, Tel Aviv University. E-mail:
 \texttt{iftachh@cs.tau.ac.il,eliadtsf@tau.ac.il}. Research supported by ERC starting grant 638121, ISF grant 1076/11, the Israeli Centers of Research Excellence (I-CORE) program (Center  No. 4/11), US-Israel BSF grant 2010196  and Check Point Institute for Information Security.} \and Eliad Tsfadia$^{\dagger}$}
\begin{document}
\sloppy
\maketitle

\begin{abstract}
In a multiparty \emph{fair} coin-flipping protocol, the parties output a common (close to) unbiased  bit, even when some corrupted parties try to bias the output. \citeauthor{Cleve86} [STOC 1986] has shown that in the case of dishonest majority (\ie  at least half
of the parties can be corrupted), in \emph{any} $\rnd$-round coin-flipping protocol the corrupted parties can bias the honest parties' common output bit by $\Omega(\frac 1{\rnd})$. For more than two decades the best known coin-flipping protocols against dishonest majority had bias $\Theta(\frac {\advPartNum}{\sqrt{\rnd}})$, where $\advPartNum$ is the number of corrupted parties. This was changed by a recent breakthrough result of \citeauthor{MoranNS09} [TCC 2009], who constructed an $\rnd$-round, \emph{two}-party coin-flipping protocol with optimal bias  $\Theta(\frac 1 \rnd)$. In a subsequent work, \citeauthor{BeimelOO10} [Crypto 2010] extended this result to the multiparty case in which \emph{less than} $\frac23$ of the parties can be corrupted. Still for the case of $\frac23$ (or more) corrupted parties, the best known protocol had bias $\Theta(\frac {\advPartNum}{\sqrt{\rnd}})$. In particular, this was the state of affairs for the natural three-party case.

We make a step towards eliminating the above gap, presenting an $\rnd$-round,  three-party coin-flipping protocol, with bias $\frac{O(\log^3 \rnd)}\rnd$. Our approach (which we also apply for the two-party case) does not follow the ``threshold round"  paradigm used in the work of  \citeauthor{MoranNS09} and \citeauthor{BeimelOO10}, but rather is a variation of the majority protocol of \citeauthor{Cleve86}, used to obtain the aforementioned  $\Theta(\frac {\advPartNum}{\sqrt{\rnd}})$-bias protocol.
\end{abstract}

\noindent\textbf{Keywords:} coin-flipping; protocols; fairness; fair computation

\Tableofcontents

\section{Introduction}
In a multi-party \emph{fair} coin-flipping (-tossing) protocol, the parties output a common (close to) unbiased bit, even though some corrupted parties try to bias the output. More formally, such protocols should satisfy the following two properties: first, when all parties are honest (\ie follow the prescribed protocol), they all output the \emph{same} bit, and this bit is unbiased (\ie uniform over $\zo$). Second, even when some parties are corrupted (\ie collude and arbitrarily deviate from the protocol), the remaining parties should still output the \emph{same} bit, and this bit should not be too biased (\ie its distribution should be close to uniform over $\zo$). We emphasize that, unlike weaker variants of coin-flipping protocol known in the literature, the honest parties should output a common bit, regardless of what the corrupted parties do. In particular, they are not allowed to abort if a cheat was noticed.

When a majority of the parties are honest, efficient and \emph{completely} fair coin-flipping protocols are known as a special case of secure multi-party computation with an honest majority \cite{BenGolWig88}.\footnote{Throughout, we assume a broadcast channel is available to the parties. By \cite{CohenHOR2016}, broadcast channel is necessary for fair coin-flipping protocol secure against third, or more, corruptions.} When an honest majority is not guaranteed, however, the situation is more complex.

\mparagraph{Negative results}
\citet{Cleve86} showed that for \emph{any} efficient two-party $\rnd$-round  coin-flipping protocol, there exists an efficient adversary to bias the output of the honest party by $\Theta(1/\rnd)$, and that the  lower bound extends to the multi-party case via a simple reduction.

\mparagraph{Positive results}
Assuming one-way functions exist, \citet{Cleve86} showed that a simple $\rnd$-round majority protocol can be used to derive a $\partNum$-party coin-flipping protocol with bias $\Theta(\frac{\advPartNum}{\sqrt{\rnd}})$ (against dishonest majority), where $\advPartNum$ is the number of corrupted parties. For more than two decades, \citeauthor{Cleve86}'s  protocol was the best known fair coin-flipping protocol (without honest majority), under \emph{any} hardness assumption, and for \emph{any} number of parties. In a recent breakthrough result, \citet{MoranNS09} constructed an $\rnd$-round, \emph{two}-party coin-flipping protocol with optimal bias of $\Theta(\frac 1 \rnd)$. The result holds for any parameter $\rnd \in \N$, and under the assumption that oblivious transfer protocols exist. In a subsequent work, \citet{BeimelOO10} extended the result of \cite{MoranNS09} for the multi-party case in which \emph{less than} $\frac23$ of the parties can be corrupted. More specifically, for any $\advPartNum < \frac23 \cdot \partNum$, they presented an $\rnd$-round, $\partNum$-party protocol, with bias $\frac{2^{2\advPartNum -\partNum}}{\rnd}$ against (up to) $\advPartNum$ corrupted parties.

Still for the case of $\frac23$ (or more) corrupted parties, the best known protocol was the $\Theta(\frac{\advPartNum}{\sqrt{\rnd}})$-bias majority protocol of \cite{Cleve86}. In particular, this was the state of affairs for the natural three-party case (where two parties are corrupt).

\subsection{Our Result}\label{sec:into:ourResult}
We present an almost-optimally fair, three-party coin-flipping protocol.
\begin{theorem}[main theorem, informal]\label{thm:mainLInf}
Assuming the existence of oblivious transfer protocols,\footnote{It is enough to assume the existence of a constant-round secure-with-abort protocol, which is a weaker assumption. However, we chose to assume OT for simplifying the main theorem.} then for any $\rnd\in \N$ there exists an $\rnd$-round, three-party coin-flipping protocol, with bias $\frac{O(\log^3 \rnd)}\rnd$ (against one, or two, corrupted parties).
\end{theorem}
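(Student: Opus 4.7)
The plan is to construct a three-party protocol that is a \emph{variation} of Cleve's majority-based protocol, in contrast to the threshold-round paradigm of Moran, Naor, and Segev and of Beimel, Omri, and Orlov (which inherently breaks down beyond the $2/3$-corruption boundary, and in particular does not apply to the natural three-party case with two corruptions). The protocol runs in $\rnd$ rounds; in each round the three parties use a constant-round secure-with-abort MPC sub-protocol (built from oblivious transfer) to sample and distribute shares of a ``defender bit'' $d_i$ drawn from a carefully chosen, non-uniform distribution. The final output, when all parties remain honest, is a fixed function of the whole sequence $(d_1,\ldots,d_{\rnd})$ --- essentially a thresholded sum, generalizing Cleve's majority rule.

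To control the bias we must handle two kinds of adversarial behavior. First, if a single party aborts at some round $i$, the two remaining parties continue by invoking a two-party fair coin-flipping sub-protocol (with bias $O(1/\rnd')$ on its residual $\rnd'$ rounds, which we obtain as a special two-party case of the same construction); its parameters are tuned so that the expected output given ``abort at round $i$'' closely tracks the expected output given ``continue past round $i$'', which limits the one-step gain of such an abort. Second, if two corrupted parties deviate (for instance jointly aborting, or one aborting and then the second deviating inside the residual two-party protocol), the analysis reduces to the two-party fair case, where the same kind of guarantee again applies.

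The key design parameter is the distribution of each $d_i$. Rather than using uniform $\pm 1$ bits as in Cleve's protocol (which have $\Theta(\sqrt{\rnd})$ fluctuations and hence yield only $\Theta(1/\sqrt{\rnd})$ bias), we choose the $d_i$'s so that the ``backup value'' --- the bit an honest pair would output if the protocol halted exactly after round $i$ --- forms an approximate martingale whose single-step perturbations are only $\widetilde{O}(1/\rnd)$. Concretely, the distribution of $d_i$ depends on the partial history and is chosen to approximately equalize the conditional expectation of the final output under ``continue'' versus under ``one party aborts now'' (and, more subtly, under ``two parties deviate now''), so that no abort at any round buys the adversary more than $\widetilde{O}(1/\rnd)$ of bias.

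The hard part will be the quantitative bias analysis. One must simultaneously bound the adversary's advantage over all choices of which party aborts, when to abort, and how to bias the residual two-party sub-protocol that is invoked after an abort. This requires a nested, round-by-round case analysis in which the polylogarithmic slack comes from three sources: (i) a union bound over the discrete space of adversarial abort strategies, (ii) propagating the $O(1/\rnd')$ residual-protocol error through a sum over potential abort rounds, and (iii) a slack factor arising from matching the ``continue'' and ``abort'' conditional means only up to a $1/\polylog(\rnd)$ additive error, which is apparently needed to keep the defender-bit distributions well-defined and efficiently sampleable. Accumulating these three logarithmic factors yields the stated $O(\log^3 \rnd/\rnd)$ bound; the paper's footnote signals that shaving any of them (as attempted in the conference version) is delicate, so the main technical challenge is getting a clean accounting rather than an over-optimistic telescoping argument.
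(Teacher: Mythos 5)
Your high-level plan matches the paper's in outline (a smoothed, majority-style protocol with round-by-round ``defense'' values, falling back to a fair two-party protocol after a single abort), but it misses the one idea that the three-party case actually hinges on. The problem with two colluding parties is not that the conditional means under ``continue'' and ``two parties deviate now'' are mismatched --- it is an \emph{information-leakage} problem: in round $i$ the colluding pair reconstructs, \emph{before} the round's coin is revealed, the complete set of shares defining the residual two-party execution, and those shares encode $\Theta(\rnd)$ (nearly independent) samples whose distribution depends on $\delta_i$, the conditional value of the game after the coin. From these samples the pair can estimate $\delta_i$, rush, and bias the outcome by $\Omega(1/\sqrt{\rnd})$, no matter how carefully you tune the marginal distribution of each defense bit. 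The paper's fix is a \emph{hiding} two-party share generator ($\HidTwoShareGen$): all of a party's defense values are derived from one fixed vector of $2\cdot\ms{1}$ Bernoulli samples by taking weights of random subsets, so the whole defense package leaks only about as much as a constant number of samples of $\Berzo{\delta_i}$; this correlated (hypergeometric) sampling is exactly what your proposal lacks, and without it the construction fails at the two-corruption case you wave through as ``reduces to the two-party fair case.''

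Secondarily, your accounting of the $\log^3\rnd$ slack does not correspond to a workable argument. There is no union bound over abort strategies and no $1/\polylog$ additive matching of conditional means in the paper; fairness is reduced (via a game-based characterization for fail-stop adversaries, then a compiler to malicious security) to bounding the bias of three kinds of \emph{online binomial games} --- a simple game for the two-party protocol, and hypergeometric and vector games for the two attacks available in the three-party protocol --- and the polylogarithmic factors arise inside those analyses (typical-set widths of order $\sqrt{\log\rnd}$, a harmonic sum $\sum_i 1/(\rnd-i)$ over abort rounds, and discarding the last $\log^{2.5}\rnd$ rounds). Also note the paper's smoothness comes from an explicit \emph{weighted} majority (round $i$ reveals $\rnd+1-i$ coins), rather than from history-dependent defender-bit distributions; your ``approximate martingale with $\widetilde{O}(1/\rnd)$ steps'' gestures at the right property but gives no mechanism achieving it.
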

That is, no efficient algorithm can makes the (expected) outcome of the protocol to deviate from $\frac12$ by more than $\frac{O(\log^3 \rnd)}\rnd$. As a building block towards constructing our three-party protocol, we present an alternative construction for two-party, almost-optimally fair coin-flipping protocols. Our approach does not follow the ``threshold round" paradigm used in \cite{MoranNS09,BeimelOO10}, but rather is a variation of the aforementioned $\Theta(\frac{\advPartNum}{\sqrt{\rnd}})$-bias, coin-flipping protocol of \cite{Cleve86}.

\subsection{Additional Related Work}\label{sec:relatedWork}
\citet{CleveI93} showed that in the \textit{fail-stop model},\footnote{In this model, the parties are assumed to have unbounded computation power, cannot deviate from  the prescribed protocol, but are allowed to prematurely abort their execution.} any
two-party $\rnd$-round coin-flipping protocol has bias $\Omega(\frac 1{\sqrt{\rnd}})$; adversaries in this model are
computationally unbounded, but they must follow the instructions of the protocol, except for being
allowed to abort prematurely. \citet{DachmanLMM11} showed that the same holds for $o(n/\log n)$-round protocols in the random-oracle model ---  the parties have oracle access to a uniformly  chosen  function over $n$ bit strings.

There is a vast literature concerning coin-flipping protocols with weaker security guarantees. Most notable among these are protocols that are \textit{secure with abort}. According to this security definition, if a cheat is detected or if one of the parties aborts, the remaining parties are not required to output anything. This form of security is meaningful in many settings, and it is typically much easier to achieve; assuming one-way functions exist, secure-with-abort protocols of negligible bias are known to exist against any number of corrupted parties \cite{Blum83,HNORV08,Naor91}. To a large extent, one-way functions are also necessary for such coin-flipping protocols \cite{BermanHT14,HaitnerOmri11,ImpagliazzoLu89,Maji10}.

Coin-flipping protocols were also studied in a variety of other models. Among these are collective
coin-flipping in the \textit{perfect information model}: parties are computationally unbounded
and all communication is public \cite{AlonNaor93,BenLin89,Feige99a,RussellZuc98,Saks89}, and protocols based on physical assumptions, such
as quantum computation \cite{AharonovEtAl00,Ambainis04,AmbainisBDR04} and tamper-evident seals \cite{MoranNaor05}.

Perfectly fair coin-flipping protocols (\ie zero bias) are a special case of protocols for \emph{fair} secure function evaluation (SFE). Intuitively, the security of such protocols guarantees that when the protocol terminates, either everyone receives the (correct) output of the functionality, or no one does. While \citet{Cleve86}'s result yields that some functions do not have fair SFE, it was recently shown that many interesting function families do have (perfectly) fair SFE \cite{gordonHKL11,Ash14,ABMO15}.

\subsection{Our Techniques}\label{sec:Technique}
The following is a high-level description of the ideas underlying our three-party fair coin flipping protocol.\footnote{We restrict the discussion to the intuitive  game-base definition of fairness --- the goal of the adversary is to make the honest party to output some bit $b$ with probability as further away from $\frac12$ as possible. Discission of the more standard Real/Ideal definition of fairness, in which we prove our result, is given in  \cref{sec:realIdeal}.} We start by describing the two-party protocol of \citet{MoranNS09} (hereafter the $\MNS$ protocol), and explain why natural extensions of their approach (such as the one used in \cite{BeimelOO10}) fall short when it comes to constructing three-party fair protocols. We next explain our new approach for two-party protocols, and then extend this approach to three parties.

Throughout, we assume \wlg that if a corrupted party aborts in a given round, it sends an abort message to all other parties at the \emph{end} of this round (after seeing the messages sent by the non-aborting parties). To keep the discussion simple, we focus on security against polynomially bounded fail-stop adversaries --- ones that follow the prescribed protocol, but might abort prematurely. Achieving this level of security is the heart of the matter, since (assuming one-way functions exist) there exists a round-preserving reduction from protocols secure against fail-stop adversaries into protocols of full-fledged security \cite{GoldreichMiWi91}.\footnote{Note that by restricting the parties to being fail-stop, we do not reduce the setting to the fail-stop model, since the parties considered here are computationally bounded.}

\subsubsection{The Two-Party $\MNS$ Protocol}
For $\rnd\in \N$, the $(2\rnd)$-round, two-party $\MNS$ protocol $(\Pc_0,\Pc_1)$ is defined as follows.\footnote{The protocol described below is a close variant of the original $\MNS$ protocol, which serves our presentation better. The difference is the addition of phase $(b)$, in both the share generating function and the protocol, which does not exists in the original protocol.} Following a common paradigm for fair multi-party computations \cite{BeimelOO10,GordonHKL08,Katz07}, the protocol starts by the two parties using oblivious transfer (\OT) to securely compute the following ``share generating" random function.

\begin{algorithm}[share generating function $\ShareGen$]\label{func:mns}~
\item[Input:] Round parameter $1^\rnd$.
\item[Operation:]~
\begin{enumerate}
 \item Uniformly sample $c\la \zo$ and $\ist \la [\rnd] \ (= \set{1,\ldots,m})$.

 \item For $i=1 $ to $\rnd$, let

\begin{enumerate}

\item $(\defn{0}{i},\defn{1}{i})\! = \!\left\{
 \begin{array}{ll}
 \!\mbox{\emph{uniform} sample from $\zo^2$}, & i < \ist -1 \\
 \!(c,c), & \hbox{otherwise.}
 \end{array}
 \right.$\label{step:func:mns:defn}
\item $c_i = \left\{
 \begin{array}{ll}
 \perp, & i < \ist \\
 c, & \hbox{otherwise.}
 \end{array}
 \right.$
\end{enumerate}

 \item Split each of the $3\rnd$ values $\defn{0}{1},\defn{1}{1},\ldots,\defn{0}{\rnd},\defn{1}{\rnd},c_1,\dots,c_\rnd$ into two ``shares," using a 2-out-of-2 secret sharing scheme, and output the two sets of shares.

\end{enumerate}
\end{algorithm}

\begin{protocol}[$(\Pc_0,\Pc_1)$]~
\begin{enumerate}
\item[Common input:] round parameter $1^\rnd$.

\item[Initial step:] The parties securely compute $\ShareGen(1^\rnd)$, where each party gets one set of shares.

\item[Main loop:] For $i=1$ to $\rnd$, do
\begin{enumerate}
 \item $\Pc_0$ sends to $\Pc_1$ its share of $\defn{1}{i}$, and $\Pc_1$ sends to $\Pc_0$ its share of $\defn{0}{i}$.
 \item[$\bullet$] $\Pc_0$ reconstructs the value of $\defn{0}{i}$, and $\Pc_1$ reconstructs the value of $\defn{1}{i}$.
 \item Each party sends to the other party its share of $c_i$.

 \item[$\bullet$] Both parties reconstruct the value of $c_i$.
\end{enumerate}

\item[Output:] The parties output $c_i$, for the first $i$ for which $c_i \neq \perp$.

\item[Abort:] If $\Pc_0$ aborts, party $\Pc_1$ outputs the value of $\defn{1}{i}$ for the \emph{maximal} $i\in [\rnd]$ for which it has reconstructed this value. If there is no such $i$, $\Pc_1$ outputs a uniform bit. (The case that $\Pc_1$ aborts is analogously defined).
\end{enumerate}
\end{protocol}
We start with few observations regarding the secure computation of $\ShareGen(1^\rnd)$ done in the above protocol.
\begin{itemize}
\item The computation of $\ShareGen(1^\rnd)$ is \emph{not} fair: the parties get their parts of the output (\ie their shares) in an \emph{arbitrary} manner. Specifically, the corrupted party might prematurely abort after learning its part of the output, preventing the other party from getting its part.
\item Since $\ShareGen(1^\rnd)$ is efficient, assuming \OT protocols exist, an (unfair) secure computation of $\ShareGen(1^\rnd)$ exists.

 \item Ignoring negligible terms (due to the imperfection of secure computation using \OT), the output of each party (when seen on its own) is a set of uniform strings. In particular, it contains \emph{no information} about the other party's shares, or about the values of $c$ and $\ist$.

 By construction, a party outputs a uniform bit if the other party aborts before the end of the secure computation phase. Hence, it makes no sense for a party to abort during this phase.

 \item Given the above observation, it is instructive to pretend that at the first step of the protocol, the output of a random execution of $\ShareGen(1^\rnd)$ was given to the parties by an \textit{honest dealer}.
\end{itemize}

Note that in each round of the above protocol, both honest parties send their messages without waiting for the other party's message. Hence, the above protocol is symmetric \wrt the parties' role. However, since we assume no simultaneous channel (which would have trivialized the whole question), the corrupted party can postpone sending its message until it gets the message of the honest party, and then decide whether to send its message for this round or abort.

\mparagraph{Security of the protocol}At least on the intuitive level, the security proof of the above protocol is rather simple. Since the protocol is symmetric, we assume for concreteness that $\Pc_0$ is corrupted and tries to bias the expected output of $\Pc_1$ away from $\frac12$. The following random variables are defined \wrt a random execution of $(\Pc_0,\Pc_1)$: let $V$ be the view of the corrupted $\Pc_0$, right after sending the abort message, and let $V^-$ be the view $V$ without this abort message. For a view $v$, let $\val(v)$, the view \emph{value}, be defined as the expected outcome of $\Pc_1$ conditioned that $\Pc_0$'s view is $v$, assuming no further aborts (\ie if $\Pc_0$ is not aborting in $v$, then it acts honestly till the end of the protocol). It is not hard to verify that the bias obtained by $\Pc_0$ (toward 0 or 1) is \emph{exactly} the expected value of $\size{\val(V) - \val(V^-)}$.\footnote{The expected value of $\size{\val(V) - \val(V^-)}$ actually captures  the security of the protocol in a stronger sense, characterizing the so-called $\alpha$-security of the protocol according to the Real/Ideal paradigm. See proof in \cref{sec:CFprotocols}.}

It is also easy to see that by aborting in round $(i,b)$ (\ie phase $(b)$ of round $i$), for some $i\in [\rnd]$, party $\Pc_0$ gains nothing (\ie $\val(V) = \val(V^-)$), where the $(i,j)$'th round of the execution stands for the $j$'th step of the $i$'th loop in the execution.  A slightly more complicated math yields that by aborting in round $(i,a)$,  party $\Pc_0$ only gains $\Theta(\frac1{\rnd})$ bias. It follows that the maximal bias obtained by a fail-stop strategy for $\Pc_0$ is $\Theta(\frac1{\rnd})$.

\mparagraph{Fairness via defense}
Let us present a different view of the $\MNS$ protocol. Consider a variant of this protocol without the $d_i$'s. Namely, the parties reconstruct $c_1,\ldots,c_\rnd$ one at a time, until they reach $c_i\neq \perp$. When an abort occurs, the remaining party outputs an unbiased coin if it has not yet reconstructed $c$, and outputs $c$ otherwise. It is easy to see that an aborting attacker can bias the output of the other party in this degenerate variant by $\frac14$; that is, it simply waits until it reconstructs $c$ and then aborts for biasing the other party's output towards $1-c$.

The role of  ``defense" values $(\defn{0}{i},\defn{1}{i}),\ldots,(\defn{0}{\rnd},\defn{1}{\rnd})$ is to prevent such an attack; if a party aborts after reconstructing $c$, the other party is guaranteed to output $c$ as well. The problem is, however, that the defense values themselves might cause a problem: a corrupted party might abort after reconstructing its defense value for the $i$'th round (and not only after reconstructing $c_i$). Indeed, by aborting in these rounds, a corrupted party does gain a bias, but only $\Theta(\frac 1 \rnd)$.

\mparagraph{On extending the $\MNS$ protocol for the three-party case}
We next explain why the approach of $\MNS$ does not seem to be useful for constructing fair, three-party coin-flipping protocols.

In a three-party fair coin-flipping protocol, one should deal with two, possibly non-simultaneous, aborts. In particular, after $\Pc_0$ aborts, the remaining pair of parties $\set{\Pc_1,\Pc_2}$ should interact in a two-party protocol to agree on their common coin. Since one of $\set{\Pc_1,\Pc_2}$  might be also corrupted, this two-party protocol needs to be a fair coin-flipping  protocol as well. Assuming $\Pc_0$ aborts in round $i$, the expected outcome of this  two-party protocol  should be equal (up to an additive difference of $\Theta(\frac 1 \rnd)$) to  the  expected outcome of the three party protocol assuming no aborts, from $\Pc_0$'s point of view, \emph{after} getting the $i$'th round messages. Otherwise, $\Pc_0$, by aborting after seeing the messages sent by $\set{\Pc_1,\Pc_2}$ in this round, can bias the outcome of the other parties  by more than  $\Theta(\frac 1 \rnd)$. 

Consider the following natural extension of $\MNS$ protocol to a three-party protocol. The value of $c_1,\ldots, c_\rnd$ are as in the two-party protocol (now shared between the three parties). The defense values are not bits, but rather two vectors of shares for the two remaining parties (different shares for each possible pair), to enable them to interact in some fair two-party coin-flipping protocol if the third party aborts. Assume that in the $i$'th round of the ``outer'' three-party protocol, the value of $c_i$ is one (\ie $c_i = c=1$), and consider the two-party protocol executed by the remaining parties, if a party aborts in this round. The outcome of the remaining party in the case of a premature abort in this underlying two-party protocol should be also one. Otherwise, two corrupted parties can mount the following two-phase attack: first aborting in the outer three-party protocol after seeing $c_i=1$, and then prematurely aborting in the inner two-party protocol, knowing that the other party will output something that is far from one. Now, assume that in the $i$'th round of the ``outer'' three-party protocol, the value of $c_i$ is $\perp$ (\ie $i < \ist$), and consider again the two-party protocol executed by the remaining parties if party aborts in this round. It is easy to see that  expected outcome of this two-party protocol should be close to $\frac12$ (\ie unbiased). Thus, the defense values, to be constructed by each party during the execution of this two-party protocol, cannot all be of the same value.

But the above restrictions on the two-party protocol defense values, ruin the security of the outer three-party protocol; in each round $i$, two corrupted (and thus colluding) parties can reconstruct the \emph{whole} two-party execution that they should engage in if the other (in this case, the honest) party aborts in this round. By checking whether the defense values of this two-party execution are all ones (indicating that $c =1$), all zeros (indicating that $c =0 $), or mixed (indicating that $c_i =\perp$), they get enough information for biasing the output of the protocol by a constant value.

What fails the above three-party protocol,  is that during its execution the expected outcome of the protocol given a player's  view might change  by a constant (say from $\frac12$ to $1$). As we argued above, the (long) defense values reconstructed \emph{before} each round in the three-party protocol have to contain many (\ie $\rnd$) samples drawn according to the value of the protocol at the \emph{end} of the round. It follows that two corrupted parties might extrapolate,  at the \emph{beginning} of such a round, the value of protocol when this round \emph{ends},  thus rendering the protocol insecure. 

To conclude, due to the restrictions described above which comes from the large jump in the game value of $\MNS$, extending it into three-party protocol seems unlikely.  
\subsubsection{Our Two-Party Protocol}
Given the above understanding, our first step is to construct a two-party coin-flipping protocol, whose value  only changes  \emph{slightly} (\ie smoothly) between consecutive rounds. In the next section we use a \emph{hiding} variant of such a smooth coin-flipping protocol as a building block for constructing an (almost) optimally fair three-party protocol.

Consider the   $\Theta(\frac{1}{\sqrt{\rnd}})$-bias  coin-flipping protocol of \citet{Cleve86}: in each round $i\in[\rnd]$, the parties reconstruct the value of a coin $c_i \in \oo$, and the final outcome is set to $\sign(\sum_{i\in [\rnd]} c_i)$. Since the value of $\sum c_i$ is close to being uniform over $[-\sqrt{\rnd},\sqrt{\rnd}]$, the value of the first coin $c_1$ changes the protocol's value by $\Theta(\frac1{\sqrt \rnd})$. This sounds like a good start toward achieving a smooth coin-flipping protocol.

\remove{The problem is, however, that with probability $\Theta(\frac1{\sqrt \rnd})$, the sum of $c_1,\dots, c_{\rnd-1}$ is exactly zero. Hence, with this probability, the final coin changes the protocol's value by $\frac12$.

We overcome the above problem by using a \emph{weighted} majority protocol. In the first round the parties reconstruct $\rnd$-coins (in a single shot), reconstruct $(\rnd-1)$ coins in the second round, and so on, until in the very last round only a single coin is reconstructed. Now the value of $\sum c_i$ (now each $c_i$ is an integer) is close to being uniform over $[-\rnd,\rnd]$, and the last round determines the outcome only with probability $\Theta(\frac1\rnd)$ (versus $\Theta(\frac1{\sqrt \rnd})$ in the unweighted version). Other rounds also enjoy a similar smoothness property. We emphasize that the resulting protocol \emph{does not} guarantee   small bias (but  only a $\Theta(\frac{1}{\sqrt{\rnd}})$-bias). Rather, we take advantage of its smoothness and use it as a building block of a fair two-party protocol (and then of a three-party protocol).}

\mparagraph{The protocol} As in MNS protocol, the parties start by securely computing a share generating function, and then use its outputs to slowly reconstruct the output of the protocol.

Let $\Berzo{\delta}$ be the Bernoulli distribution over $\zo$, taking the value one with probability $\delta$ and zero otherwise.

We next describe (a simplified variant of) our share generating function and  use it to describe our two-party protocol.
\begin{algorithm}[share generating function $\TwoShareGen$]~
\item[Input:] Round parameter $1^\rnd$.
\item[Operation:]~
\begin{enumerate}

 \item For $\z\in \zo$,  sample $\defn{\z}{0} \la \zo$.

 \item For $i=1 $ to $\rnd$,

 \begin{enumerate}
 \item Sample $c_i \la \oo$.
 \item For $\z\in \zo$, sample $\defn{\z}{i} \la\Berzo{\delta_i}$, for $\delta_i = \pr{\sum_{j=1}^\rnd c_j \geq 0 \mid c_1, \ldots, c_i}$.\footnote{$\delta_i$ is the probability that the protocol's output is one, given the value of the ``coins" $c_1\ldots, c_i$ (and assuming no abort).}
 \end{enumerate}

\item Split each of the $3\rnd$ values $\defn{0}{1},\defn{1}{1},\ldots,\defn{0}{\rnd},\defn{1}{\rnd},c_1,\dots,c_\rnd$ into two ``shares", using a 2-out-of-2 secret sharing scheme, to create two sets of shares: $\sVS{0}$ and $\sVS{1}$.

 \item    Output $(\defn{0}{0},\sVS{0}), (\defn{1}{0},\sVS{1})$.

\end{enumerate}
\end{algorithm}

\begin{protocol}[$\pi_2 = (\Ptwo_0,\Ptwo_1)$]\label{intro:protocol:twoparty}~
\begin{enumerate}
\item[Common input:] round parameter $1^\rnd$.

\item[Initial step:] The parties securely compute $\TwoShareGen(1^\rnd)$. Let $(\defn{i}{0},\sVS{i})$ be the local output of $\Ptwo_i$.

\item[Main loop:] For $i=1$ to $\rnd$, do
\begin{enumerate}
 \item $\Ptwo_0$ sends to $\Ptwo_1$ its share of $\defn{1}{i}$, and $\Ptwo_1$ sends to $\Ptwo_0$ its share of $\defn{0}{i}$.

\item[$\bullet$] $\Ptwo_0$ reconstructs the value of $\defn{0}{i}$, and $\Ptwo_1$ reconstructs the value of $\defn{1}{i}$.

 \item Each party sends to the other party its share of $c_i$.

 \item[$\bullet$] Both parties reconstruct the value of $c_i$.

\end{enumerate}

\item[Output:] Both parties output one if $\sum_{j=1}^{\rnd} c_j \geq 0$, and zero otherwise.

\item[Abort:] If $\Ptwo_0$ aborts, party $\Ptwo_1$ outputs the value of $\defn{1}{i}$, for the \emph{maximal} $i\in [\rnd]$ for which it has reconstructed this value (note that by construction such an $i$ always exists).

 The case that $\Ptwo_1$ aborts is analogously defined.
\end{enumerate}
\end{protocol}
Namely, the parties interact in a majority protocol, where in the $i$'th round, they reconstruct, in an unfair manner, the $i$'th coin (\ie $c_i$). If a party aborts, the remaining party outputs a defense value given to it by the honest dealer (implemented via the secure computation of $\TwoShareGen$).
\remove{Namely, the parties interact in a weighted majority protocol, where in the $i$'th round, they reconstruct, in an unfair manner, $(m+1-i)$ coins (\ie $c_i$). If a party aborts, the remaining party outputs a defense value given to it by the honest dealer (implemented via the secure computation of $\TwoShareGen$).}

A few remarks are in place. First, we will only define the protocol for odd values of $\rnd$.\remove{$\rnd \equiv 1 \bmod 4$.\footnote{Defining the protocol for $\rnd \equiv 2 \bmod 4$ works as well.}} Hence, $\sum_{j=1}^{\rnd} c_j \neq 0$, and the protocol's output is a uniform bit when played by the honest parties. Second, if $\Ptwo_0$ aborts in the first round, the party $\Ptwo_1$ could simply output a uniform bit.  We make $\Ptwo_0$  output $\defn{1}{0}$, since this  be useful when the two-party protocol will be later used as part of the three-party protocol. Finally, one can define the above protocol without exposing the coins $c_i$'s to the parties (in this case, the honest parties output $(\defn{0}{\rnd},\defn{1}{\rnd})$ as the final outcome). We do expose the coins to make the analysis of the protocol easier to follow.

\mparagraph{Security of the protocol}\label{intro:protocol:security}
Note that the defense value given in round $(i,a)$ (\ie step  $a$ of the $i$'th loop) is distributed according to the expected outcome  of the protocol, conditioned on the value of the coin to \emph{be given} in round $(i,b)$. These defense values make aborting in round $(i,b)$, for any value of $i$, harmless. So it is left to argue that aborting in round $(i,a)$, for any value of $i$, is not too harmful either. Intuitively,  this holds since the defense value reconstructed in round $(i,a)$ is only a noisy signal about the value of $c_i$.

Since the protocol is symmetric, we assume for concreteness that the corrupted party is $\Ptwo_0$. Similar to the analysis of the MNS protocol sketched above, it suffices to bound the value of $\size{\val(V) - \val(V^-)}$.

Assume that $\Ptwo_0$ aborts in round $(i,b)$. By construction, $\val(V^-) =  \delta_{i}$. Since, the defense of  $\Ptwo_1$ in round $(i,b)$ is sampled according to $\Berzo{\delta_i}$, it is also the case that $\val(V) =  \delta_i$.

Assume now that $\Ptwo_0$ aborts in round $(i,a)$. By construction, $\val(V) = \delta_{i-1}$. Note that $V^-$ does contains some information about $\delta_i$, \ie a sample from $\Berzo{\delta_i}$, and thus $\val(V^-)$ is typically different from $\val(V)$. Yet, since $V^-$ contains only a sample from $\Berzo{\delta_i}$,  a noisy signal for the actual value of $\delta_i$, we manage to prove the following.
\begin{align}\label{eq:intro:gap}
\size{\val(V) - \val(V^-)} = \eex{\frac{(\delta_i - \delta_{i-1})^2}{\delta_{i-1}}\mid \delta_{i-1}}
\end{align}
If $\Ptwo_0$ aborts in the first rounds, \cref{eq:intro:gap} yields that $\size{\val(V) - \val(V^-)} = O(\frac1\rnd)$ since by the ``smoothness" of the protocol (\ie the value of the game does not change drastically between consecutive rounds) it follows that $\size{\frac{\delta_i - \delta_{i-1}}{\delta_{i-1}}} \in O(\frac1{\sqrt{\rnd}})$. 
The problem is, however, that with probability $\Theta(\frac1{\sqrt \rnd})$, the sum of $c_1,\dots, c_{\rnd-1}$ is exactly zero. Hence, with this probability, the final coin changes the protocol's value by $\frac12$. Therefore, the bias obtained by $\Ptwo_0$ that just wait to the last round to abort is $\Theta(\frac1{\sqrt \rnd})$.

We overcome the above problem by using a \emph{weighted} majority variant of the protocol. In the first round the parties reconstruct $\rnd$-coins (in a single shot), reconstruct $(\rnd-1)$ coins in the second round, and so on, until in the very last round only a single coin is reconstructed. Now the value of $\sum c_i$ (now each $c_i$ is an integer) is close to being uniform over $[-\rnd,\rnd]$, and the last round determines the outcome only with probability $\Theta(\frac1\rnd)$ (versus $\Theta(\frac1{\sqrt \rnd})$ in the unweighted version). Other rounds also enjoy a similar smoothness property. See \cref{sec:protocol} for more details.

\remove{\cref{eq:intro:gap} yields that $\size{\val(V) - \val(V^-)} = O(\frac1\rnd)$, since by the ``smoothness" of the protocol (\ie the value of the game does not change drastically between consecutive rounds) it follows that $\size{\frac{\delta_i - \delta_{i-1}}{\delta_{i-1}}} \in O(\frac1{\sqrt{\rnd}})$ with high probability.\footnote{The $O(\log^3 \rnd)$ factor in our actual result does not appear in the above informal argument.}}

\subsubsection{Our Three-Party Protocol}
We start by applying a generic approach, introduced by \citet{BeimelOO10}, to try and extend our fair two-party protocol into a three-party one. We explain why this approach falls  short, and present a variant of our two-party protocol for which the generic approach does yield the desired three-party protocol. To keep the  presentation simple, the two-party protocol we use is  the  \emph{non}-weighted variant of our two-party protocol (the actual implementation uses the aforementioned weighted protocol).

In this first attempt protocol, the three parties interact in the following variant of the two-party protocol $\pi_2$ described in \cref{intro:protocol:twoparty}. The parties start by (securely) computing $\ThreeShareGen(1^\rnd)$  defined below.

For $\delta \in [0,1]$, let $\TwoShareGen(1^\rnd, \delta)$ be the following variant of $\TwoShareGen$ defined above: (1) the ``coin" $c_i$ takes the value $1$ with probability $\frac12 + \eps$ and $-1$ otherwise (and not a uniform coin over $\oo$ as in $\TwoShareGen$), where $\eps$ is set to the number such that $\pr{\sum_{i=1}^\rnd c_i \geq 0} = \delta$; (2) the initial defense values $\defn{0}{0}$ and $\defn{1}{0}$  are sampled according to $\Berzo{\delta}$ (and not $\Berzo{\frac12}$ as in $\TwoShareGen$).

\begin{algorithm}[share generating function $\ThreeShareGen$]~
\item[Input:] Round parameter $1^\rnd$.
\item[Operation:]~
\begin{enumerate}

 \item For $i=1 $ to $\rnd$,

 \begin{enumerate}
 \item Sample $c_i \la \oo$.
 \remove{\item Let $\eps_i \in [-\frac12,\frac12]$ be the value such that $\vBin{\rnd,\eps_i}(0) = \delta_i =\vBin{m-i,0}(-\sum_{j=1}^i c_j)$.\footnote{Namely, $\eps_i$ is set to the number such that a fresh new game with $\eps_i$-biased coins, has the same value (\ie expected outcome) as that of the current game at this point (\ie conditioning on $c_1,\dots,c_i$).}}
 \item For each pair of the three parties, generate shares for an execution of $\pi_2$, by calling $\TwoShareGen(1^\rnd, \delta_i)$ for $\delta_i = \pr{\sum_{j=1}^\rnd c_j \geq 0 \mid c_1, \ldots, c_i}$.
 \end{enumerate}

 \item Split the values of $c_1,\dots,c_\rnd$ and the defense values into three set of shares using a 3-out-of-3 secret sharing scheme, and output the three sets.
 \end{enumerate}
\end{algorithm}
\begin{protocol}[$\pi_3 = (\Pthree_0,\Pthree_1,\Pthree_2)$]\label{intro:protocol:threeparty}~
\begin{enumerate}
\item[Common input:] round parameter $1^\rnd$.

\item[Initial step:] The parties securely compute $\ThreeShareGen(1^\rnd)$, where each party gets one set of shares.

\item[Main loop:] For $i=1$ to $\rnd$, do
\begin{enumerate}
 \item Each party sends to the other parties its share of their defense values.

 \item[$\bullet$] Each pair $(\Pthree_\z,\Pthree_{\z'})$ of the parties reconstructs a pair of two sets of shares $\defn{\z,\z'}{i} = ((\defn{\z,\z'}{i})_\z,(\defn{\z,\z'}{i})_{\z'})$, to serve as input for an execution of the two-party protocol if the third party aborts (\ie $\Pthree_\z$ reconstructs $(\defn{\z,\z'}{i})_\z$, and $\Pthree_{\z'}$ reconstructs $(\defn{\z,\z'}{i})_{\z'}$).

 \item Each party sends the other parties its share of $c_i$.
 \item[$\bullet$] All parties reconstruct the value of $c_i$.
\end{enumerate}

\item[Output:] The parties output one if $\sum_{j=1}^{\rnd} c_j \geq 0$, and zero otherwise.

\item[Abort:]

\begin{itemize}
 \item If $\Pthree_0$ aborts, the parties $\Pthree_1$ and $\Pthree_2$ use the shares of $\defn{1,2}{i}$, for the \emph{maximal} $i\in [\rnd]$ that has been reconstructed, to interact in $\pi^2$ (starting right after the share reconstruction phase). If no such $i$ exists, the parties interact in the (full, unbiased) two-party protocol $\pi^2$.

 The case that $\Pthree_1$ or $\Pthree_2$ aborts is analogously defined.

 \item If two parties abort in the same round, the remaining party acts as if one party has only aborted in the very beginning of the two-party protocol.
\end{itemize}

\end{enumerate}
\end{protocol}

Similar to the analysis for the two-party protocol sketched above, it suffices to show that the defense values  reconstructed by a pair of corrupted parties in round $(i,a)$ (\ie the inputs for the two-party protocols) do not give too much information about the value of $\delta_i$ --- the expected outcome of the three-party protocol conditioned on the coins reconstructed at round $(i,b)$. Note that once two corrupted parties are given these defense values, which happens in round $(i,a)$, they can \emph{immediately} reconstruct the whole two-party execution induced by them. This two-party execution effectively contains  $\Theta(\rnd)$ \emph{independent} samples from $\Berzo{\delta_i}$: one sample is given explicitly as the final output of the execution, and the value of $2\rnd$ additional samples can be extrapolated from the $2\rnd$ defense values given to the two parties. Many such independent samples can be used to reveal the value of $\delta_i$ just by looking at the sum of those $2\rnd$ samples, and guess its value according to it.\footnote{This can be done by the following process: If the sum is greater than $2\rnd \cdot \delta_{i-1}$, guess that $c_i = 1$. Otherwise, guess that $c_i = -1$. Since $\size{\delta_i - \delta_{i-1}} \in \Omega(\frac1{\sqrt{\rnd}})$, it follows by Hoeffding inequality \cite{Hoeffding1963} that the guess is good with probability $\frac12 + \Theta(1)$. Similarly, it can be shown that the same problem also holds in the weighted variant of the protocol which mentioned in the security proof of \cref{intro:protocol:twoparty}} Hence, in round $(i,a)$,  two corrupted parties can rush and use the above information to bias the outcome of the three-party protocol by $\Theta(\size{\delta_i - \delta_{i-1}}) \in \Omega(\frac1{\sqrt{\rnd}})$.

We solve this issue using a \emph{hiding} variant of the two-party shares generating function --- a function that leaks only \emph{limited} information about the value of $\delta_i$. See details in \cref{sec:protocol}.

\subsection{Open Problems}
The existence of an optimally fair three-party coin-flipping protocol (without the $\poly(\log \rnd)$ factor) is still an interesting open question. A more fundamental question is whether there exist fair coin-flipping protocols for more than three parties (against any number of corrupted parties).
A question of a larger scope, is to find $1/\rnd$-fair protocol for other many-party functionality, as done by \citet{GordonKatz10} for two-party functionalities. In particular, can one harness our three-party protocol for this aim.

Finally, some of the proofs we give for bounding the values of the Binomial games, which are in sense equivalent to bounding the bias of our coin-flipping protocols, are long and tedious. Finding simpler proofs would be a good service, and might yield tighter bounds and  increase our understanding of these protocols.

\subsection*{Paper Organization}
General notations and definitions used throughout the paper are given in \cref{section:Preliminaries}. We also state there (\cref{sec:provingFairness}) a new game-based definition of fair coin-flipping protocols, which is equivalent to the standard real/ideal definition. Our coin-flipping protocols, along with their security proofs, are given in \cref{sec:protocol}. The security proofs of \cref{sec:protocol} use bounds on  the value of several types of online-binomial games, these bounds are proven in \cref{section:BoundsForOnlineBinomialGames}. Missing proofs can be found in \cref{sec:missinProofs}.

\section{Preliminaries}\label{section:Preliminaries}

\subsection{Notation}\label{sec:prelim:notation}
We use calligraphic letters to denote sets, uppercase for random variables and functions,  lowercase for values, boldface for vectors and capital boldface for matrices. All logarithms considered here are in base two. For $a\in \R$ and $b\geq 0$,  let $a\pm b$ stand for the interval $[a-b,a+b]$. Given sets $\cs_1,\ldots,\cs_k$ and $k$-input function $f$, let  $f(\cs_1,\ldots,\cs_k) \eqdef \set{f(x_1,\ldots,x_j) \colon x_i\in \cs_i}$, \eg $f(1\pm 0.1) = \set{f(x) \colon x\in [.9,1.1]}$. For $n\in \N$, let $[n] \eqdef \set{1,\ldots,n}$ and $(n) \eqdef \set{0,\ldots,n}$. Given a vector $\vct \in \oo^\ast$, let $\w(\vct)\eqdef \sum_{i\in [\size{\vct}]} \vct_i$. Given a vector $\vct \in \oo^\ast$ and a set of indexes $\cI \subseteq [\size{\vct}]$, let $\vct_{\cI} = (\vct_{i_1},\ldots,\vct_{i_{\size{\cI}}})$ where $i_1,\ldots,i_{\size{\cI}}$ are the ordered elements of $\cI$. We let the XOR of two integers, stands for the \emph{bitwise} XOR of their bits.

Let $\poly$ denote the set all polynomials,  \ppt denote  for probabilistic  polynomial time, and  \pptm denote a \ppt algorithm (Turing machine).  A function $\nu \colon \N \to [0,1]$ is \textit{negligible}, denoted $\nu(n) = \negl(n)$, if $\nu(n)<1/p(n)$ for every $p\in\poly$ and large enough $n$.

Given a distribution $D$, we write $x\gets D$ to indicate that $x$ is selected according to $D$. Similarly, given a random variable $X$, we write $x\gets X$ to indicate that $x$ is selected according to $X$. Given a finite set $\cs$, we let $s\la \cs$ denote that $s$ is selected according to the uniform distribution on $\cs$. The support of a distribution $D$ over a finite set $\Uni$, denoted $\Supp(D)$, is defined as $\set{u\in\Uni: D(u)>0}$. The \emph{statistical distance} of two distributions $P$ and $Q$ over a finite set $\Uni$, denoted as $\SD(P,Q)$, is defined as $\max_{\cs\subseteq \Uni} \size{P(\cs)-Q(\cs)} = \frac{1}{2} \sum_{u\in \Uni}\size{P(u)-Q(u)}$.

For $\delta\in [0,1]$, let $\Berzo{\delta}$ be the Bernoulli probability distribution over $\zo$, taking the value $1$ with probability $\delta$ and $0$ otherwise. For $\eps \in [-1,1]$, let  $\Beroo{\eps}$ be the Bernoulli probability distribution over $\oo$, taking the value $1$ with probability   $\frac{1}{2}(1+\eps)$ and $-1$ otherwise.\footnote{Notice the slight change in notation comparing to the those used in the introduction.} For $n\in \N$ and $\eps \in [-1,1]$, let $\Beroo{n,\eps}$ be the binomial distribution induced by the sum of $n$ independent random variables, each distributed according to $\Beroo{\eps}$. For $n \in \N$, $\eps \in [-1,1]$ and $k \in \Z$, let $\vBeroo{n, \eps}(k) \eqdef \ppr{x\la \Beroo{n,\eps}}{x \geq  k} = \sum_{t=k}^{n}\Beroo{n,\eps}(t)$. For $n \in \N$ and $\delta \in [0,1]$, let $\sBias{n}{\delta}$ be the value $\eps \in [-1,1]$ with $\vBeroo{n, \eps}(0) = \delta$.
\remove{For $n,n'\in \N$, $k \in \Z$ and $\eps \in [-1,1]$, let $\Bias{n}{\eps}{n'}{k}$ be the value $\eps'\in [-1,1]$ with $\vBeroo{n',\eps'}(0) = \vBeroo{n,\eps}(k)$.}

For $n \in \N$, $\ell \in [n]$ and $p\in \set{-n,\dots,n}$, define the hypergeometric probability distribution $\Hyp{n,p,\ell}$ by $\Hyp{n,p,\ell}(k) \eqdef \ppr{\cI}{\w(\vct_\cI) = k}$, where $\cI$ is an $\ell$-size set uniformly chosen from $[n]$ and $\vct \in \oo^n$ with $w(\vct)= p$. Let $\vHyp{n,p,\ell}(k) \eqdef \ppr{x\la \Hyp{n,p,\ell}}{x \geq  k} = \sum_{t=k}^{\ell}\Hyp{n,p,\ell}(t)$.

Let $\Phi\colon \R \mapsto (0,1)$ be the cumulative distribution function of the standard normal distribution, defined by $\Phi(x) \eqdef \frac{1}{\sqrt{2\pi}}\int_{x}^{\infty}e^{-\frac{t^2}{2}}dt$.

Finally, for $n\in \N$ and $i \in [n]$, let $\NL{i} \eqdef n+1-i$ and $\NS{i} \eqdef \sum_{j=i}^n \NL{j}$.

We summarize the different notations used throughout the paper in the following tables.

\begin{table}[ht]
	\begin{center}
		\caption{\label{fig:summeryOfBasicFunctions} Basic Functions.}
		\begin{tabular}{|c|c|c|}
			\hline
			\textit{Definition} & \textit{Input Range} & \textit{Output value} \\
			\hline
			$[n]$ & $n \in \N$ & $\set{1,\ldots,n}$\\
			\hline
			$(n)$ & $n \in \N$ & $\set{0,\ldots,n}$\\
			\hline	
			$\NL{i}$ & $n\in \N$, $i \in [n]$ & $n+1-i$\\
			\hline
			$\NS{i}$ & $n\in \N$, $i \in [n]$ & $\sum_{j=i}^n \NL{j}$\\
			\hline
			$\Phi(x)$ & $x \in \R$ & $\frac{1}{\sqrt{2\pi}}\int_{x}^{\infty}e^{-\frac{t^2}{2}}dt$\\
			\hline
			$\w(\vct)$ & $\vct \in \oo^\ast$ & $\sum_{i\in \cI} \vct_i$\\
			\hline
			$\vct_{\cI}$ & $\vct \in \oo^\ast$, $\cI \subseteq [\size{\vct}]$ and $i_1,\ldots,i_{\size{\cI}}$ are the ordered elemets of $\cI$ & $(\vct_{i_1},\ldots,\vct_{i_{\size{\cI}}})$\\
			\hline
			$\SD(P,Q)$ & distributions $P$ and $Q$ over a finite set $\Uni$ & $\frac{1}{2} \sum_{u\in \Uni}\size{P(u)-Q(u)}$\\
			\hline
			$\Supp(D)$ & distributions $D$ over a finite set $\Uni$ & $\set{u\in\Uni: D(u)>0}$\\
			\hline
			$a \pm b$ & $a \in \R$, $b \geq 0$ & $[a-b, a+b]$\\
			\hline
		\end{tabular}
	\end{center}
\end{table}

\newpage

\begin{table}[ht]
	\begin{center}
		\caption{\label{fig:summeryOfDistributions} Distributions.}
		\begin{tabular}{|c|c|c|}
			\hline
			\textit{Distribution} & \textit{Input Range} & \textit{Description} \\
			\hline
			$\Berzo{\delta}$ & $\delta\in [0,1]$ & $1$ with probability $\delta$ and $0$ otherwise.\\
			\hline
			$\Beroo{\eps}$ & $\eps \in [-1,1]$ & $1$ with probability $\frac{1}{2}(1+\eps)$ and $-1$ otherwise\\
			\hline		
			$\Beroo{n,\eps}$ & $n\in \N$, $\eps \in [-1,1]$ & sum of $n$ independent $\Beroo{\eps}$ random variables\\
			\hline
			$\Hyp{n,p,\ell}$ & $n \in \N$, $p\in \set{-n,\dots,n}$, $\ell \in [n]$ & The value of $\w(\vct_{\cI})$ where $\cI$ is an $\ell$-size set uniformly\\
			& & chosen from $[n]$ and $\vct \in \oo^n$ with $\w(\vct) = p$\\
			\hline
			
		\end{tabular}
	\end{center}
\end{table}

\begin{table}[ht]
	\begin{center}
		\caption{\label{fig:summeryOfOtherFunctions} Other Functions.}
		\begin{tabular}{|c|c|c|}
			\hline
			\textit{Definition} & \textit{Input Range} & \textit{Output value} \\
			\hline		
			$\Beroo{n,\eps}(k)$ & $n\in \N$, $\eps \in [-1,1]$, $k \in \Z$ & $\ppr{x \la \Beroo{n,\eps}}{x = k}$\\
			\hline
			$\vBeroo{n,\eps}(k)$ & $n\in \N$, $\eps \in [-1,1]$, $k \in \Z$ & $\ppr{x \la \Beroo{n,\eps}}{x \geq k}$\\
			\hline
			$\sBias{n}{\delta}$ & $n\in \N$, $\delta\in [0,1]$ & The value $\eps \in [-1,1]$ with $\vBeroo{n, \eps}(0) = \delta$\\
			\hline
			$\Hyp{n,p,\ell}(k)$ & $n \in \N$, $p\in \set{-n,\dots,n}$, $\ell \in [n]$, $k \in \Z$ & $\ppr{x \la \Hyp{n,p,\ell}}{x = k}$\\
			\hline
			$\vHyp{n,p,\ell}(k)$ & $n \in \N$, $p\in \set{-n,\dots,n}$, $\ell \in [n]$, $k \in \Z$ & $\ppr{x \la \Hyp{n,p,\ell}}{x \geq k}$\\
			\hline
		\end{tabular}
	\end{center}
\end{table}

\subsection{Basic Inequalities}\label{sec:prelim:BasicInq}

The following proposition is proved in \cref{app:missinProofs:BasicInq}.

\def\claimlinearproblem{Let $n \in \N$, $\alpha > 0$, $k \in [n]$ and let $\set{p_j}_{j=k}^{n}$ be a set of non-negative numbers such that $\sum_{j=i}^{n}p_j \leq \alpha \cdot (n +1-i)$ for every $i\in \set{k,k+1,\ldots,n}$. Then $\sum_{j=k}^{n}\frac{p_j}{(n+1-j)} \leq \alpha \cdot \sum_{j=k}^{n} \frac1{(n+1-j)}$.}

\begin{proposition}\label{claim:linearproblem}
\claimlinearproblem
\end{proposition}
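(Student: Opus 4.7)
My plan is to prove this via summation by parts (Abel summation), exploiting the fact that the weights $w_j := 1/(n+1-j)$ are \emph{increasing} in $j$, so that the partial-sum constraints on the tails $\sum_{j=i}^n p_j$ translate cleanly into a bound on the weighted sum.

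Concretely, I would first set $S_i := \sum_{j=i}^n p_j$ for $i \in \{k,\ldots,n+1\}$, with $S_{n+1}=0$, so that the hypothesis reads $S_i \le \alpha(n+1-i)$. Writing $p_j = S_j - S_{j+1}$ and rearranging, I would obtain
\[
\sum_{j=k}^n p_j w_j \;=\; S_k w_k \;+\; \sum_{j=k+1}^n S_j\,(w_j - w_{j-1}).
\]
Since $w_j - w_{j-1} \ge 0$, I can substitute the upper bound $S_j \le \alpha(n+1-j)$ coefficient-by-coefficient, yielding
\[
\sum_{j=k}^n p_j w_j \;\le\; \alpha(n+1-k)\,w_k \;+\; \alpha\sum_{j=k+1}^n (n+1-j)(w_j - w_{j-1}).
\]

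The next step is a small algebraic simplification that makes the telescoping collapse into the desired right-hand side. Using $(n+1-j)w_j = 1$ gives $(n+1-j)(w_j-w_{j-1}) = 1 - (n+1-j)/(n+2-j) = 1/(n+2-j) = w_{j-1}$, and similarly $(n+1-k)w_k = 1 = w_n$. Plugging these in,
\[
\sum_{j=k}^n p_j w_j \;\le\; \alpha\left(w_n + \sum_{j=k+1}^n w_{j-1}\right) \;=\; \alpha\sum_{j=k}^n w_j,
\]
which is exactly the claimed inequality.

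I do not expect a genuine obstacle here: the only thing to verify carefully is the Abel rearrangement and the identity $(n+1-j)(w_j - w_{j-1}) = w_{j-1}$, both of which are routine. The conceptual point (and the reason the claim is true) is simply that the weights $1/(n+1-j)$ increase with $j$, so the worst allocation of mass under the given constraints is to push $p_j$ as far toward large $j$ as possible; summation by parts is the standard tool to make this observation quantitative.
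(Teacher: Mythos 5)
Your proof is correct, and it takes a genuinely different route from the paper's. The paper argues by an extremal (exchange) argument: it considers a feasible family $\{p_j\}_{j=k}^n$ maximizing the weighted sum $\sum_j p_j/(n+1-j)$, and shows that unless every $p_j$ equals $\alpha$, one can either raise $p_k$ (if $k$ is the largest index with $p_j\neq\alpha$) or shift a positive amount $\delta=\alpha-p_{i^\ast}$ of mass from index $i^\ast-1$ to the larger index $i^\ast$; since the weight $1/(n+1-j)$ is increasing in $j$, this strictly increases the objective while preserving all tail constraints, contradicting maximality, so the maximum is attained at $p_j\equiv\alpha$, which is exactly the claimed bound. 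Your argument instead makes the same monotonicity of the weights quantitative via Abel summation: writing $\sum_{j=k}^n p_j w_j = S_k w_k + \sum_{j=k+1}^n S_j(w_j-w_{j-1})$ with $S_i=\sum_{j\ge i}p_j$, the nonnegativity of the coefficients $w_k$ and $w_j-w_{j-1}$ lets you substitute $S_j\le\alpha(n+1-j)$ term by term, and the identities $(n+1-j)(w_j-w_{j-1})=w_{j-1}$ and $(n+1-k)w_k=w_n$ collapse the result to $\alpha\sum_{j=k}^n w_j$ (I checked these identities and the telescoping; they are right, including the edge case $k=n$). What your approach buys is a direct, self-contained inequality chain that needs no appeal to the existence of a maximizer and no contradiction; what the paper's approach buys is an explicit identification of the worst-case allocation $p_j\equiv\alpha$, which makes the tightness of the bound immediate. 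Both hinge on the same conceptual point you state: mass is worth more at larger $j$ because the weights increase.
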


\subsection{Facts About the Binomial Distribution}\label{sec:prelim:Binomial}

\begin{fact}[Hoeffding's inequality for $\oo$]\label{claim:Hoeffding}
Let $n,t \in \N$ and $\eps \in [-1,1]$. Then
\begin{align*}
\ppr{x\la \Beroo{n,\eps}}{\abs{x-\eps n} \geq t} \leq 2e^{-\frac{t^2}{2n}}.
\end{align*}
\end{fact}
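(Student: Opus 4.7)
The plan is to apply the standard Chernoff-style moment generating function argument specialized to $\pm 1$-valued random variables, and then union-bound the two tails. Write $x = \sum_{i=1}^n X_i$ where $X_1,\ldots,X_n$ are i.i.d.\ with $X_i \sim \Beroo{\eps}$, so that $\Exp[X_i] = \eps$ and $X_i \in [-1,1]$. First I would bound the upper tail $\Pr[x - \eps n \geq t]$ and then obtain the lower tail by symmetry (applying the same bound to the variables $-X_i$, which are distributed as $\Beroo{-\eps}$).

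For the upper tail, the key step is Hoeffding's lemma applied to the centered variables $Y_i = X_i - \eps$, which satisfy $\Exp[Y_i] = 0$ and $Y_i \in [-1-\eps,1-\eps]$, an interval of length $2$. Hoeffding's lemma then yields $\Exp[e^{\lambda Y_i}] \leq e^{\lambda^2 \cdot (2)^2/8} = e^{\lambda^2/2}$ for every $\lambda > 0$. By independence, $\Exp[e^{\lambda(x - \eps n)}] = \prod_{i=1}^n \Exp[e^{\lambda Y_i}] \leq e^{n\lambda^2/2}$. Markov's inequality applied to $e^{\lambda(x - \eps n)}$ then gives
\begin{align*}
\Pr[x - \eps n \geq t] \;\leq\; e^{-\lambda t}\cdot \Exp[e^{\lambda(x-\eps n)}] \;\leq\; e^{-\lambda t + n\lambda^2/2}.
\end{align*}
Optimizing over $\lambda > 0$ by setting $\lambda = t/n$ yields the bound $e^{-t^2/(2n)}$.

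Applying the identical argument to $-X_i \sim \Beroo{-\eps}$ gives $\Pr[x - \eps n \leq -t] \leq e^{-t^2/(2n)}$, and a union bound combines the two tails to produce the stated factor of $2$. The only ``nontrivial'' ingredient is Hoeffding's lemma for bounded random variables, which is a standard convexity/Taylor argument and can be invoked as a black box; I do not anticipate any real obstacle here, since the distribution $\Beroo{\eps}$ is supported on an interval of length exactly $2$, matching the constants in the target bound.
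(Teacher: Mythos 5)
Your proof is correct: the Chernoff/moment-generating-function argument with Hoeffding's lemma for the centered $\pm 1$-valued variables (interval of length $2$, hence $e^{\lambda^2/2}$ per variable), optimized at $\lambda = t/n$ and combined with a union bound over the two tails, yields exactly the stated bound $2e^{-t^2/(2n)}$. The paper gives no argument of its own and simply cites Hoeffding [1963]; your derivation is the standard proof of that cited inequality, so there is no substantive difference in approach.
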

\begin{proof}
	Immediately follows by  \cite{Hoeffding1963}.
\end{proof}

\def\factEXM{Let $n \in \N$ and $\eps \in [-\frac{1}{\sqrt{n}},\frac{1}{\sqrt{n}}]$.
Then $\ex{x\la \Beroo{n,\eps}}{x^2} \leq 2n$ and $\ex{x\la \Beroo{n,\eps}}{\abs{x}} \leq \sqrt{2n}$.}

\begin{fact}\label{fact:E_x_m}
\factEXM
\end{fact}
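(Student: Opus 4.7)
The plan is to decompose $x$ into a sum of i.i.d.\ $\pm 1$ summands and then compute the second moment directly, obtaining the first moment bound from the second by Jensen's inequality.

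Write $x = \sum_{i=1}^n X_i$ where $X_1,\ldots,X_n$ are i.i.d.\ according to $\Beroo{\eps}$. Since $X_i \in \oo$ we have $X_i^2 = 1$, so $\ex{}{X_i} = \eps$ and $\Var[X_i] = \ex{}{X_i^2} - \eps^2 = 1-\eps^2$. By independence, $\Var[x] = n(1-\eps^2)$ and $\ex{}{x} = n\eps$, so
\[
\ex{x\la \Beroo{n,\eps}}{x^2} \;=\; \Var[x] + (\ex{}{x})^2 \;=\; n(1-\eps^2) + n^2 \eps^2.
\]
Using the hypothesis $\eps^2 \le 1/n$, the first summand is at most $n$ and the second summand is at most $n$, giving $\ex{}{x^2} \le 2n$, which is the first claim.

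For the second claim, apply Jensen's inequality to the concave function $t \mapsto \sqrt{t}$ (equivalently, Cauchy--Schwarz on $|x| \cdot 1$):
\[
\ex{x\la \Beroo{n,\eps}}{\abs{x}} \;\le\; \sqrt{\ex{x\la \Beroo{n,\eps}}{x^2}} \;\le\; \sqrt{2n}.
\]

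There is no real obstacle here; the only delicate point is making sure the right bound on $\eps^2$ is applied so that both the variance and mean-squared contributions to $\ex{}{x^2}$ are each bounded by $n$. Everything else is a one-line variance computation plus Jensen.
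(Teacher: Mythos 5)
Your proof is correct and follows essentially the same route as the paper: compute $\ex{x\la \Beroo{n,\eps}}{x^2} = n(1-\eps^2) + \eps^2 n^2$, bound it by $2n$ using $\eps^2 \le 1/n$, and then deduce the first-moment bound via $\ex{}{\abs{x}} \le \sqrt{\ex{}{x^2}}$. No gaps; nothing further is needed.
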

\begin{proof}
	A simple calculation yields that $\ex{x\la \Beroo{n,\eps}}{x^2} = n(1-\eps^2) + \eps^2n^2$, which is smaller than $2n$ by the bound on $\eps$. The second bound holds since $\ex{x\la \Beroo{n,\eps}}{\abs{x}} \leq \sqrt{\ex{x\la \Beroo{n,\eps}}{x^2}}$.
\end{proof}

The following propositions are proved in \cref{app:missinProofs:Binomial}.

\def\propBinomProbEstimation{Let $n \in \N$, $t \in \Z$ and $\eps \in [-1,1]$ be such that $t \in \Supp(\Beroo{n, \eps})$, $\size{t} \leq n^{\frac{3}{5}}$ and $\size{\eps} \leq  n^{-\frac{2}{5}}$. Then
\begin{align*}
\Beroo{n, \eps}(t) \in  (1 \pm \error) \cdot \sqrt{\frac{2}{\pi}} \cdot \frac{1}{\sqrt{n}}\cdot  e^{-\frac{(t-\eps n)^2}{2n}},
\end{align*}
for $\error  = \xi \cdot (\eps^2 \abs{t} + \frac{1}{n} + \frac{\abs{t}^3}{n^2} + \eps^4 n)$ and a universal constant $\xi$.}

\begin{proposition}\label{prop:binomProbEstimation}
\propBinomProbEstimation
\end{proposition}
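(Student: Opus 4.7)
The plan is to derive the estimate via a sharp local central limit theorem for the $\oo$-binomial distribution, obtained by combining Stirling's approximation for the relevant binomial coefficient with a careful Taylor expansion of the KL divergence around its minimum.

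First I would rewrite the probability explicitly: since $t \in \Supp(\Beroo{n,\eps})$ forces $k := (n+t)/2 \in \N$, we have
\[
\Beroo{n,\eps}(t) \;=\; \binom{n}{k}\left(\frac{1+\eps}{2}\right)^{k}\left(\frac{1-\eps}{2}\right)^{n-k}.
\]
Because $\abs{t}\leq n^{3/5}$, both $k$ and $n-k$ are $\Theta(n)$, so Stirling's formula $m! = \sqrt{2\pi m}\,(m/e)^m e^{O(1/m)}$ applied three times yields
\[
\binom{n}{k} \;=\; \sqrt{\frac{n}{2\pi k(n-k)}}\cdot\frac{n^{n}}{k^{k}(n-k)^{n-k}}\cdot\bigl(1+O(1/n)\bigr).
\]
Using $k(n-k)=(n^{2}-t^{2})/4$, the prefactor becomes $\sqrt{\tfrac{2}{\pi n}}\cdot(1-t^{2}/n^{2})^{-1/2}$, and the remaining polynomial factor is $e^{-n D(q\|p)}$ with $q=k/n=(1+t/n)/2$ and $p=(1+\eps)/2$.

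The core of the argument is to Taylor-expand the KL divergence $D(q\|p)$ around its minimum at $q=p$, with expansion parameter $r := q-p = s/(2n)$, where $s := t-\eps n$. One computes $D''|_{q=p}=1/(p(1-p))=4/(1-\eps^{2})$, $D'''|_{q=p} = -16\eps/(1-\eps^{2})^{2}$, and $D^{(4)}$ bounded for $p$ bounded away from $0,1$. Thus
\[
n\,D(q\|p) \;=\; \frac{s^{2}}{2n(1-\eps^{2})} \;-\; \frac{s^{3}\eps}{3n^{2}(1-\eps^{2})^{2}} \;+\; O\!\left(\frac{s^{4}}{n^{3}}\right).
\]
Combining with the prefactor, the log of the ratio of $\Beroo{n,\eps}(t)$ to the target $\sqrt{2/(\pi n)}\,e^{-s^{2}/(2n)}$ is
\[
-\tfrac{1}{2}\ln\!\bigl(1-t^{2}/n^{2}\bigr) \;-\; \tfrac{s^{2}\eps^{2}}{2n(1-\eps^{2})} \;+\; O\!\bigl(\tfrac{\abs{s}^{3}\abs{\eps}}{n^{2}}+\tfrac{s^{4}}{n^{3}}+\tfrac{1}{n}\bigr).
\]
Each piece is then bounded within the claimed $\error$. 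Expanding $-\ln(1-t^{2}/n^{2})$ gives $O(t^{2}/n^{2})\leq O(1/n+\abs{t}^{3}/n^{2})$. The $s^{2}\eps^{2}/n$ term splits via $s^{2}\leq 2t^{2}+2\eps^{2}n^{2}$ into $O(t^{2}\eps^{2}/n)+O(\eps^{4}n)\leq O(\eps^{2}\abs{t}+\eps^{4}n)$. For the cubic correction I use $\abs{s}^{3}\leq 4(\abs{t}^{3}+\abs{\eps}^{3}n^{3})$, so $\abs{s}^{3}\abs{\eps}/n^{2}\leq O(\abs{t}^{3}/n^{2}+\eps^{4}n)$, and analogously $s^{4}/n^{3}\leq O(\abs{t}^{3}/n^{2}+\eps^{4}n)$. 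Finally, since $\error=O(n^{-1/5})$ is small, the standard estimate $\abs{e^{x}-1}\leq 2\abs{x}$ for $\abs{x}\leq 1/2$ lets me exponentiate and obtain the $(1\pm\error)$ conclusion (with the constant $\xi$ enlarged to handle small $n$ where the error is not small, in which case the bound is trivial as $\Beroo{n,\eps}(t)\leq 1$).

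The main obstacle is the bookkeeping for the cubic term: naively $\abs{s}^{3}/n^{2}$ would contribute $O(\eps^{3}n)$, which is \emph{not} dominated by the listed $\error$ when $\abs{t}$ is small. The resolution is that the third derivative $D'''|_{q=p}$ carries an extra factor of $\eps$ (being proportional to $p-1/2$), so the actual cubic contribution is $O(\abs{s}^{3}\abs{\eps}/n^{2})$, which splits into $O(\abs{t}^{3}/n^{2}+\eps^{4}n)$, both listed in $\error$. Verifying this cancellation and then ensuring that all higher-order terms are strictly smaller (handled uniformly by the Taylor remainder since $\abs{r}=\abs{s}/(2n)\leq n^{-2/5}$ is small) is the delicate step.
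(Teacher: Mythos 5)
Your proof is correct, and it reaches the estimate by a genuinely different decomposition than the paper's. The paper factors $\Beroo{n,\eps}(t)=\binom{n}{(n+t)/2}2^{-n}\cdot(1+\eps)^{\frac{n+t}{2}}(1-\eps)^{\frac{n-t}{2}}$, invokes a separate Stirling-based estimate for the \emph{unbiased} coefficient, $\binom{n}{(n+t)/2}2^{-n}\in\bigl(1\pm O(\tfrac{\abs{t}^3}{n^2}+\tfrac1n)\bigr)\sqrt{\tfrac{2}{\pi n}}e^{-t^2/2n}$ (a lemma it reuses for the hypergeometric case), and then controls the tilt factor with the elementary bound $1+x\in e^{x\pm x^2}$, giving $(1-\eps^2)^{\frac{n-t}{2}}(1+\eps)^t\in e^{-\eps^2n/2+\eps t\pm(2\eps^2\abs{t}+\eps^4n/2)}$; the recentering is then the exact identity $-\tfrac{t^2}{2n}-\tfrac{\eps^2n}{2}+\eps t=-\tfrac{(t-\eps n)^2}{2n}$, so the error terms $\eps^2\abs{t}$ and $\eps^4n$ fall out with no cancellation argument. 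You instead apply Stirling to the whole expression, write it as a prefactor times $e^{-nD(q\|p)}$, and Taylor-expand the KL divergence around $q=p$; this is a more systematic local-CLT route, but it hinges on your (correct) observation that the cubic coefficient is proportional to $2p-1=\eps$ — without that, the cubic term would contribute $\eps^3 n$, which $\error$ does not cover — a delicacy the paper's split sidesteps entirely. (Your sign for the third derivative is off, $D'''(p)=+16\eps/(1-\eps^2)^2$, but only its magnitude is used, so this is harmless.) Your remaining bookkeeping is sound: $t^2/n^2\le\tfrac1n+\tfrac{\abs{t}^3}{n^2}$ for integer $t$, the splits of $s^2\eps^2/n$, $\abs{\eps}\abs{s}^3/n^2$ and $s^4/n^3$ into the listed terms, and the $O(s^4/n^3)$ remainder since $p,q$ stay near $\tfrac12$. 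For the small-$n$ case, the upper direction additionally uses that the target $\sqrt{2/(\pi n)}\,e^{-(t-\eps n)^2/2n}$ is bounded below by a constant when $n=O(1)$ (e.g.\ since $(t-\eps n)^2/2n\le 2n$), so enlarging $\xi$ suffices; this is worth one explicit line, though the paper is no more explicit there.
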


\def\propBinomProbRelation{Let $n \in \N$, $t,x,x' \in \Z$, $\eps\in [-1,1]$ and $\const > 0$ be such that $t-x, t-x' \in \Supp(\Beroo{n,\eps})$, $\size{x},\size{x'},\size{t} \leq \const \cdot  \sqrt{n\log n}$ and $\size{\eps} \leq \const \cdot \sqrt{\frac{\log{n}}{n}}$, then
\begin{align*}
\frac{\Beroo{n,\eps}(t-x')}{\Beroo{n,\eps}(t-x)}
&\in (1\pm \error)\cdot \exp\left(\frac{- 2\cdot (t - \eps n)\cdot x + x^2 + 2\cdot (t - \eps n)\cdot x' - x'^2}{2n}\right),
\end{align*}
for $\error = \varphi(\const) \cdot \frac{\log^{1.5} n}{\sqrt{n}}$ and a universal function $\varphi$.}

\begin{proposition}\label{prop:binomProbRelation}
\propBinomProbRelation
\end{proposition}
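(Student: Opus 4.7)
The plan is to deduce this proposition directly from \cref{prop:binomProbEstimation} by applying that estimate to both the numerator $\Beroo{n,\eps}(t-x')$ and the denominator $\Beroo{n,\eps}(t-x)$ and then taking the ratio. The Gaussian prefactor $\sqrt{2/\pi}/\sqrt{n}$ cancels, the exponential pieces combine into a single exponent, and the multiplicative $(1\pm\text{err})$ errors combine into one factor of the same order.

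First I would verify that the hypotheses of \cref{prop:binomProbEstimation} are met for both $t-x$ and $t-x'$ whenever $n$ is sufficiently large (in terms of $\const$). Since $|t|,|x|,|x'|\leq \const\sqrt{n\log n}$, we get $|t-x|,|t-x'|\leq 2\const\sqrt{n\log n}\leq n^{3/5}$ for large $n$, and likewise $|\eps|\leq \const\sqrt{(\log n)/n}\leq n^{-2/5}$. For small $n$ one absorbs the claim into a large enough $\varphi(\const)$. Applying \cref{prop:binomProbEstimation} yields
\begin{align*}
\frac{\Beroo{n,\eps}(t-x')}{\Beroo{n,\eps}(t-x)} \in \frac{1\pm\text{err}'}{1\pm\text{err}}\cdot\exp\!\left(\frac{(t-x-\eps n)^2-(t-x'-\eps n)^2}{2n}\right),
\end{align*}
where $\text{err},\text{err}'$ are the error terms obtained from the two applications of \cref{prop:binomProbEstimation}.

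Next I would simplify the exponent. Writing $\mu = t-\eps n$, the numerator of the exponent equals $(\mu-x)^2-(\mu-x')^2 = -2\mu x + x^2 + 2\mu x' - x'^2$, which matches the claimed expression exactly. For the multiplicative error, I would bound each of the four summands in the definition of $\text{err}$ (and analogously $\text{err}'$). Using $|\eps|\leq \const\sqrt{(\log n)/n}$ and $|t-x|\leq 2\const\sqrt{n\log n}$:
\begin{align*}
\eps^2|t-x| &\leq 2\const^3\,\tfrac{\log^{1.5} n}{\sqrt{n}}, \quad \tfrac{1}{n} \leq \tfrac{\log^{1.5}n}{\sqrt{n}},\\
\tfrac{|t-x|^3}{n^2} &\leq 8\const^3\,\tfrac{\log^{1.5}n}{\sqrt{n}}, \quad \eps^4 n \leq \const^4\,\tfrac{\log^2 n}{n}.
\end{align*}
Each term is $O_\const(\log^{1.5}n/\sqrt{n})$, so $\text{err},\text{err}' \leq \varphi_0(\const)\cdot\log^{1.5}n/\sqrt{n}$ for some universal function $\varphi_0$. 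For large enough $n$ this quantity is at most $1/2$, so $(1+\text{err}')/(1-\text{err}) = 1 \pm O(\text{err}+\text{err}')$, and this combined error gets absorbed into the claimed $\varphi(\const)\cdot\log^{1.5}n/\sqrt{n}$.

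There is no real obstacle here — everything is bookkeeping on the error terms coming from \cref{prop:binomProbEstimation}. The only mildly delicate point is that one wants the ratio of two $(1\pm\text{err})$ factors to still be $(1\pm O(\text{err}))$, which requires the $\text{err}$ produced by \cref{prop:binomProbEstimation} to be bounded away from $1$; this is automatic for sufficiently large $n$ because each of the four summands in that error is $O_\const(\log^{1.5}n/\sqrt{n}) = o(1)$, and finitely many small $n$ can be handled by choosing $\varphi(\const)$ large enough that the inclusion becomes trivial.
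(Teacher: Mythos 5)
Your proposal is correct and follows essentially the same route as the paper's proof: apply \cref{prop:binomProbEstimation} to both $\Beroo{n,\eps}(t-x')$ and $\Beroo{n,\eps}(t-x)$, cancel the common prefactor, combine the exponents into $\frac{-2(t-\eps n)x + x^2 + 2(t-\eps n)x' - x'^2}{2n}$, and absorb the two multiplicative errors (each $O_\const(\log^{1.5}n/\sqrt{n})$, hence below $\frac12$ for $n$ large, with small $n$ handled by enlarging $\varphi(\const)$) into a single $1\pm\varphi(\const)\cdot\frac{\log^{1.5}n}{\sqrt{n}}$ factor via the $\frac{1\pm y}{1\pm y}\subseteq 1\pm 4y$ observation. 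No gaps.
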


\def\propGameValuesDifferenceBound{Let $n \in \N$, $k,k' \in \Z$ and $\eps \in [-1,1]$, where $n$ is larger than a universal constant, $\size{k},\size{k'} \le n^{\frac35}$ and $\size{\eps} \leq  n^{-\frac25}$. Then
\begin{align*}
\abs{\vBeroo{n, \eps}(k) - \vBeroo{n, \eps}(k')} \leq
\frac{\abs{k-k'}}{\sqrt{n}}.
\end{align*}}

\begin{proposition}\label{prop:gameValuesDifferenceBound}
\propGameValuesDifferenceBound
\end{proposition}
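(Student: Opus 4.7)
The plan is to reduce the tail-probability difference to a sum of point probabilities, and then bound each point probability uniformly via the local-limit estimate of \cref{prop:binomProbEstimation}. By symmetry, assume $k\leq k'$; the case $k=k'$ is trivial, so assume $k'>k$. Since $\vBeroo{n,\eps}$ is non-increasing in its argument, the first step is to write
\begin{align*}
\abs{\vBeroo{n,\eps}(k)-\vBeroo{n,\eps}(k')} \;=\; \sum_{t=k}^{k'-1}\Beroo{n,\eps}(t).
\end{align*}
The strategy is then to bound each summand uniformly by $1/\sqrt{n}$ and conclude by multiplying by the number of summands, $k'-k$.

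For the uniform bound, I would invoke \cref{prop:binomProbEstimation} on each $t\in\{k,k+1,\ldots,k'-1\}$. If $t\notin\Supp(\Beroo{n,\eps})$ the summand vanishes; otherwise $\abs{t}\le \max(\abs{k},\abs{k'})\le n^{3/5}$, and together with the hypothesis $\abs{\eps}\le n^{-2/5}$ the preconditions of the proposition are met. Using the trivial inequality $e^{-(t-\eps n)^2/(2n)}\le 1$ then yields $\Beroo{n,\eps}(t)\le (1+\error)\sqrt{2/\pi}\,/\sqrt{n}$. Substituting the worst-case values $\abs{t}\le n^{3/5}$ and $\abs{\eps}\le n^{-2/5}$ into the formula $\error=\xi(\eps^2\abs{t}+\frac{1}{n}+\frac{\abs{t}^3}{n^2}+\eps^4 n)$, each of the four summands of $\error$ is $O(n^{-1/5})$, so $\error\to 0$ as $n$ grows.

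The only delicate point of the argument is to absorb the constant $\sqrt{2/\pi}\approx 0.798$ using the vanishing slack provided by $\error$. Since $\sqrt{2/\pi}<1$ strictly, for $n$ larger than a universal constant one has $(1+\error)\sqrt{2/\pi}\le 1$, and therefore $\Beroo{n,\eps}(t)\le 1/\sqrt{n}$ for every $t$ in the range of summation. Summing over the $k'-k$ values of $t$ then gives the desired bound $\abs{k-k'}/\sqrt{n}$. (Note that a naive approach ignoring parity works: even though only every other $t$ lies in $\Supp(\Beroo{n,\eps})$, the slack hidden in $\sqrt{2/\pi}<1$ is enough to cover the loss from double-counting, so no separate parity argument is needed.)
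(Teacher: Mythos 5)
Your proposal is correct and follows essentially the same route as the paper's own proof: both reduce the tail difference to a sum of at most $\abs{k-k'}$ point probabilities and bound each one by $\frac{1}{\sqrt{n}}$ via \cref{prop:binomProbEstimation}, using $e^{-(t-\eps n)^2/2n}\le 1$ and the slack $\sqrt{2/\pi}<1$ to absorb the $(1+\error)$ factor for $n$ above a universal constant. Your closing parenthetical about ``double-counting'' is unnecessary (terms with $t\notin\Supp(\Beroo{n,\eps})$ simply vanish, so no slack is needed for parity), but it does not affect the argument.
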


\def\propEpsDiff{Let $n, n' \in \N$, $k\in \Z$, $\eps \in [-1,1]$ and $\const > 0$ be such that $n \leq n'$, $\abs{k} \leq \const \cdot \sqrt{n \log n}$, $\size{\eps}  \leq \const \cdot \sqrt{\frac{\log n}{n}}$, and let $\delta = \vBeroo{n,\eps}(k)$. Then
\begin{align*}
\sBias{n'}{\delta} \in \frac{\eps n - k}{\sqrt{n\cdot n'}} \pm \error,
\end{align*}
for $\error = \varphi(\const) \cdot \frac{\log^{1.5} n}{\sqrt{n\cdot n'}}$ and a universal function $\varphi$.}

\begin{proposition}\label{prop:epsDiff}
\propEpsDiff
\end{proposition}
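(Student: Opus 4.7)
The plan is to establish both $\delta = \vBeroo{n,\eps}(k)$ and $\vBeroo{n',\eps^*}(0)$ as multiplicative approximations of the same Gaussian tail probability, where $\eps^* \eqdef (\eps n - k)/\sqrt{n n'}$ is the target value. The key algebraic identity is $(k-\eps n)/\sqrt{n} = -\eps^* \sqrt{n'}$, which makes the arguments of $\Phi$ in the two normal approximations coincide exactly.

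The first step is to show
\begin{align*}
\vBeroo{n,\eps}(k) = (1 \pm \gamma) \cdot \Phi\bigl((k - \eps n)/\sqrt{n}\bigr), \qquad \gamma = \varphi(\const)\cdot \log^{1.5} n / \sqrt{n},
\end{align*}
by expanding $\vBeroo{n,\eps}(k) = \sum_{t\ge k} \Beroo{n,\eps}(t)$, applying \cref{prop:binomProbEstimation} termwise for $|t| \leq n^{3/5}$ (whose hypotheses follow from our bounds on $\eps$ and $t$), recognizing the resulting expression as a Riemann sum of spacing $2$ that matches $\int_k^\infty \tfrac{1}{\sqrt{2\pi n}} e^{-(x-\eps n)^2/(2n)}\, dx = \Phi((k-\eps n)/\sqrt{n})$, and using \cref{claim:Hoeffding} to bound the tail $|t|>n^{3/5}$ as negligible relative to the main term. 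An analogous argument---after verifying the hypotheses via $|\eps^*| \leq 2\const \sqrt{\log n / n'} \ll (n')^{-2/5}$---yields $\vBeroo{n',\eps^*}(0) = (1 \pm O(\gamma)) \cdot \Phi(-\eps^* \sqrt{n'})$, using $n \leq n'$ and that $x \mapsto \log^{1.5} x/\sqrt{x}$ is decreasing for $x > e^3$. Combining these two estimates with the matching of the $\Phi$-arguments gives $\vBeroo{n',\eps^*}(0) = (1 \pm O(\gamma))\delta$.

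To convert this into the claimed bound on $\eps'\eqdef \sBias{n'}{\delta}$, fix $\eta \eqdef c \cdot \log^{1.5} n/\sqrt{n n'}$ and let $g(x) \eqdef \vBeroo{n', x}(0)$; since $g$ is strictly increasing with $g(\eps')=\delta$, it suffices to show $g(\eps^* + \eta) \ge \delta \ge g(\eps^* - \eta)$. Applying the second approximation with $\eps^* \pm \eta$ in place of $\eps^*$ and expanding $\Phi$ to first order in $\eta$ (legitimate since $\eta \sqrt{n'} = o(1)$) gives
\begin{align*}
g(\eps^* + \eta) - g(\eps^*) = (1 \pm o(1))\cdot \eta \sqrt{n'} \cdot \tfrac{1}{\sqrt{2\pi}}\cdot e^{-(\eps^* \sqrt{n'})^2/2},
\end{align*}
so it remains to verify that this quantity exceeds $O(\gamma\delta)$ for appropriate $c$. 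When $|\eps^* \sqrt{n'}| = O(1)$, the density factor is $\Theta(1)$ and the bound is immediate for $c$ large enough. When $|\eps^* \sqrt{n'}|$ is large and $\delta < 1/2$, Mill's ratio $\Phi(z) \le \tfrac{1}{z\sqrt{2\pi}} e^{-z^2/2}$ applied with $z = |\eps^* \sqrt{n'}|$ shows that $\delta$ itself carries the same exponential factor $e^{-(\eps^* \sqrt{n'})^2/2}$; this factor cancels in the required inequality, which then reduces to a polynomial-in-$\log n$ bound that is routinely satisfied. The symmetric case $\delta > 1/2$ reduces to $\delta < 1/2$ via the identity $\vBeroo{n', -\eps'}(0) = 1 - \vBeroo{n', \eps'}(0)$ (exact for odd $n'$, with only an $O(1/\sqrt{n'})$ correction in the even case).

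The main obstacle I anticipate is preserving \emph{multiplicative} rather than additive error throughout, so that the exponentially small Gaussian factor---which may be as small as $n^{-\Theta(\const^2)}$ within our parameter range---cancels cleanly in the critical ratio between $g'(\eps^*)$ and $\delta$ and does not blow up the final bound. A secondary, more routine obstacle is the careful bookkeeping of the Riemann-sum-versus-integral discrepancy (controllable by Euler-Maclaurin) and of the $|t|>n^{3/5}$ tail contribution in Step 1.
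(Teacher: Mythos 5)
Your proposal is correct in substance and runs on the same machinery as the paper's proof: a normal approximation of binomial tails whose error is kept proportional to the Gaussian factor $e^{-(k-\eps n)^2/(2n)}$, followed by a monotone sandwich in the bias parameter in which that exponential factor cancels against $\delta$ via Mill's ratio (\cref{app:missinProofs:normalBound}). The decomposition differs, though: the paper first shows $\delta$ is polynomially bounded away from $0$ and $1$, and then chains two inversion lemmas through the intermediate quantity $\Phi^{-1}(\delta)$ --- \cref{app:missinProofs:PhiMinusOne} gives $\Phi^{-1}(\delta)\in\frac{k-\eps n}{\sqrt n}\pm O(\frac{\log^{1.5}n}{\sqrt n})$ and \cref{app:missinProofs:epsDiffImproved} gives $\sBias{n'}{\delta}\in-\frac{\Phi^{-1}(\delta)}{\sqrt{n'}}\pm O(\frac{\log^{1.5}n'}{n'})$ --- whereas you sandwich $\sBias{n'}{\delta}$ directly around $\eps^*=\frac{\eps n-k}{\sqrt{nn'}}$, exploiting that the two $\Phi$-arguments coincide exactly. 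Your route is somewhat more direct (one sandwich instead of two chained lemmas), at the price of an explicit reflection for $\delta>\frac12$, which the paper does not need here because its error term $\pm\varphi(\const)\frac{\log^{1.5}n}{\sqrt n}e^{-(k-\eps n)^2/(2n)}$ is already symmetric in the two tails. One phrasing nit: your displayed increment $g(\eps^*+\eta)-g(\eps^*)=(1\pm o(1))\eta\sqrt{n'}\frac1{\sqrt{2\pi}}e^{-(\eps^*\sqrt{n'})^2/2}$ is not literally correct when $|\eps^*\sqrt{n'}|=O(1)$, since the $O(\gamma)\Phi$ errors from the two multiplicative approximations are then comparable to the increment; but the comparison you actually carry out --- increment versus $O(\gamma\delta)$, for $c$ large relative to $\varphi$ --- is the right one and mirrors the computation inside \cref{app:missinProofs:epsDiffImproved}.

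The one place where your stated justification is too lossy is the even-$n'$ case of the reflection. Treating the parity term as a bare additive $O(1/\sqrt{n'})$ perturbation of $1-\delta$ would break the argument in the deep tail: with $|k-\eps n|$ as large as $2\const\sqrt{n\log n}$, the quantity $1-\delta$ can be as small as $n^{-\Theta(\const^2)}$, while the precision in $\delta$ needed to pin $\sBias{n'}{\delta}$ within $\pm\varphi(\const)\frac{\log^{1.5}n}{\sqrt{nn'}}$ is of order $\frac{\log^{1.5}n}{\sqrt n}e^{-(k-\eps n)^2/(2n)}$, exponentially smaller than $1/\sqrt{n'}$; so a perturbation of that bare size could move the inverse far outside the allowed window. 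The fix is one line and exactly in the spirit of your own ``keep the error multiplicative'' discipline: the correction is the point mass $\Beroo{n',\eps'}(0)$, which by \cref{prop:binomProbEstimation} is $O\bigl(\frac{1}{\sqrt{n'}}e^{-\eps'^2n'/2}\bigr)$, i.e., it carries the same Gaussian factor as the local derivative $\approx\sqrt{n'}e^{-\eps'^2n'/2}$, and hence shifts the inverse only by $O(1/n')\le O(1/\sqrt{nn'})$, well within budget. With that bound made explicit (and the routine bookkeeping you already flag in Step 1), your proof goes through.
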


\def\propMainBound{Let $n\in \N$, integer $i \in [n-\floor{\log^{2.5}n}]$, $x,\beta,\beta',\alpha,\alpha' \in \Z$, $\eps \in [-1,1]$, $\cs \subseteq \Z$ and $\const > 0$  such that $\size{\alpha},\size{\alpha'} \leq \sqrt{\const \cdot  \NS{i} \cdot \log n}$,  $\size{\beta},\size{\beta'} \le 1$, $\cs \subseteq [-\sqrt{\const \cdot \NL{i} \cdot \log n},\sqrt{\const \cdot \NL{i} \cdot \log n}]$,  $x  \in \cs$, $\size{\eps} \leq \sqrt{\const \cdot  \frac{\log n}{\NS{i}}}$ and $\ex{x'\la \Beroo{\NL{i},\eps} \mid x'\in \cs}{\abs{x'}} \leq \ex{x'\la \Beroo{\NL{i},\eps}}{\abs{x'}}$. Then
\begin{align*}
\ex{x'\la \Beroo{\NL{i},\eps} \mid x'\in \cs}
{\exp\left(\frac{\alpha \cdot x + \beta \cdot x^2 + \alpha' \cdot x' + \beta' \cdot x'^2}{\NS{i+1}}\right)}
 \in 1 \pm \varphi(\const) \cdot \sqrt{\frac{\log{n}}{\NL{i}}}\left(1 + \frac{\abs{x}}{\sqrt{\NL{i}}}\right).
\end{align*}
for a universal function $\varphi$.}

\begin{proposition}\label{prop:mainBound}
\propMainBound
\end{proposition}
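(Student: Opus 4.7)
My plan is to split the exponent as $u = A + B(x')$, where $A := (\alpha x + \beta x^2)/\NS{i+1}$ is deterministic (in the fixed $x$) and $B(x') := (\alpha' x' + \beta' x'^2)/\NS{i+1}$ depends only on $x'$. Then
\begin{align*}
\ex{x'\la \Beroo{\NL{i},\eps}\,\mid\, x'\in \cs}{e^u} - 1 \,=\, e^A\cdot \bigl(\ex{x'\la \Beroo{\NL{i},\eps}\,\mid\, x'\in \cs}{e^{B(x')}} - 1\bigr) + (e^A-1),
\end{align*}
so it suffices to bound the two terms, aiming at $\varphi(\const)$ times the target $E_x := \sqrt{\log n/\NL{i}}\,(1+\abs{x}/\sqrt{\NL{i}})$ and $\varphi(\const)\sqrt{\log n/\NL{i}}$ respectively. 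I will write $\ex{\cs}{\,\cdot\,}$ as shorthand for the conditional expectation appearing above.

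First I would verify that both exponents are uniformly $o(1)$, so Taylor expansion is legitimate. Using $\NS{i+1}=\NL{i}(\NL{i}-1)/2\ge \NL{i}^2/4$ together with the input bounds and the pointwise bound $\abs{x'}\le \sqrt{\const\NL{i}\log n}$ valid on $\cs$, a direct computation gives $\abs{A},\abs{B(x')} = O_\const(\log n/\sqrt{\NL{i}})$, which is $o(1)$ under the hypothesis $\NL{i}\ge \log^{2.5}n$. Hence $\abs{e^A-1}\le 2\abs{A}$ and $\abs{e^{B(x')}-1-B(x')}\le C\cdot B(x')^2$ for a universal constant $C$. For the deterministic factor, substituting the bound on $\abs{A}$ and using $\abs{x}/\sqrt{\NL{i}}\le \sqrt{\const\log n}$ to absorb the quadratic-in-$x$ contribution into $E_x$ yields $\abs{e^A-1}\le \varphi(\const)E_x$.

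The heart of the proof is the expectation $\ex{\cs}{e^{B(x')}}$. After Taylor expanding, the task reduces to estimating $\ex{\cs}{x'}$, $\ex{\cs}{x'^2}$ (linear part) and $\ex{\cs}{x'^2}$, $\ex{\cs}{x'^4}$ (quadratic remainder). The parameter regime $\abs{\eps}\le \sqrt{\const\log n/\NS{i}}$ combined with $\NS{i}\ge \NL{i}^2/2$ gives $\abs{\eps}\le 1/\sqrt{\NL{i}}$ (using $\NL{i}\ge \log^{2.5}n$), so \cref{fact:E_x_m} produces the unconditional bounds $\ex{x'\la \Beroo{\NL{i},\eps}}{\abs{x'}}\le \sqrt{2\NL{i}}$ and $\ex{x'\la \Beroo{\NL{i},\eps}}{x'^2}\le 2\NL{i}$. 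The hypothesis transfers the first moment under conditioning: $\ex{\cs}{\abs{x'}}\le \sqrt{2\NL{i}}$. For the second moment there is no analogous hypothesis, so I would sidestep it using the uniform $\cs$-bound to write $\ex{\cs}{x'^2}\le \sqrt{\const\NL{i}\log n}\cdot \ex{\cs}{\abs{x'}}\le \varphi(\const)\NL{i}\sqrt{\log n}$, paying an extra $\sqrt{\log n}$ factor that the $\sqrt{\log n/\NL{i}}$ tolerance precisely absorbs. The same device handles $\ex{\cs}{x'^4}\le \const\NL{i}\log n\cdot \ex{\cs}{x'^2}$. Substituting and simplifying with $\NS{i+1}\ge \NL{i}^2/4$ yields $\abs{\ex{\cs}{B(x')}}, \ex{\cs}{B(x')^2}\le \varphi(\const)\sqrt{\log n/\NL{i}}$.

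Assembling these ingredients via $\abs{e^A}\le 2$ and $\sqrt{\log n/\NL{i}}\le E_x$ gives the claimed bound. The main obstacle I anticipate is exactly the conditional second-moment step: the hypotheses control only $\ex{\cs}{\abs{x'}}$, so \cref{fact:E_x_m} is unavailable under conditioning, and the workaround through the pointwise $\cs$-bound appears essentially forced. This is also why the $\sqrt{\log n}$ factor inside the target $\sqrt{\log n/\NL{i}}(1+\abs{x}/\sqrt{\NL{i}})$ looks inherent to this approach rather than an artifact.
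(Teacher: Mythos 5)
Your argument is correct and follows essentially the same route as the paper's: factor out the deterministic $e^{(\alpha x+\beta x^2)/\NS{i+1}}$ term, note that all exponents involved are $o(1)$ in the regime $\NL{i}\ge \log^{2.5}n$, use the pointwise bound on $\cs$ to turn quadratic terms into linear ones, and control the remaining linear term through the hypothesis $\ex{x'\la\Beroo{\NL{i},\eps}\mid x'\in\cs}{\abs{x'}}\le\ex{x'\la\Beroo{\NL{i},\eps}}{\abs{x'}}\le\sqrt{2\NL{i}}$ via \cref{fact:E_x_m}. The paper merely streamlines your middle step by applying the first-order bound $e^z\in 1\pm 2\abs{z}$ pointwise to the $x'$-factor (\cref{app:missinProofs:estimatingPolyX}) before averaging, so the conditional second and fourth moments you work to control never arise; otherwise the two proofs coincide.
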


\subsection{Facts About the Hypergeometric Distribution}\label{sec:prelim:HypGeo}
\begin{fact}[Hoeffding's inequality for hypergeometric distribution]\label{fact:hyperHoeffding}
Let $\ell \leq n \in \N$, and $p \in \Z$ with $\size{p}\ \leq n$. Then
$$\ppr{x\la \Hyp{n,p,\ell}}{{\abs{x-\mu}} \geq t} \leq e^{-\frac{t^2}{2\ell}},$$
for $\mu = \ex{x\la \Hyp{n,p,\ell}}{x} = \frac{\ell  \cdot p}{n}$.
\end{fact}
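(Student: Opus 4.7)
The plan is to derive the hypergeometric tail bound by reduction to the binomial Hoeffding bound already recorded in Fact~\ref{claim:Hoeffding}, using the classical observation of \citet{Hoeffding1963} that sampling without replacement is no less concentrated than sampling with replacement.

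First I would realize $x \la \Hyp{n,p,\ell}$ concretely: fix any $\vct \in \oo^n$ with $\w(\vct) = p$, let $\cI$ be a uniformly random $\ell$-subset of $[n]$, and write $x = \sum_{j=1}^{\ell} Z_j$, where $Z_1,\ldots,Z_\ell$ are the values $\vct_{i}$ at $\ell$ coordinates chosen uniformly \emph{without} replacement from $[n]$. In parallel, let $Y_1,\ldots,Y_\ell$ denote $\ell$ coordinates of $\vct$ drawn \emph{with} replacement, and set $Y = \sum_{j=1}^\ell Y_j$. Each $Y_j$ is distributed as $\Beroo{\eps}$ for $\eps = p/n$, so $Y \sim \Beroo{\ell, p/n}$. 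Both $x$ and $Y$ have the same mean $\mu = \ell p / n$.

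Next I would invoke Hoeffding's comparison lemma: for every continuous convex function $\phi \colon \R \to \R$,
\[
\Exp\!\left[\phi\!\left(\sum_{j=1}^\ell Z_j\right)\right] \;\leq\; \Exp\!\left[\phi\!\left(\sum_{j=1}^\ell Y_j\right)\right].
\]
Applying this to $\phi(u) = \exp(s(u-\mu))$ for an arbitrary $s \in \R$ shows that the moment-generating function of $x - \mu$ is dominated by that of $Y - \mu$. By the standard Chernoff argument (Markov applied to $\exp(s(\cdot - \mu))$ and optimizing $s$), any tail bound obtained for $Y$ through its MGF transfers verbatim to $x$.

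Finally, the tail of $Y \sim \Beroo{\ell,p/n}$ is controlled exactly by Fact~\ref{claim:Hoeffding}, giving $\pr{\abs{Y-\mu} \geq t} \leq 2 e^{-t^2/(2\ell)}$, which is at least as strong as the asserted bound $e^{-t^2/(2\ell)}$. Combining the previous two steps yields the claim. The only nontrivial ingredient is Hoeffding's comparison inequality, which is a classical result that can simply be cited from \cite{Hoeffding1963}; all remaining manipulations are the standard Chernoff--Markov derivation already used to prove Fact~\ref{claim:Hoeffding}.
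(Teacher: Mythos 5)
Your overall route---realizing $\Hyp{n,p,\ell}$ as the sum of $\ell$ coordinates of a fixed $\oo$-vector sampled without replacement, invoking Hoeffding's convex-order comparison to dominate its moment-generating function by that of the with-replacement sum $\Beroo{\ell,p/n}$, and then running the standard Chernoff argument---is the classical one and is more self-contained than the paper's proof, which is simply a citation to equations (10) and (14) of the reference. The problem is your final step: the bound you actually derive for the two-sided deviation is $2e^{-t^2/(2\ell)}$, and this is \emph{weaker}, not stronger, than the asserted bound $e^{-t^2/(2\ell)}$ (a factor-$2$ larger upper bound cannot imply the factor-$1$ one). So as written your argument does not establish the fact in the stated form; what it does establish is the one-sided bound $\pr{x-\mu\geq t}\leq e^{-t^2/(2\ell)}$ (and symmetrically for the lower tail), hence the two-sided bound only with an extra factor of $2$, exactly as in the paper's binomial statement (\cref{claim:Hoeffding}).

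This is not a defect you can repair by tightening constants inside your approach: the MGF/Chernoff method bounds each tail separately, and the union bound then forces the factor $2$. In fact the literal two-sided statement without the factor $2$ is problematic on its own---for $n=2$, $p=0$, $\ell=1$, $t=1$ the left-hand side equals $1$ while $e^{-t^2/(2\ell)}=e^{-1/2}<1$---so the fact should either be read as the pair of one-sided bounds given in the cited equations, or carry the factor $2$. For every use of this fact later in the paper your $2e^{-t^2/(2\ell)}$ bound would suffice (the applications have ample slack), but you should say that explicitly rather than assert that $2e^{-t^2/(2\ell)}$ is at least as strong as $e^{-t^2/(2\ell)}$, which reverses the inequality.
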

\begin{proof}
Immediately follows by  \cite[Equations (10),(14)]{scala2009hypergeometric}.
\end{proof}

The following proposition is proved in \cref{app:missinProofs:HypGeo}.

\def\propHyperProbEstimation{Let $n \in \N$, $p, t\in \Z$ be such that $\size{p}, \size{t} \leq n^{\frac35}$ and $t \in \Supp(\Hyp{2n, p, n})$. Then
\begin{align*}
\Hyp{2n, p, n}(t) &\in
(1 \pm \error) \cdot \frac{2}{\sqrt{\pi\cdot n}} \cdot e^{-\frac{(t-\frac{p}2)^2}{n}},
\end{align*}
for $\error= \xi \cdot (\frac{n + \abs{p}^3 + \abs{t}^3}{n^2})$ and a universal constant $\xi$.}

\begin{proposition}\label{prop:hyperProbEstimation}
\propHyperProbEstimation
\end{proposition}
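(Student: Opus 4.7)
The main idea is to reduce this local limit theorem for the hypergeometric to \cref{prop:binomProbEstimation} for the binomial, via the classical identity $\binom{N}{k} = 2^N\cdot \Beroo{N,0}(2k-N)$. Setting $a = n + p/2$, $b = n - p/2$, and $s = t - p/2$, the definition of the hypergeometric PMF together with the fact that $a + b = 2n$ gives
\begin{equation*}
\Hyp{2n,p,n}(t) \;=\; \frac{\binom{a}{(n+t)/2}\binom{b}{(n-t)/2}}{\binom{2n}{n}} \;=\; \frac{\Beroo{a,0}(s)\cdot \Beroo{b,0}(-s)}{\Beroo{2n,0}(0)}.
\end{equation*}
The parity and support conditions required on the three binomial factors follow automatically from $t \in \Supp(\Hyp{2n,p,n})$ and the fact that $p$ is necessarily even.

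I would then apply \cref{prop:binomProbEstimation} with $\eps = 0$ to each of the three factors. Under the hypothesis $|p|,|t| \le n^{3/5}$, we have $|a - n|, |b - n| \le \tfrac12 n^{3/5}$ and $|s| \le \tfrac32 n^{3/5}$, so the size constraints of the proposition are satisfied up to a harmless absolute constant factor (absorbed into the universal $\xi$). Each application yields an approximation of the form
\begin{equation*}
\Beroo{a,0}(s) \;=\; \sqrt{\tfrac{2}{\pi a}}\cdot e^{-s^2/(2a)}\cdot \bigl(1 \pm O(a^{-1} + |s|^3/a^2)\bigr),
\end{equation*}
and analogous approximations for $\Beroo{b,0}(-s)$ and $\Beroo{2n,0}(0)$ (the latter with $s$ replaced by $0$).

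Next I would multiply these three approximations and simplify. Using $ab = n^2 - p^2/4$, the Gaussian prefactors consolidate as
\begin{equation*}
\sqrt{\tfrac{2}{\pi a}}\cdot \sqrt{\tfrac{2}{\pi b}}\cdot \sqrt{\pi n} \;=\; \frac{2}{\sqrt{\pi}}\sqrt{\tfrac{n}{ab}} \;=\; \frac{2}{\sqrt{\pi n}}\cdot \bigl(1 + O(p^2/n^2)\bigr),
\end{equation*}
and the exponents combine via $\tfrac1a + \tfrac1b = \tfrac{2n}{ab}$ into $-\tfrac{s^2 n}{ab} = -\tfrac{s^2}{n}\cdot(1 + O(p^2/n^2))$. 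Since $s^2 p^2 / n^3 = O(n^{-3/5})$ is bounded, linearizing the residual in the exponent contributes an additional $(1 + O(s^2 p^2 / n^3))$ multiplicative factor, and substituting $s = t - p/2$ yields the claimed leading term $\tfrac{2}{\sqrt{\pi n}}\cdot e^{-(t-p/2)^2/n}$.

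The main obstacle, and essentially all remaining work, is the bookkeeping of the accumulated multiplicative errors, which I would show collapse into $\xi\cdot(n + |p|^3 + |t|^3)/n^2$: the three Stirling-type applications contribute $O(n^{-1} + |s|^3/n^2)$ (with $|s|^3 \le 4(|t|^3 + |p|^3)$); the prefactor consolidation contributes $O(p^2/n^2)$, absorbed via the elementary inequality $p^2/n^2 \le n^{-1} + |p|^3/n^2$; and the exponent linearization contributes $O(s^2 p^2/n^3) \le O((|t|^3 + |p|^3)/n^2)$. Each step is elementary, but the accounting must be done carefully to verify that no term escapes the stated form.
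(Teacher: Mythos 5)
Your proof is correct and takes essentially the same route as the paper: the paper likewise writes $\Hyp{2n,p,n}(t)=\binom{n+p/2}{\frac{n+t}{2}}\binom{n-p/2}{\frac{n-t}{2}}/\binom{2n}{n}$, applies the Stirling-type binomial-coefficient estimate (the $\eps=0$ content of \cref{prop:binomProbEstimation}) to each of the three factors, and then consolidates prefactors via $ab=n^2-p^2/4$ and exponents via $\frac1a+\frac1b=\frac{2n}{ab}$ with the same error bookkeeping. The constant-factor slack you note in the range condition (that $|t-p/2|$ may exceed $(n\pm p/2)^{3/5}$ by a bounded factor) is taken by the paper's proof as well and is indeed harmless.
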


\def\propHyperProbRelation{Let $n \in \N$, $p,t,x,x' \in \Z$ and $\const > 0$ be such that $t-x, t-x' \in \Supp(\Hyp{2n, p, n})$ and $\size{p},\size{t},\size{x},\size{x'} \leq \const \cdot \sqrt{n \log n}$. Then
\begin{align*}
\frac{\Hyp{2n, p, n}(t-x')}{\Hyp{2n, p, n}(t-x)}
&\in (1 \pm \error) \cdot \exp\left(\frac{-2(t-\frac{p}2)x + x^2 + 2(t-\frac{p}2)x' - x'^2}{n}\right),
\end{align*}
for $\error = \varphi(\const)\cdot \frac{\log^{1.5}{n}}{\sqrt{n}}$ and a universal function $\varphi$.}
\begin{proposition}\label{prop:hyperProbRelation}
\propHyperProbRelation
\end{proposition}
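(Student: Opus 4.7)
The plan is to apply the Gaussian approximation from \cref{prop:hyperProbEstimation} to both the numerator and the denominator, take their ratio, and verify that the resulting exponent matches the one claimed, while the multiplicative error of the approximation is absorbed into the factor $(1\pm \error)$. This mirrors the strategy used to derive \cref{prop:binomProbRelation} from \cref{prop:binomProbEstimation} in the binomial setting.

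First, I would verify that the hypotheses of \cref{prop:hyperProbEstimation} apply to both $t-x$ and $t-x'$: since $\size{t},\size{x},\size{x'} \leq \const \sqrt{n\log n}$, we have $\size{t-x},\size{t-x'} \leq 2\const \sqrt{n\log n} \leq n^{3/5}$ for $n$ larger than a constant depending on $\const$, and by hypothesis $\size{p}\leq \const\sqrt{n\log n}\leq n^{3/5}$ as well, while membership in the support is given. \cref{prop:hyperProbEstimation} then yields
\begin{align*}
\Hyp{2n,p,n}(t-x) &\in (1\pm \error_1)\cdot \tfrac{2}{\sqrt{\pi n}}\cdot e^{-(t-x-p/2)^2/n},\\
\Hyp{2n,p,n}(t-x') &\in (1\pm \error_1')\cdot \tfrac{2}{\sqrt{\pi n}}\cdot e^{-(t-x'-p/2)^2/n},
\end{align*}
with $\error_1,\error_1' \leq \xi\cdot \frac{n + \size{p}^3 + \max(\size{t-x}^3,\size{t-x'}^3)}{n^2}$. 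Using the given bounds on $p,t,x,x'$ this is at most $\varphi_0(\const)\cdot \frac{\log^{1.5} n}{\sqrt{n}}$ for a suitable universal function $\varphi_0$, since each term $\size{t-x}^3/n^2$ and $\size{p}^3/n^2$ is $O((\log n)^{1.5}/\sqrt{n})$ and $1/n$ is dominated by the same quantity.

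Next I would expand $(t-x-p/2)^2 = (t-p/2)^2 - 2(t-p/2)x + x^2$ and similarly for $x'$, and take the ratio. The common factor $\frac{2}{\sqrt{\pi n}}$ and the common exponent $e^{-(t-p/2)^2/n}$ cancel, leaving exactly
\begin{align*}
\frac{\Hyp{2n,p,n}(t-x')}{\Hyp{2n,p,n}(t-x)} &\in \frac{1\pm \error_1'}{1\pm \error_1}\cdot \exp\!\left(\frac{-2(t-\tfrac{p}{2})x + x^2 + 2(t-\tfrac{p}{2})x' - x'^2}{n}\right).
\end{align*}
Finally, for $n$ larger than some constant depending on $\const$, $\error_1,\error_1' \leq 1/2$, so $\frac{1\pm \error_1'}{1\pm \error_1} \subseteq 1 \pm 4\max(\error_1,\error_1') \subseteq 1 \pm \varphi(\const)\cdot \frac{\log^{1.5} n}{\sqrt{n}}$ for an appropriate $\varphi$; for small $n$ the statement can be made vacuous by further inflating $\varphi$. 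This is the desired conclusion.

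I do not anticipate a real obstacle: the algebraic identity producing the claimed exponent is immediate once one writes out the squares, and the bookkeeping for the error term is routine given the input restrictions. The only mildly delicate point is confirming that each of $\size{p}^3$, $\size{t-x}^3$, $\size{t-x'}^3$, and $n$ contributes $O(n^{1.5}\log^{1.5} n)$, so that after dividing by $n^2$ one obtains the target rate $\log^{1.5} n/\sqrt n$ rather than anything worse; this is direct from the assumed bound $\size{p},\size{t},\size{x},\size{x'} \leq \const \sqrt{n\log n}$.
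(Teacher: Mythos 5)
Your proposal is correct and follows essentially the same route as the paper's proof: apply \cref{prop:hyperProbEstimation} to both $\Hyp{2n,p,n}(t-x')$ and $\Hyp{2n,p,n}(t-x)$, cancel the common Gaussian factor, expand the squares to obtain the claimed exponent, and absorb the two multiplicative errors (each $O_\const(\log^{1.5}n/\sqrt n)$ by the assumed bounds on $p,t,x,x'$) using $\frac{1\pm y}{1\pm y}\subseteq 1\pm 4y$ for large $n$, with small $n$ handled by inflating $\varphi$. No gaps.
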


\subsection{Multi-Party Protocols}\label{sec:Protocols}
The following discussion is restricted to no private input protocols (such restricted protocols suffice for our needs).

A  $\partNum$-party  protocol is defined using $\partNum$ Turing Machines (TMs) $\Pc_1,\ldots,\Pc_\partNum$,
having the security parameter $1^\secParam$ as their common input. In each round, the parties broadcast and receive
messages on  a broadcast channel. At the end of protocol, each  party outputs some binary string.

The parties communicate in a synchronous network, using only a broadcast channel: when a party broadcasts a message, all other parties see \emph{the same} message. This ensures some consistency between the information the parties have. There are no private channels and all the parties see all the messages, and can identify their sender. We do not assume simultaneous broadcast. It follows that  in each round, some parties might hear the messages sent by the other parties before broadcasting their messages. We assume that if a party aborts, it first broadcasts the message \Abort to the other parties, and \wlg only does so at the end of a round in which it is supposed to send a message.
A protocol is \emph{efficient}, if its parties are \pptm, and the protocol's number of rounds is a computable function of the security parameter.

This work focuses on  efficient protocols, and on malicious, static \ppt adversaries for such protocols. An adversary is allowed to corrupt some subset of the parties; before the beginning of the protocol, the adversary corrupts a subset of the parties that from now on may arbitrarily deviate from the protocol. Thereafter, the adversary sees the messages sent to the corrupted parties and controls their messages. We also consider the so called \textit{fail-stop} adversaries. Such adversaries follow the prescribed protocol, but might abort prematurely. Finally, the honest parties follow the instructions of the protocol to its completion.

\subsection{The Real vs. Ideal Paradigm}\label{sec:realIdeal}
The security of multi-party computation protocols is defined using the \emph{real} vs. \emph{ideal} paradigm \cite{Canetti00,Goldreich04}. In this paradigm, the \emph{real-world model}, in which
protocols is  executed is compared to an \emph{ideal model} for executing the task at hand.
The latter model involves a trusted party whose functionality captures the security requirements
of the task. The security of the real-world protocol is argued by showing that it ``emulates''  the ideal-world protocol, in the following sense: for any real-life adversary $\Aadv$, there exists an ideal-model adversary (\aka simulator) $\iAadv$  such that the global output of an execution of the protocol with $\Aadv$ in the real-world model is distributed similarly to the global output of running
$\iAadv$ in the ideal model. The following discussion is restricted  to random,  no-input functionalities. In addition, to keep the presentation simple, we limit our attention to uniform adversaries.\footnote{All results stated in this paper, straightforwardly extend to the  non-uniform settings.}

\mparagraph{The Real Model}
Let $\pi$ be an $\partNum$-party protocol and let $\Aadv$ be an adversary controlling a subset $\cC \subseteq [\partNum]$ of the parties. Let $\Real_{\pi,\Aadv,\cC}(\secParam)$ denote the output of $\Aadv$  (\ie \wlg its view: its random input and the messages it received) and the outputs of the honest parties, in a random  execution of $\pi$ on common input $1^\secParam$.

Recall that an adversary is \emph{fail stop}, if until they abort, the parties in its control follow the prescribed protocol (in particular, they property toss their private random coins). We call  an execution of $\pi$ with such a fail-stop adversary, a fail-stop execution. 

\mparagraph{The Ideal Model}
 Let  $f$ be a $\partNum$-output  functionality. If $f$ gets a security parameter (given in unary), as its first input, let $f_\secParam(\cdot) = f(1^\secParam,\cdot)$.  Otherwise, let $f_\secParam = f$.

An ideal execution of $f$ \wrt an adversary $\iAadv$  controlling a subset $\cC \subseteq [\partNum]$ of the ``parties" and a security parameter  $1^\secParam$, denoted  $\Ideal_{f,\iAadv,\cC}(\secParam)$, is the output of the adversary $\iAadv$ and that of the trusted party, in the following experiment.
\begin{experiment}~
\begin{enumerate}
  \item The trusted party sets $(y_1,\dots,y_\partNum) = f_\secParam(X)$, where $X$ is a uniform element in the domain of $f_\secParam$,  and sends $\set{y_i}_{i\in \cC}$ to $\iAadv(1^\secParam)$.

  \item $\iAadv(1^\secParam)$ sends  the message $\mathsf{Continue}$/ $\mathsf{Abort}$ to the trusted party, and locally outputs some value.

  \item The trusted party outputs $\set{o_i}_{i\in [\partNum] \setminus \cC}$, for  $o_i$ being $y_i$ if $\iAadv$ instructs  $\mathsf{Continue}$, and $\perp$ otherwise.
\end{enumerate}
\end{experiment}
An adversary $\iAadv$ is non-aborting, if it never sends the $\mathsf{Abort}$ message.

\subsubsection{\texorpdfstring{$\alpha$}{Alpha}-Secure Computation}
The following definitions adopts the notion of $\alpha$-secure computation \cite{BeimelLOO11,GordonKatz10,Katz07} for our restricted settings. 
\begin{definition}[$\alpha$-secure computation]\label{def:deltaSecure}
An efficient  $\partNum$-party protocol $\pi$ computes  a $\partNum$-output functionality $f$ in a {\sf  $\alpha$-secure} manner [\resp against fail-stop adversaries], if for every $\cC \subsetneq [\partNum]$ and every [\resp fail-stop] \ppt
adversary $\Aadv$ controlling the parties indexed by $\cC$,\footnote{The requirement that $\cC$ is a \emph{strict} subset of $[\partNum]$, is merely for notational convinced.} there exists a \ppt $\iAadv$ controlling the same parties, such that
$$\SD\left(\Real_{\pi,\Aadv,\cC}(\secParam),\Ideal_{f,\iAadv,\cC}(\secParam)\right) \leq \alpha(\secParam),$$
for large enough $\secParam$.

A protocol securely compute a functionality $f$, if it computes $f$ in a $\negl(\secParam)$-secure manner.

The protocol $\pi$ computes $f$ in a {\sf simultaneous $\alpha$-secure} manner, if the above is achieved by a {\sf non-aborting} $\iAadv$.
\end{definition}
Note that being simultaneous $\alpha$-secure is a very strong requirement, as it dictates that the cheating  real adversary has no way to prevent the honest parties from getting their part of the output, and this should be achieved with no simultaneous broadcast mechanism.

\subsection{Fair Coin-Flipping Protocols}\label{sec:CFprotocols}
\begin{definition}[$\alpha$-fair coin-flipping]\label{def:fairCTSim}
For $\partNum\in \N$ let $\CoinToss_\partNum$ be the $\partNum$-output functionality from $\zo$ to $\zo^\partNum$, defined by $\CoinToss_{\partNum}(b)= b \ldots b$ ($\partNum$ times).  A $\partNum$-party protocol $\pi$ is {\sf $\alpha$-fair coin-flipping protocol}, if it computes $\CoinToss_\partNum$ in  a {\sf simultaneous $\alpha$-secure} manner.
\end{definition}

\subsubsection{Proving Fairness}\label{sec:provingFairness}
The following lemma reduces the task of proving fairness of a coin-flipping protocol,  against fail-stop adversaries, to proving the protocol is correct: the honest parties always output the same bit, and this bit is uniform in an all honest execution, and to proving the protocol is unbiased: a fail-stop adversary cannot bias the output of the honest parties by too much.

\begin{definition}[correct coin-flipping protocols]\label{def:CorrectCT}
A protocol is a {\sf correct coin flipping}, if
 \begin{itemize}
   \item When interacting with an fails-stop adversary controlling a subset of the parties, the honest parties {\sf always} output the same bit, and
   \item The common  output in  a random {\sf honest} execution of $\pi$, is uniform over $\zo$.
 \end{itemize}
\end{definition}

Given a partial view of a fail-stop adversary, we are interesting in the  expected outcome of the parties, conditioned on this and the adversary making no further aborts.
\begin{definition}[view value]\label{def:ViewVal}
Let $\pi$ be a protocol in which the honest parties always output the same bit value. For a partial  view $v$ of the parties in a fail-stop  execution of $\pi$, let $\Cc_\pi(v)$ denote the parties' full view in an {\sf honest} execution of $\pi$ conditioned on $v$ (\ie all parties that do not abort in $v$ act honestly in $\Cc_\pi(v)$). Let $\val_\pi(v) = \ex{v' \la \Cc_\pi(v)}{\out(v')}$,  where $\out(v')$ is the common output of the non-aborting  parties in $v'$.
\end{definition}

Finally, a protocol is unbiased, if no fail-stop adversary can bias the common output of the honest parties by too much.
\begin{definition}[$\alpha$-unbiased coin-flipping protocols]\label{def:FairCTGameAlt}
A $\partNum$-party, $\rnd$-round  protocol $\pi$ is {\sf $\alpha$-unbiased}, if the following holds for every fail-stop adversary $\Aadv$ controlling  the parties indexed by a subset $\cC\subset [\partNum]$. Let $V$ be the view of the corrupted parties controlled by $\Aadv$ in a random execution of $\pi$, and let $I_j$ be the index of the $j$'th round in which $\Aadv$ sent an abort message (set to $\rnd+1$, if no such round). Let $V_{i}$ be the prefix of $V$ at the end of the $i$'th round, letting $V_0$  being the empty view, and let $V_i^{-}$ be the prefix of $V_i$ with the $i$'th round abort messages (if any) removed. Then
  $$\size{\ex{V}{\sum_{j\in \size{\cC}} \val(V_{I_j}) -  \val(V_{I_j}^-)}} \leq  \alpha,$$
  where $\val = \val_\pi$ is according to \cref{def:ViewVal}.
\end{definition}

The following is an alternative characterization of fair coin-flipping protocols (against fail-stop adversaries).
\begin{lemma}\label{prop:FairCTGameAlt}
Let $\pi$ be a correct,  $\alpha$-unbiased coin-flipping protocol with $\alpha(\secParam) \leq \frac12 - \frac1{p(\secParam)}$, for some $p\in \poly$, then $\pi$ is a $(\alpha(\secParam) + \negl(\secParam))$-secure coin-flipping protocol against fail-stop
adversaries.
\end{lemma}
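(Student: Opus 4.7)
The plan is to construct, for each fail-stop \ppt adversary $\Aadv$ controlling a strict subset $\cC\subsetneq[\partNum]$, a non-aborting ideal-model simulator $\iAadv$ via rejection sampling on the honest-party output bit. The proof naturally splits into two parts: first, turning the $\alpha$-unbiasedness hypothesis into a pointwise bias bound on the bit output by the honest parties in a real interaction with $\Aadv$; and second, using that bias bound to argue that rejection sampling both terminates in strict polynomial time and yields an ideal distribution which is $\alpha+\negl$-close to the real one.

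For the first step, I would use the telescoping identity $\val(V_\rnd)-\val(V_0)=\sum_{i=1}^{\rnd}(\val(V_i)-\val(V_i^{-}))+\sum_{i=1}^{\rnd}(\val(V_i^{-})-\val(V_{i-1}))$. Correctness of $\pi$ gives $\val(V_0)=\tfrac12$, and since $V_\rnd$ is a complete transcript, $\val(V_\rnd)$ equals the actual common output $C$ of the honest parties; hence the left-hand side is $C-\tfrac12$. The first sum only contributes in abort rounds, so its expectation is exactly the quantity bounded by $\alpha$ in \cref{def:FairCTGameAlt}. The second sum has expectation zero: because $\Aadv$ is fail-stop, the round-$i$ messages (ignoring abort flags) are distributed identically to those of an honest continuation from $V_{i-1}$, which is precisely what $\val(V_{i-1})$ averages over. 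Taking expectations and absolute values yields $\abs{\ex{}{C}-\tfrac12}\le\alpha$, and hence $\pr{C=b}\ge\tfrac12-\alpha\ge\tfrac{1}{p(\secParam)}$ for each $b\in\zo$.

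For the second step, on input $b$ from the trusted party, $\iAadv$ would run $T=\secParam\cdot p(\secParam)$ independent internal simulations of $\pi$ against $\Aadv$, itself playing the honest parties with fresh randomness; it then outputs the view of $\Aadv$ in the first simulation whose honest-party output equals $b$, defaulting to a fixed view if none succeeds, and in all cases sending $\mathsf{Continue}$ to the trusted party (so it is non-aborting). The probability of total failure is at most $(1-1/p)^{T}\le e^{-\secParam}=\negl(\secParam)$. Conditioned on success, the emitted view is distributed as $V$ given $C=b$ under the real distribution, so the ideal joint of (simulator view, honest output) is that conditional paired with a uniform $b$. A direct computation, writing $\delta=\pr{C=1}-\tfrac12\in[-\alpha,\alpha]$, gives the statistical distance to the real joint $(V,C)$ as $\tfrac12\sum_{c\in\zo}\abs{\pr{C=c}-\tfrac12}=\abs{\delta}\le\alpha$; adding the negligible failure probability bounds the overall distance by $\alpha(\secParam)+\negl(\secParam)$.

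The main obstacle is the martingale step in the first part: one must carefully justify, using exactly what a fail-stop view encodes, that $\ex{}{\val(V_i^{-})-\val(V_{i-1})}=0$ round by round. Once that is clean, everything else is essentially routine bookkeeping—the rejection-sampling simulator is a standard construction, the polynomial gap $\pr{C=b}\ge1/p$ supplied by the hypothesis $\alpha\le\tfrac12-1/p$ is exactly what guarantees strict polynomial time, and the short calculation above shows that rejection sampling on $C$ introduces no cost beyond the bias $\abs{\delta}$ itself.
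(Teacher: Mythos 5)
Your proposal is correct and follows essentially the same route as the paper: the paper also reduces everything to showing $\size{\eex{\val(V)}-\tfrac12}\le\alpha$ via the zero-expectation of the increments $\val(V_i^-)-\val(V_{i-1})$ under fail-stop play (stated there as an induction over the abort rounds $I_\ell$ rather than your round-by-round telescope), and then uses exactly your non-aborting rejection-sampling simulator with $\secParam\cdot p(\secParam)$ attempts and the same statistical-distance calculation giving $\alpha(\secParam)+\negl(\secParam)$. One cosmetic point: $\val(V_\rnd)$ is $\Ex[C\mid V_\rnd]$ rather than $C$ itself (honest parties may use private randomness not in the adversary's view), but since your argument only takes expectations this changes nothing.
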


\begin{proof}
Let  $\Aadv$ be a \ppt fail-stop  adversary controlling  a subset $\cC\subsetneq [\partNum]$ of the parties. The ideal-world adversary $\iAadv$ is defined as follows.
\begin{algorithm}[$\iAadv$]
\item[Input:] $1^\secParam$.

\item [Operation:] Upon receiving $\set{y_i = b}_{i\in \cC}$ from the trusted party, for some $b\in \zo$, do:
\begin{enumerate}
  \item Keep sampling uniformly at random coins for the parties of $\pi$ and for $\Aadv$, on security parameter $\secParam$, until the honest parties' common output in the resulting execution is $b$. Abort after $\secParam\cdot p(\secParam)$ failed attempts.\label{step:prop:FairCTGameAlt:1}
  \item Output $\Aadv$'s output in the above sampled execution.
\end{enumerate}
\end{algorithm}
Let $D_\secParam$ be the distribution of the honest parities common output, in a random execution of $\pi(1^\secParam)$, in which $\Aadv$
controls the parties indexed by  $\cC$. Assume for a moment  that the trusted party chooses its output on security parameter $\secParam$, according to $D_\secParam$ (and not uniformly at random). Assume further that $\iAadv$ keeps sampling in Step $1$ until good coins  are found. Under these assumptions, it is immediate that $\iAadv$ is a \emph{perfect} ideal variant simulator for  $\Aadv$, \ie $\Real_{\pi,\Aadv,\cC}(\secParam) \equiv \Ideal_{f,\iAadv,\cC}(\secParam)$ for every $\secParam$. We complete the proof showing that $\SD(D_\secParam,U)\leq \alpha(\secParam)$, where $U$ is the uniform distribution over $\zo$. This yields that $\SD\left(\Real_{\pi,\Aadv,\cC}(\secParam),\Ideal_{f,\iAadv,\cC}(\secParam)\right) \leq \alpha(\secParam)$ assuming no abort in \Stepref{step:prop:FairCTGameAlt:1}. Since, by assumption, $\alpha(\secParam) \leq \frac12 - \frac1{p(\secParam)}$,  such aborts happens only  with negligible probability, it will follow that $\SD\left(\Real_{\pi,\Aadv,\cC}(\secParam),\Ideal_{f,\iAadv,\cC}(\secParam)\right) \leq \alpha(\secParam) + \negl(\kappa)$.

Let $\val$, $V$, $V_i$, $V_i^-$ and $I_j$ be as in \cref{def:FairCTGameAlt} \wrt algorithm $\Ac$, subset  $\cC$ and protocol $\pi$.  We prove by induction on $\ell\in \size{\cC}$ that $\eex{\val(V_{I_\ell})}= \frac12 + \beta_\ell$, for $\beta_\ell = \sum_{j\in [\ell]} \eex{\val(V_{I_j}) -  \val(V_{I_j}^-)}$. Since no abort occurs after the $\size{\cC}$'th aborting round,  it follows that  $\eex{\val(V)} = \frac12 + \beta _{\size{\cC}}$. Since $\pi$ is $\alpha$-unbiased, it follows that $\eex{\val(V)} \in [\frac12 \pm \alpha(\secParam)]$, and therefore $\SD(D_\secParam,U)\leq \alpha(\secParam)$.

The base case (\ie $\ell=0$) holds by the correctness of $\pi$. Assume for $0\leq \ell < \size{\cC}$. Since no additional aborts messages were sent in $V_{I_{\ell+1}}^-$ beside the ones sent $V_{I_\ell}$, it holds that
\begin{align}\label{eq:FairCTGameAlt}
\eex{\val(V_{I_{\ell+1}}^-)} = \eex{\val(V_{I_\ell})}
\end{align}
It follows that
\begin{align*}
\eex{\val(V_{I_{\ell+1}})} &= \eex{\val(V_{I_{\ell+1}}^-)} + \eex{\val(V_{I_{\ell+1}}) -\val(V_{I_{\ell+1}}^-)}\\
&= \eex{\val(V_{I_\ell})} + \eex{\val(V_{I_{\ell+1}}) -\val(V_{I_{\ell+1}}^-)}\\
&= \left(\frac12 + \sum_{j\in [\ell]} \eex{\val(V_{I_j}) -  \val(V_{I_j}^-)}\right) +  \eex{\val(V_{I_{\ell+1}}) -\val(V_{I_{\ell+1}}^-)}\\
&= \frac12 + \sum_{j\in [\ell+1]} \eex{\val(V_{I_j}) -  \val(V_{I_j}^-)}.
\end{align*}
The second equality  holds by \cref{eq:FairCTGameAlt} and the third one by the induction hypothesis.
\end{proof}

\subsection{Oblivious Transfer}\label{sec:OT}
\begin{definition}\label{def:OTfunct}
The $\binom{1}{2}$  oblivious transfer ($\OT$ for short) functionality, is the two-output functionality $f$ over $\zo^3$, defined by $f(\sigma_0,\sigma_1,i) = ((\sigma_0,\sigma_1), (\sigma_i,i))$.
\end{definition}
Protocols the securely compute $\OT$, are known under several hardness assumptions (\cf \cite{AieIshRei01,EvenGL85,GentryPV2008,Haitner04,Kalai05,NaorPinkas01}).

\subsection{\texorpdfstring{$f$}{f}-Hybrid Model}\label{sec:OTHybrid}
Let $f$ be a $\partNum$-output functionality. The {$f$-hybrid} model is identical to the real model of computation discussed above, but in addition, each $\partNum$-size subset of  the parties involved, has access to a trusted party realizing  $f$. It is important to emphasize  that the trusted party realizes $f$ in a \emph{non}-simultaneous manner: it sends a random output of $f$ to the parties in an arbitrary order. When a  party  gets its part of the output, it  instructs the trusted party to either continue sending the output to the other parties, or to send them the abort symbol (\ie the  trusted party  ``implements" $f$  in a perfect non-simultaneous manner).

All notions given in \cref{sec:realIdeal,sec:CFprotocols} naturally extend to the $f$-hybrid model, for any functionality $f$. In addition,  the proof of \cref{prop:FairCTGameAlt} straightforwardly extends to this model.

We  make use of the following known fact.
\begin{fact}\label{fact:fHybridMToReal}
Let $f$ be a polynomial-time computable functionality, and assume there exists a $k\in O(1)$-party, $\rnd$-round, $\alpha$-fair coin-flipping  protocol in the $f$-hybrid model, making at most $t$ calls to $f$. Assuming there exist protocols for  securely computing \OT, then there exists a $k$-party, $(O(t)+ \rnd)$-round, $(\alpha +\negl)$-fair coin-flipping  protocol (in the real world).
\end{fact}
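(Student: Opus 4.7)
The plan is to apply the standard sequential composition theorem for secure multiparty computation. I will define the real-world protocol $\pi'$ by taking the given $f$-hybrid protocol $\pi$ and replacing each of the (at most) $t$ invocations of the $f$-trusted party with an execution of a secure-with-abort subprotocol $\rho_f$ realizing $f$. Such a $\rho_f$ exists under the OT assumption: since the number of parties $k$ is constant and $f$ is polynomial-time computable, one obtains an $O(1)$-round secure-with-abort protocol for $f$ via a standard OT-based construction (e.g., the GMW compiler on top of any constant-round OT-based two/many-party protocol). Running these $t$ subprotocol instances sequentially in place of $\pi$'s $f$-calls yields a protocol $\pi'$ of round complexity $O(t) + \rnd$.

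For security, I fix an arbitrary real-world adversary $\Aadv$ controlling a strict subset $\cC \subsetneq [k]$, and construct the ideal-model simulator $\iAadv$ in two stages. First, I will build an $f$-hybrid adversary $\Aadv^H$ that runs $\Aadv$ internally and, whenever $\Aadv$ is scheduled to participate in an execution of $\rho_f$, invokes the secure-with-abort simulator $\Sim$ for $\rho_f$: $\Aadv^H$ uses its access to the $f$-trusted party to feed $\Sim$ with the appropriate functionality output and translates the continue/abort decisions of $\Sim$ into hybrid-model continue/abort instructions. This translation is clean because the $f$-hybrid realizes $f$ in precisely the non-simultaneous secure-with-abort manner. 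By the security of $\rho_f$ and a standard hybrid argument over the $t$ invocations, the joint distribution of $\Aadv$'s view and the honest parties' outputs in a real execution of $\pi'$ is within statistical distance $t \cdot \negl(\secParam) = \negl(\secParam)$ of the corresponding distribution in the $f$-hybrid execution of $\pi$ with $\Aadv^H$.

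Second, applying the simultaneous $\alpha$-security of $\pi$ (the fairness hypothesis) to $\Aadv^H$ yields a non-aborting ideal simulator $\iAadv$ such that the $f$-hybrid execution of $\pi$ with $\Aadv^H$ is within distance $\alpha(\secParam)$ of $\Ideal_{\CoinToss_k, \iAadv, \cC}(\secParam)$. Combining the two bounds via the triangle inequality gives the desired $(\alpha + \negl)$-fairness for $\pi'$. Crucially, $\iAadv$ remains non-aborting: non-abortion concerns only the single outer ideal interaction with the $\CoinToss_k$ trusted party, and this property is inherited directly from the simultaneous fairness of $\pi$ in the hybrid model, even though the inner simulators for $\rho_f$ may internally model aborts within the $\rho_f$ subprotocols.

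The main work will lie in the bookkeeping of the hybrid argument --- in particular, verifying that $\Aadv^H$ can faithfully emulate $\Sim$ using only its hybrid-model access to $f$, and handling the corner case where $\Aadv$ aborts in the middle of some $\rho_f$ instance (which the honest parties' strategy must treat as a round-boundary abort of $\pi'$ for composition to go through). Beyond this, the statement may be viewed as a corollary of the standard (Canetti-style) sequential composition theorem combined with the existence of constant-round secure-with-abort $k$-party protocols from OT.
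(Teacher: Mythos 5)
Your proposal is correct and takes essentially the same route as the paper: instantiate a constant-round maliciously secure protocol for $f$ from \OT (the paper does this via constant-round semi-honest computation plus the GMW compiler) and then replace the trusted-party calls in the $f$-hybrid protocol, invoking sequential composition (the paper simply cites Canetti's composition theorem, whose proof you unpack explicitly via the hybrid adversary, the $t$-step hybrid argument, and the observation that the non-aborting outer simulator is preserved).
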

\begin{proof}
Since $f$ is a polynomial-time computable and since we assume the existence of  a protocol for  securely computing \OT, there exists a constant-round protocol $\pi_f$ for  securely computing $f$:  a constant-round protocol for $f$ that is secure against semi-honest adversaries follows by \citet{BMR90} (assuming \OT), and the latter protocol can be compiled into a constant-round protocol that securely computes $f$, against arbitrary malicious adversaries,  using the techniques of \citet{GoldreichMiWi87} (assuming one-way functions, that follows by the existence of \OT). Let $\pi$ be a $k$-party, $\rnd$-round, $\alpha$-fair coin-flipping  protocol  in the $f$-hybrid model. \citet{Canetti00} yields that by replacing the trusted party for computing $f$ used in $\pi$ with the protocol $\pi_f$, we get an $(O(t)+ \rnd)$-round, $(\alpha +\negl)$-fair coin-flipping  protocol.
\end{proof}

\section{The Protocols}\label{sec:protocol}

The following protocols follows the high-level description given in \cref{sec:Technique}.

\subsection{Two-Party Protocol}\label{sec:TwoPartyprotocol}
We start with defining a coin-flipping protocol whose parties get (correlated) shares as input,  then describe the functionality for generating these shares, and finally explain how to combine the two into a (no input) coin-flipping protocol.

\subsubsection{The Basic Two-Party Protocol}\label{sec:TwoPartyProtocolBasic}
\begin{protocol}[$\Pitwo_\rnd = (\Ptwo_0,\Ptwo_1)$]\label{protocol:TwoBasic}
\item[Common input:] round parameter $1^\rnd$.

\item[$\Ptwo_\z$'s input:] $\cVS{\z} \in \zo^{\rnd\times \tp{\rnd}}$ and $\dVS{0}{\z},\dVS{1}{\z} \in \zo^{\rnd+1}$.

\item [Protocol's description:]~
\begin{enumerate}

\item For $i=1$ to $\rnd$:
\begin{enumerate}
\item $\Ptwo_0$ sends $\dVS{1}{0}[i]$ to $\Ptwo_1$, and $\Ptwo_1$ sends $\dVS{0}{1}[i]$ to $\Ptwo_0$.\label{step:TwoParty:round:1}

\item[$\bullet$] For $\z\in \zo$, party $\Ptwo_\z$ set $d^{\z}_i = \dVS{\z}{0}[i] \xor \dVS{\z}{1}[i]$.

\item $\Ptwo_0$ sends $\cVS{0}[i]$ to $\Ptwo_1$, and $\Ptwo_1$ sends $\cVS{1}[i]$ to $\Ptwo_0$.\label{step:TwoParty:round:2}

\item[$\bullet$] Both parties set $c_i = \cVS{0}[i] \xor \cVS{1}[i]$.

\end{enumerate}

\item Both parties output 1 if $\sum_{i=1}^\rnd c_i \geq 0$, and 0 otherwise.
\end{enumerate}

\item [Abort:] If the other party aborts, the remaining party $\Ptwo_\z$ outputs $d^{\z}_i$, for the {\sf maximal} $i\in [\rnd]$ for which it has reconstructed this value. In case no such $i$ exists, $\Ptwo_\z$ outputs $\dVS{\z}{\z}[\rnd+1]$.\\
\end{protocol}

To keep the above description symmetric, in \Stepref{step:TwoParty:round:1} and in \Stepref{step:TwoParty:round:2}, both parties are supposed to send messages. This is merely for notational convince, and one might assume that the parties send their messages in an arbitrary order.

\subsubsection{Two-Party Shares Generator}
We now define the share-generating function of our two-party coin-flipping protocol. For future use, we describe a parameterized variant of this function that gets, in addition to the round parameter, also the desired expected outcome of the protocol. Our two-party protocol will call this function with expected outcome $\frac12$.

Recall that $\Berzo{\delta}$ is the Bernoulli probability distribution over $\zo$, taking the value $1$ with probability $\delta$ and $0$ otherwise, that $\Beroo{\eps}$ is the Bernoulli probability distribution over $\oo$, taking the value $1$ with probability $\frac{1}{2}(1+\eps)$ and $-1$ otherwise,\footnote{Notice the slight change in notation compared to those used in the introduction.} that $\Beroo{n, \eps}(k) = \pr{\sum_{i=1}^{n} x_i = k}$, for $x_i$'s that are i.i.d according to $\Beroo{\eps}$, and that $\vBeroo{n, \eps}(k) = \ppr{x\la \Beroo{n,\eps}}{x \geq  k}$. Also recall that $\sBias{n}{\delta}$ is the value $\eps \in [-1,1]$ with $\vBeroo{n,\eps}(0) = \delta$, and that $\NL{i} = n+1-i$ and $\NS{i} = \sum_{j=i}^n \NL{j}$. Finally, for $\z\in \zo$ let $\overline{\z} = \z+1 \bmod 2$.

\begin{algorithm}[$\TwoShareGen$]\label{alg:TwoShareGenerator}
\item[Input:] round parameter $1^\rnd$ and $\delta \in [0,1]$.
\item[Operation:]~
\begin{enumerate}
\item For $\z\in \zo$: sample $d^{\z,\#\z}_{\rnd+1} \la \Berzo{\delta}$. Set $d^{\z,\#\overline{\z}}_{\rnd+1}$ arbitrarily.
\item Let $\eps = \sBias{\ms{1}}{\delta}$.\footnote{Note that $\sBias{\ms{1}}{\frac12} = 0$ if $\ms{1}$ is odd.}

\item For $i=1$ to $\rnd$:
\begin{enumerate}
\item Sample $c_i \la \Beroo{\ml{i} ,\eps}$.
\item Sample $c_i^{\#0} \la \zo^{\tp{\rnd}}$ and set $c_i^{\#1} = c_i \xor c_i^{\#0}$.

\item For $\z\in \zo$:
\begin{enumerate}
 \item Sample $d^{\z}_i \la \Berzo{\vBeroo{\ms{i+1},\eps}(-\sum_{j=1}^{i} c_j)}$.\label{step:independently}

\item Sample $d^{\z,\#0}_i \la \zo$, and set $d^{\z,\#1}_i = d^{\z}_i \xor d^{\z,\#0}_i$.
\end{enumerate}
\end{enumerate}

 \item Output $(\sVS{0},\sVS{1})$, where $\sVS{\z} = (\cVS{\z},\dVS{0}{\z}, \dVS{1}{\z})$, for $\cVS{\z} = (c^{\#\z}_1,\dots,c^{\#\z}_\rnd)$ and $\dVS{\z}{\z'} = (d^{\z,\#\z'}_1,\dots,d^{\z,\#\z'}_{\rnd+1})$.
\end{enumerate}
\end{algorithm}

\subsubsection{The Final Two-Party Protocol}\label{sec:TwoPartyProtocolFinal}
For $\rnd\in \N$, our two-party, $(2\rnd)$-round, $\frac{O(\log^3 \rnd)}{\rnd}$-fair coin-flipping protocol $\PifTwo_\rnd$, is defined as follows.

\begin{protocol}[$\PifTwo_\rnd = (\PhTwo_0,\PhTwo_1)$]\label{prot:TwoFinal}
\item[Oracle:] an oracle $\orac$ computing $\TwoShareGen_{\frac12} = \TwoShareGen(\cdot,\frac12)$.
\item[Common input:] round parameter $1^\rnd$.
\item [Protocol's description:]~
\begin{enumerate}
\item The two parties use the oracle $\orac$ to compute $\TwoShareGen_{\frac12}(1^\rnd)$. Let $\vect{\sh_0}$ and $\vect{\sh_1}$ be the outputs of $\PhTwo_0$, and $\PhTwo_1$ respectively.\label{prot:TwoFinal:oracleStep}

\item In case the other party aborts, the remaining party outputs a uniform coin.

\item Otherwise, the two parties interact in an execution of $\Pi_\rnd^2 = (\Ptwo_0,\Ptwo_1)$, where $\PhTwo_\z$ plays the role of $\Ptwo_\z$ with private input $\vect{\sh_\z}$.
\end{enumerate}
\end{protocol}

\subsubsection{Main Theorems for Two-Party Protocols}\label{sub:MainThm2Party}
The following theorem states that \cref{prot:TwoFinal} is an almost-optimally fair, two-party coin-flipping protocol, in the $\TwoShareGen_{\frac12}$-hybrid model.

\begin{theorem}\label{thm:2partyIdeal}
For $\rnd \equiv 1 \bmod 4$, the protocol $\PifTwo_\rnd$ is a  $(2 \rnd)$-round, two-party, $O(\frac{\log^3 \rnd}{\rnd})$-fair  coin-flipping protocol against unbounded fail-stop adversaries, in the $\TwoShareGen_{\frac12}$-hybrid model.
\end{theorem}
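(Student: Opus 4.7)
My plan is to apply \cref{prop:FairCTGameAlt} in the $\TwoShareGen_{\frac12}$-hybrid model, which reduces the theorem to showing that $\PifTwo_\rnd$ is correct and $O(\log^3\rnd/\rnd)$-unbiased against fail-stop adversaries. Since the initial oracle call in \Stepref{prot:TwoFinal:oracleStep} is atomic, every adversarial abort occurs during the main loop, after either step (a) or step (b) of some round $i\in[\rnd]$. Correctness is immediate: when both parties are honest they reconstruct identical $c_1,\ldots,c_\rnd$ and both output $[\sum_j c_j\ge 0]$; since $\rnd\equiv 1 \pmod 4$, $\NS{1}=\rnd(\rnd+1)/2$ is odd, so $\sBias{\NS{1}}{\tfrac12}=0$ and each $c_i\sim\Beroo{\NL{i},0}$ is unbiased, making $\sum_j c_j$ an odd-length symmetric $\oo$-sum whose sign is uniform.

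For unbiasedness, fix a fail-stop adversary corrupting $\Ptwo_0$; the $\Ptwo_1$ case is symmetric by construction. Set $T_i=\sum_{j\le i}c_j$ and $\delta_i=\vBeroo{\NS{i+1},0}(-T_i)$, the martingale of conditional expected outputs with $\delta_0=\tfrac12$ and $\delta_\rnd\in\zo$. An abort after step (b) of round $i$ gives $\val(V)=\Ex[d^1_i\mid V]=\delta_i=\val(V^-)$, contributing nothing. For an abort after step (a) of round $i$ the honest party outputs $d^1_{i-1}$, which is sampled independently of $d^0_i$ given $c_{<i}$, so $\val(V)=\delta_{i-1}$; meanwhile $\val(V^-)=\Ex[\delta_i\mid c_{<i},d^0_i]$. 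Because $d^0_i\mid c_{\le i}\sim\Berzo{\delta_i}$, Bayes' rule yields
\[
A_i:=\val(V)-\val(V^-) = \begin{cases} -\Var(\delta_i\mid c_{<i})/\delta_{i-1}, & d^0_i=1,\\ +\Var(\delta_i\mid c_{<i})/(1-\delta_{i-1}), & d^0_i=0,\end{cases}
\]
so $\Ex[A_i\mid c_{<i}]=0$ and $|A_i|\le \Var(\delta_i\mid c_{<i})/(\delta_{i-1}(1-\delta_{i-1}))$. Since the adversary performs at most one abort, the total bias equals $\Ex[A_I]$ for some stopping time $I$.

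The heart of the proof is bounding $\Ex[A_I]$ by $O(\log^3\rnd/\rnd)$. I will recast this optimization as an online Bernoulli game whose value is controlled by the estimates of \cref{section:BoundsForOnlineBinomialGames}. On the typical event $\mathcal{T}=\{\,|T_{i-1}|\le C\sqrt{\NS{i}\log\rnd}\text{ for all }i\,\}$, which has probability $1-\negl(\rnd)$ by \cref{claim:Hoeffding}, the local estimates of \cref{prop:binomProbEstimation,prop:binomProbRelation,prop:gameValuesDifferenceBound,prop:epsDiff} yield (after a Taylor expansion of the cumulative binomial around its mean) $|\delta_i-\delta_{i-1}|=O(\log\rnd\cdot|c_i|/\sqrt{\NS{i+1}})$ and a Mills-ratio bound on the reciprocal $1/(\delta_{i-1}(1-\delta_{i-1}))$; invoking \cref{prop:mainBound} for $i\le \rnd-\log^{2.5}\rnd$ then gives a uniform pointwise bound on $|A_i|\cdot\mathbf{1}_{\mathcal{T}}$ in terms of $\NL{i}/\NS{i+1}$ weighted by a Gaussian factor in $T_{i-1}/\sqrt{\NS{i}}$. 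Applying \cref{claim:linearproblem} to $p_i=\NL{i}/\NS{i+1}$ (with $\sum_i \NL{i}/\NS{i+1}=O(\log\rnd)$) controls the aggregated expected contribution from these rounds by $O(\log^3\rnd/\rnd)$.

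The principal obstacle is handling the last $\log^{2.5}\rnd$ rounds, where \cref{prop:mainBound} does not directly apply and the Gaussian approximation degrades. For these rounds I will exploit that the probability of $T_{i-1}$ remaining within the critical band $O(\sqrt{\NS{i+1}})$ so late is only $O(\log^{1.25}\rnd/\rnd)$ by the Gaussian tail, while the worst-case per-round $|A_i|$ is $O(1)$; summing over the $\log^{2.5}\rnd$ late rounds yields a tail contribution of $O(\log^3\rnd/\rnd)$. Combining the typical-event bound, the late-rounds bound, and the $\negl(\rnd)$ contribution from the complement of $\mathcal{T}$ gives the claimed $O(\log^3\rnd/\rnd)$ bias, completing unbiasedness and hence, via \cref{prop:FairCTGameAlt}, the theorem.
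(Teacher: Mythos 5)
Your setup is sound and matches the paper's high-level route: correctness plus \cref{prop:FairCTGameAlt}, aborts after step (b) are worthless because $d^1_i\sim\Berzo{\delta_i}$, and for step-(a) aborts your Bayes identity $A_i=\val(V)-\val(V^-)$ with $\Ex[A_i\mid c_{<i}]=0$ and $|A_i|\le \Var(\delta_i\mid c_{<i})/(\delta_{i-1}(1-\delta_{i-1}))$ is a correct (and nicely explicit) restatement of the quantity the paper controls via the simple online-binomial game. The per-round estimate you extract on the typical event, roughly $|A_i|\lesssim \sqrt{\log \rnd}\cdot \NL{i}/\NS{i+1}\approx \sqrt{\log\rnd}/\NL{i}$, is also attainable from the binomial estimates (though \cref{prop:mainBound} is the wrong tool here -- it is a conditional exponential-moment bound used inside the ratio analysis of \cref{lemma:Final}, not a bound on $|A_i|$; you would want \cref{prop:binomProbEstimation,prop:gameValuesDifferenceBound} plus a Mills-ratio estimate).

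The genuine gap is the aggregation step. Applying \cref{claim:linearproblem} to $p_i=\NL{i}/\NS{i+1}$ and summing the deterministic per-round bounds gives $\sum_i \sqrt{\log\rnd}/\NL{i}=\Theta(\log^{1.5}\rnd)$, and even weighting each round by the probability that $T_{i-1}$ lies in the critical band (about $\NL{i}\sqrt{\log\rnd}/\rnd$ for late $i$) gives $\sum_i O(\log\rnd/\rnd)=O(\log\rnd)$ -- both vacuous, nowhere near $O(\log^3\rnd/\rnd)$. The point you have not used is that the adversary aborts \emph{once}: the events $\{I=i\}$ are disjoint, and for every $i\ge \rnd/2$ their tail satisfies $\sum_{j\ge i}\Pr[I=j]\le\Pr[\,|T_{i-1}|\lesssim\sqrt{\NS{i}\log\rnd}\,]\lesssim \NL{i}\sqrt{\log\rnd}/\rnd$. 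It is \cref{claim:linearproblem} applied to \emph{these probabilities} (the paper's $p_j=\Pr[L_j]$, via the events $E_i,L_j$ and \cref{claim:PrEiNew,claim:constraintsNew} inside \cref{proposition:OnlyWithOi}), combined with the uniform $O(\sqrt{\log\rnd}/\rnd)$ bound for $i<\rnd/2$, that turns the per-round bounds $C\sqrt{\log\rnd}/\NL{i+1}$ into a total of $O(\log^2\rnd/\rnd)$; without this constrained-allocation argument the middle rounds $\rnd/2\le i\le \rnd-\log^{2.5}\rnd$ are simply not controlled. Your treatment of the last $\log^{2.5}\rnd$ rounds has the same flavor of issue: summing per-round band probabilities over those rounds gives $\log^{5}\rnd/\rnd$, and the correct accounting is again "at most one abort", i.e., the total late contribution is at most $\Pr[\text{still undetermined at round }\rnd-\log^{2.5}\rnd]\cdot O(1)=O(\log^3\rnd/\rnd)$, which is exactly how the paper's modified strategy $\Bc'$ handles it. So the skeleton is right, but the heart of the proof -- converting per-round bounds into a bound on $\Ex[A_I]$ over adaptive stopping times -- is missing and, as currently written, would not yield any nontrivial bound.
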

\cref{thm:2partyIdeal} is proven in \cref{sec:ProofMainThm2} using the the bound on online-binomial games described in \cref{subsection:BG}, but we first use it to deduce an almost-optimal two-party fair coin-flipping protocol, in the real (non-hybrid) model.

\begin{theorem}[Main theorem --- two-party, fair coin flipping]\label{thm:2partyReal}
Assuming protocols for securely computing \OT exist, then for any polynomially bounded, polynomial-time computable, integer function $\rnd$, there exists an $\rnd$-round, $\frac{O(\log^3 \rnd)}{\rnd}$-fair, two-party coin-flipping protocol.
\end{theorem}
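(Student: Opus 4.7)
The strategy is to instantiate the hybrid-model protocol of Theorem~\ref{thm:2partyIdeal} in the real world via Fact~\ref{fact:fHybridMToReal}, with a fail-stop to malicious upgrade in between. Three minor issues must be handled: (i)~Theorem~\ref{thm:2partyIdeal} only covers $\rnd \equiv 1 \bmod 4$ and produces $2\rnd$ rather than $\rnd$ rounds; (ii)~it gives fairness only against \emph{fail-stop} adversaries, while Fact~\ref{fact:fHybridMToReal} requires a hybrid-model protocol that is $\alpha$-fair in the full (malicious) sense; and (iii)~Fact~\ref{fact:fHybridMToReal} itself introduces an additive $O(1)$ round overhead for the $\OT$-based secure realization of $\TwoShareGen_{1/2}$.

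To handle the rounds and parity issue, let $c = O(1)$ denote the overhead from Fact~\ref{fact:fHybridMToReal}, and let $\rnd'$ be the largest integer with $\rnd' \equiv 1 \bmod 4$ and $2\rnd' + c \leq \rnd$ (for security parameters where no such $\rnd'$ exists, $\rnd$ is bounded by a constant, the claimed bias is then at least a constant, and the trivial protocol in which one designated party broadcasts a uniform coin already meets the bound). Since $\rnd' = \Theta(\rnd)$, one has $O(\log^3 \rnd'/\rnd') = O(\log^3 \rnd/\rnd)$. For the fail-stop to malicious upgrade, apply the round-preserving compiler of \citet{GoldreichMiWi91} (cited in the introduction) to $\PifTwo_{\rnd'}$ \emph{inside} the $\TwoShareGen_{1/2}$-hybrid model; this step relies only on the existence of one-way functions, which is implied by the assumed $\OT$. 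The outcome is a $(2\rnd')$-round, $O(\log^3 \rnd'/\rnd')$-fair hybrid protocol that is now secure against arbitrary malicious \ppt adversaries. Applying Fact~\ref{fact:fHybridMToReal} to this protocol yields a real-world $(2\rnd' + c)$-round, $\bigl(O(\log^3 \rnd'/\rnd') + \negl\bigr)$-fair two-party coin-flipping protocol, which by the choice of $\rnd'$ has at most $\rnd$ rounds and bias $\frac{O(\log^3 \rnd)}{\rnd}$, as claimed.

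The substantive technical content of the present theorem lies entirely in Theorem~\ref{thm:2partyIdeal} and the online-binomial-game bounds that support it; the work here is bookkeeping. The only subtlety worth verifying is that each compilation step preserves the bias up to $\negl$ and inflates the round count by only an additive $O(1)$, so that the $\frac{O(\log^3 \rnd)}{\rnd}$ bound survives intact through both reductions. Given the way the introduction has already flagged both the round-preserving fail-stop-to-malicious reduction and the hybrid-to-real $\OT$-based reduction, this is straightforward.
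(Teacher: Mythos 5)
Your overall skeleton matches the paper's: reparametrize the round count so that \cref{thm:2partyIdeal} applies (your choice of the largest $\rnd'\equiv 1 \bmod 4$ with $2\rnd'+c\le\rnd$ is equivalent to the paper's $\floor{\rnd/3}-a$), upgrade the hybrid-model protocol from fail-stop to arbitrary adversaries, and then invoke \cref{fact:fHybridMToReal}. The gap is in the upgrade step. You propose applying the \citet{GoldreichMiWi91} compiler to $\PifTwo_{\rnd'}$ \emph{inside} the $\TwoShareGen_{\frac12}$-hybrid model, leaving the functionality untouched. But in the hybrid model the shares a party holds are delivered privately by the trusted party and are not determined by that party's committed randomness or by the public transcript, so there is no publicly verifiable statement against which consistency of its later messages can be proven: a malicious party can simply announce false shares (for instance, flip its share of $c_i^{\#\z}$ in \Stepref{step:TwoParty:round:2} so that the reconstructed $\sum_i c_i$ is always nonnegative), and no ZK-proof-based compilation can detect this, since the honest party has no anchor tying the revealed value to what the trusted party actually sent. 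Such an adversary biases the output by a constant, so the compiled protocol is not $\frac{O(\log^3\rnd)}{\rnd}$-fair against malicious adversaries, and \cref{fact:fHybridMToReal} cannot be applied to it.

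The missing idea is that the \emph{functionality} must be compiled together with the protocol so that deviations in the revealed shares are detectable and can be treated as aborts. This is exactly what the paper's proof does: using information-theoretic one-time message authentication codes (following \citet{MoranNS09}), $\TwoShareGen_{\frac12}$ and $\PifTwo_{\trnd}$ are replaced by $\widetilde{\TwoShareGen_{\frac12}}$ and $\PiffTwo_{\trnd}$, where the functionality outputs MAC tags and verification keys along with the shares; any incorrectly revealed share is rejected and handled as an abort, reducing arbitrary adversaries to fail-stop ones. This route needs no computational assumption at this stage, preserves security against \emph{unbounded} adversaries in the hybrid model (matching the statement of \cref{thm:2partyIdeal}), and keeps the round complexity unchanged up to the accounting already present in \cref{fact:fHybridMToReal}. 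Your appeal to the round-preserving GMW reduction echoes the paper's introductory remark, but that remark concerns the real-world setting where a party's entire behavior is a function of its randomness and the public transcript; it does not carry over to a hybrid protocol whose messages depend on unauthenticated private outputs of a trusted party, which is precisely why the paper's proof takes the MAC route here.
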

\begin{proof}
Define the integer function $\trnd$ by $\trnd(\secParam) = \floor{\rnd(\secParam)/3} -a$, where $a\in \set{0,1,2,3}$ is the value such that $ \floor{\rnd(\secParam)/3} -a \equiv 1 \bmod 4$. Note that both the functionality $\TwoShareGen_{\frac12}(1^{\trnd(\secParam)})$  and the protocol $\PifTwo_{\trnd(\secParam)}$ are polynomial-time computable in $\secParam$, and that $\PifTwo_{\trnd(\secParam)}$ has $2\cdot\trnd(\secParam)$ rounds. Using information-theoretic one-time message authentication codes (\cf \cite{MoranNS09}), the functionality $\TwoShareGen_{\frac12}(1^{\trnd(\secParam)})$ and protocol $\PifTwo_{\trnd(\secParam)}$ can be compiled into functionality $\widetilde{\TwoShareGen_\frac12}(1^{\trnd(\secParam)})$ and protocol $\PiffTwo_{\trnd(\secParam)}$ that maintains  essentially the same efficiency as the original pair, protocol $\PiffTwo_{\trnd(\secParam)}$ maintain the same round complexity, and $\PiffTwo_{\trnd(\secParam)}$ is $\left(\frac{O(\log^3 \trnd(\secParam))}{\trnd(\secParam)}+ \negl(\secParam)\right)$-fair against \emph{arbitrary} unbounded adversaries, in the $\widetilde{\TwoShareGen_{\frac12}}$-hybrid model.

Assuming protocols for securely computing \OT exist, \cref{fact:fHybridMToReal} yields that there exists an $(2\trnd(\secParam) + O(1))$-round, two-party, polynomial-time protocol that is $\left(\frac{O(\log^3 \trnd(\secParam))}{\trnd(\secParam)} + \negl(\secParam)\right)$-fair, in the \emph{standard model}. For large enough $\secParam$, the latter protocol obtains the parameters stated in the theorem (the theorem trivially holds for small values of $\secParam$, \ie smaller than some universal constant)
\end{proof}

\subsubsection{Online Binomial Games}\label{subsection:BG}
Our main tool for proving \cref{thm:2partyIdeal} are bounds on the bias of online-binomial games, defined below, that we prove in \cref{section:BoundsForOnlineBinomialGames}.

In a \textit{online binomial game},  independent random variables $X_1\dots,X_{\rnd}$   are independently sampled, and the game outcome  (or value)   is set to one if $\sum_{i=1}^{\rnd} X_i \geq 0$, and to zero otherwise.  At round $i$, the value of $X_{i-1}$ is given to the  \textit{attacker}, with some auxiliary information  (\ie hint) $\aux_i$ about the value of $X_i$. The attacker can abort, and in this case it gains the expected outcome of the game, conditioned on the values of $X_1,\dots,X_{i-1}$ (but not on the additional information). If it never aborts, it gains the game outcome. The goal of the attacker is to use the hint value to \emph{bias} its expected gain away from the game expected outcome. For instance, in the simple form of the game where  the $X_i$'s are unbiased $\oo$ bits and  $\aux_i = X_i$, the game expected outcome is $1/2$, and it is not hard to see that the expected gain of the attacker who aborts on the first round in which $H_i =X_i =1$, is $1/2 + \Theta(1/\sqrt{m})$.  Namely, such an attacker bias the game by $\Theta(1/\sqrt{m})$.

We are concerned with the weighted version of the above  games in which the  $X_i$'s are \emph{sums} of biased $\oo$ random variables.  Specially, $X_i$ is the sum of $\ml{i} = \rnd -i + 1$ coins. We will also allow the games to have an initial offset: a fixed value (\ie \textit{offset}) $X_0$ is added to the coins sum. Such online binomial games are  useful abstractions to understand the power of fail-stop  adversaries trying to violate the fairness of the coin-flipping protocols considered in this section, and the results presented below play a central role in their security proofs.

We start with formally defining online-binomial games and the bias of such games.

\def\onlineBinomialGame{
	For $\rnd\in \N$, $\ofs\in \Z$, $\eps \in [-1,1]$ and a randomized (hint) function $f$, the {\sf online game $\game_{f,\rnd,\eps,\ofs}$} is the set of the following random variables. Let $Y_0 = X_0 = \ofs$, and for $i\in [\rnd]$,
	\begin{itemize}
		
		\item $X_i$ is sampled according to  $\Beroo{\ml{i},\eps}$.

		\item $Y_{i} = \sum_{j=0}^{i} X_{j}$ and $\aux_i = f(i,Y_i)$. 
		
		\item 	$\veo_i = \eo_i(Y_{i-1},\aux_{i})$ and $\veo_i^- = \eo_i(Y_{i-1})$, for $\eo_i(y) = \pr{Y_\rnd \geq0 \mid Y_{i-1} = y}$ and $\eo_i(y,\paux) = \pr{Y_\rnd \geq 0 \mid Y_{i-1} = y, \aux_i = \paux}$, respectively. 
	\end{itemize}
	Let  $\veo_{\rnd+1} =\veo_{\rnd+1}^-= 1$  if $Y_{\rnd} \geq 0$,  and $\veo_{\rnd+1} =\veo_{\rnd+1}^-= 0$ otherwise.
	
	We let $\game_{f,\rnd,\eps} = \game_{f,\rnd,\eps,0}$.	
}

\begin{definition}[online  binomial games]\label{def:game}
	\onlineBinomialGame
\end{definition}
Namely, $\veo_i$ is the expected output of the game given $Y_{i-1}$, the coins flipped in the first $i-1$ rounds, and the hint $H_i$ about $Y_i$, and $\veo_i^-$  is this value given only $Y_{i-1}$. Consider an attacker who is getting at round $i$ the values of $Y_{i-1}$ and $H_i$, and decides whether to abort and gain $\veo_i^-$, or to continue to next round. If it never aborts, it gains $\veo_{\rnd+1}$. The goal of an attacker is to abort in the round that \emph{maximize} the gap between  $\veo_i$ and $\veo_i^-$. This is  an equivalent task to maximizing the difference between the adversary's  expected gain and the game's expected outcome (which is the expected gain of a never-aborting attacker).

\def\GameBias<#1>{
	Let $\rnd,\ofs,\eps,f$ and $\game = \game_{f,\rnd,\eps,\ofs}= \gameVars$ be as in #1. For an algorithm  $\Bc$, let $I$ be the first round in which $\Bc$ outputs $1$ in the following $m$-round process: in round $i$, algorithm $\Bc$ is getting  input $(i,Y_{i-1},\aux_i)$ and outputs a value. Let $I=m+1$ if $\Bc$ never outputs a one. The {\sf bias $\Bc$ gains in $\game$} is defined by
	$$\bias_\Bc(\game) = \size{\eex{\veo_{I} - \veo_{I}^-}}$$
	The {\sf bias of $\game$} is defined by $\bias(\game) = \max_\Bc \set{ \bias_\Bc(\game)}$, where the maximum is over {\sf all} possible algorithms $\Bc$.	
}

\begin{definition}[game bias]\label{def:GameBias}
	\GameBias<\cref{def:game}>
\end{definition}

We give upper bounds for the security of three different types of online-binomial games that we call \textit{simple}, \textit{hypergeometric} and \textit{vector} games.
The first type of online-binomial game (\ie simple game) is used for proving \cref{thm:2partyIdeal} whereas the other types (\textit{hypergeometric} and \textit{vector} games) are used later in \cref{sec:ThreePartyprotocol} for proving the fairness of our three-party protocol. 

\def\SimpleGameDef{
	For $\rnd\in \N$, $\eps \in [-1,1]$ and a randomized function $f$, the game $\game_{f,\rnd,\eps}$ is called "simple game" if $f$ on input $(i,y)$ outputs $1$ with probability $\eo_{i+1}(y)$ (= $\vBeroo{\ms{i+1},\eps}(-y)$), and zero otherwise.	
}

\def\SimpleGameLemma<#1>{
	Let $\rnd\in \N$, let $\eps \in [-1,1]$ and let $f$ be the randomized function such that $\game_{f,\rnd,\eps}$ is a simple game according to #1. Then $\bias(\game_{f,\rnd,\eps})  \leq \frac{\xi \cdot\log^3 \rnd}{\rnd}$, for some universal constant $\xi$.
}

\begin{definition}[simple game]\label{def:ValueBiomialGame}
	\SimpleGameDef
\end{definition}

Namely, in the simple game, the value of $f$ is sampled according to the expected value of the game.

\begin{lemma}\label{lemma:ValueBiomialGame}
	\SimpleGameLemma<\cref{def:ValueBiomialGame}>
\end{lemma}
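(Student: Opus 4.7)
The plan is to reduce the bias to a martingale-difference sum, bound each term using the Gaussian approximations in Propositions~\ref{prop:binomProbEstimation}--\ref{prop:mainBound}, and combine via an adaptivity (``budget'') argument.

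First I would set up the martingale. Observe that $\delta_i \eqdef \vBeroo{\ms{i+1},\eps}(-Y_i) = \Pr[Y_\rnd \geq 0\mid Y_i]$ is a martingale with respect to $\cF_i = \sigma(Y_0,\dots,Y_i)$, with $\veo_i^- = \delta_{i-1}$ and $\veo_i = \Ex[\delta_i\mid Y_{i-1}, H_i]$. Since, for a simple game, $H_i\mid Y_i\sim \Berzo{\delta_i}$ is a noisy one-bit signal about $\delta_i$, a direct computation yields $\veo_i - \veo_i^- = +\Var[\delta_i\mid Y_{i-1}]/\delta_{i-1}$ if $H_i=1$ and $-\Var[\delta_i\mid Y_{i-1}]/(1-\delta_{i-1})$ if $H_i=0$. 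For a strategy $\Bc$ specified by round-$i$ abort indicators $q_i(Y_{i-1},H_i)\in\{0,1\}$, conditioning on $\cF_{i-1}$ and integrating over $H_i$ gives $\Ex[\mathbf{1}_{I=i}(\veo_i - \veo_i^-)\mid\cF_{i-1}] = \mathbf{1}_{I\geq i}\bigl(q_i(Y_{i-1},1)-q_i(Y_{i-1},0)\bigr)\Var[\delta_i\mid Y_{i-1}]$, and therefore
\[
\bias_\Bc(\game) \leq \sum_{i=1}^{\rnd}\Ex\bigl[\mathbf{1}_{I\geq i}\cdot\Var[\delta_i\mid Y_{i-1}]\bigr].
\]

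Second, I would introduce a typical event $\cT$ on which $|Y_{i-1}|\leq \kappa\sqrt{\ms{1}\log\rnd}$ and $|X_i|\leq \kappa\sqrt{\ml{i}\log\rnd}$ for every $i$, and on which $|\eps|$ is inside the admissible ranges of the propositions. For $|\eps|$ outside this range the game value is essentially deterministic and the bias is trivially $e^{-\Omega(\rnd)}$, and for $\cT^c$, \cref{claim:Hoeffding} gives $\Pr[\cT^c]\leq \rnd^{-\omega(1)}$, negligible. On $\cT$, combining \cref{prop:binomProbEstimation} and \cref{prop:binomProbRelation} yields the multiplicative estimate $\delta_i/\delta_{i-1}\in 1+O(|X_i|/\sqrt{\ms{i+1}}+\log^{1.5}\rnd/\sqrt{\ms{i+1}})$ (after a Taylor expansion that also uses \cref{prop:epsDiff,prop:gameValuesDifferenceBound} to pass between the $\ms{i}$ and $\ms{i+1}$ parameters). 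Squaring and integrating over $X_i$ via \cref{fact:E_x_m} gives a per-round bound of the form $\Var[\delta_i\mid Y_{i-1}]\leq \tilde{O}(\phi(-Y_{i-1}/\sqrt{\ms{i+1}})^2\cdot \ml{i}/\ms{i+1})$, with \cref{prop:mainBound} handling the exponential moments generated by the ratio estimate.

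Finally, I would combine the per-round bound with the martingale identity $\sum_i \Ex[\Var[\delta_i\mid Y_{i-1}]] = \Var[\delta_\rnd]-\Var[\delta_0]\leq\tfrac14$ together with a budget argument on the stopping time: every round in which $\epsilon_i \eqdef q_i(\cdot,1)-q_i(\cdot,0)\neq 0$ costs a proportional reduction in $\Pr[I\geq i+1]$, so the adaptive adversary cannot keep harvesting the maximum per-round variance indefinitely. Substituting the bound from Stage~2 and carrying out the summation (with the Gaussian integral $\Ex[\phi(-Y_{i-1}/\sqrt{\ms{i+1}})^2]$ evaluated exactly against the marginal of $Y_{i-1}$) gives the claimed $O(\log^3\rnd/\rnd)$ bound, where the three $\log\rnd$ factors come from (i) the Hoeffding truncation defining $\cT$, (ii) the first-order Taylor error in \cref{prop:binomProbRelation}, and (iii) the sum $\sum_i \ml{i}/\ms{i+1}=\Theta(\log\rnd)$.

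The main obstacle is this last step. The naive inequality $\sum_i \Ex[\mathbf{1}_{I\geq i}\Var[\delta_i\mid Y_{i-1}]]\leq \sum_i \Ex[\Var[\delta_i\mid Y_{i-1}]]\leq \tfrac14$ is off by $\rnd$, and even bounding the per-round contribution by $O(\log\rnd/\rnd)$ and summing loosely gives only $O(\log\rnd)$. Recovering the missing $1/\rnd$ factor demands exploiting the interplay between the adversary's abort budget and the martingale structure, while simultaneously tracking the polylogarithmic losses of the propositions through the summation.
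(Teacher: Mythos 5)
Your Stage~1 identity is sound---it is exactly the computation behind \cref{eq:intro:gap} in the introduction: for the one-bit hint of the simple game, $\veo_i-\veo_i^-$ equals $\Var[\delta_i\mid Y_{i-1}]/\delta_{i-1}$ when $\aux_i=1$ and $-\Var[\delta_i\mid Y_{i-1}]/(1-\delta_{i-1})$ when $\aux_i=0$ (conditioning on the richer filtration that includes past hints, which changes nothing since the future is conditionally independent of them given $Y_{i-1}$). Your Stage~2 per-round variance estimate, modulo details, plays the same role as the paper's ratio analysis: the paper instead enriches the hint to a simulated completion $Y_i+t$ via \cref{lemma:InformationIncreaseValue}, bounds $\size{1-\ratioo_{i,y,\paux}(x)}$ with \cref{prop:binomProbRelation,prop:mainBound}, and feeds this into \cref{lemma:Final}. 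Up to that point your route is a legitimate, somewhat more direct way to a per-round bound of order $\sqrt{\log\rnd}/\ml{i+1}$ on the critical window.

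The real content of the lemma, however, is the global combination, and there your proof stops short, exactly where you flag it. The displayed inequality $\bias_\Bc(\game)\le\sum_{i}\Ex\bigl[\mathbf{1}_{I\ge i}\cdot\Var[\delta_i\mid Y_{i-1}]\bigr]$ discards the factor $\size{q_i(\cdot,1)-q_i(\cdot,0)}$ and is off by $\Theta(\rnd)$, and the proposed repair---``each active round proportionally reduces $\Pr[I\ge i+1]$''---cannot recover the missing $1/\rnd$. An adversary that stays passive until one late round $i^\ast\le\rnd-\floor{\log^{2.5}\rnd}$ and aborts there iff $\aux_{i^\ast}=1$ pays no ``budget'' before $i^\ast$, so the halving argument only caps its gain by the maximal per-round variance, roughly $1/\ml{i^\ast}$, i.e.\ $1/\operatorname{polylog}(\rnd)$, nowhere near $\log^3\rnd/\rnd$. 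What actually limits such an adversary is that $Y_{i^\ast-1}$ is unlikely to still lie in the critical window of width $O(\sqrt{\ms{i^\ast}\log\rnd})$: the paper quantifies this as $\Pr[E_i]\le O(\ml{i}\sqrt{\log\rnd}/\rnd)$ (\cref{claim:PrEiNew,claim:constraintsNew}), and then controls the adaptive choice of the single abort round, subject to these tail constraints, with the rearrangement inequality \cref{claim:linearproblem} inside \cref{proposition:OnlyWithOi}, finishing the last $\floor{\log^{2.5}\rnd}$ rounds and the atypical-$Y$ branch via \cref{claim:YiPurpose,lemma:DeterminedGames}. Your plan contains no analogue of either the window-probability decay or the rearrangement step, and the martingale identity $\sum_i\Ex[\Var[\delta_i\mid Y_{i-1}]]\le\tfrac14$ cannot substitute for them; so the $1/\rnd$ factor is never obtained. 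This is a missing idea, not a calculation left to be carried out.
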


\def\HyperRecalls{
Recall that for $n \in \N$, $\ell \in [n]$ and an integer $p\in [-n,n]$, we define the hypergeometric probability distribution $\Hyp{n,p,\ell}$ by $\Hyp{n,p,\ell}(k) \eqdef \ppr{\cI}{\w(\vct_{\cI}) = k}$, where $\cI$ is an $\ell$-size set uniformly chosen from $[n]$ and $\vct \in \oo^n$ with $w(\vct)= p$ (recall that $\w(\vct) = \sum_{j\in [\size{\vct}]} \vct[j]$ and that $\vct_{\cI} = (\vct_{j_1},\ldots,\vct_{j_{\size{\cI}}})$ where $j_1,\ldots,j_{\size{\cI}}$ are the ordered elements of $\cI$) and recall that $\vHyp{n,p,\ell}(k) \eqdef \ppr{x\la \Hyp{n,p,\ell}}{x \geq  k} = \sum_{t=k}^{\ell} \Hyp{n,p,\ell}(t)$.
}

\def\HyperGameDef{
	For $\rnd\in \N$, $\eps \in [-1,1]$, $\const > 0$ and a randomized function $f$, the game $\game_{f,\rnd,\eps}$ is called "$\const$-hypergeometric game" if there exists $p\in [-\rnd,\rnd]$, with $\size{p} \leq \const \cdot \sqrt{\log \rnd \cdot \ms{1}}$, such that $f$ on input $(i,y)$ outputs $1$ with probability $\vHyp{2\cdot \ms{1},p,\ms{i+1}}(-y)$ and zero otherwise.	
}

\def\HyperGameLemma<#1>{
	Let $\rnd\in \N$, let $\eps \in [-1,1]$, let $\const > 0$ and let $f$ be a randomized function such that $\game_{f,\rnd,\eps}$ is an $\const$-hypergeometric game according to #1. Then $\bias(\game_{f,\rnd,\eps}) \leq \frac{\varphi(\const) \cdot \log^3 \rnd}{\rnd}$ for some universal function $\varphi$.
}

\HyperRecalls

\begin{definition}[hypergeometric game]\label{def:ValueHgGame}
	\HyperGameDef
\end{definition}

Namely, in the above game, the value of $f$ is not sampled according to the expected value of the game, as done in the simple game above, but rather from a skewed version of it, obtained by replacing the Binomial distribution used by the game, with an Hypergeometric distribution.

\begin{lemma}\label{lemma:ValueHgGame}
	\HyperGameLemma<\cref{def:ValueHgGame}>
\end{lemma}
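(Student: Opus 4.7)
The plan is to follow essentially the same template used to prove Lemma \ref{lemma:ValueBiomialGame} for simple games, but with additional care taken to handle the fact that the hint is sampled from a hypergeometric distribution with a non-zero ``offset'' parameter $p$. First, I would argue that for any strategy $\Bc$, the bias satisfies $\bias_\Bc(\game_{f,\rnd,\eps}) \leq \sum_{i \in [\rnd+1]} \eex{\size{\veo_i - \veo_i^-}}$, so it suffices to bound the per-round expected gap $\Delta_i \eqdef \eex{\size{\veo_i - \veo_i^-}}$ and sum.

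Next, I would restrict attention to a typical event. By Hoeffding's inequality (\cref{claim:Hoeffding}), with probability $1 - \rnd^{-\omega(1)}$ one has $\size{Y_{i-1}} \leq \sqrt{\const_1 \cdot \NS{i} \cdot \log \rnd}$ and $\size{X_i} \leq \sqrt{\const_1 \cdot \NL{i} \cdot \log \rnd}$ for a suitable $\const_1$ depending only on $\const$; outside this event the trivial bound $\size{\veo_i - \veo_i^-} \leq 1$ contributes negligibly. On the typical event, by Bayes' rule the posterior distribution of $X_i$ given $(Y_{i-1}=y,\aux_i=h)$ is proportional to $\Beroo{\NL{i},\eps}(x) \cdot g_h(y+x)$, where $g_1(t) = \vHyp{2\ms{1}, p, \ms{i+1}}(-t)$ and $g_0(t) = 1 - g_1(t)$. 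I would then write
\[
\veo_i(y,h) - \veo_i^-(y) = \sum_x \eo_{i+1}(y+x)\left(\Pr[X_i = x \mid Y_{i-1}=y,\aux_i = h] - \Pr[X_i = x \mid Y_{i-1} = y]\right)
\]
and use the explicit estimates from \cref{prop:binomProbEstimation,prop:binomProbRelation,prop:hyperProbEstimation,prop:hyperProbRelation} to expand both $g_h(y+x)/g_h(y)$ and the binomial weights as shifted Gaussians with exponential corrections of the form $\exp\bigl((\alpha x + \beta x^2)/\NS{i+1}\bigr)$, each with multiplicative error $\varphi(\const) \cdot \log^{1.5} \rnd / \sqrt{\NL{i}}$.

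To evaluate the resulting expectation I would invoke \cref{prop:mainBound}, which is tailored exactly for such exponential correction factors weighted against a conditioned binomial, together with \cref{prop:epsDiff} to translate probabilities of the tail event $Y_\rnd \geq 0$ into an effective bias for the continuation game. This should yield a per-round bound of the form $\Delta_i \leq \varphi(\const) \cdot \log^{O(1)} \rnd \cdot \NL{i} / \NS{i}^2$ (or an analogous expression amenable to summation). Summing via \cref{claim:linearproblem}, which converts tail sum bounds of the form $\sum_{j \geq i} p_j \leq \alpha \cdot \NL{i}$ into weighted sum bounds $\sum_j p_j/\NL{j} \leq \alpha \cdot O(\log \rnd)$, then gives the claimed $\varphi(\const) \cdot \log^3 \rnd / \rnd$ bias.

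The main obstacle is handling the offset $p$ of the hypergeometric hint. When $p = 0$, the hypergeometric likelihood is essentially symmetric about zero and (up to the small differences between \cref{prop:binomProbEstimation} and \cref{prop:hyperProbEstimation}) reproduces the simple-game analysis. For nonzero $p$, however, the likelihood ratio $g_h(y+x)/g_h(y)$ acquires a systematic linear-in-$x$ term proportional to $p/\ms{i+1}$ inside the exponent, and one must argue that this term, while potentially of size up to $\const\sqrt{\log \rnd \cdot \NL{i}/\ms{i+1}}$, does not amplify the per-round gap beyond the claimed order. Showing that the posterior expectation of these exponentiated linear corrections remains within $1 \pm \varphi(\const) \cdot \log^{O(1)} \rnd/\sqrt{\NL{i}}$ is exactly what \cref{prop:mainBound} provides, and making the bookkeeping work uniformly across all rounds $i$---especially the final rounds where $\NL{i}$ is small and the Hoeffding concentration becomes loose---is where the technical care will be concentrated.
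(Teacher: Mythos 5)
Your analytical core (restricting to typical $Y_{i-1}$ and $X_i$, expanding likelihood ratios as exponentials via \cref{prop:hyperProbRelation}, and averaging the exponential corrections with \cref{prop:mainBound}) does match the heart of the paper's argument, but your aggregation step is fatally lossy. You propose $\bias_\Bc(\game_{f,\rnd,\eps}) \leq \sum_{i}\eex{\size{\veo_i-\veo_i^-}}$ and then bound each unconditional per-round gap. For a typical (central) $Y_{i-1}$ the one-bit hint genuinely moves the value by $\Theta(1/\ml{i})$, and for $i\leq \rnd/2$ the walk is central with constant probability, so each early round already contributes $\Theta(1/\rnd)$ to your sum; the total is $\Omega(1)$, not $O(\log^3\rnd/\rnd)$. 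Your intermediate target $\Delta_i \lesssim \log^{O(1)}\rnd\cdot \ml{i}/\ms{i}^2$ is therefore unattainable, and \cref{claim:linearproblem} has nothing of the right shape to act on. The missing idea is the one encapsulated in \cref{lemma:Final} via \cref{proposition:OnlyWithOi}: the adversary aborts \emph{once}, the per-round gap bound $\const\sqrt{\log\rnd}/\ml{i+1}$ is monotone in $i$, so the gap at the abort round is dominated by its value at $J$, the first round at which $Y_{J}$ leaves the typical region; since staying typical late is unlikely ($\Pr[E_i]\lesssim \ml{i}\sqrt{\log\rnd}/\rnd$ for $i\geq \rnd/2$, \cref{claim:PrEiNew,claim:constraintsNew}), the expectation $\eex{1/\ml{J+1}}$ is small, and only then does \cref{claim:linearproblem} (applied to the exit probabilities $\Pr[L_j]$) deliver the $\log^{2}\rnd/\rnd$ round-aggregation factor.

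A secondary gap: you condition on the one-bit hint directly, so your Bayes factor $g_h(y+x)/g_h(y)$ is a ratio of hypergeometric \emph{tail} probabilities ($g_1(t)=\vHyp{2\ms{1},p,\ms{i+1}}(-t)$), for which the paper supplies no estimates --- \cref{prop:hyperProbEstimation,prop:hyperProbRelation} concern point probabilities only. The paper sidesteps this by the data-processing step (\cref{lemma:InformationIncreaseValue}): it replaces $f$ by a more informative hint $\tau$ that reveals the underlying hypergeometric sample itself (and the intermediate subset weight), so that the ratio function $\ratioo_{i,y,\paux}$ involves only hypergeometric pmfs and \cref{prop:hyperProbRelation} applies directly. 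Without either that reduction or new tail-ratio estimates (and a separate treatment of the offset $p$, which the paper handles by Hoeffding for the hypergeometric, \cref{fact:hyperHoeffding}, inside the definition of the good hint set $\aset_{i,y}$), your round-$i$ expansion does not go through as written.
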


\def\VectorRecalls{
Recall that for $n \in \N$ and $\delta \in [0,1]$ we let $\sBias{n}{\delta}$ be the value $\eps \in [-1,1]$ with $\vBeroo{n, \eps}(0) = \delta$.	
}

\def\VectorGameDef{
	For $\rnd\in \N$, $\eps \in [-1,1]$, $\const \in \N$ and a randomized function $f$, the game $\game_{f,\rnd,\eps}$ is called "$\const$-vector game" if $f$ on input $(i,y)$ outputs a string in $\oo^{\const \cdot \ms{1}}$, where each of entries takes the value $1$ with probability $\sBias{\ms{1}}{\delta}$ for $\delta = \eo_{i+1}(y) (= \vBeroo{\ms{i+1},\eps}(-y))$.	
}

\def\VectorGameLemma<#1>{
	Let $\rnd\in \N$, let $\const > 0$ and let $f$ be a randomized function such that $\game_{f,\rnd,\eps=0}$ is an $\const$-vector game according to #1. Then $\bias(\game_{f,\rnd,\eps=0}) \leq \frac{\varphi(\const) \cdot \log^3 \rnd}{\rnd}$ for some universal function $\varphi$.	
}

\VectorRecalls

\begin{definition}[vector game]\label{def:ValueVectorGame}
	\VectorGameDef
\end{definition}

In the last game, the function $f$ outputs a vector (\ie a string), and not a bit as in the previous games. The distribution from which the vector is drawn, however, is very related to the expected value of the game.

\begin{lemma}\label{lemma:ValueVectorGame}
	\VectorGameLemma<\cref{def:ValueVectorGame}>
\end{lemma}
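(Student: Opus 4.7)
The plan is to follow the blueprint used for the simple and hypergeometric games (\cref{lemma:ValueBiomialGame,lemma:ValueHgGame}), after first reducing the vector-valued hint to a scalar sufficient statistic. Let $S_i = \w(\aux_i)$ denote the sum of the $\const \cdot \ms{1}$ coordinates of $\aux_i$. Since the coordinates of $\aux_i$ are i.i.d.\ $\Beroo{\sBias{\ms{1}}{\eo_{i+1}(Y_i)}}$, the pair $(Y_i, S_i)$ determines the conditional distribution of $\aux_i$, and hence $S_i$ is a sufficient statistic for $Y_i$ given $\aux_i$. It therefore suffices to bound $\bias(\game_{f',\rnd,0})$, where $f'(i,y)$ outputs a single sample from $\Beroo{\const \cdot \ms{1}, \sBias{\ms{1}}{\eo_{i+1}(y)}}$.

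Next, I would argue that this scalar vector game is, up to controlled errors, equivalent to a hypergeometric game. Conditioned on $Y_{i-1} = y$ and $X_i = x$ with $|y|,|x+y| \leq O(\sqrt{\ms{1}\log \rnd})$, \cref{prop:epsDiff} (applied with $\eps=0$, $k=-(x+y)$, $n = \ms{i+1}$, $n' = \ms{1}$) gives
\begin{align*}
\sBias{\ms{1}}{\eo_{i+1}(y+x)} \in \frac{-(x+y)}{\sqrt{\ms{1}\cdot \ms{i+1}}} \pm O\!\left(\frac{\log^{1.5}\rnd}{\sqrt{\ms{1}\cdot \ms{i+1}}}\right).
\end{align*}
Hence $S_i$, given $Y_{i-1}=y$ and $X_i=x$, is approximately distributed as a binomial on $\const\cdot\ms{1}$ trials whose mean shifts by $\approx -\const\sqrt{\ms{1}/\ms{i+1}}\cdot x$ as $x$ varies, while its standard deviation is $\Theta(\sqrt{\ms{1}})$. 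This gives the same signal-to-noise ratio in $x$ as an $\ms{i+1}$-sized hypergeometric sample, which is exactly the information profile of the hypergeometric game. I would make this rigorous by writing $\Pr[X_i = x \mid Y_{i-1}=y, S_i=s]$ via Bayes, expanding $\Beroo{\ml{i},0}(x)$ and $\Beroo{\const\ms{1}, \sBias{\ms{1}}{\eo_{i+1}(y+x)}}(s)$ using \cref{prop:binomProbEstimation}, and controlling the resulting ratios via \cref{prop:binomProbRelation,prop:mainBound}.

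With this setup the per-round bias $\Ex|\veo_i - \veo_i^-|$ can be estimated exactly as in the proof of \cref{lemma:ValueHgGame}: truncate to the typical range $|Y_{i-1}|,|X_i|,|S_i-\mathrm{mean}|\leq O(\sqrt{(\cdot)\log\rnd})$ using \cref{claim:Hoeffding} and \cref{fact:hyperHoeffding}-style bounds (paying only $\negl(\rnd)$), approximate the posterior of $X_i$ given the hint to within a multiplicative $(1\pm O(\log^{1.5}\rnd / \sqrt{\NL{i}}))$ factor via \cref{prop:mainBound}, and conclude that $\Ex|\veo_i - \veo_i^-| \leq O(\log^2\rnd / \NS{i})$. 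Summing over $i\in[\rnd]$ and using $\sum_{i=1}^\rnd 1/\NS{i} = O(\log\rnd / \rnd)$, together with \cref{claim:linearproblem} to handle the final $O(\log^{2.5}\rnd)$ rounds where $\NL{i}$ is too small for the approximations to apply (and where the game's value can only shift by $O(1/\rnd)$ by a separate direct argument), yields the claimed $O(\log^3\rnd / \rnd)$ bound.

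The main obstacle I anticipate is the approximation in the middle paragraph: unlike the hypergeometric game, where the hint distribution is a single draw with a clean closed form, here the hint's distribution depends on $Y_i$ only through $\sBias{\ms{1}}{\eo_{i+1}(Y_i)}$, a quantity that is itself only implicitly defined. Carrying the approximation $\sBias{\ms{1}}{\eo_{i+1}(y+x)} \approx \sBias{\ms{1}}{\eo_{i+1}(y)} - x/\sqrt{\ms{1}\ms{i+1}}$ through the $(\const\ms{1})$-fold product $\Beroo{\const\ms{1},\cdot}(s)$ requires expanding two nested layers of \cref{prop:binomProbEstimation} and ensuring that the second-order terms $\eps^2 s$ and $\eps^4 n$ in the error estimate of \cref{prop:binomProbEstimation} stay bounded by $O(\log^{1.5}\rnd/\sqrt{\NL{i}})$. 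The remaining manipulations are then a direct translation of the hypergeometric-game calculation.
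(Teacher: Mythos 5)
Your opening steps do track the paper's proof: the hint's law depends on $Y_i$ only through its coordinate sum and the bias $\sBias{\ms{1}}{\eo_{i+1}(Y_i)}$, which \cref{prop:epsDiff} linearizes as roughly $(y+x)/\sqrt{\ms{1}\cdot\ms{i+1}}$; the paper carries out exactly this computation (working with $\w(v)$ directly) to bound the posterior ratio $\ratioo_{i,y,v}(x)$ within $1\pm O\bigl(\sqrt{\log\rnd/\ml{i+1}}\bigr)\cdot(1+\abs{x}/\sqrt{\ml{i}})$ and then hands everything to \cref{lemma:Final}. The genuine gap is your aggregation step, which replaces \cref{lemma:Final} by ``bound $\Ex\size{\veo_i-\veo_i^-}$ per round and sum over $i$''. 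This cannot work. Already at round $i=1$ the hint shifts the value by $\Theta(1/\rnd)$ in expectation (in the simple game this expectation equals $2\Var(\eo_{2}(Y_1))=\Theta(1/\rnd)$, and the vector hint is at least as informative), so the claimed per-round bound $O(\log^2\rnd/\ms{i})=O(\log^2\rnd/\rnd^2)$ is false for early rounds; in fact $\sum_{i}\Ex\size{\veo_i-\veo_i^-}=\Omega(1)$, so no argument that sums unconditional per-round gaps can yield $o(1)$, let alone $O(\log^3\rnd/\rnd)$. The arithmetic is also off: $\ms{i}=\ml{i}(\ml{i}+1)/2$, so $\sum_{i=1}^{\rnd}1/\ms{i}=\Theta(1)$, not $O(\log\rnd/\rnd)$.

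What makes the bound go through, and what your sketch omits, is the stopping-time bookkeeping inside \cref{lemma:Final} (i.e.\ \cref{proposition:OnlyWithOi}): the gap is bounded by $O(\sqrt{\log\rnd}/\ml{i+1})$ only \emph{conditioned} on $Y_{i-1}$ lying in the narrow typical window $\cY_{i-1}$, whose probability in late rounds decays like $\ml{i}\sqrt{\log\rnd}/\rnd$ (\cref{claim:PrEiNew}); since the attacker aborts at most once and (after a cheap modification) never after the first exit from that window, the bias is at most $\sum_{i}\Pr[L_i]\cdot O(\sqrt{\log\rnd}/\ml{i+1})$ for the first-exit events $L_i$, and \cref{claim:linearproblem} together with the tail constraints $\sum_{j\ge i}\Pr[L_j]\le O(\ml{i}\sqrt{\log\rnd}/\rnd)$ turns this into $O(\log^2\rnd/\rnd)$. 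Relatedly, \cref{claim:linearproblem} is not what disposes of the last $\floor{\log^{2.5}\rnd}$ rounds, and your substitute claim that there ``the game's value can only shift by $O(1/\rnd)$'' is false: at round $\rnd$ with $Y_{\rnd-1}=0$ the hint essentially reveals the outcome. Those rounds are discarded because being in the typical window there happens with probability only $O(\log^3\rnd/\rnd)$, while outside it the game is nearly determined (\cref{claim:YiPurpose} and \cref{lemma:DeterminedGames}). If you keep your ratio/posterior estimates but route them through \cref{lemma:Final} rather than a per-round sum, you recover the paper's argument; as written, the final step fails.
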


The proof of the above lemmas are given in \cref{sec:Bernoulli,sec:HG,sec:Vector}.
In addition, we make use of the following lemma which asserts that if the expected value of an online-binomial game is almost determined, then there is no much room for an attacker to gain much bias.

\def\DeterminedGameLemma<#1>{
	Let $\game_{f,\rnd,\eps,\ofs} = \gameVars$ be according to #1.
	Assume that $\veo_1^- \notin [\frac1{\rnd^2}, 1- \frac1{\rnd^2}]$, then $\bias(\game_{f,\rnd,\eps,\ofs}) \leq \frac2{\rnd}$. 	
}

\begin{lemma}\label{lemma:DeterminedGames}
	\DeterminedGameLemma<\cref{def:game}>
\end{lemma}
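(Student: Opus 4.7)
The plan is to exploit that both $\{\veo_i\}$ and $\{\veo_i^-\}$ are non-negative, $[0,1]$-valued martingales with initial value $\veo_1^-$; under the hypothesis this initial value sits in an $O(1/\rnd^2)$ neighborhood of $0$ or $1$, so optional stopping combined with Doob's maximal inequality leaves almost no room for the attacker to manufacture a gap between $\veo_I$ and $\veo_I^-$.

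First I would verify the martingale structure. By the Markov property of the independent increments $X_i$, together with the fact that each hint $\aux_j$ is a randomized function of $Y_j$ alone, one checks that $\veo_i = \pr{Y_\rnd \geq 0 \mid \mathcal{A}_i}$ for the filtration $\mathcal{A}_i = \sigma(Y_0,\dots,Y_{i-1},\aux_1,\dots,\aux_i)$, making $\{\veo_i\}_{i=1}^{\rnd+1}$ a martingale with respect to $\{\mathcal{A}_i\}$. Likewise $\veo_i^- = \pr{Y_\rnd \geq 0 \mid Y_{i-1}}$ is a martingale with respect to the coarser filtration $\{\sigma(Y_0,\dots,Y_{i-1})\}$. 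The attacker's stopping round $I$ is a bounded $\{\mathcal{A}_i\}$-stopping time, so by optional stopping $\eex{\veo_I} = \veo_1^-$.

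Next I bound $\eex{\veo_I^-}$ from above by $\eex{\max_i \veo_i^-}$. Doob's maximal inequality for the non-negative martingale $\{\veo_i^-\}$ gives $\pr{\max_i \veo_i^- \geq t} \leq \veo_1^-/t$ for every $t > 0$. Splitting the integral $\eex{\max_i \veo_i^-} = \int_0^1 \pr{\max_i \veo_i^- > t}\,dt$ at $t = 1/\rnd$ and using $\veo_i^- \leq 1$, one obtains $\eex{\max_i \veo_i^-} \leq 1/\rnd + \veo_1^- \ln \rnd$.

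Consider the case $\veo_1^- \leq 1/\rnd^2$ (the complementary case $\veo_1^- \geq 1 - 1/\rnd^2$ is handled symmetrically by rerunning the argument on the non-negative martingales $\{1-\veo_i\}$ and $\{1-\veo_i^-\}$, which start at $1-\veo_1^- \leq 1/\rnd^2$). Then $\eex{\veo_I} = \veo_1^- \leq 1/\rnd^2$, and $\eex{\veo_I^-} \leq 1/\rnd + \ln \rnd/\rnd^2 \leq 2/\rnd$. Using the elementary inequality $|a-b| \leq \max(a,b)$ for non-negative $a,b$, the bias $|\eex{\veo_I - \veo_I^-}| = |\eex{\veo_I} - \eex{\veo_I^-}|$ is at most $2/\rnd$. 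The one subtlety worth double-checking is the martingale identity for $\{\veo_i\}$ on the hint-carrying filtration and the verification that $I$ is a stopping time in that filtration, so that optional stopping applies cleanly; the rest is just careful bookkeeping.
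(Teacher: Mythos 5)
Your proof is correct, and it reaches the paper's bound of $\frac{2}{\rnd}$ by a route that differs in one meaningful respect. Both arguments control $\veo_I^-$ in the same way: your appeal to Doob's maximal inequality for the non-negative martingale $\set{\veo_i^-}$ is exactly the paper's ``simple averaging argument'' showing $\pr{\exists i \colon \veo_i^- > \frac1\rnd} \le \frac1\rnd$ (you then integrate the tail rather than splitting into a good/bad event, which is why a harmless $\ln\rnd/\rnd^2$ term appears). The difference is how the hint-dependent term $\veo_I$ is handled. The paper does not touch it directly: it passes to the full-information game with hint $g(i,y)=y$, where $\veo_i$ coincides with $\veo_{i+1}^-$ so the same maximal bound covers both terms, and then transfers the bound to an arbitrary $f$ via the data-processing lemma (\cref{lemma:InformationIncreaseValue}). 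You instead observe that $\veo_i = \pr{Y_\rnd \ge 0 \mid Y_0,\dots,Y_{i-1},\aux_1,\dots,\aux_i}$ is a Doob martingale for the hint-carrying filtration --- this uses that each $\aux_j$ is a randomized function of $Y_j$ alone with fresh independent randomness, so past hints reveal nothing about future increments given $Y_{i-1}$ --- and that $I$ is a bounded stopping time for that filtration (taking $\Bc$ deterministic without loss of generality, or enlarging the filtration by its independent coins), whence $\eex{\veo_I} = \veo_1^-$ exactly. This makes your proof self-contained (no appeal to \cref{lemma:InformationIncreaseValue}) and gives the slightly stronger structural fact that no stopping strategy can shift $\eex{\veo_I}$ at all; the paper's version is shorter because it reuses machinery it has already established. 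The only loose phrase is calling $\veo_1^-$ the ``initial value'' of $\set{\veo_i}$ --- $\veo_1$ is random and only its expectation equals $\veo_1^-$ --- but your argument only uses the expectation, so nothing breaks.
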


The proof of \cref{lemma:DeterminedGames} is given in \cref{section:GenericApproach}.

\subsubsection{Proving \texorpdfstring{\cref{thm:2partyIdeal}}{Main Theorem in Hybrid Model}}\label{sec:ProofMainThm2}

\begin{proof}[Proof of \cref{thm:2partyIdeal}]
Fix $\rnd \equiv 1 \bmod 4$. By construction, the honest parties in $\PifTwo_\rnd$ always output the same bit, where under the  assumption about $\rnd$, it holds that $\ms{1}$, the total number of coins flipped, is odd. It follows that the common output of a random honest execution of $\PifTwo_\rnd$, is a uniform bit. Namely, protocol $\PifTwo_\rnd$ is correct according to \cref{def:CorrectCT}.

We assume \wlg that if a party aborts in the $i$'th round, it does so by sending the message \Abort, after seeing the other party message of that round.

Let the $(i,j)$'th round in a random execution of $\PifTwo_\rnd$, for $(i,j)\in (\rnd) \times \set{a,b}$, stands for the $j$'th step of the $i$'th loop in the execution. Letting $(0,a)$ being the zero round (\ie before the call to $\TwoShareGen_\frac12$ is made) and $(0,b)$ denote the round where the call to $\TwoShareGen_\frac12$ is made.

Let $\z\in \zo$ and let $\Aadv$ be a fail-stop adversary controlling $\PhTwo_\z$. Let $V$ be $\PhTwo_\z$'s view in a random execution of $\PifTwo_\rnd$. For $\ovr = (i,j)\in (\rnd) \times \set{a,b}$, let $V_\ovr$ be $\ovr$'th round prefix of $V$, and let $V_\ovr^-$ be the value of $V_\ovr$ with the abort message sent in the $\ovr$'th round (if any) removed. Finally, let $I$ be the round in which $\Aadv$ sent the abort message, letting $I=(\rnd,b)$, in case no abort occurred.

In the following we show that
\begin{align}\label{eq:2partyIdeal}
\size{\eex{\val(V_I) - \val(V_{I}^-)}} \le \frac{\xi \cdot \log^3 \rnd}{\rnd},
\end{align}
for some universal (independent of $\rnd$) constant $\xi \ge 0$, where $\val(v)$ is the expected outcome of an honest (non aborting) execution of the parties that do no abort in $v$, conditioned on $v$ (see \cref{def:ViewVal}).

Since \cref{eq:2partyIdeal} holds for any $\rnd \equiv 1 \bmod 4$ and any fail-stop adversary $\Aadv$, protocol $\PifTwo_\rnd$ is $\frac{ \xi \log^3 \rnd}{\rnd}$-biased according to \cref{def:FairCTGameAlt}.
Since, see above, $\PifTwo_\rnd$ is correct according to \cref{def:CorrectCT}, the proof of the theorem follows by \cref{prop:FairCTGameAlt}.

So it is left to prove \cref{eq:2partyIdeal}. Notice that the next rounds shares held by $\PhTwo_\z$ (when playing the role of $\Ptwo_\z$) at the end of round $(i,b)$ (\ie $\cVS{\z}_{i+1,\dots,\rnd}$, $\dVS{0}{\z}_{i+1,\dots,\rnd+1}$ and $\dVS{1}{\z}_{i+1,\dots,\rnd+1}$), are uniformly chosen strings from $\PhTwo_\z$'s point of view. In particular, these shares contains no information about the expected output of the protocol, or the other party's action in case of future aborts. It follows that $\val(V_{0,b}) = \frac12$ (recall that $V_{0,b}$ is $\PhTwo_\z$'s view after getting its part of $\TwoShareGen_\frac12$'s output). We also note that by construction, in case $\PhTwo_\z$ aborts during the call to $\TwoShareGen_\frac12$ (and in this case the honest party  gets no value from the functionality), then the honest party outputs a uniform bit. Namely, $\val(V_{(0,b)}^-)= \frac12$. Hence, the adversary $\Aadv$ gains \emph{nothing} by aborting during the call to $\TwoShareGen_\frac12$, and in the following we assume \wlg that $\Aadv$ only aborts (if any) during the execution of the embedded execution of $\Pi_\rnd^2 = (\Ptwo_0,\Ptwo_1)$.

In the rest of the proof we separately consider the case $I = (\cdot,a)$ and the case $I = (\cdot,b)$. We conclude the proof showing that the first type of aborts might help $\Aadv$ to gain $ \frac{O(\log^3 \rnd)}{\rnd}$ advantage, where the second type give him \emph{nothing}.

Since both steps are symmetric, we assume for concreteness that $\Aadv$ controls $\Ptwo_0$.

\begin{itemize}

 \item[$I = (\cdot,b)$.] In case $I = (i,b)$, the adversary's view $V_I$ contains the value of $(c_1,\dots, c_i)$ sampled by $\TwoShareGen_\frac12$, and some random function of these values, \ie the shares of the next rounds it got from $\TwoShareGen_\frac12$, which are uniform strings from his point of view, and the shares used till this round, which are random function of $(c_1,\dots, c_i)$. Hence, the expected outcome of the protocol given $\Aadv$'s view is $\delta_i \eqdef \vBeroo{\ms{i},0}\left(-\sum_{j=1}^{i} c_j \right)$. By construction, however, the expected outcome of $\Ptwo_1$ in case $\Ptwo_0$ aborts in round $(i,b)$, is also $\delta_i$. Hence, the adversary gains nothing (\ie $\val(V_i) = \val(V_{i}^-)$), by aborting in these steps.

 \item[$I = (\cdot,a)$.] Since $\Aadv$ gains nothing by aborting at \Stepref{step:TwoParty:round:2} of the loop, we assume \wlg that $\Aadv$ only aborts at \Stepref{step:TwoParty:round:1} of the loop, and the proof  by the next claim (proven below).
 \begin{claim}\label{claim:2partyConnectionToSimpleGame}
 	Assuming $\Aadv$ only aborts at \Stepref{step:TwoParty:round:1} of the loop in $\Pitwo_\rnd$, then
 	\begin{align*}
 	\size{\eex{\val(V_I) - \val(V_I)^-}} \leq \frac{\xi \cdot \log^3 \rnd}{\rnd},
 	\end{align*}
 	for some universal (independent of $\rnd$) constant $\xi \ge 0$.
 \end{claim}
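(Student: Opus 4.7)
The plan is to reduce this to the simple game $\game_{f,\rnd,0}$ with $f(i,y)$ outputting $1$ with probability $\eo_{i+1}(y)=\vBeroo{\ms{i+1},0}(-y)$, and invoke \cref{lemma:ValueBiomialGame}. Since $\rnd\equiv 1\bmod 4$, the quantity $\ms{1}$ is odd, so $\eps=\sBias{\ms{1}}{\tfrac12}=0$, and by \cref{alg:TwoShareGenerator} the coins $c_i$ in the protocol are distributed as $\Beroo{\ml{i},0}$ while $d_i^0$ is distributed (given $c_1,\ldots,c_i$) as $\Berzo{\eo_{i+1}(Y_i)}$, where $Y_i=\sum_{j=1}^i c_j$. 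In particular, the joint distribution of $(Y_i,d_i^0)_{i\in[\rnd]}$ under the protocol matches that of $(Y_i,\aux_i)_{i\in[\rnd]}$ under the simple game.

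I would then fix an arbitrary fail-stop $\Aadv$ controlling $\Ptwo_0$ that aborts only at step $a$, at some round $I=(i,a)$. The shares of future values held by $\Aadv$ at that point are uniform random strings of $\TwoShareGen_{1/2}$ independent of $Y_\rnd$, and each past defense $d_j^0$ for $j<i$ is conditionally independent of $Y_\rnd$ given $c_1,\dots,c_j$; hence a direct computation gives $\val(V_I^-)=\eo_i(Y_{i-1})$, matching $\veo_i^-$. I would next split the analysis into two subcases based on what $\Aadv$ does right before aborting at step $(i,a)$: (A) $\Aadv$ delivers $\dVS{1}{0}[i]$ to $\Ptwo_1$ and then aborts, so by the abort rule $\Ptwo_1$ reconstructs and outputs $d_i^1$; or (B) $\Aadv$ aborts without delivering this share, so $\Ptwo_1$ falls back to $d_{i-1}^1$ (or to $\dVS{1}{1}[\rnd+1]$ when $i=1$). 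In subcase (B), the honest party's output is a fresh $\Berzo{\eo_i(Y_{i-1})}$ sample (using $\vBeroo{\ms{1},0}(0)=\tfrac12$ when $i=1$) whose distribution, given $Y_{i-1}$, does not depend on $d_i^0$, so $\val(V_I)=\eo_i(Y_{i-1})=\val(V_I^-)$ and the gain is exactly zero. In subcase (A), since $d_i^0$ and $d_i^1$ are sampled i.i.d.\ from $\Berzo{\eo_{i+1}(Y_i)}$ conditioned on $c_1,\dots,c_i$, one computes
\begin{align*}
\val(V_I)\;=\;\Ex[d_i^1\mid V_I^-]\;=\;\Ex[\eo_{i+1}(Y_i)\mid Y_{i-1},d_i^0]\;=\;\veo_i,
\end{align*}
so the gain is exactly $\veo_i-\veo_i^-$.

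Finally, I would construct an algorithm $\Bc$ for the simple game that on input $(i,Y_{i-1},\aux_i)$ uses internal coins to sample a completion $(c_1,\dots,c_{i-1})$ with the correct conditional distribution given $Y_{i-1}$, assembles a synthetic view of $\Aadv$ (with past defense values freshly resampled from $\Berzo{\eo_{j+1}(Y_j)}$ and $d_i^0$ set to $\aux_i$), runs $\Aadv$'s stopping rule, and outputs $1$ iff $\Aadv$ elects Subcase (A) at this round. This coupling makes the joint law of $(I,Y_{I-1},d_I^0)$ identical under the protocol and under $\game_{f,\rnd,0}$, and by the subcase analysis the per-round gain is zero whenever Subcase (B) is chosen, so $|\eex{\val(V_I)-\val(V_I^-)}|=\bias_\Bc(\game_{f,\rnd,0})\le\xi\log^3\rnd/\rnd$ by \cref{lemma:ValueBiomialGame}.

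The main obstacle I expect is the bookkeeping for the reduction: I must argue cleanly that past defenses $d_j^0$ (for $j<i$) and the oracle-delivered shares carry no information about $Y_\rnd$ beyond $Y_{i-1}$ (so that $\val(V_I^-)=\eo_i(Y_{i-1})$), and that Subcase (A) precisely realizes the simple-game gain $\veo_i-\veo_i^-$; both reduce to the conditional i.i.d.\ structure of $(d_i^0,d_i^1)$ given $c_1,\dots,c_i$ and to the observation that all further randomness of $\TwoShareGen_{1/2}$ is independent of the outcome given $Y_{i-1}$.
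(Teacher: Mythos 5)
Your overall strategy is the paper's: couple the protocol variables $(c_i,d^0_i)$ with the simple-game variables $(X_i,\aux_i)$, simulate $\Aadv$ inside a game player, and invoke \cref{lemma:ValueBiomialGame} (the paper does exactly this via \cref{claim:2partyConnectionToSimpleGameGen}). However, your value bookkeeping has a genuine error. By \cref{def:ViewVal,def:FairCTGameAlt}, $V_I^-$ is the round-$I$ view with \emph{only the abort message} removed; since the adversary is rushing, at round $I=(i,a)$ it has already received $\dVS{0}{1}[i]$ and reconstructed $d^0_i$ before deciding whether to abort, so $V_I^-$ contains $d^0_i$ and $\val(V_I^-)=\eo_i(Y_{i-1},d^0_i)=\veo_i$, \emph{not} $\eo_i(Y_{i-1})=\veo_i^-$ as you assert (compare the paper's treatment of $(i,b)$-aborts, where $\val(V_{(i,b)}^-)$ is computed from a view that already includes the coin $c_i$ reconstructed in that very round).

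This flips your subcase analysis. Under the paper's convention an aborting party sends no protocol message in its abort round, so the step-$(a)$ abort covered by the claim is your subcase (B): the honest party outputs $d^1_{i-1}\sim\Berzo{\eo_i(Y_{i-1})}$, giving $\val(V_I)=\veo_i^-$ while $\val(V_I^-)=\veo_i$, i.e.\ a gain of $\veo_i^- - \veo_i$ --- exactly the quantity the simple game bounds. Your subcase (A) (send the share, then abort) makes the honest party output $d^1_i$, and since $d^0_i,d^1_i$ are i.i.d.\ given $Y_i$ one gets $\val(V_I)=\veo_i=\val(V_I^-)$, a gain of zero; moreover it is effectively an abort at \Stepref{step:TwoParty:round:2}, which the claim's hypothesis excludes and which the theorem handles separately. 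As written, your argument concludes that the standard abort gains exactly nothing and has the game player output $1$ precisely on the zero-gain events, so it does not bound the actual adversary's bias (it would even render \cref{lemma:ValueBiomialGame} unnecessary and make the bound $0$, contradicting the fact that step-$(a)$ aborts are precisely where the $\Theta(\frac1\rnd)$-type gain arises). The fix is local: correct $\val(V_I^-)$ and have the game player abort exactly when $\Aadv$ aborts at \Stepref{step:TwoParty:round:1}; the resulting reduction is then the paper's, and since the error only swaps the two terms inside an absolute value, the final bound is recovered unchanged.
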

 
  \cref{claim:2partyConnectionToSimpleGame} yields that  the overall bias $\Ac$ gains, which equals to $\size{\eex{\val(V_I) - \val(V_I)^-}}$, is bounded by $ \frac{\xi \cdot \log^3 \rnd}{\rnd}$, for some universal constant $\xi\geq0$.

\end{itemize}
\end{proof}

\paragraph{Proving \cref{claim:2partyConnectionToSimpleGame}}
We prove the claim via reduction to online-binomial game, described in \cref{subsection:BG}.
The proof immediately follows from the following claim and \cref{lemma:ValueBiomialGame}.

\begin{claim}\label{claim:2partyConnectionToSimpleGameGen}
	Assuming $\Aadv$ only aborts at \Stepref{step:TwoParty:round:1} of the loop in $\Pitwo_\rnd$,  then
	\begin{align*}
	\size{\eex{\val(V_I) - \val(V_I)^-}} \leq \bias(\game_{f,\rnd,0}),
	\end{align*}
	where $\game_{f,\rnd,0}$ is a simple online-binomial game according to \cref{def:ValueBiomialGame} and $\bias(\cdot)$ is according to \cref{def:GameBias}.
\end{claim}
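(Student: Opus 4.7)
The plan is to reduce the claim to the existence of a single game attacker $\Bc$ whose bias in $\game_{f,\rnd,0}$ matches the quantity $\size{\eex{\val(V_I) - \val(V_I^-)}}$; the required inequality then follows from $\bias(\game_{f,\rnd,0}) \geq \bias_\Bc(\game_{f,\rnd,0})$.  I would first set up the natural correspondence between a random execution of the inner basic protocol $\Pitwo_\rnd$ on shares generated by $\TwoShareGen(1^\rnd,\frac12)$ and the simple online-binomial game: each reconstructed coin $c_i$ plays the role of the game coin $X_i \sim \Beroo{\ml{i},0}$, the partial sum $\sum_{j\le i} c_j$ plays the role of $Y_i$, and the reconstructed defense $d^0_i \sim \Berzo{\vBeroo{\ms{i+1},0}(-\sum_{j\le i} c_j)}$ plays the role of the hint $\aux_i = f(i,Y_i) \sim \Berzo{\vBeroo{\ms{i+1},0}(-Y_i)}$.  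The assumptions $\delta=\frac12$ and $\rnd \equiv 1 \bmod 4$ together force $\ms{1}$ to be odd and hence $\eps = \sBias{\ms{1}}{\frac12} = 0$, so the game parameters and the protocol parameters line up.

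The attacker $\Bc$ is then obtained by simulating $\Aadv$: at game round $i$, upon receiving $(i, Y_{i-1}, \aux_i)$, $\Bc$ treats $\aux_1,\ldots,\aux_i$ as the defenses $d^0_1,\ldots,d^0_i$, samples the initial shares (uniform strings independent of everything else) and the remaining coin shares, internally completes $\Aadv$'s view at round $(i,a)$, and outputs $1$ exactly when $\Aadv$ would abort. Before this works I would justify that, without loss of generality, $\Aadv$'s abort decision depends on $V_I^-$ only through the sufficient statistic $(Y_{i-1}, \aux_1,\ldots,\aux_i)$: the individual breakdown of the $c_j$'s (for $j<i$), the uniform initial shares, and the still-hidden shares of the $d^1_j$'s are all, by the $2$-out-of-$2$ secret sharing and the Markov structure of $\TwoShareGen$, conditionally independent of both the remaining protocol output $1\{Y_\rnd \geq 0\}$ and of the post-abort bit $d^1_{i-1}$ given $(Y_{i-1}, \aux_i)$. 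Under this reduction $\Bc$ aborts the game exactly when $\Aadv$ aborts the protocol.

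The core of the argument is the pair of identifications $\val(V_{(i,a)}) = \veo_i^-$ and $\val(V_{(i,a)}^-) = \veo_i$. When $\Aadv$ aborts at $(i,a)$, the honest party outputs $d^1_{i-1}$, which for $i \geq 2$ was drawn from $\Berzo{\vBeroo{\ms{i},0}(-Y_{i-1})}$ using fresh randomness independent of $d^0_i$ (and for $i=1$ is the pre-sampled $d^{1,\#1}_{\rnd+1} \sim \Berzo{\frac12}$, which matches $\vBeroo{\ms{1},0}(0) = \frac12$ since $\ms{1}$ is odd); hence $\val(V_I) = \vBeroo{\ms{i},0}(-Y_{i-1}) = \eo_i(Y_{i-1}) = \veo_i^-$. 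When $\Aadv$ does not abort, the protocol outputs $1\{Y_\rnd \geq 0\}$, and by the Markov structure the conditional distribution of $Y_\rnd$ given $V_I^-$ coincides with its distribution given $(Y_{i-1}, \aux_i)$, so $\val(V_I^-) = \Pr[Y_\rnd \geq 0 \mid Y_{i-1}, \aux_i] = \eo_i(Y_{i-1}, \aux_i) = \veo_i$. Combining these (and noting that in the trivial no-abort case $I = (\rnd,b)$ on the protocol side and $I = \rnd+1$ on the game side both sides contribute zero), one obtains $\size{\eex{\val(V_I) - \val(V_I^-)}} = \size{\eex{\veo_I^- - \veo_I}} = \bias_\Bc(\game_{f,\rnd,0}) \leq \bias(\game_{f,\rnd,0})$, which is exactly the claim.

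The main obstacle, as is typical for such reductions, is keeping the conditional-independence bookkeeping tight enough that the flip $\val(V_I) \leftrightarrow \veo_I^-$ and $\val(V_I^-) \leftrightarrow \veo_I$ comes out in the right direction: aborting replaces the protocol outcome by a fresh sample that ignores the hint (yielding $\veo_i^-$), while continuing honestly produces an outcome whose expectation uses the hint (yielding $\veo_i$). The sufficient-statistic step (that $\Aadv$'s optimal decision depends on $V_I^-$ only through $(Y_{i-1}, \aux_1,\ldots,\aux_i)$) is the substep that I expect to merit the most careful write-up, even though its content is the routine fact that the unrevealed parts of $V_I^-$ carry no information about $Y_\rnd$ or about $d^1_{i-1}$ beyond what is already encoded in $(Y_{i-1}, \aux_i)$.
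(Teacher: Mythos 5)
Your proposal is correct and takes essentially the same route as the paper: you construct a game attacker that simulates $\Aadv$'s full view by combining the game variables $(Y_{i-1},\aux_i)$ with freshly sampled uniform shares, note that this simulated view is distributed exactly as in a real execution with $\TwoShareGen_{\frac12}$, and conclude via the identifications $\val(V_I)=\veo_I^-$ and $\val(V_I^-)=\veo_I$ that the adversary's gain equals $\bias_\Bc(\game_{f,\rnd,0})\le\bias(\game_{f,\rnd,0})$. The only deviation is your ``sufficient statistic'' step, which is unnecessary (though harmless): since the reduction hands $\Aadv$ a complete, correctly distributed view, its abort decision may depend arbitrarily on that view and no WLOG restriction to $(Y_{i-1},\aux_1,\ldots,\aux_i)$ is needed.
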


\begin{proof}[Proof of \cref{claim:2partyConnectionToSimpleGameGen}]
	\remove{
	Informally (see \cref{section:BoundsForOnlineGames} for formal definition), 
	in an $\rnd$-round online binomial game $\game_{f,\rnd,\eps}$, binomial random variables $X_1\dots,X_{\rnd}$ over $\oo$ are independently sampled ($X_i$ is sampled according $\Beroo{\ml{i},\eps}$), and the value of the game is set to one if $\sum_{i=0}^{\rnd} X_i \geq 0$, and to zero otherwise (where $X_0 = 0$). At the $i$'th round of the game, the value of $X_{i-1}$ is exposed to an (unbounded) attacker, who is also getting some auxiliary information $H_i$ about the value of $X_i$. The attacker can abort, and in this case it gets the expected value of the game, conditioned on the values of $X_0,X_1,\dots,X_{i-1}$ (but not on the additional information). If no abort occurred, the attacker is getting the final value of the game. The goal of the attacker is to bias the value it gets \emph{away} from the expected value of the game.
	}
	
	\remove{the $\rnd$-round simple binomial game $\game_{f,\rnd,0}$ (as defined in \cref{def:ValueBiomialGame}) is an online game where in each round $i \in [\rnd]$, $\ml{i}$ coins are sampled (denote their sum by $X_i$), $X_{i-1}$ is exposed to an unbounded attacker who is also getting some hint $H_i$ about the value of $X_i$. In the simple game, the hint $H_i$ is an $\zo$ bit which is a sampled according to the expected sign of $\sum_{i=1}^{\rnd}X_i$. The attacker can abort, and in this case it gets the expected value of the game, conditioned on the values of $X_0,X_1,\dots,X_{i-1}$ (but not on the additional information). If no abort occurred, the attacker is getting the final value of the game. The goal of the attacker is to bias the value it gets \emph{away} from the expected value of the game.}
	
	We prove that an attacker for the coin-flipping protocol of the type considered in this claim, \ie one that only aborts in \Stepref{step:TwoParty:round:1} of the loop in $\Pitwo_\rnd$ and achieves bias $\alpha$, yields a player for the simple binomial game (described in \cref{subsection:BG}) that achieves the same bias. Thus, the bound on the former attacker follows from the bound on the latter one.
	
	Let $\game = \game_{f,\rnd,0} = \gameVars$  be the simple binomial game  as defined in \cref{def:ValueBiomialGame} and recall that in this game, $f$ is the randomized function that on input $(i,y)$ outputs $1$ with probability $\vBeroo{\ms{i+1},\eps}(-y)$, and zero otherwise. Now, consider the player $\Bc$ for $\game$ that emulates interaction with $\Aadv$ in $\Pitwo_\rnd = (\Ptwo_0,\Ptwo_1)$, where $\Aadv$ controls $\Ptwo_0$. The emulation goes as follows:
	
	\begin{algorithm}[Player $\Bc$ for the simple game]\label{claim:2partyConnectionToSimpleGame:attacker}
		\item[Operation:]~
		\begin{enumerate}
			
			\item Choose uniformly $\cVS{0} \la \zo^{\rnd\times \tp{\rnd}}$ and $\dVS{0}{0},\dVS{1}{0} \la \zo^{\rnd+1}$ and set $\vect{\sh_0} = \{\cVS{0}, \dVS{0}{0} ,\dVS{1}{0}\}$ as input for $\Ptwo_0$ in $\Pitwo_\rnd$.\label{claim:2partyConnectionToSimpleGame:attacker:firstStep}
			
			
			\item For $i=1$ to $\rnd$:
			\begin{enumerate}
				
				\item Receive input $(i,Y_{i-1},\aux_i)$ from the game $\game$.
				
				\item If $i>1$, emulate a sending of $\cVS{1}[i-1]$ from $\Ptwo_1$ to $\Ptwo_0$ at step $(i-1,b)$ of $\Pitwo_\rnd$,  where $\cVS{1}[i-1] \eqdef X_{i-1} \xor \cVS{0}[i-1]$ (recall that $X_{i-1} \eqdef Y_{i-1} - Y_{i-2}$ according to \cref{def:game}).
				
				\item  Emulate a sending of $\dVS{0}{1}[i]$ from $\Ptwo_1$ to $\Ptwo_0$ at step $(i,a)$ of $\Pitwo_\rnd$, where $\dVS{0}{1}[i] \eqdef \aux_i \xor \dVS{0}{0}[i]$.  If $\Aadv$ aborts at this step, output $1$ (abort at round $i$). Otherwise, output $0$ (continue to next round).
				
			\end{enumerate}
			
		\end{enumerate}
	\end{algorithm}
	
	Note that by the definition of $f$, the set of $\game$'s variables $(X_1, \ldots, X_\rnd, \aux_1, \ldots, \aux_\rnd)$ has the same distribution as the set of $\Pitwo_\rnd$'s variables $(c_1, \ldots,c_\rnd, d^0_1, \ldots, d^0_\rnd)$ where the parties' inputs are sampled according to $\TwoShareGen_{\frac12}$. Therefore, $\Aadv$'s view in the  emulation done by $\Bc$, is distributed exactly the same as its view when interacting with honest $\Ptwo_1$ in $\Pitwo_\rnd$;  in both cases, the only meaningful information it gets is the revelation of these values. Since $\Bc$ aborts at round $i$ of $\game$ (i.e outputs $1$) iff $\Aadv$ aborts at round $(i,a)$ of $\Pitwo_\rnd$, it follows that
	
	\begin{align*}
	\size{\eex{\val(V_I) - \val(V_I)^-}} = \bias_\Bc(\game) \leq \bias(\game),
	\end{align*}
	as required.
\end{proof}

\subsection{Three-Party Protocol}\label{sec:ThreePartyprotocol}
As done in \cref{sec:TwoPartyprotocol}, we start with defining a three-party coin-flipping protocol whose parties get (correlated) shares as input,  then describe the functionality for generating these shares, and finally explain how to combine the two into a (no input) coin-flipping protocol.

\subsubsection{The Basic Three-Party Protocol}\label{sec:ThreePartyProtocolBasic}

\begin{protocol}[$\Pithree_\rnd = (\Pthree_0,\Pthree_1,\Pthree_2)$]\label{Protocol:Three}
\item[Common input:] round parameter $1^\rnd$.

\item[$\Pthree_\z$'s input:] $\cVS{\z} \in \zo^{\rnd\times \tp{\rnd}}$ and $\Mat{D^{(\z',\z''),\#\z}}\in \zo^{\rnd \times (\rnd\cdot \tp{\rnd} + 2(\rnd+1))}$, for all $\z' \neq \z''\in \zot$.

\item [Protocol's description:]~
\begin{enumerate}
\item For $i=1$ to $\rnd$:
\begin{enumerate}
\item For all $\z_s, \z_r,\z_o\in \zot$ with $\z_r \notin \set{\z_s,\z_o}$, party $\Pthree_{\z_s}$ sends $\Mat{D^{(\z_r,\z_o),\#\z_s}}[i]$ to $\Pthree_{\z_r}$.\label{step:ThreeParty:round:1}

\item[$\bullet$] For all $\z\neq \z' \in \zot$, party $\Pthree_\z$ sets $\vect{d^{(\z,\z')}_i} = \bigoplus_{\z'' \in \zot} \Mat{D^{(\z,\z'),\#\z''}}[i]$.

\item For all $\z \in \zot$, party $\Pthree_\z$ sends $\cVS{\z}[i]$ to the other parties.\label{step:ThreeParty:round:2}

\item[$\bullet$] All parties set $c_i = \cVS{0}[i] \xor \cVS{1}[i] \xor \cVS{2}[i]$.
\end{enumerate}
\item[Output:] All parties output 1 if $\sum_{i=1}^\rnd c_i \geq 0$, and 0 otherwise.

\item [Abort:] ~
\begin{itemize}
 \item[One party aborts:] Let $\z < \z' \in \zot$ be the indices of the remaining parties, and let $i\in [\rnd]$ be the {\sf maximal} $i\in [\rnd]$ for which both $\Pthree_\z$ and $\Pthree_{\z'}$ have reconstructed $\vect{d^{(\z,\z')}_i}$ and $\vect{d^{(\z',\z)}_i}$, respectively. Set $i$ to $\perp$ in case no such index exists. To decide on a common output, $\Pthree_\z$ and $\Pthree_{\z'}$ interact in the following two-party protocol.

 \begin{itemize}
 \item [$i= \perp$:] $\Pthree_\z$ and $\Pthree_{\z'}$ interact in $\PifTwo_\rnd$.
 \item [$i\neq \perp$:] $\Pthree_\z$ and $\Pthree_{\z'}$ interact in $\Pitwo_{\rnd} = (\Ptwo_0,\Ptwo_1)$, where $\Pthree_\z$ with input $\vect{d^{(\z,\z')}_i}$ plays the role of $\Ptwo_0$, and $\Pthree_{\z'}$ with input $\vect{d^{(\z',\z)}_i}$ plays the role of $\Ptwo_1$.
 \end{itemize}

 \item[Two parties abort (in the same round):] Let $\Pthree_\z$ be the remaining party and for an arbitrary $\z'\neq \z \in \zot$, let $i\in [\rnd]$ be the {\sf maximal} index for which $\Pthree_\z$ have reconstructed $\vect{d^{(\z',\z)}_i}$, set to $\perp$ in case no such index exists.
 \begin{itemize}
 \item [$i= \perp$:] $\Pthree_\z$ outputs a uniform bit.

 \item [$i\neq \perp$:] The remaining party $\Pthree_\z$ acts as if $\Pthree_{\z'}$ has only aborted at the very beginning of the following two-party protocol: $\Pthree_\z$ ``interact" with $\Pthree_{\z'}$ in $(\Ptwo_0,\Ptwo_1)$, where $\Pthree_\z$ with input $\vect{d^{(\z',\z)}_i}$ plays the role of $\Ptwo_0$ in case $\z< \z'$ and as $\Ptwo_1$ otherwise.\footnote{The latter protocol is well defined, since when aborting right at the beginning, $\Pthree_{\z'}$ does not send any message.}
 \end{itemize}
\end{itemize}
\end{enumerate}
\end{protocol}
Namely, at Step $(a)$ the parties help each other to reconstruct inputs for the two-party protocol $\Pitwo_\rnd$. More specifically, each pair of parties reconstructs two inputs (shares)  for  an execution of $\Pitwo_\rnd$, one input for each party in the pair. In case a party aborts, the remaining parties use the above inputs for interacting in $\Pitwo_\rnd$. In Step $(b)$ the parties help each other to reconstruct the round coins (\ie $c_i$).

Note that the above protocol has $4\rnd$ rounds (in case one party abort at the end of the outer three-party protocols). While it is possible to reduce this number to $2\rnd$ (to match the two-party case), we chose to present the somewhat simpler protocol given above.

\subsubsection{Hiding Two-Party Shares Generator}\label{sec:HidingTwoShareGenerator}
As mentioned in \cref{sec:Technique}, we construct a \emph{hiding} variant $\HidTwoShareGen$ of the two-party share-generating function $\TwoShareGen$. The construction is done by modifying the way the defense values (given to the parties in the three-party protocol) are sampled. On input $\delta\in[0,1]$, $\HidTwoShareGen$ first draws $\Theta(\rnd^2)$ independent samples from $\Beroo{\eps}$ for $\eps = \sBias{\ms{1}}{\delta}$, and then uses these samples via a simple derandomization technique for drawing the $\Theta(\rnd)$ defense values given in the three-party protocol.

Roughly, these $\Theta(\rnd^2)$ values sampled by $\HidTwoShareGen$ give about the same information as a \emph{constant} number of independent samples from $\Berzo{\delta}$ would, unlike the non-hiding $\TwoShareGen$ which gives about the same information as $\Theta(\rnd)$ independent samples from $\Berzo{\delta}$. As mentioned in \cref{sec:Technique}, these $\Theta(\rnd)$ samples can be used to bias the outcome by $\Omega(\frac1{\sqrt{\rnd}})$, which makes $\HidTwoShareGen$ crucial for the fairness of our protocol.

Recall that for a vector $\vct \in \oo^\ast$ we let $\w(\vct) \eqdef \sum_{i \in [\size{\cI}]}\vct_i$, and given a set of indexes $\cI \subseteq [\size{\vct}]$, we let $\vct_{\cI} = (\vct_{i_1},\ldots,\vct_{i_{\size{\cI}}})$ where $i_1,\ldots,i_{\size{\cI}}$ are the ordered elements of $\cI$.

\begin{algorithm}[$\HidTwoShareGen$]\label{alg:HidingTwoShareGenerator}
\item[Input:] Round parameter $1^\rnd$ and $\delta \in [0,1]$.

\item[Operation:]~
\begin{enumerate}

\item Let $\eps = \sBias{\ms{1}}{\delta}$.\label{step:bias}

\item  For $\z\in \zo$: sample a random vector $\vct^\z \in \oo^{2\cdot \ms{1}}$, where each coordinate is independently drawn from $\Beroo{\eps}$.\label{alg:HidingTwoShareGenerator:sampleSets}

\item For $\z\in \zo$: sample a random $(\ms{1})$-size subset $\cI^\z \subset [2\cdot \ms{1}]$, and set $d^{\z,\#\z}_{\rnd+1}$ to one if $\w(\vct^\z_{\cI^\z}) \geq 0$, and to zero otherwise. Set $d^{\z,\#\overline{\z}}_{\rnd+1}$ arbitrarily.

\item For $i=1$ to $\rnd$:
\begin{enumerate}

\item Sample $c_i \la \Beroo{\ml{i} ,\eps}$.\label{alg:HidingTwoShareGenerator:sampleCi}

\item Sample $c_i^{\#0} \la \zo^{\tp{\rnd}}$, and set $c_i^{\#1} = c_i \xor c_i^{\#0}$.

\item For  $\z\in \zo$:
\begin{enumerate}
 \item Sample a random $(\ms{i+1})$-size subset $\cI^\z \subset [2\cdot \ms{1}]$, and set $d^{\z}_i$ to one if\\ $\sum_{j=1}^i c_j + \w(\vct^\z_{\cI^\z}) \geq 0$, and to zero otherwise.\label{step:derandomization}

\item Sample $d^{\z,\#0}_i \la \zo$, and set $d^{\z,\#1}_i = d^{\z}_i \xor d^{\z,\#0}_i$.
\end{enumerate}
\end{enumerate}

\item Output $(\sVS{0},\sVS{1})$, where $\sVS{\z} = (\cVS{\z},\dVS{0}{\z}, \dVS{1}{\z}$), for $\cVS{\z} = (c^{\#\z}_1,\dots,c^{\#\z}_\rnd)$ and $\dVS{\z}{\z'} = (d^{\z,\#\z'}_1,\dots,d^{\z,\#\z'}_{\rnd+1})$.
 \end{enumerate}
\end{algorithm}
Namely, rather then sampling the defense values in \Stepref{step:derandomization} \emph{independently} (as done in its non-hiding variant  $\TwoShareGen$), the defense values used by $\HidTwoShareGen$ in the different rounds, are correlated via the vectors $\vct^0$ and $\vct^1$ ($\vct^\z$ is used for the defense values of the party $\Ptwo_\z$). Note, however, that each round defense value on \emph{its own}, has exactly the same distribution as in $\TwoShareGen$. 

\subsubsection{Three-Party Shares Generator}
Using the above two-party shares generator, our three-party shares generator is defined as follows.

\begin{algorithm}[$\ThreeShareGen$]\label{alg:ThreeShareGenerator}
\item[Input:] round parameter $1^\rnd$.
\item[Operation:]~
\begin{enumerate}

\item For $i=1$ to $\rnd$:
\begin{enumerate}
\item Sample $c_i \la \Beroo{\ml{i} ,0}$.
\item Sample $(c_i^{\#0}, c_i^{\#1}) \la (\zo^{\tp{\rnd}})^2$, and set $c_i^{\#2} = c_i \xor c_i^{\#0} \xor c_i^{\#1}$.

\item Let $\delta_i = \vBeroo{\ms{i+1},0}(-\sum_{j=1}^i c_j)$.

\item For $\z< \z'\in \zot$:
\begin{enumerate}

\item Sample $(\vect{s^{(\z,\z')}_i},\vect{s^{(\z',\z)}_i}) \la \HidTwoShareGen(1^\rnd,\delta_i)$.\label{step:hidTwoShareGen}

\item Sample $(\vect{\sh^{(\z,\z'),\#0}_i},\vect{\sh^{(\z,\z'),\#1}_i}, \vect{\sh^{(\z',\z),\#0}_i},\vect{\sh^{(\z',\z),\#1}_i}) \la (\zo^{\rnd \cdot \tp{\rnd} + 2(\rnd + 1)})^4$.

 Set $\vect{\sh^{(\z,\z'),\#2}_i} = \vect{\sh^{(\z,\z')}_i} \xor \vect{\sh^{(\z,\z'),\#0}_i} \xor \vect{\sh^{(\z,\z'),\#1}_i}$ and $\vect{\sh^{(\z',\z),\#2}_i} = \vect{\sh^{(\z',\z)}_i} \xor \vect{\sh^{(\z',\z),\#0}_i} \xor \vect{\sh^{(\z',\z),\#1}_i}$.
\end{enumerate}
\end{enumerate}

 \item Output $(\SSS{0},\SSS{1},\SSS{2}$), where $\SSS{\z}= (\cVS{\z},\Mat{D^{(0,1),\#\z}},\Mat{D^{(0,2),\#\z}},\Mat{D^{(1,0),\#\z}},\Mat{D^{(1,2),\#\z}}, \Mat{D^{(2,0),\#\z}},\Mat{D^{(2,1),\#\z}})$, for $\cVS{\z} = (c^{\#\z}_1,\dots,c^{\#\z}_\rnd)$ and $\Mat{D^{(\z',\z''),\#\z}} = (\vect{\sh^{(\z',\z''),\#\z}_1},\dots,\vect{\sh^{(\z',\z''),\#\z}_\rnd})$.
\end{enumerate}
\end{algorithm}

\subsubsection{The Final Three-Party Protocol}\label{sec:ThreePartyProtocolFinal}
For $\rnd\in \N$, our three-party, $3\rnd$-round, $\frac{O(\log^3 \rnd)}{\rnd}$-fair coin-flipping protocol $\Pi^3_\rnd$ is defined as follows.
\begin{protocol}[$\PifThree_\rnd = (\PhThree_0,\PhThree_1,\PhThree_2)$]\label{prot:ThreeFinal}

\item[Input:] round parameter $1^\rnd$.

\item[Oracle:] Oracle $\orac_2$ and $\orac_3$ for computing $\TwoShareGen_\frac12$ and $\ThreeShareGen$ respectively.
\item [Protocol's description:]~
\begin{enumerate}
\item The three parties using the oracle $\orac_3$ to securely compute $\ThreeShareGen(1^\rnd)$. Let $\Sh_0$, $\Sh_1$, and $\Sh_2$ be the outputs obtained by $\PhThree_0$, $\PhThree_1$ and $\PhThree_2$ respectively.\label{prot:ThreeFinal:oracleStep}

\item In case one party aborts, the remaining parties use oracle $\orac_2$ to interact in $\Pif_\rnd^2$ (\cref{prot:TwoFinal}).

\item In case two parties aborts, the remaining party outputs a uniform bit.

\item Otherwise, the three parties interact in $\Pithree_\rnd = (\Pthree_0,\Pthree_1,\Pthree_2)$, where $\PhThree_\z$ plays the role of $\Pthree_\z$ with private input $\Sh_\z$.
\end{enumerate}
\end{protocol}

\subsubsection{Main Theorems for Three-Party Protocols}\label{sub:MainThm3Party}

\begin{theorem}\label{thm:3partyIdeal}
For $\rnd \equiv 1 \bmod 4$,  protocol $\PifThree_\rnd$  is a  $(4 \rnd)$-round, three-party, $O(\frac{\log^3 \rnd}{\rnd})$-fair,  coin-flipping protocol, against unbounded fail-stop adversaries, in the $(\TwoShareGen_\frac12,\ThreeShareGen)$-hybrid model.
\end{theorem}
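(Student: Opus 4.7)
The plan is to mirror the structure of the proof of \cref{thm:2partyIdeal}: first reduce via \cref{prop:FairCTGameAlt} to proving that $\PifThree_\rnd$ is correct (trivial, since the honest output is the sign of $\sum c_i$, and for $\rnd\equiv 1 \bmod 4$ the total coin count $\ms{1}$ is odd) and $O(\log^3\rnd/\rnd)$-unbiased against fail-stop adversaries. The nontrivial content is the unbiasedness bound, which I would split according to the size of the corrupted set $\cC$.

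For a single corrupted $\PhThree_\z$, I would argue that aborting during the oracle call to $\ThreeShareGen$ or during step $(i,b)$ of the outer loop is worthless (the remaining view is a uniform string, so the conditional expected outcome equals the unconditional one). Aborting in step $(i,a)$ triggers an execution of $\Pitwo_\rnd$ between the two honest parties on defense inputs distributed according to $\HidTwoShareGen(1^\rnd,\delta_i)$, and aborting anywhere during that two-party phase is already bounded by \cref{thm:2partyIdeal} (composed with \cref{prop:FairCTGameAlt}). The remaining contribution from the single step $(i,a)$ abort is handled, as in the two-party proof, by a reduction to a simple online-binomial game and \cref{lemma:ValueBiomialGame}.

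The harder case, and the main obstacle, is $\size{\cC}=2$. Here the two colluding parties can reconstruct \emph{both} defense vectors $\vect{d^{(\z,\z')}_i}$ and $\vect{d^{(\z',\z)}_i}$ that would serve as inputs to $\Pitwo_\rnd$ if the honest party aborted, and they can also internally simulate the entire induced two-party execution before deciding whether to abort in round $(i,a)$. Crucially, these defense vectors are produced by $\HidTwoShareGen(1^\rnd,\delta_i)$, which first draws the vectors $\vct^0,\vct^1 \in \oo^{2\cdot\ms{1}}$ of $\Beroo{\eps}$-samples with $\eps = \sBias{\ms{1}}{\delta_i}$ and then derandomizes all $\Theta(\rnd)$ defense bits from these \emph{fixed} vectors. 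Therefore the joint view of the two corrupted parties at the end of round $(i,a)$, after reconstructing the pair $(\vct^{0},\vct^{1})$, is a function of only a constant number of length-$2\ms{1}$ Bernoulli strings with parameter $\sBias{\ms{1}}{\delta_i}$, regardless of how the protocol would later evolve. I would encode this observation as a reduction to the hypergeometric or vector online-binomial games of \cref{def:ValueHgGame} and \cref{def:ValueVectorGame}: the ``hint'' $H_i$ delivered to the attacker in round $i$ of the abstract game is exactly the pair of $\HidTwoShareGen$-sampled vectors (one per corrupted party), with the right marginal distribution tied to $\delta_i=\vBeroo{\ms{i+1},0}(-\sum_{j\le i}c_j)$; aborts inside $(i,b)$ or during the subsequent $\Pitwo_\rnd$ run add at most $O(\log^3\rnd/\rnd)$ by the same arguments as in the one-corruption case and \cref{thm:2partyIdeal}.

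With that reduction in place, the bound $\frac{\varphi(\const)\log^3\rnd}{\rnd}$ from \cref{lemma:ValueHgGame} (or \cref{lemma:ValueVectorGame}) on the hypergeometric/vector games yields the per-abort contribution, and summing over the at most two fail-stop aborts (using the telescoping argument in the proof of \cref{prop:FairCTGameAlt}) gives the overall $O(\log^3\rnd/\rnd)$-unbiasedness. \cref{lemma:DeterminedGames} covers the edge cases where $\delta_i$ has already become close to $0$ or $1$, so the attacker cannot exploit a large game value through the hint. Combining correctness with $O(\log^3\rnd/\rnd)$-unbiasedness via \cref{prop:FairCTGameAlt} then delivers the theorem. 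The main technical obstacle is justifying that the ``joint'' hint available to two colluding parties reduces cleanly to one of the already-analyzed online-binomial games despite their ability to precompute the entire inner $\Pitwo_\rnd$ transcript; this is precisely where the hiding structure of \cref{alg:HidingTwoShareGenerator} (sharing a single pair of Bernoulli vectors across all rounds) is indispensable.
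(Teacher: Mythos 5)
Your overall skeleton (correctness plus \cref{prop:FairCTGameAlt}, ``step $(b)$ and oracle-call aborts gain nothing'', reduction of step-$(a)$ aborts to online-binomial games, and the observation that the hiding structure of \cref{alg:HidingTwoShareGenerator} is what tames a colluding pair) matches the paper's proof. The genuine gap is in how you dispose of the \emph{second} abort, i.e.\ aborts inside the two-party execution that the remaining parties run after the first abort. You write that ``aborting anywhere during that two-party phase is already bounded by \cref{thm:2partyIdeal}''. It is not: \cref{thm:2partyIdeal} bounds $\PifTwo_\rnd$, whose inputs come from $\TwoShareGen_{\frac12}$, whereas the inner execution triggered by a mid-protocol abort is $\Pitwo_\rnd$ run on inputs from $\HidTwoShareGen(1^\rnd,\delta_i)$ --- a biased instance ($\eps=\sBias{\ms{1}}{\delta_i}$) whose defense values are \emph{not} fresh $\Berzo{\cdot}$ samples but are all derandomized from the single hidden vector $\vct^\z$ via random subsets. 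Consequently the per-round hint seen by the remaining corrupted party is a hypergeometric-tail sample conditioned on $p=\w(\vct^0)$, and bounding this requires exactly the new analysis the paper gives (\cref{claim:3partyConnectionToHyperGame} via the hypergeometric game and \cref{lemma:ValueHgGame}), including conditioning on $\size{p}\le 12\sqrt{\log\rnd\cdot\ms{1}}$ by Hoeffding and invoking \cref{lemma:DeterminedGames} when $\delta_i$ is extreme. You cite these lemmas, but you assign them to modeling the joint hint of the colluding pair in the \emph{outer} protocol (where the paper uses the $9$-vector game and \cref{lemma:ValueVectorGame}) and leave the inner-protocol aborts to \cref{thm:2partyIdeal}, so the piece of the argument that actually needs the hypergeometric game is missing.

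Two smaller points. First, you also need the (subtle) argument that aborting at step $(b)$ of the \emph{inner} loop gains nothing even under the derandomization: the honest party's defense there is still marginally $\Berzo{\delta_i}$ because its vector $\vct^\z$ and the subset $\cI^\z$ are unseen by the adversary --- this does not follow from the $\TwoShareGen$ analysis and the paper spells it out separately. Second, in the one-corruption case the remark that further aborts are ``bounded by \cref{thm:2partyIdeal}'' is vacuous (both remaining parties are honest), and the step-$(a)$ hint there is not a single $\Berzo{\delta_i}$ bit but the party's reconstructed inputs to two independent $\HidTwoShareGen$ executions; the clean way to handle it, as the paper does, is to note that a one-party adversary is dominated by a two-party adversary and fold it into the vector-game reduction rather than into the simple game.
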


As in the two-party case, we deduce the following result.
\begin{theorem}[Main theorem --- three-party, fair coin flipping]\label{thm:3partyReal}
Assuming protocols for securely computing \OT exist, then for any polynomially bounded, polynomial-time computable, integer function $\rnd$, there exists an $\rnd$-round, $\frac{O(\log^3 \rnd)}{\rnd}$-fair, three-party coin-flipping protocol.
\end{theorem}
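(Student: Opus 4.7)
The plan is to mirror closely the proof of \cref{thm:2partyReal}, bootstrapping from the hybrid-model statement \cref{thm:3partyIdeal}. First I would define an integer function $\trnd(\secParam) = \floor{\rnd(\secParam)/4} - a$, where $a\in \set{0,1,2,3}$ is chosen so that $\trnd(\secParam) \equiv 1 \bmod 4$. With this choice, $\PifThree_{\trnd(\secParam)}$ is well-defined and, by \cref{thm:3partyIdeal}, is $\frac{O(\log^3 \trnd(\secParam))}{\trnd(\secParam)}$-fair against unbounded fail-stop adversaries in the $(\TwoShareGen_{\frac12},\ThreeShareGen)$-hybrid model. Crucially its round complexity is $4\trnd(\secParam)$ and it makes only a constant number of calls to its two oracles (one call to $\ThreeShareGen$ at the start, plus at most one call to $\TwoShareGen_{\frac12}$ inside the embedded two-party subprotocol invoked after a single abort).

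Next I would upgrade the adversarial model from fail-stop to arbitrary malicious unbounded adversaries in the hybrid setting, using the standard one-time information-theoretic MAC compiler (as in \cite{MoranNS09}): each scheduled broadcast is authenticated with a fresh key that is included in the appropriate party's share output by the (augmented) share-generation functionalities. Any deviation from the prescribed messages is caught with overwhelming probability and handled by the honest parties exactly like an abort. This yields augmented functionalities $\widetilde{\TwoShareGen_{\frac12}}$ and $\widetilde{\ThreeShareGen}$, both still polynomial-time computable, and a protocol $\PiffThree_{\trnd(\secParam)}$ of the same round complexity and oracle-call count that is $\bigl(\frac{O(\log^3 \trnd(\secParam))}{\trnd(\secParam)} + \negl(\secParam)\bigr)$-fair against arbitrary unbounded adversaries in the hybrid model.

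Finally I would apply \cref{fact:fHybridMToReal} (or, since we now have two hybrid functionalities, a straightforward combination obtained either by bundling the two functionalities into one that takes a selector bit, or by invoking the fact once per oracle) to replace each trusted hybrid call with a constant-round real-world secure computation protocol. The existence of such a compiler follows from the assumed OT protocols via \cite{BMR90,GoldreichMiWi87}, exactly as recalled in the proof of \cref{fact:fHybridMToReal}. Because $\PiffThree_{\trnd(\secParam)}$ makes only $O(1)$ hybrid calls, this inflates the round count by an additive $O(1)$ and loses an additional $\negl(\secParam)$ factor in fairness. The resulting three-party protocol therefore uses at most $4\trnd(\secParam) + O(1) \leq \rnd(\secParam)$ rounds for all sufficiently large $\secParam$, and is $\frac{O(\log^3 \rnd(\secParam))}{\rnd(\secParam)}$-fair in the real model, as required; for the finitely many small $\secParam$ the theorem is vacuous.

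There is no real combinatorial obstacle here, since the heavy analytical lifting lives in \cref{thm:3partyIdeal} (proved via the online-binomial game bounds of \cref{subsection:BG}) and the MAC and OT compilations are standard. The only minor care point is that \cref{fact:fHybridMToReal} is stated for a single hybrid functionality, whereas $\PifThree_\rnd$ lives in a two-oracle hybrid model; either of the two bundling strategies mentioned above resolves this without changing parameters, and I would spell out one of them explicitly.
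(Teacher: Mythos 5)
There is a genuine gap: you never address whether the hybrid functionalities are polynomial-time computable, and this is precisely the new difficulty that distinguishes this theorem from \cref{thm:2partyReal} — indeed it is essentially the entire content of the paper's proof. \cref{fact:fHybridMToReal} only applies to a \emph{polynomial-time computable} functionality (it must be securely computable by a constant-round protocol from \OT), and $\ThreeShareGen$ is not known to be such: in each round it calls $\HidTwoShareGen(1^\rnd,\delta_i)$, which must compute $\eps = \sBias{\ms{1}}{\delta_i}$, i.e.\ invert the binomial tail function at an arbitrary $\delta_i$, and there is no exact polynomial-time algorithm for this. (In the two-party case this issue never arises because $\sBias{\ms{1}}{\cdot}$ is only evaluated at $\delta=\frac12$, where the answer is $0$.) Your assertion that the MAC-augmented $\widetilde{\ThreeShareGen}$ is ``still polynomial-time computable'' silently presupposes that the original one is, so the step where you invoke \cref{fact:fHybridMToReal} does not go through as written. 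The paper's fix is to observe that $\eps$ is only used to draw $O(\ms{1})$ independent samples from $\Beroo{\eps}$, so it suffices to compute, by a sampling-based binary search, an estimate $\widetilde{\eps}$ with $\abs{\eps-\widetilde{\eps}} < \frac1{\rnd^5}$; this makes the share generator polynomial-time computable while changing the distribution of each call's output by statistical distance at most $\frac1{\rnd^2}$, hence degrading fairness by only an additive $O(\frac1\rnd)$ over all rounds, which is absorbed into the $\frac{O(\log^3\rnd)}{\rnd}$ bound. You need this (or an equivalent) approximation argument before the hybrid-to-real compilation.

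A secondary, minor point: with $\trnd(\secParam)=\floor{\rnd(\secParam)/4}-a$ and $a\in\set{0,1,2,3}$ you leave only a constant (at most $12$ rounds) of slack, which need not absorb the unspecified additive $O(1)$ rounds introduced by replacing the oracle calls with secure computations; taking $\trnd$ proportional to, say, $\rnd/5$ (mirroring the $\floor{\rnd/3}$ choice in the proof of \cref{thm:2partyReal} for a $2\trnd$-round protocol) removes this concern without affecting the asymptotic fairness. Your treatment of the two-oracle hybrid model by bundling the functionalities, and the MAC-based upgrade from fail-stop to malicious adversaries, are consistent with what the paper intends.
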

\begin{proof}
The only issue one should take care of in the current proof, which does not occur in the proof of \cref{thm:2partyReal}, is that the function $\HidTwoShareGen$, called by $\ThreeShareGen$, and in particular  calculating the value of $\sBias{\ms{1}}{\delta}$, is not necessarily polynomial-time computable. (This issue was not a problem in the proof of \cref{thm:2partyReal}, since $\sBias{\ms{1}}{\cdot}$ is only called there with $\delta= \frac12$, and in this case its output is simply $0$).
Note, however, that after $\HidTwoShareGen$ calculates $\eps = \sBias{\ms{1}}{\delta}$, it merely uses $\eps$ for sampling $5\cdot\ms{1}$ independent samples from $\Beroo{\eps}$. Hence, one can efficiently estimate $\eps$ by a value $\widetilde{\eps}$ (via binary search),\footnote{The binary search on a value $\widetilde{\eps} \in [-1,1]$ is done by sampling $x \la \Beroo{\rnd^{11}, \widetilde{\delta}-\frac12}$ (for $\widetilde{\delta} = \vBeroo{\ms{1}, \widetilde{\eps}}(0)$) and guessing if $\widetilde{\eps}$ is bigger, smaller or close enough to $\eps$ according to whether $x$ is bigger, smaller or inside the range $(\delta-\frac12)\cdot \rnd^{11} \pm \rnd^6$, respectively. By Hoeffding Inequality (\cref{claim:Hoeffding}), all the guesses are good with probability $1-\negl(\rnd)$ and the binary search outputs $\widetilde{\eps}$ such that $\size{\eps - \widetilde{\eps}} \leq \size{\delta - \widetilde{\delta}} \leq \frac{1}{\rnd^5}$.} such that $\abs{\eps-\widetilde{\eps}} < \frac1{\rnd^5}$, which yields that the statistical distance of $5\cdot \ms{1}$ independent samples from $\Beroo{\eps}$, from $5\cdot \ms{1}$ independent samples from $\Beroo{\widetilde{\eps}}$, is bounded by $\frac1{\rnd^2}$. It follows that there exists a polynomial-time computable function $\widetilde{\ThreeShareGen}$, such that protocol $\Pif^3_\rnd$ given in \cref{prot:ThreeFinal}, is a $(4\rnd)$-round, $\left(\frac{O(\log^3 \rnd)}{\rnd} + \frac{\rnd}{\rnd^2}\right)$-fair, three-party coin-flipping protocol, against unbounded fail-stop adversaries, in the $(\TwoShareGen_\frac12,\widetilde{\ThreeShareGen})$-hybrid model. The proof continues like the proof of \cref{thm:2partyReal}.

\end{proof}

\mparagraph{Proving \cref{thm:3partyIdeal}}
We advise to reader to read first the proof of \cref{thm:2partyIdeal}.

\begin{proof}[Proof of \cref{thm:3partyIdeal}]
Fix $\rnd \equiv 1 \bmod 4$. As in the proof of \cref{thm:2partyIdeal}, it holds that protocol $\PifThree_\rnd$ is correct according to \cref{def:CorrectCT}. Also as in the proof of \cref{thm:2partyIdeal}, we assume \wlg that if a party aborts in the $i$'th round, it does so by sending the message \Abort, and  after seeing the other parties' message of that round.

Let the $(p,i,j)$'th round in a random execution of $\PifThree_\rnd$, for $(p,i,j)\in \set{\outt,\inn}\times (\rnd) \times \set{a,b}$, stands for the $j$'th step of the $i$'th loop in the execution of the $\PifThree_\rnd$, where $p = \outt$  means that this is a step of the outer execution of $\PifThree_\rnd$, and $p = \inn$  means that this is a step of the inner execution of $\Pitwo_\rnd$ (whose execution starts in case a party aborts). We let $(\outt,0,a)$ be the zero round, let $(\outt,0,b)$ denote the round where the  call to $\ThreeShareGen$ is made, and let  $(\inn,0,b)$ be the zero round in the  inner execution of $\Pitwo_\rnd$.

Let $\Aadv$ be a fail-stop adversary controlling the parties $\set{\PhThree_\z}_{\z\in \cC}$, for some $\cC \subsetneq \zot$. Let $V$ be the view of $\Aadv$ in a random execution of $\Pif^3_\rnd$, in which $\Ac$ controls the parties indexed by $\cC$. For $\ovr \in\set{\outt,\inn}\times (\rnd) \times \set{a,b}$, let $V_\ovr$ be the $\ovr$'th round prefix of $V$, and let $V_\ovr^-$ be the value of $V_\ovr$ with the $\ovr$'th round abort messages (if any) removed. Finally, let $I_1$ and $I_2$ be the rounds in which $\Aadv$ sent an abort message, letting $I_k=(\outt,\rnd,b)$ in case less than $k$ aborts happen. In the following we show that for both $k\in \set{1,2}$, it holds that
\begin{align}\label{eq:3partyIdeal}
\size{\eex{\val(V_{I_k}) - \val(V_{I_k}^-)}} \leq \frac{ \xi \cdot \log^3 \rnd}{\rnd}
\end{align}
for some universal (independent of $\rnd$) constant $\xi \ge 0$, where $\val(v)$ is the expected outcome of an honest (non aborting) execution of the parties that do not abort in $v$, conditioned on $v$ (see \cref{def:ViewVal}). Since \cref{eq:3partyIdeal}  holds for any $\rnd \equiv 1 \bmod 4$ and any fail-stop adversary $\Aadv$, the proof of the theorem follows by \cref{prop:FairCTGameAlt}.

So it is left to prove \cref{eq:3partyIdeal}. By construction, the only non-redundant information in $\Aadv$'s view  at the end of round $(i,b)$ is  the coins constructed by the parties at the end of this round. In particular, it holds that $\val(V_{\outt,0,b}) = \frac12$. By construction, in case two parties abort during the call to $\ThreeShareGen$, the remaining party outputs one with probability $\frac12$. In case one party aborts, the remaining parties interact in the unbiased protocol $\PifTwo_\rnd$. In both cases, it holds that $\val(V_{\outt,0,b}^-) = \frac12$. Taken the  security of protocol $\PifTwo_\rnd$ (proven in \cref{thm:2partyIdeal}) into account, we can assume \wlg that $\Aadv$ only aborts (if any) during the embedded execution of $\Pi_\rnd^3 = (\Pthree_0,\Pthree_1,\Pthree_2)$.


In the rest of the proof we separately bound the case $k=1$ and $k=2$. Note that $I_1$ is of the form $(\outt,\cdot,\cdot)$, where $I_2$, unless equals $(\outt,\rnd,b)$, is of the form $(\inn,\cdot,\cdot)$ (\ie the first abort is in the outer three-party protocol, and the second, if any, is in the inner  two-party protocol).

\mparagraph{First abort}
We separately consider the case $I_1 = (\outt,\cdot,a)$ and the case $I_1 =  (\outt,\cdot,b)$. We conclude the proof, of this part, showing that the first type of aborts might help $\Aadv$ to gain $ \frac{O(\log^3 \rnd)}{\rnd}$ advantage, where the second type give him \emph{nothing}.

\mparagraph{Case $I_1 = (\outt,\cdot,b)$} Assume that $I_1 = (\outt,i,b)$ for some $i\in [\rnd]$. The view of $\Aadv$ at this point (\ie $V_{I_1}$) contains the value of $(c_1,\dots, c_i)$, and some random function of these values, \ie the shares of the next rounds it got from $\ThreeShareGen$, which are uniform strings from his point of view, and the shares used till this round, which are random function of $(c_1,\dots, c_i)$. Hence, $\val(V_{I_1}^-) = \delta_i \eqdef \vBeroo{\ms{i+1},0}\left(-\sum_{j=1}^i c_j \right)$. By construction, $\delta_i$ is also the expected outcome of the remaining parties, in case an abort message was sent in this round. Namely, $\val(V_{I_1}) =\delta_i$. Hence, the adversary gains nothing (\ie $\val(V_{I_1}) = \val(V_{I_1}^-)$), by aborting in this round.

\mparagraph{Case $I_1 = (\outt,\cdot,a)$}

Since $\Aadv$'s first abort at \Stepref{step:ThreeParty:round:2} of the loop gains nothing, we assume \wlg that $\Aadv$'s first abort is at \Stepref{step:ThreeParty:round:1} of the loop. We use the following claim, proof given below,  for bounding the effect of such abort.

\begin{claim}\label{claim:3partyConnectionToVectorGame}
	Assuming $\Aadv$'s first abort is only at \Stepref{step:ThreeParty:round:1} of the loop in $\Pithree_\rnd$, it follows that
	\begin{align*}
	\size{\eex{\val(V_{I_1}) - \val(V_{I_1})^-}} \leq \frac{\xi \cdot \log^3 \rnd}{\rnd},
	\end{align*}
	for some universal (independent of $\rnd$) constant $\xi \ge 0$.
\end{claim}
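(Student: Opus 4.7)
The plan is to mirror the two-party argument of \cref{claim:2partyConnectionToSimpleGame}, reducing the first-abort bias of a Step-$(a)$ attacker on $\Pithree_\rnd$ to the bias of a player in a \emph{vector} online-binomial game and then invoking \cref{lemma:ValueVectorGame}. I would fix a universal constant $\const$ large enough to accommodate the six internal $\vct^\z$-vectors used by the three pairwise invocations of $\HidTwoShareGen$ inside one call to $\ThreeShareGen$ (take $\const=12$, see \cref{alg:HidingTwoShareGenerator,alg:ThreeShareGenerator}), and consider $\game = \game_{f,\rnd,\eps=0}$ where $f$ on input $(i,y)$ outputs a vector in $\oo^{\const\cdot \ms{1}}$ whose entries are i.i.d.\ $\Beroo{\sBias{\ms{1}}{\delta}}$ with $\delta = \eo_{i+1}(y) = \vBeroo{\ms{i+1},0}(-y)$. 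By \cref{def:ValueVectorGame} this is a $\const$-vector game.

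Next I would construct a player $\Bc$ for $\game$ that emulates an execution of $\Pithree_\rnd$ against $\Aadv$, in the spirit of the algorithm used in \cref{claim:2partyConnectionToSimpleGame}. At game round $i$, $\Bc$ receives $(i,Y_{i-1},\aux_i)$, treats $X_{i-1}=Y_{i-1}-Y_{i-2}$ as the coin $c_{i-1}$ that would be revealed in round $(\outt,i-1,b)$ of the emulated protocol, and parses the coordinates of $\aux_i$ as the six $\vct^\z$-vectors that the three pairwise calls to $\HidTwoShareGen(1^\rnd,\delta_i)$ inside $\ThreeShareGen$ would sample at round $i$. All remaining internal randomness (the additive share masks for $c_i$ and for every $\vect{d^{(\z,\z')}_i}$) would be drawn uniformly by $\Bc$, so it can emit exactly the honest party's Step-$(a)$ message to $\Aadv$. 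If $\Aadv$ aborts in Step~$(a)$ of round $j$ of the emulated protocol then $\Bc$ outputs $1$ at game round $j$; otherwise it outputs $0$ and proceeds.

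The core correctness step I would then establish is that the joint distribution of $(c_1,\ldots,c_\rnd)$ together with all defense vectors $\{\vect{d^{(\z,\z')}_i}\}$ produced inside $\Bc$'s emulation is identical to the corresponding distribution in a real execution of $\Pithree_\rnd$; this follows from the specifications of $\HidTwoShareGen$ and $\ThreeShareGen$ together with the choice of $f$, since the $c_i$'s are distributed as $\Beroo{\ml{i},0}$ matching $X_i$, and each defense vector is a deterministic function of $(c_1,\ldots,c_i)$ and of the $\vct^\z$'s. From this, $\Aadv$'s view under $\Bc$'s simulation equals in distribution its view in a real execution of $\Pithree_\rnd$, and $\val(V_{I_1}^-)$ and $\val(V_{I_1})$ collapse to $\veo_j$ and $\veo_j^-$ respectively: $\val(V_{I_1}^-)=\eo_j(Y_{j-1},\aux_j)=\veo_j$ is the expected protocol outcome given the coins so far and the round-$j$ defense information, whereas $\val(V_{I_1})=\delta_{j-1}=\eo_j(Y_{j-1})=\veo_j^-$ by \cref{def:ViewVal}, since the remaining parties act honestly in the sub-protocol $\Pitwo_\rnd$ whose inputs are drawn from $\HidTwoShareGen(1^\rnd,\delta_{j-1})$ and which therefore outputs $1$ with probability $\delta_{j-1}$. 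The claim would then follow from $\size{\eex{\val(V_{I_1})-\val(V_{I_1}^-)}}=\bias_\Bc(\game)\leq \bias(\game)\leq \varphi(\const)\cdot \log^3 \rnd/\rnd$ via \cref{lemma:ValueVectorGame}.

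The part I expect to be most delicate is the case where $\Aadv$ controls \emph{two} parties: there $\Aadv$ already owns two of the three shares of every secret, so upon seeing the honest party's Step-$(a)$ message at round $j$ it can reconstruct \emph{all six} defense vectors $\{\vect{d^{(\z,\z')}_j}\}$ and thereby extract strictly more information about $\delta_j$ than a single-party attacker would. The point I would stress is that even in this case $\Aadv$'s entire Step-$(a)$ knowledge at round $j$ is a deterministic function of $(c_1,\ldots,c_{j-1})$ together with the six $\vct^\z$-vectors freshly sampled at round $j$; since the game's hint $\aux_j$ is distributed identically to the concatenation of those six vectors (which is precisely why we chose the per-coordinate bias $\sBias{\ms{1}}{\delta_j}$ and total length $\const\cdot\ms{1}$), the reduction remains distribution-preserving and \cref{lemma:ValueVectorGame} bounds the bias uniformly in both the one- and two-party cases.
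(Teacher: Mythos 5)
Your overall strategy --- reduce the Step-$(a)$ first abort to a constant-width vector online-binomial game and invoke \cref{lemma:ValueVectorGame} --- is the same as the paper's (which uses a $9$-vector game via \cref{claim:3partyConnectionToVectorGameGen} and the data-processing step \cref{lemma:InformationIncreaseValue}). However, your emulation has a concrete gap in how it generates the pair-$(0,1)$ instance of $\HidTwoShareGen$. When $\Aadv$ controls $\Pthree_0$ and $\Pthree_1$, at Step~$(a)$ of round $i$ it reconstructs \emph{both} $\vect{d^{(0,1)}_i}$ and $\vect{d^{(1,0)}_i}$, i.e.\ both output halves of $\HidTwoShareGen(1^\rnd,\delta_i)$; XORing the two share vectors reveals, besides the vectors $\vct^0,\vct^1$, all the \emph{embedded} two-party coins sampled in \Stepref{alg:HidingTwoShareGenerator:sampleCi}, which are distributed $\Beroo{\ml{j},\eps_i}$ with $\eps_i=\sBias{\ms{1}}{\delta_i}$, as well as the embedded defenses of \Stepref{step:derandomization}, which depend on those coins and not only on $(c_1,\ldots,c_i)$ and the $\vct^\z$'s. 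So your assertion that ``each defense vector is a deterministic function of $(c_1,\ldots,c_i)$ and of the $\vct^\z$'s'' is false, and your player $\Bc$, which parses the hint solely as the six $\vct^\z$-vectors and draws ``all remaining internal randomness'' uniformly, cannot produce these coins with the correct, $\delta_i$-dependent bias (it does not know $\delta_i$). Hence the simulated view of $\Aadv$ is not identically distributed to the real one, and the key equality $\size{\eex{\val(V_{I_1})-\val(V_{I_1}^-)}}=\bias_\Bc(\game)$ does not hold as stated.

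The repair is exactly the paper's accounting: the hint must also supply the $\eps_i$-biased randomness for the embedded coins of the $(0,1)$ instance, while the honest side's vectors of the cross pairs $(0,2)$ and $(1,2)$ never need to be simulated (the adversary only ever sees the corrupted party's final-defense bit of each cross instance, which the reduction derives from a random $\ms{1}$-size subset of a fresh $2\ms{1}$-coordinate block of the hint); this yields the paper's choice of a $9$-vector game ($4\ms{1}$ coordinates for $\vct^0,\vct^1$, roughly $\ms{1}$ for the embedded coins, and $2\ms{1}$ per cross pair), after which one concludes as you do via \cref{lemma:InformationIncreaseValue} and \cref{lemma:ValueVectorGame}. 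Alternatively, within your $12$-vector hint you could repurpose the otherwise-unused honest-side blocks to build the embedded coins, but as written your reduction omits this step entirely. The remainder of your argument --- the identification $\val(V_{I_1}^-)=\veo_j$ and $\val(V_{I_1})=\delta_{j-1}=\veo_j^-$, and the emphasis on the two-corruption case --- matches the paper's proof.
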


\cref{claim:3partyConnectionToVectorGame} yields that  the overall bias $\Ac$'s first abort gains, which equals to $\size{\eex{\val(V_{I_1}) - \val(V_{I_1})^-}}$, is bounded by $ \frac{\xi \cdot \log^3 \rnd}{\rnd}$, for some universal constant $\xi\geq0$.

\mparagraph{Second abort}
We assume \wlg that $I_2 = (\inn,\cdot,\cdot)$ (\ie a second abort occurred). Assume $I_1 = (\out,j,\cdot)$ and let $\eps$ be the value of $\sBias{\rnd}{\delta_j}$ computed by $\HidTwoShareGen$ on input $\delta_j$, for generating the shares of the two-party  protocol. 

The following proof is similar to analysis of the two-party protocol $\Pitwo_\rnd$, done in the proof of \cref{thm:2partyIdeal}, but with few differences. We separately consider the case $I_2 = (\inn,\cdot,a)$ and the case $I_2 = (\inn,\cdot,b)$. We conclude the proof showing that the first type of aborts might help $\Aadv$ to gain $ \frac{O(\log^3 \rnd)}{\rnd}$ advantage, where the second type give him \emph{nothing}.

 \mparagraph{Case $I_2 = (\inn,\cdot,b)$} Assume  $I_2 = (\inn,i,b)$. The view of $\Aadv$ at this point (\ie $V_{I_2}$) contains the value of $(c_1,\dots, c_i)$ sampled by $\HidTwoShareGen$, and some random function of these values, \ie the shares of the next rounds it got from $\HidTwoShareGen$, which are uniform strings from his point of view, and the shares used till this round, which are random function of $(c_1,\dots, c_i)$. Hence,
 $$\val(V_{I_2}^-) = \delta_i \eqdef \vBeroo{\ms{i+1},\eps}\left(-\sum_{j=1}^i c_j \right)$$

 By construction, $\delta_i$ is also the expected outcome of the remaining party, in case an abort message was sent in this round. It follows that $\val(V_{I_2}) =\delta_i$, and the adversary gains nothing (\ie $\val(V_{I_2}) = \val(V_{I_2}^-)$), by aborting in this round.

 The above point needs is somewhat subtle and deserves some justification. Note that the output of the remaining party $\Ptwo_\z$ is not directly sampled from $\Berzo{\delta_i}$, as in the case of protocol $\PifTwo_\rnd$ considered in the proof of \cref{thm:2partyIdeal}. Rather, a $(2 \cdot\ms{1})$-size vector $\vct^\z$ is sampled according to $\Beroo{\eps}$ (see \cref{alg:HidingTwoShareGenerator}). Then, the output of the remaining party is set to one if $\sum_{j=1}^i c_j + \w(\vct^\z_{\cI^\z}) \geq 0$, and to zero otherwise, where $\cI^\z$ is random $(\ms{i+1})$-size subset of $[2 \cdot\ms{1}]$. Yet, by construction, since $V_{I_2}$ does not contain any extra information about the remaining party's vector $\vct^\z$, it follows that 
 \begin{align}
 \val(V_{I_2}) &= \ppr{\vct^\z \la (\Beroo{\eps})^{2 \cdot\ms{1}}, \cI^\z \subset [2 \cdot\ms{1}]}{\sum_{j=1}^i c_j + \w(\vct^\z_{\cI^\z}) \geq 0}\\
 &= \ppr{u \la (\Beroo{\eps})^{\ms{i+1}}}{\sum_{j=1}^i c_j + \w(u) \geq 0}\nonumber\\
 &= \delta_i.\nonumber
 \end{align}

\mparagraph{Case $I_2 = (\inn,\cdot,a)$} 
Since $\Aadv$'s second abort at \Stepref{step:TwoParty:round:2} of the loop gains nothing, we assume \wlg that $\Aadv$'s second abort is at \Stepref{step:TwoParty:round:1} of the loop. We use the following claim, proof given below,  for bounding the effect of such abort.
\begin{claim}\label{claim:3partyConnectionToHyperGame}
	Assuming $\Aadv$'s second abort is only at \Stepref{step:TwoParty:round:1} of the loop in $\Pitwo_\rnd$, it follows that
	\begin{align*}
	\size{\eex{\val(V_{I_2}) - \val(V_{I_2})^-}} \leq \frac{\xi \cdot \log^3 \rnd}{\rnd},
	\end{align*}
	for some universal (independent of $\rnd$) constant $\xi \ge 0$.
\end{claim}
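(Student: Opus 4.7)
The plan is to follow the template of \cref{claim:2partyConnectionToSimpleGameGen} (the proof of \cref{claim:2partyConnectionToSimpleGame}), reducing the adversary's second-abort gain to the bias of a \emph{hypergeometric} online game (\cref{def:ValueHgGame}) rather than a simple one, and then invoke \cref{lemma:ValueHgGame}. Under the hypothesis of the claim, the second abort happens inside an execution of $\Pitwo_\rnd$ whose inputs were produced by $\HidTwoShareGen(1^\rnd,\delta)$ for the value $\delta$ determined by the outer-protocol history up to the first abort. Set $\eps = \sBias{\ms{1}}{\delta}$. By \cref{alg:HidingTwoShareGenerator}, the round-$i$ defense $d_i$ that $\Aadv$ reconstructs at step $(i,a)$ of $\Pitwo_\rnd$ satisfies $d_i = 1 \iff \w(\vct_{\cI_i}) \ge -Y_i$, where $Y_i = \sum_{k \le i} c_k$, the honest party's hidden vector $\vct \in \oo^{2\ms{1}}$ is drawn i.i.d.\ from $\Beroo{\eps}$, and the $\cI_i \subseteq [2\ms{1}]$ are independent uniform subsets of size $\ms{i+1}$.

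The core distributional fact is that, conditionally on $p \eqdef \w(\vct)$ and on $Y_1,\ldots,Y_\rnd$, the hints $d_1,\ldots,d_\rnd$ are mutually independent with $\Pr[d_i = 1 \mid p, Y_i] = \vHyp{2\ms{1}, p, \ms{i+1}}(-Y_i)$. Indeed: (i) given $\vct$, the $\cI_i$'s are independent, so the $\w(\vct_{\cI_i})$'s are conditionally independent; (ii) by exchangeability, the marginal law of $\w(\vct_{\cI_i})$ given $\vct$ depends on $\vct$ only through $p$, so the conditional product factorizes identically for every $\vct$ of weight $p$; and (iii) the shares held by $\Aadv$ are uniformly random and reveal no further information about $\vct$. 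This matches exactly the hint structure of the hypergeometric game with parameter $p$. For each fixed $p$ one can therefore construct an attacker $\Bc_p$ for $\game_{f,\rnd,\eps}$ that emulates $\Aadv$'s interaction inside the inner protocol --- handing $\Aadv$ uniformly random shares, using the game coins $X_i$ as the $c_i$ and the game hints $\aux_i$ as the $d_i$ (and sending $\Aadv$ the XOR complements so it reconstructs these values), and outputting $1$ at the round in which $\Aadv$ aborts at step $(i,a)$. The induced view has the correct conditional distribution given $p$, whence $\bias_{\Bc_p}(\game_{f,\rnd,\eps}) = \size{\eex{\val(V_{I_2}) - \val(V_{I_2}^-) \mid p}}$.

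It remains to bound the expectation over $p$. If $\delta \notin [1/\rnd^2, 1-1/\rnd^2]$, then the value $\veo_1^- = \delta$ of the reduced game is extreme and \cref{lemma:DeterminedGames} bounds its bias by $2/\rnd$ regardless of $p$. Otherwise, inverting $\vBeroo{\ms{1},\cdot}(0) = \delta$ in this regime yields $\size{\eps} \le c_0 \sqrt{\log \rnd / \ms{1}}$ for a universal constant $c_0$, so that $p \sim \Beroo{2\ms{1},\eps}$ satisfies $\size{p} \le \const \sqrt{\ms{1}\log \rnd}$ for some universal $\const$, except on an event of probability $\rnd^{-\omega(1)}$ (by \cref{claim:Hoeffding}). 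On this good event $\game_{f,\rnd,\eps}$ is a $\const$-hypergeometric game and \cref{lemma:ValueHgGame} bounds $\bias_{\Bc_p}$ by $\varphi(\const)\log^3\rnd/\rnd$; on the complementary negligible event the gain is trivially at most $1$. Summing the two contributions yields the desired $\xi\log^3\rnd/\rnd$ bound.

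The main obstacle is establishing the conditional-independence claim in the second paragraph cleanly --- that defenses extracted from a single shared vector $\vct$ across multiple rounds remain mutually independent once one conditions only on $p = \w(\vct)$ and on the coin sums. This is exactly the hiding property that $\HidTwoShareGen$ was designed to enforce: with the non-hiding $\TwoShareGen$, the round-wise independent defenses would furnish two corrupt parties with $\Theta(\rnd)$ near-Bernoulli samples of $\delta$, enabling an $\Omega(1/\sqrt{\rnd})$ bias (cf.\ the discussion in \cref{sec:Technique}), whereas under $\HidTwoShareGen$ the joint information about $\delta$ collapses into a single parameter $p$, which is precisely what makes the hypergeometric game the correct abstraction here.
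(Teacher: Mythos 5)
Your proposal follows essentially the same route as the paper's proof: you reduce the second-abort gain to a $p$-parameterized hypergeometric game by observing that, conditioned on $p=\w(\vct)$ and the coin prefix-sums, the defenses reconstructed by $\Aadv$ are independent $\Berzo{\vHyp{2\ms{1},p,\ms{i+1}}(-Y_i)}$ hints, and you then combine \cref{lemma:DeterminedGames} (for $\delta$ outside $[\frac1{\rnd^2},1-\frac1{\rnd^2}]$) with Hoeffding bounds on $\size{\eps}$ and $\size{p}$ and \cref{lemma:ValueHgGame} on the good event --- exactly the paper's decomposition. Two small corrections. First, your step (iii) (``the shares held by $\Aadv$ are uniformly random and reveal no further information about $\vct$'') is not literally true: the adversary's input from $\HidTwoShareGen$ contains the bit $d^{0,\#0}_{\rnd+1}$ \emph{in the clear}, and this bit is correlated with $p$; so an emulation that hands $\Aadv$ only uniform strings does not reproduce the correct view. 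Since your reduction is indexed by $p$, the fix is immediate --- sample that bit from $\Berzo{\vHyp{2\cdot\ms{1},p,\ms{1}}(0)}$, which is precisely the modification the paper makes to obtain its attacker $\Bc'$ from the two-party attacker $\Bc$. Second, the vector $\vct$ you use is mislabelled: it is the vector from which the \emph{adversary's own} reconstructed defenses are derived ($\vct^0$ when $\Aadv$ plays $\Ptwo_0$), not the honest party's vector --- the honest party's abort-output is drawn from the independent $\vct^1$, which is what makes $\val(V_{I_2})$ equal $\vBeroo{\ms{i},\eps}(-Y_{i-1})$; your formulas are consistent with this, only the name is off.
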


\cref{claim:3partyConnectionToHyperGame} yields that  the overall bias $\Ac$'s second abort gains, which equals to $\size{\eex{\val(V_{I_2}) - \val(V_{I_2})^-}}$, is bounded by $ \frac{\xi \cdot \log^3 \rnd}{\rnd}$, for some universal constant $\xi\geq0$.
\end{proof}

It is left to prove \cref{claim:3partyConnectionToHyperGame,claim:3partyConnectionToVectorGame}. Similarly to the proof of \cref{claim:2partyConnectionToSimpleGame}, we prove these claims via reductions to online-binomial games, described in \cref{subsection:BG}.


\paragraph{Proving \cref{claim:3partyConnectionToVectorGame}}
The proof immediately follows from the following claim and \cref{lemma:ValueVectorGame}.

\begin{claim}\label{claim:3partyConnectionToVectorGameGen}
	Assuming $\Aadv$'s first abort is only at \Stepref{step:ThreeParty:round:1} of the loop in $\Pithree_\rnd$,  then
	\begin{align*}
	\size{\eex{\val(V_{I_1}) - \val(V_{I_1})^-}} \leq \bias(\game_{f,\rnd,0}),
	\end{align*}
	 where $\game_{f,\rnd,0}$ is a $9$-vector game according to  \cref{def:ValueVectorGame} and $\bias(\cdot)$ is according to \cref{def:GameBias}.
\end{claim}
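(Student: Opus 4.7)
The plan is to mirror the template of \cref{claim:2partyConnectionToSimpleGameGen}: build a player $\Bc$ for the $9$-vector game $\game_{f,\rnd,0}$ that internally emulates $\Aadv$'s interaction with $\Pithree_\rnd$ so that $\bias_\Bc(\game_{f,\rnd,0})$ equals $\size{\eex{\val(V_{I_1}) - \val(V_{I_1})^-}}$, which immediately gives the claimed bound. I focus on the worst case $\size{\cC}=2$, say $\cC=\set{0,1}$; when $\size{\cC}=1$, no pair has both parties corrupted, so $\Aadv$ sees only uniform one-shares and the simulation is strictly easier.

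At initialization $\Bc$ samples $\Aadv$'s initial shares $\Sh_0,\Sh_1$ uniformly, which is correct because every secret in $\ThreeShareGen$ is $3$-out-of-$3$ shared (so the $\#0$ and $\#1$ shares are jointly uniform and independent of the secret). In round $i\in[\rnd]$, on input $(i,Y_{i-1},\aux_i)$ from the game, $\Bc$ (i) reveals to $\Aadv$ the outer coin $c_{i-1}\eqdef X_{i-1}=Y_{i-1}-Y_{i-2}$ by supplying the ``missing'' $\PhThree_2$-share of the $(i{-}1,b)$ reconstruction (vacuous for $i=1$); (ii) simulates step $(i,a)$ by producing the four defense vectors $\vect{d^{(0,1)}_i},\vect{d^{(1,0)}_i},\vect{d^{(0,2)}_i},\vect{d^{(1,2)}_i}$ that $\Aadv$ reconstructs from $\PhThree_2$'s messages; and (iii) outputs $1$ iff $\Aadv$'s first abort occurs at this step, and $0$ otherwise. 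The heart of the construction is step (ii). For the pair $(0,1)$---both parties corrupted---$\Aadv$ fully opens the output of $\HidTwoShareGen(1^\rnd,\delta_i)$. By inspection of \cref{alg:HidingTwoShareGenerator}, the only internal coins of $\HidTwoShareGen$ whose law depends on $\delta_i$ (equivalently, on $\eps_i\eqdef\sBias{\ms{1}}{\delta_i}$) are the vectors $\vct^0,\vct^1\in\oo^{2\ms{1}}$ and the $\ms{1}$ independent $\Beroo{\eps_i}$-bits packed into the inner coins $c_1,\dots,c_\rnd$; since the vector-game's $\aux_i$ is precisely $9\ms{1}$ i.i.d.\ $\Beroo{\eps_i}$-samples, $\Bc$ dedicates $5\ms{1}$ of its coordinates to these two roles and draws every other internal ingredient (the subsets $\cI^\z$, the uniform shares) independently. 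For pairs $(0,2)$ and $(1,2)$, the single $\HidTwoShareGen$-output that $\Aadv$ recovers is a collection of $2$-out-of-$2$ shares and is therefore uniform and independent of the secret; $\Bc$ samples it uniformly, leaving $4\ms{1}$ coordinates of $\aux_i$ unused as slack.

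The main obstacle I expect to address is checking that the simulation preserves the joint distribution of $\Aadv$'s view. The nontrivial step is verifying that the three $\HidTwoShareGen$ calls inside $\ThreeShareGen$ use mutually independent internal randomness (clear from \cref{alg:ThreeShareGenerator}), so that fully revealing pair $(0,1)$'s output does not leak anything about $\delta_i$ through the marginal distributions of the single shares from pairs $(0,2)$ and $(1,2)$. Once distribution-correctness is established, the abort round of $\Bc$ in the game equals the outer-round index of $I_1$, and the final identification reproduces the calculation at the end of \cref{thm:3partyIdeal}: aborting step $(a)$ of round $i$ rolls back the honest parties to the round-$(i{-}1)$ defenses, giving $\val(V_{I_1})=\delta_{i-1}=\eo_i(Y_{i-1})=\veo_I^-$, while honest outer continuation yields $\val(V_{I_1}^-)=\eex{\delta_i\mid V_{I_1}^-}=\eo_i(Y_{i-1},\aux_I)=\veo_I$. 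Hence $\size{\eex{\val(V_{I_1})-\val(V_{I_1})^-}}=\size{\eex{\veo_I^- - \veo_I}}=\bias_\Bc(\game_{f,\rnd,0})\le\bias(\game_{f,\rnd,0})$, completing the proof.
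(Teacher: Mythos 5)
Your overall reduction template (emulate $\Aadv$, identify $\val(V_{I_1})=\veo_I^-$ and $\val(V_{I_1}^-)=\veo_I$, conclude via $\bias_\Bc\le\bias$) matches the paper, but there is a genuine gap in the step where you handle the pairs $(0,2)$ and $(1,2)$. You assert that the single $\HidTwoShareGen$-output recovered by the corrupted party in each of these pairs ``is a collection of $2$-out-of-$2$ shares and is therefore uniform and independent of the secret,'' and you accordingly sample it uniformly and leave $4\ms{1}$ hint coordinates unused. This is false: in \cref{alg:HidingTwoShareGenerator}, the share $\sVS{\z}=(\cVS{\z},\dVS{0}{\z},\dVS{1}{\z})$ handed to party $\z$ contains the coordinate $d^{\z,\#\z}_{\rnd+1}$ \emph{in the clear} (only $d^{\z,\#\overline{\z}}_{\rnd+1}$ is arbitrary), and this bit is set according to whether $\w(\vct^\z_{\cI^\z})\geq 0$, i.e., it is marginally a $\Berzo{\delta_i}$ sample (it is the initial defense that $\Ptwo_\z$ outputs if its partner aborts immediately). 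Hence at step $(i,a)$ the adversary reconstructing $\vect{d^{(0,2)}_i}$ and $\vect{d^{(1,2)}_i}$ learns \emph{two additional} $\delta_i$-correlated bits per round, beyond the fully opened pair-$(0,1)$ generator output.

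Because your emulation omits this information, the view you feed to $\Aadv$ is not identically distributed to its real view, so the identities $\val(V_{I_1}^-)=\eo_i(Y_{i-1},\aux_I)$ and $\bias_\Bc(\game)=\size{\eex{\val(V_{I_1})-\val(V_{I_1}^-)}}$ break down; the argument would only bound adversaries that ignore these bits (note that \cref{lemma:InformationIncreaseValue} goes the wrong way here: you may give the game player \emph{less} information than the real adversary only at the cost of no longer simulating it). This is exactly why the claim is stated for a $9$-vector game rather than a $5$-vector game: the paper's reduction spends $4\ms{1}$ coordinates of $\aux_i$ ($2\ms{1}$ per pair) to emulate $\vct^0,\vct^1$ of the pairs $(0,2)$ and $(1,2)$ and derandomize-sample these two last-defense bits (\Stepref{lemma:3partyConnectionToVectorGame:defenseStep} of the player's description), on top of the $5\ms{1}$ coordinates used for the pair $(0,1)$. (The same issue also undermines your remark that the $\size{\cC}=1$ case is trivially uniform: a single corrupted party still receives one such $\Berzo{\delta_i}$-distributed bit from each of its two pairs.) To repair the proof you must simulate these bits from the hint, correlated correctly with $\delta_i$ and independent across pairs, as the paper does.
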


\begin{proof}[Proof of \cref{claim:3partyConnectionToVectorGameGen}]
We prove that an attacker for the coin-flipping protocol of the type considered in this claim, \ie one that achieves bias $\alpha$ from his first abort at \Stepref{step:ThreeParty:round:1} of the loop in $\Pithree_\rnd$, yields a player for the vector binomial game (described in \cref{subsection:BG}) that achieves the same bias. Thus, the bound on the former attacker follows from the bound on the latter one.

Recall that $\Aadv$ controls the parties $\set{\Pthree_\z}_{\z\in \cC}$ in $\Pithree_\rnd$ and assume \wlg that $\cC = \zo$. 
Consider the $9$-vector binomial game $\game = \game_{f,\rnd,0} = \gameVars$ as defined in \cref{def:ValueVectorGame} and recall that in this game, $f$ is the randomized function that on input $(i,y)$ outputs a string in $\oo^{9\cdot \ms{1}}$, where each of entries takes the value $1$ with probability $\sBias{\ms{1}}{\delta}$ for $\delta = \vBeroo{\ms{i+1},0}(-y)$. Now, consider the player $\Bc$ for $\game$ that emulates interaction with $\Aadv$ in $\Pithree_\rnd = (\Pthree_0,\Pthree_1,\Pthree_2)$, where $\Aadv$ controls $\Pthree_0$ and $\Pthree_1$. The emulation goes as follows:

\begin{algorithm}[Player $\Bc$ for the vector game]
	\item[Operation:]~
	\begin{enumerate}
		
		\item For $\z \in \zo$, choose $\SSS{\z} = (\cVS{\z},\Mat{D^{(0,1),\#\z}},\Mat{D^{(0,2),\#\z}},\Mat{D^{(1,0),\#\z}},\Mat{D^{(1,2),\#\z}}, \Mat{D^{(2,0),\#\z}},\Mat{D^{(2,1),\#\z}})$ $ \la \zo^{\rnd \times \tp{\rnd} + 6 \times \rnd \times (\rnd \times \tp{\rnd} + 2(\rnd+1))}$ as inputs for $\Pthree_0$ and $\Pthree_1$ in $\Pithree_\rnd = (\Pthree_0,\Pthree_1,\Pthree_2)$.
		
		
		\item For $i=1$ to $\rnd$:
		\begin{enumerate}
			
			\item Receive input $(i,Y_{i-1},\aux_i)$ from the game $\game$ (recall that  $\aux_i \in \oo^{9\cdot \ms{1}}$).
			
			\item If $i>1$, emulate a sending of $\cVS{2}[i-1]$ from $\Ptwo_2$ to the other parties at step $(i-1,b)$ of $\Pithree_\rnd$,  where $\cVS{2}[i-1] \eqdef X_{i-1} \xor \cVS{0}[i-1] \xor \cVS{1}[i-1]$ (recall that $X_{i-1} \eqdef Y_{i-1} - Y_{i-2}$ according to \cref{def:game}).
			
			\item  Emulate a sending of $\Mat{D^{(0,1),\#2}}[i]$ and $\Mat{D^{(0,2),\#2}}[i]$ from $\Pthree_2$ to $\Pthree_0$, and emulate a sending of $\Mat{D^{(1,0),\#2}}[i]$ and $\Mat{D^{(1,2),\#2}}[i]$ from $\Pthree_2$ to $\Pthree_1$ at step $(i,a)$ of $\Pithree_\rnd$, where 
			
			\begin{enumerate}
				\item $(\Mat{D^{(0,1),\#2}}[i], \Mat{D^{(1,0),\#2}}[i])$ is set to the output of $\HidTwoShareGen$ where $\vct^\z$ at \Stepref{alg:HidingTwoShareGenerator:sampleSets} of $\HidTwoShareGen$ is set to $\{\aux_i[2\z\cdot \ms{1}],\ldots,\aux_i[4\z\cdot \ms{1}-1]\}$ for $\z \in \zo$, and $c_i$ at \Stepref{alg:HidingTwoShareGenerator:sampleCi} is set to $\aux_i[4\cdot \ms{1} - 1 + i]$ for $i \in [\rnd]$.
				
				\item For $\z \in \zo$, choose a random $(\ms{1})$-size subset $\cW^\z \subset \{\aux_i[(5+2\z)\cdot \ms{1}], \ldots, \aux_i[(7+2\z)\cdot \ms{1}-1]\}$, choose uniformly $\Mat{D^{(\z,2),\#2}}[i] \la \zo^{\rnd\cdot\tp{\rnd} + (\rnd+1)}$, and change the last bit of $\Mat{D^{(\z,2),\#2}}[i]$ to $1$ if $\sum_{\w \in \cW^\z}w \geq 0$ and to $0$ otherwise.
			\end{enumerate}\label{lemma:3partyConnectionToVectorGame:defenseStep}
			
		If $\Pthree_0$ or $\Pthree_1$ aborts at this step, output $1$ (abort at round $i$). Otherwise, output $0$ (continue to next round).
			
		\end{enumerate}
		
	\end{enumerate}
\end{algorithm}

Note that by the definition of $f$, the set of $\Bc$'s emulation variables $(X_1, \ldots, X_\rnd, \{\Mat{D^{(\z,\z')}}\}_{\z \in \zo, \z \ne \z'})$ has the same joint distribution as the set of $\Pithree_\rnd$'s variables $(c_1, \ldots,c_\rnd, \{\Mat{D^{(\z,\z')}}\}_{\z \in \zo, \z \ne \z'})$, where the parties' inputs are sampled according to $\ThreeShareGen$. Therefore, $\Aadv$'s view in the  emulation done by $\Bc$, is distributed exactly the same as its view when interacting with honest $\Pthree_2$ in $\Pithree_\rnd$; in both cases, the only meaningful information it gets is the revelation of these values. In addition, note that $\Bc$ aborts at round $i$ of $\game$ (i.e outputs $1$) iff $\Aadv$'s first aborts is at round $(i,a)$ of $\Pithree_\rnd$.

For concluding the analysis, let $g$ be the function that on input $\aux_i \in \oo^{9\cdot \ms{1}}$, outputs $\{\Mat{D^{(\z,\z')}}\}_{\z \in \zo, \z \ne \z'}$ as described in \Stepref{lemma:3partyConnectionToVectorGame:defenseStep}, and let $\Bc'$ be an attacker for $\game_{g \circ f,\rnd,0}$ that operates just like $\Bc$, only that it gets the output of $g \circ f$ directly instead of constructing the output as $\Bc$ does at \Stepref{lemma:3partyConnectionToVectorGame:defenseStep}. It follows that

\begin{align}
\bias_{\Bc'}(\game_{g \circ f,\rnd,0}) = \size{\eex{\val(V_{I_1}) - \val(V_{I_1})^-}},
\end{align}

We conclude that
\begin{align}
\size{\eex{\val(V_{I_1}) - \val(V_{I_1})^-}}
&= \bias_{\Bc'}(\game_{g \circ f,\rnd,0})\\
&\leq \bias(\game_{g \circ f,\rnd,0})\nonumber\\
&\leq \bias(\game_{f,\rnd,0}).\nonumber
\end{align}
\end{proof}


\paragraph{Proving \cref{claim:3partyConnectionToHyperGame}}
The proof immediately follows from the following claim and \cref{lemma:ValueHgGame}.
\begin{claim}\label{claim:3partyConnectionToHyperGameGen}
	Assuming $\Aadv$'s second abort is only at \Stepref{step:TwoParty:round:1} of the loop in $\Pitwo_\rnd$ and that $\bias(\game_{f,\rnd,\eps}) \leq \alpha(\rnd)$ for any $\eps \in [-1,1]$ and for any randomized function $f$ of a $12$-hypergeometric game, as defined in \cref{def:ValueHgGame}. Then 
	\begin{align*}
	\size{\eex{\val(V_{I_2}) - \val(V_{I_2})^-}} \leq \alpha(\rnd) + \frac2{\rnd}.
	\end{align*}
\end{claim}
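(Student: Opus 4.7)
The plan is to reduce the bias $\Aadv$ gains from its second abort to the bias of an online binomial game, and then handle two regimes depending on whether this induced game is $12$-hypergeometric. First, I would condition on the outer-protocol transcript up through the first abort at round $I_1 = (\outt, j, \cdot)$. This determines both $\delta_j$ (the game value at that round, sampled inside $\ThreeShareGen$) and the parameter $\eps = \sBias{\ms{1}}{\delta_j}$ passed into $\HidTwoShareGen$ for generating the inner-protocol shares. Denoting by $\z_a \in \cC$ the adversarial party surviving the first abort and by $\z'$ the honest one, I would further condition on the vector $\vct^{\z_a}$ sampled at \Stepref{alg:HidingTwoShareGenerator:sampleSets} of $\HidTwoShareGen$, and set $p_a = \w(\vct^{\z_a})$.

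Under this conditioning, the inner protocol instantiates the online binomial game $\game = \game_{f_{p_a}, \rnd, \eps}$, where $f_{p_a}(i, y)$ outputs $1$ with probability $\vHyp{2\cdot \ms{1}, p_a, \ms{i+1}}(-y)$. Indeed, $c_i \sim \Beroo{\ml{i}, \eps}$ matches the game's $X_i$, and the adversary's reconstructed defense $d_i^{\z_a} = \mathbb{1}\left[Y_i + \w(\vct^{\z_a}_{\cI^{\z_a}_i}) \geq 0\right]$, conditional on $Y_i$ and $\vct^{\z_a}$, is distributed as $\Berzo{\vHyp{2\cdot\ms{1}, p_a, \ms{i+1}}(-Y_i)}$, matching the hint $\aux_i$. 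By the standing convention, $\Aadv$'s second abort at $(\inn, i, a)$ is only beneficial when it withholds its own share of $d_i^{\z'}$ (otherwise a short computation shows $\eex{d_i^{\z'} \mid V_{I_2}^-} = \val(V_{I_2}^-)$, yielding zero bias), in which case the honest party falls back on $d_{i-1}^{\z'}$ (or on the default value if $i = 1$). Using that $\vct^{\z'}$ is independent of the adversary's view and that $\w(\vct^{\z'}_{\cI^{\z'}_{i-1}})$ is marginally $\Beroo{\ms{i}, \eps}$, I would obtain
\[
\val(V_{I_2}^-) = \Pr[Y_\rnd \geq 0 \mid Y_{i-1}, d_i^{\z_a}] = \veo_i
\quad \text{and} \quad
\val(V_{I_2}) = \eex{d_{i-1}^{\z'} \mid V_{I_2}^-} = \vBeroo{\ms{i}, \eps}(-Y_{i-1}) = \veo_i^-.
\]
The player $\Bc$ obtained by simulating the inner protocol for $\Aadv$ inside $\game$ therefore satisfies $\bias_\Bc(\game) = \size{\eex{\val(V_{I_2}) - \val(V_{I_2}^-) \mid V_{I_1}, \vct^{\z_a}}}$.

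The main obstacle is that $\game$ is a $12$-hypergeometric game only when $\size{p_a} \leq 12 \sqrt{\ms{1} \log \rnd}$, which in turn requires $\size{\eps}$ to be small. I would split on $\delta_j$. If $\delta_j \in [\rnd^{-2}, 1 - \rnd^{-2}]$, then by \cref{prop:epsDiff} (inverting $\sBias{\ms{1}}{\cdot}$) we have $\size{\eps} \leq C\sqrt{\log \rnd / \ms{1}}$ for a universal constant $C$, whence Hoeffding (\cref{claim:Hoeffding}) applied to $\vct^{\z_a}$, a sum of $2\ms{1}$ i.i.d.\ $\Beroo{\eps}$ coordinates, forces $\size{p_a} \leq 12 \sqrt{\ms{1} \log \rnd}$ except with probability $\rnd^{-\omega(1)}$. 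On this good event $\game$ is a $12$-hypergeometric game and $\bias_\Bc(\game) \leq \alpha(\rnd)$ by hypothesis, while the negligible bad event contributes only $o(1/\rnd)$ after averaging. If instead $\delta_j \notin [\rnd^{-2}, 1 - \rnd^{-2}]$, then the initial game value $\veo_1^- = \vBeroo{\ms{1}, \eps}(0) = \delta_j$ is extreme, and \cref{lemma:DeterminedGames} directly bounds $\bias(\game) \leq 2/\rnd$ regardless of $p_a$. Combining these two cases via the tower rule yields $\size{\eex{\val(V_{I_2}) - \val(V_{I_2}^-)}} \leq \alpha(\rnd) + 2/\rnd + o(1/\rnd) \leq \alpha(\rnd) + 2/\rnd$ for sufficiently large $\rnd$, matching the claim.
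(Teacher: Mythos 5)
Your proposal is correct and takes essentially the same route as the paper: after the first abort you fix $\delta_j$, $\eps=\sBias{\ms{1}}{\delta_j}$ and the weight $p$ of the surviving corrupted party's hidden vector, recognize the inner execution (with the honest party's fallback defense having conditional expectation $\veo_i^-$ and the adversary's reconstructed defense playing the hint $\veo_i$) as the hypergeometric game $\game_{f_p,\rnd,\eps}$, and then split on $\delta_j$ extreme (\cref{lemma:DeterminedGames}) versus $\size{p}$ large (Hoeffding), exactly as in the paper's reduction via the modified player $\Bc'$. The only cosmetic blemishes: the event $\size{p}>12\sqrt{\log\rnd\cdot\ms{1}}$ is only polynomially unlikely (roughly $\rnd^{-4}$), not $\rnd^{-\omega(1)}$, and the paper bounds $\size{\eps}$ by a direct Hoeffding argument rather than via \cref{prop:epsDiff}; neither affects the $\alpha(\rnd)+\frac{2}{\rnd}$ bound beyond the same $O(1/\rnd)$ slack the paper itself tolerates.
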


\begin{proof}[Proof of \cref{claim:3partyConnectionToHyperGame}]
	We prove that an attacker for the coin-flipping protocol of the type considered in this claim, \ie one that achieves bias $\alpha$ from his second abort at \Stepref{step:TwoParty:round:1} of the loop in $\Pitwo_\rnd$, yields a player for the hypergeometric binomial game (described in \cref{def:ValueHgGame}) that achieves the same bias. Thus, the bound on the former attacker follows from the bound on the latter one.
	
	Assume \wlg that $\Aadv$ controls $\Ptwo_0$ in $\Pitwo_\rnd$ after the first abort, and recall that the parties' inputs are sampled according to $\HidTwoShareGen(\delta)$ for some $\delta \in [0,1]$. Let $\eps = \sBias{\ms{1}}{\delta}$ and let $p = \w(\vct^0)$ where $\vct^0$ is the vector that has been sampled in \Stepref{alg:HidingTwoShareGenerator:sampleSets} of $\HidTwoShareGen$.
	Consider the binomial game $\game = \game_{f_p,\rnd,\eps} = \gameVars$ as defined in \cref{def:game} where $f_p$ is the randomized function that on input $(i,y)$ outputs $1$ with probability $\delta_p = \vHyp{2\cdot \ms{1},p,\ms{i+1}}(-y)$ and zero otherwise (i.e hypergeometric game). Let $\Bc$ be the player described in the proof of \cref{claim:2partyConnectionToSimpleGame} at \cref{claim:2partyConnectionToSimpleGame:attacker}, and let $\Bc'$ be the player for $\game$ which operates as $\Bc$, where the only difference is that at step \ref{claim:2partyConnectionToSimpleGame:attacker:firstStep} it chooses $\dVS{0}{0}[\rnd+1]$ according to $\Berzo{\delta_p}$ instead of uniformly over $\zo$ as $\Bc$ does.
	By the definition of $f_p$, the set of $\game$'s variables $(X_1, \ldots, X_\rnd, \aux_1, \ldots, \aux_\rnd)$ has the same distribution as the set of $\Pitwo_\rnd$'s variables $(c_1, \ldots,c_\rnd, d^0_1, \ldots, d^0_\rnd)$ after $\Aadv$'s first abort. 
	
	Therefore,  after the first abort, $\Aadv$'s view in the  emulation done by $\Bc$, is distributed exactly the same as its view when interacting with honest with honest $\Ptwo_1$ in $\Pitwo_\rnd$; in both cases, the only meaningful information $\Aadv$' gets is the revelation of these values together with the value of $\dVS{0}{0}[\rnd+1]$, which is distributed the same in both cases. Since $\Bc'$ aborts at round $i$ of $\game$ (i.e outputs $1$) iff $\Aadv$ aborts at round $(i,a)$ of $\Pitwo_\rnd$, it follows that

	\begin{align}
	\ex{\delta, p}{\bias_{\Bc'}(\game)} = \size{\eex{\val(V_{I_2}) - \val(V_{I_2})^-}},
	\end{align}
	
	where $\bias_{\Bc'}(\game)$ is according to \cref{def:GameBias} and the expectation is on the values of $\delta$ and $p$ which are being set after $\Aadv$'s first abort.
    Note that by the definition of $\eps$ it holds that $\veo_1^- = \delta$, where $\veo_1^-$ is according to \cref{def:game}. Therefore, in case $\delta \notin [\frac1{\rnd^2}, 1-\frac1{\rnd^2}]$, \cref{lemma:DeterminedGames} tells us that $\bias_{\Bc'}(\game) \leq \frac1{\rnd}$. Assuming that $\delta \in [\frac1{\rnd^2}, 1-\frac1{\rnd^2}]$, Hoeffding inequality (\cref{claim:Hoeffding}) yields that $\size{\eps} < 4\cdot \sqrt{\frac{\log \rnd}{\ms{1}}}$. Therefore, since $p$ is distributed according to $\Beroo{2\ms{1},\eps}$, it follows that 
    \begin{align}
    \lefteqn{\pr{\size{p} > 12 \sqrt{ \log \rnd \cdot \ms{1}} \mid \delta \in [\frac1{\rnd^2}, 1-\frac1{\rnd^2}]}}\\
    &\le \pr{\size{p - 2\eps\cdot \ms{1}} > 4 \sqrt{ \log \rnd \cdot \ms{1}} \mid \delta \in [\frac1{\rnd^2}, 1-\frac1{\rnd^2}]}\nonumber\\
    &\le \frac{1}{\rnd},\nonumber
    \end{align}
    where the second inequality holds again by Hoeffding inequality. We conclude that
    \begin{align}
    \size{\eex{\val(V_{I_2}) - \val(V_{I_2})^-}}
    &= \ex{\delta, p}{\bias_{\Bc'}(\game)}\\
    &\leq \ex{\delta, p}{\bias_{\Bc'}(\game) \mid \delta \in [\frac1{\rnd^2}, 1-\frac1{\rnd^2}]} + \frac1{\rnd}\nonumber\\
    &\leq \ex{\delta, p}{\bias_{\Bc'}(\game) \mid \delta \in [\frac1{\rnd^2}, 1-\frac1{\rnd^2}] \bigwedge \size{p} \leq 12 \sqrt{ \log \rnd \cdot \ms{1}}} + \frac2{\rnd}\nonumber\\
    &\leq \alpha(\rnd)+\frac2{\rnd},\nonumber
    \end{align}
	where the last inequality holds by the assumption.
\end{proof}

\section{Bounds for Online-Binomial Games}\label{section:BoundsForOnlineBinomialGames}
In the following section we focus on online-binomial games, as defined in \cref{subsection:BG}.
In \cref{section:GenericApproach} we state two basic tools for bounding the bias of such games, in \cref{subsection:Final} we develop our main tool for bounding the bias of such games and in Sections \ref{sec:Bernoulli}-\ref{sec:Vector} we prove \cref{lemma:ValueBiomialGame,lemma:ValueHgGame,lemma:ValueVectorGame}.

Recall that $\Beroo{\eps}$ is the Bernoulli probability distribution over $\oo$, taking the value $1$ with probability $\frac{1}{2}(1+\eps)$ and $-1$ otherwise, that $\Beroo{n,\eps}$ is the binomial distribution induced by the sum of $n$ independent random variables, each distributed according to  $\Beroo{\eps}$, and that $\ml{i} = \rnd+1-i$ and $\ms{i} = \sum_{j=i}^\rnd \ml{j}$.

We recall the following definitions from \cref{subsection:BG}.

\begin{definition}[online  binomial games -- Restatement of \cref{def:game}]\label{bounds:def:game}
	\onlineBinomialGame
\end{definition}

\begin{definition}[game bias -- Restatement of \cref{def:GameBias}]\label{bounds:def:GameBias}
	\GameBias<\cref{bounds:def:game}>
\end{definition}

\subsection{Basic Tools}\label{section:GenericApproach}
We present two basic tools for bounding a game bias. The first tool asserts that the game bias can only decrease when applying a random function to the hint.

\begin{lemma}\label{lemma:InformationIncreaseValue}
For randomized functions $f$ and $g$, $\rnd\in \N$, $\eps\in [-1,1]$ and $\ofs \in \Z$, let $\game_{f,\rnd,\eps,\ofs}$ and $\game_{g \circ f,\rnd,\eps,\ofs}$  be according to \cref{bounds:def:game}. It holds that  $\bias(\game_{g \circ f,\rnd,\eps,\ofs}) \leq \bias(\game_{f,\rnd,\eps,\ofs})$.
\end{lemma}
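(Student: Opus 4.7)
\begin{proofidea}
The plan is a direct reduction: given any (possibly unbounded) algorithm $\Bc$ playing $\game_{g \circ f,\rnd,\eps,\ofs}$, I will construct an algorithm $\Bc'$ playing $\game_{f,\rnd,\eps,\ofs}$ that attains \emph{exactly} the same bias. On receiving $(i,Y_{i-1},\aux_i)$, the reduction $\Bc'$ locally samples $\aux_i' \gets g(\aux_i)$ using fresh independent randomness and outputs $\Bc(i,Y_{i-1},\aux_i')$.

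Step 1, the easy one: the joint distribution of the full trace $(Y_0,\ldots,Y_\rnd,\aux_1',\ldots,\aux_\rnd',I)$, where $I$ is the resulting abort round, is identical in $\game_{g \circ f,\rnd,\eps,\ofs}$ played against $\Bc$ and in the simulation inside $\game_{f,\rnd,\eps,\ofs}$ played against $\Bc'$, because the marginal conditional law of $\aux_i'$ given $Y_i$ equals $(g \circ f)(i,Y_i)$ in both games, with independent randomness across rounds. Since $\veo_i^-$ is the same deterministic function $\eo_i$ of $Y_{i-1}$ in the two games, this already gives $\eex{\veo_I^-}$ identical in both.

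Step 2, the main step: I also need $\eex{\veo_I}$ to be identical in the two games. For this I apply the tower property. In $\game_f$ played by $\Bc'$, the event $\{I = i\}$ is determined by $(Y_0,\ldots,Y_{i-1},\aux_1',\ldots,\aux_i',\text{internal coins})$, and every piece of this data is a function either of $(Y_{i-1},\aux_i)$ itself or of $X_1,\ldots,X_{i-1}$ together with independent randomness used by $f$, $g$, and $\Bc$ in rounds $\le i$. All the latter is independent of $X_i,\ldots,X_\rnd$ conditional on $(Y_{i-1},\aux_i)$; consequently $\mathbf{1}_{I=i}$ and $Y_\rnd$ are conditionally independent given $(Y_{i-1},\aux_i)$, so that $\eo_i(Y_{i-1},\aux_i) = \pr{Y_\rnd \ge 0 \mid Y_{i-1},\aux_i}$ can be pulled out by iterated expectation, and summing over $i$ yields $\eex{\veo_I^f} = \pr{Y_\rnd \ge 0}$. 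The analogous argument in $\game_{g \circ f}$, now conditioning on $(Y_{i-1},\aux_i')$, gives $\eex{\veo_I^{g \circ f}} = \pr{Y_\rnd \ge 0}$ as well. Equivalently, this is optional stopping applied to the bounded martingale $i \mapsto \veo_i$ and the bounded stopping time $I$.

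Combining the two steps gives $\bias_{\Bc'}(\game_{f,\rnd,\eps,\ofs}) = \bias_\Bc(\game_{g \circ f,\rnd,\eps,\ofs})$, and maximizing over $\Bc$ yields the lemma. I expect the main obstacle (beyond routine bookkeeping) to be verifying the conditional independence in Step 2 cleanly; this is really just an unpacking of the game's dependency structure, but it is where the proof must be written with some care.
\end{proofidea}
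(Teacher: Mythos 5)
Your proposal is correct, and the reduction itself --- having the $\game_{f,\rnd,\eps,\ofs}$-player apply $g$ locally (with fresh randomness) to the hint it receives and feed the result to $\Bc$ --- is exactly the construction used in the paper's proof of this lemma. Where you differ is in how the preservation of bias is justified. The paper keeps the two terms together: it writes $\bias_\Bc$ as an expectation, over the abort round $I$, of $\eo_I(Y_{I-1},\aux^{g\circ f}_I)-\eo_I(Y_{I-1})$, and replaces the first term, via the tower property along the Markov chain $Y_\rnd \rightarrow \aux^f_I \rightarrow g(\aux^f_I)$ given $Y_{I-1}$ (together with the fact that the event $I=i$ is conditionally independent of $\aux^f_i$ given $(Y_{i-1},\aux^{g\circ f}_i)$), by the average of $\eo_I(Y_{I-1},\aux^f_I)$; this yields $\bias_{\Bc'}(\game_{f,\rnd,\eps,\ofs})=\bias_{\Bc}(\game_{g\circ f,\rnd,\eps,\ofs})$ in a single chain of equalities. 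You instead split the bias into its two expectations: $\eex{\veo_I^-}$ agrees across the two executions because the simulated trace (including $I$) is identically distributed, and $\eex{\veo_I}$ equals $\pr{Y_\rnd\geq 0}$ in \emph{both} games, by optional stopping for the Doob martingale of $\pr{Y_\rnd\geq 0 \mid \cdot}$ with respect to the round-by-round filtration. Both routes hinge on the same conditional-independence verification you flag (given $(Y_{i-1},\aux_i)$, the decision $I=i$ is independent of $Y_\rnd$, since earlier rounds and the players' coins carry no further information about future increments), so neither is more delicate; what your route buys is the clean structural identity $\bias_\Bc(\game)=\size{\pr{Y_\rnd\geq0}-\eex{\veo_I^-}}$, valid for every adversary and every hint function, which the paper never states explicitly and which makes monotonicity of the bias under post-processing of the hint essentially immediate.
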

\begin{proof}
	Let $\set{Y_i}_{i = 0}^{\rnd}$ be distributed as in  $\game_{f,\rnd,\eps,\ofs}$ (according to \cref{bounds:def:game}),  let $\tau = g \circ f$
	and let $\Bc^\tau$ be the algorithm with $\bias_{\Bc^\tau}(\game_{\tau,\rnd,\eps, \ofs}) = \bias(\game_{\tau,\rnd,\eps, \ofs})$. Let $\Bc^{f}$ be the algorithm for $\game_{f,\rnd,\eps,\ofs}$ that emulates an execution of $\Bc^{\tau}$ in $\game_{\tau,\rnd,\eps, \ofs}$. Namely, $\Bc^{f}$ on input $(i,y,\paux)$ emulates a sending of $(i,y,g(\paux))$ from the game $\game_{\tau,\rnd,\eps, \ofs}$ to $\Bc^{\tau}$ and outputs $\Bc^{\tau}$'s output. Let $I$ be the first round on which $\Bc^\tau$ outputs one in $\game_{\tau,\rnd,\eps,\ofs}$, let $\aux_i^f = f(i,Y_i)$ and let $\aux_i^\tau = \tau(i,Y_i)$. It follows that
	\begin{align*}
	\lefteqn{\bias(\game_{\tau,\rnd,\eps,\ofs})}\\
	& = \bias_{\Bc^\tau}(\game_{\tau,\rnd,\eps,\ofs})\\
	&= \size{\ex{i \la I}{\ex{y\la Y_{i-1},\paux \la \aux^\tau_i \mid I=i}{\pr{Y_{\rnd} \geq0 \mid Y_{i-1} = y,\aux^\tau_i = \paux} -\pr{Y_{\rnd} \geq0 \mid Y_{i-1}=y}}}}\\
	&= \size{\ex{i \la I}{\ex{y\la Y_{i-1},\paux \la \aux^\tau_i \mid I=i}{ \ex{\paux' \la  \aux^f_i\mid Y_{i-1} = y,\aux^\tau_i=\paux }{\pr{Y_{\rnd} \geq0 \mid Y_{i-1} = y,\aux^f_i = \paux'} -\pr{Y_{\rnd} \geq0 \mid Y_{i-1}=y}}}}}\\
	&= \size{\ex{i \la I}{\ex{y\la Y_{i-1},\paux' \la \aux^f_i \mid I=i}{\pr{Y_{\rnd} \geq0 \mid Y_{i-1} = y,\aux^f_i = \paux'} - \pr{Y_{\rnd} \geq0 \mid Y_{i-1}=y}}}}\\
	&= \bias_{\Bc^f}(\game_{f,\rnd,\eps,\ofs})\\
	&\leq \bias(\game_{f,\rnd,\eps,\ofs}),
	\end{align*}
	where the last equality holds since $I$ also describes the first output on which $\Bc^f$ outputs one in $\game_{f,\rnd,\eps,\ofs}$.
\end{proof}

The second tool is a restatement of \cref{lemma:DeterminedGames}. It asserts that if the expected value of a game is almost determined, then there is mo much room for an attacker to gain much bias.

\begin{lemma}\label{bounds:lemma:DeterminedGames}[Restatement of \cref{lemma:DeterminedGames}]
	\DeterminedGameLemma<\cref{bounds:def:game}>
\end{lemma}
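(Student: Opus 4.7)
The plan is a martingale argument, exploiting that when $p \eqdef \veo_1^- = \pr{Y_\rnd \geq 0}$ is close to $0$ both $\veo_i$ and $\veo_i^-$ are small in expectation. By swapping the target event $\set{Y_\rnd \geq 0}$ with its complement, it suffices to treat the case $p \leq 1/\rnd^2$. Let $\cA_i = \sigma(\aux_1, Y_1, \ldots, Y_{i-1}, \aux_i)$ be the adversary's filtration at round $i$ (what $\Bc$ sees before deciding at round $i$), and let $\cB_i = \sigma(Y_1, \ldots, Y_{i-1})$ be the coarser filtration containing only the walk. By \cref{bounds:def:game} and the Markov structure of the walk, $\veo_i = \pr{Y_\rnd \geq 0 \mid \cA_i}$ and $\veo_i^- = \pr{Y_\rnd \geq 0 \mid \cB_i}$, so $(\veo_i)_{i \leq \rnd+1}$ is an $\cA$-martingale and $(\veo_i^-)_{i \leq \rnd+1}$ is a $\cB$-martingale, both bounded in $[0,1]$.

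Fix an arbitrary algorithm $\Bc$ with stopping round $I \leq \rnd+1$; by construction $I$ is a stopping time with respect to $\cA$. Optional stopping immediately gives $\eex{\veo_I} = \eex{\veo_1} = p$. The main obstacle is to bound $\eex{\veo_I^-}$: since $I$ may depend on the hints $\aux_i$, it is \emph{not} a $\cB$-stopping time, so optional stopping on $\veo^-$ is not directly available. I would sidestep this using Doob's maximal inequality applied to the non-negative $\cB$-martingale $\veo^-$, whose final term $\veo_{\rnd+1}^- = \mathbf{1}_{Y_\rnd \geq 0}$ has mean $p$: for any threshold $c > 0$,
\begin{align*}
\eex{\veo_I^-} \leq c + \pr{\veo_I^- > c} \leq c + \pr{\max_{1 \leq i \leq \rnd+1} \veo_i^- \geq c} \leq c + \frac{p}{c},
\end{align*}
where the first step uses $\veo_I^- \in [0,1]$. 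Optimizing with $c = \sqrt{p}$ gives $\eex{\veo_I^-} \leq 2\sqrt{p}$.

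Combining the two estimates,
\begin{align*}
\size{\eex{\veo_I - \veo_I^-}} = \size{p - \eex{\veo_I^-}} \leq \max(p,\; 2\sqrt{p}) \leq 2\sqrt{p} \leq \frac{2}{\rnd},
\end{align*}
using $p \leq 1/\rnd^2$. As $\Bc$ was arbitrary, this yields $\bias(\game_{f,\rnd,\eps,\ofs}) \leq 2/\rnd$. The symmetric case $p \geq 1 - 1/\rnd^2$ follows by applying the same argument to the complementary game, in which the outcome is $1 - \mathbf{1}_{Y_\rnd \geq 0}$ and the unconditional success probability is $1-p \leq 1/\rnd^2$.
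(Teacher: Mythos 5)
Your proof is correct, and it reaches the paper's bound by a slightly different route. The paper first invokes its data-processing lemma (\cref{lemma:InformationIncreaseValue}) to replace the hint by the full-information hint $g(i,y)=y$; in that game the abort round is adapted to the walk itself, so the whole lemma reduces to the ``averaging argument'' $\pr{\exists i\in[\rnd]\colon \veo_i^- > 1/\rnd}\le 1/\rnd$ (itself a maximal-inequality bound for the value martingale) together with $\eex{\veo_I}=p$. You instead stay inside the original game and deal head-on with the fact that $I$ is not a stopping time of the walk filtration: optional stopping on the adversary's filtration gives $\eex{\veo_I}=p$ (here your identification $\veo_i=\pr{Y_\rnd\ge 0\mid \cA_i}$ is exactly the Markov/independent-hint-randomness point, and for randomized strategies one folds the algorithm's coins into $\cA_i$, which changes nothing by independence), while Doob's maximal inequality applied to the nonnegative $\cB$-martingale $\veo^-$ bounds $\eex{\veo_I^-}$ uniformly over all hint-dependent abort rules. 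The two arguments rest on the same two quantitative facts, so I would call yours a repackaging rather than a new idea; what it buys is self-containment (no appeal to \cref{lemma:InformationIncreaseValue}) and the marginally sharper bound $2\sqrt{p}$ from optimizing the threshold, whereas the paper's version is a two-line corollary once the data-processing lemma is available.
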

\begin{proof}
We prove the case $\veo_1^- \leq \frac1{\rnd^2}$, where the other case is analogues.
By a simple averaging argument, it holds that
\begin{align}
 \pr{\exists i\in [\rnd] \colon \veo_i^- > \frac1{\rnd}} \leq \frac1{\rnd}
\end{align}
Consider the game $\game_{g,\rnd,\eps,\ofs}$ for $g(i,y) = y$. By the above, $\bias(\game_{g,\rnd,\eps,\ofs}) \leq \frac2{\rnd}$. Hence, \cref{lemma:InformationIncreaseValue} yields that the same also holds for $\game_{f,\rnd,\eps,\ofs}$.
\end{proof}

\subsection{Main Tool --- Expressing Game Bias using Ratio}\label{subsection:Final}
In this section, we develop our main tool for bounding the bias of online-binomial games.
Informally, we reduce the task of bounding the bias of a game into evaluating the ``ratio" of the game, where $\ratioo$ (defined below) is a useful game-depend measurement on how much the distribution of $X_i$ is far from the distribution of $X_i \mid \aux_i$.

\remove{
Throughout, we will restrict our attention to setting in which the ``coins'' flipped during the game execution take typical values.
\begin{definition}[typical sets]\label{def:SetsXY}
For $\rnd\in\N$,   $\eps\in [-1,1]$ and 	$i \in [\rnd]$, let $\xset^\rnd_i = \set{x \in \Z \colon \abs{x} \leq 4 \cdot \sqrt{\log{\rnd}\cdot \ml{i}}}$ and $\yset^{\rnd,\eps}_i = \set{y \in \Z\colon \abs{y + \eps \cdot \ms{i}} \leq 4\sqrt{\log{\rnd} \cdot \ms{i}}}$.
\end{definition}
We will  use the above sets in context of a binomial game $\game_{f,\rnd,\eps}$, and will sometimes omit $\rnd$ from the notation of this sets.}

\begin{definition}\label{def:ratio}
	Let $\game_{f,\rnd,\eps} = \gameVars$ be according to  \cref{bounds:def:game} and let $\cX_i  \eqdef \set{x \in \Supp(X_i) \colon \abs{x} \leq 4 \cdot \sqrt{\log{\rnd}\cdot \ml{i}}}$. For $i \in [\rnd]$, 
	$y \in \Supp(Y_{i-1})$, $x \in \Supp(X_i)$ and  $\paux \in \Supp(\aux_i)$, define
	\begin{align*}
	\ratioo_{i,y,\paux}(x)= \frac{\pr{X_i=x \mid Y_{i-1} = y, \aux_i = \paux, X_i \in \cX_i}}{\pr{X_i= x\mid Y_{i-1}=y, X_i \in \cX_i}}.
	\end{align*}
\end{definition}

Namely, $\ratioo_{i,y,\paux}(x)$ measures the change (in multiplicative term) of the probability   $X_i =x$, due to the knowledge of  $\paux$, assuming that  $X_i$ is typical (\ie  $X_i\in \cX_i$).

The following lemma states that an appropriate upper-bound on $\size{1-\ratioo_{i,Y_{i-1},\aux_i}}$ for any ``interesting'' round $i$, yields an upper-bound on the game bias.

\begin{lemma}\label{lemma:Final}[main tool: expressing game bias using ratio]
	Let $\game= \game_{f,\rnd,\eps} = \set{X_i,Y_i,H_i,O_i,O_i^-}$ be according to  \cref{bounds:def:game}, and let $\ratioo$ be according to  \cref{def:ratio}. 
	Assume that for every $i\in [\rnd-\floor{\log^{2.5}\rnd}]$ and $y\in  \cY_{i-1} \eqdef \set{y \in \Supp(Y_{i-1}) \colon \abs{y + \eps \cdot \ms{i}} \leq 4\sqrt{\log{\rnd} \cdot \ms{i}}}$, there exist $\const >0$ and  a set $\aset_{i,y}$  such that:
	\begin{enumerate}
		
		\item $\pr{\aux_i \notin \aset_{i,y} \mid Y_{i-1} = y} \leq \frac1{\rnd^2}$, and\label{lemma:Final:1}
		
		\item $\abs{1-\ratioo_{i,y,\paux}(x)} \leq \const \cdot \sqrt{\frac{ \log{\rnd}}{\ml{i+1}}}\cdot (\frac {\abs{x}}{\sqrt{\ml{i}}} + 1)$ for every $(x,\paux) \in \cX_i\times \aset'_{i,y}$, \label{lemma:Final:2}
	\end{enumerate}
	for $\aset'_{i,y} = \aset_{i,y} \bigcap \Supp(\aux_i \mid Y_{i-1}=y, X_i \in \cX_i)$. Then 
	$$\bias(\game) \leq \varphi(\const) \cdot \frac{\log^{3}\rnd}{\rnd}$$
	 for a universal function $\varphi$.
\end{lemma}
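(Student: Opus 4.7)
The plan is to bound the adversary's expected gain round by round, splitting rounds into \emph{smooth} rounds $i \leq \rnd - L$ (where $L = \floor{\log^{2.5} \rnd}$ and the ratio hypothesis applies) and \emph{late} rounds $i > \rnd - L$ (handled separately). For the late rounds I appeal to concentration: Hoeffding (\cref{claim:Hoeffding}) applied to $Y_{\rnd - L}$ shows that except on an event of probability $O(\log^3 \rnd/\rnd)$, the value $\eo_{\rnd-L+1}(Y_{\rnd-L})$ lies outside $[1/\rnd^2, 1-1/\rnd^2]$, and on that good event the residual $L$-round game meets the hypothesis of \cref{bounds:lemma:DeterminedGames}, so the contribution of late aborts to the bias is at most $O(\log^3 \rnd/\rnd)$.

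For a smooth round $i$ and a fixed good conditioning $(Y_{i-1}=y, \aux_i=\paux)$ with $y\in \cY_{i-1}$ and $\paux \in \aset'_{i,y}$, I write the per-round gap as
\begin{equation*}
g_i(y,\paux) \eqdef \eo_i(y,\paux) - \eo_i(y) = \sum_x \bigl(q(x)-p(x)\bigr)\bigl(v(y+x)-v(y)\bigr),
\end{equation*}
where $p(x)=\pr{X_i=x \mid Y_{i-1}=y}$, $q(x)=\pr{X_i=x \mid Y_{i-1}=y,\, \aux_i=\paux}$, and $v(y') = \vBeroo{\ms{i+1},\eps}(-y')$; the subtraction of $v(y)$ is valid because $\sum_x(q(x)-p(x)) = 0$. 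The tail contribution from $x \notin \cX_i$ is $O(1/\rnd^2)$ by Hoeffding on $\Beroo{\ml{i},\eps}$. For $x \in \cX_i$, the ratio hypothesis together with the tiny normalization correction arising from conditioning on $X_i \in \cX_i$ (which shifts masses by only $1 + O(1/\rnd^2)$) yields $|q(x)-p(x)| \leq (1 + O(1/\rnd^2))\cdot p(x) \cdot \const\sqrt{\log\rnd/\ml{i+1}}\bigl(1 + |x|/\sqrt{\ml{i}}\bigr)$. Combining this with the sensitivity bound $|v(y+x)-v(y)| \leq |x|/\sqrt{\ms{i+1}}$ from \cref{prop:gameValuesDifferenceBound} and the moment estimates $\eex{|x|},\ \eex{x^2}/\sqrt{\ml{i}} = O(\sqrt{\ml{i}})$ from \cref{fact:E_x_m}, one obtains a per-round good-event estimate of the form $|g_i(y,\paux)| \leq O\!\bigl(\varphi_1(\const)\sqrt{\log\rnd}\,/\,\ml{i+1}\bigr)$.

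To package these per-round bounds into the claimed bias, I exploit the martingale identity $\eex{g_i(y,\aux_i)\mid Y_{i-1}=y}=0$, immediate from the definitions, which shows that the adversary can only profit from round $i$ by filtering on $\aux_i$. Writing $\eex{g_I}=\sum_i\eex{g_i\cdot \mathbf{1}_{I=i}}$ and bounding each summand using the typical-event estimate plus $O(1/\rnd^2)$ slack (from Assumption~\ref{lemma:Final:1} and from $Y_{i-1}\notin\cY_{i-1}$), I stratify over $|Y_{i-1}|$: when $|Y_{i-1}|\gg\sqrt{\ms{i+1}\log\rnd}$ the value $v(y)$ is within $1/\rnd^{\omega(1)}$ of $\{0,1\}$ so $|g_i|$ is negligible, while the probability of being in the undetermined window is $O(\sqrt{\ms{i+1}/\ms{1}})$ by Hoeffding. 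Summing the resulting contributions across the $\rnd - L$ smooth rounds and adding the late-round bound from the first paragraph yields $\bias(\game) \leq \varphi(\const) \cdot \log^3 \rnd / \rnd$. The main obstacle is precisely this combining step: naive round-by-round summation overshoots ($\tilde{O}(\log^{1.5}\rnd)$) while the naive per-round maximum undershoots at $\Theta(1/\log^2\rnd)$, so matching the target requires simultaneously exploiting the martingale structure, the one-shot abort constraint, and the concentration of $Y_{i-1}$ into the undetermined regime; \cref{lemma:InformationIncreaseValue} is available as a reduction if intermediate steps need to replace $f$ by a coarser hint function.
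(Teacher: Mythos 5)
Your per-round analysis is essentially the paper's: the expansion of $\eo_i(y,\paux)-\eo_i(y)$ over $x$ is \cref{proposition:biasBoundOneRound}, the sensitivity bound $\abs{v(y+x)-v(y)}\le \abs{x}/\sqrt{\ms{i+1}}$ is \cref{claim:GameValueDiffOneRound}, and the moment estimates are \cref{fact:E_x_m}; combined with the ratio hypothesis these do give a per-round gap of order $\sqrt{\log\rnd}/\ml{i+1}$. One fixable omission: you control the tail $x\notin\cX_i$ only under the unconditioned law of $X_i$, but the gap formula also needs $\pr{X_i\notin\cX_i\mid Y_{i-1}=y,\aux_i=\paux}$ to be small, and a hint can be correlated with the atypical event; the paper handles this by trimming the hint set to those $\paux$ with this conditional tail at most $1/\rnd$ (a Markov argument costing an extra $2/\rnd^2$ in the bad-hint probability), and your ``normalization correction $1+O(1/\rnd^2)$'' silently assumes the same thing for the hint-conditioned law.

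The genuine gap is the aggregation step, which you yourself call ``the main obstacle'' and then assert rather than prove. As you note, summing $\pr{Y_{i-1}\text{ in the undetermined window}}\cdot O(\sqrt{\log\rnd}/\ml{i+1})=O(\log\rnd/\rnd)$ over all rounds gives $O(\log\rnd)$, and using $\sum_i\pr{I=i}\le 1$ against the worst per-round gap gives only $O(1/\log^2\rnd)$; your sketch never says how the one-shot abort constraint and the window probabilities are actually combined, and invoking ``martingale structure'' does not do it (the identity $\eex{g_i\mid Y_{i-1}}=0$ is useless once the adversary filters on $\aux_i$). The paper's resolution (\cref{proposition:OnlyWithOi}) is a first-exit-time argument: after forcing the modified adversary not to abort in the last $\floor{\log^{2.5}\rnd}$ rounds or after $Y$ leaves the typical window (cost $O(\log^3\rnd/\rnd)$ via \cref{claim:PrEiNew,bounds:lemma:DeterminedGames}), the abort round $I$ satisfies $I\le J$ for $J$ the first exit round, so the gap at abort is at most $\const\sqrt{\log\rnd}/\ml{J+1}$ --- a quantity the adversary does not control; then $\pr{J\ge i}\le O(\ml{i}\sqrt{\log\rnd}/\rnd)$ for $i\ge\rnd/2$, and the extremal bound \cref{claim:linearproblem} converts these tail constraints into $\eex{1/\ml{J+1}}=O(\log^{1.5}\rnd/\rnd)$, yielding $O(\log^2\rnd/\rnd)$ plus the truncation cost. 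Note also that both places where you bound the probability of the narrow central window ``by Hoeffding'' need an \emph{anti-concentration} (local limit) estimate, namely \cref{prop:binomProbEstimation} as used in \cref{claim:PrEiNew}; Hoeffding only bounds tails and cannot show $Y_{i-1}$ avoids a window of width $O(\sqrt{\ms{i+1}\log\rnd})$. Without some version of the exit-time-plus-extremal argument, your proof does not close.
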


In the following we fix $f,\rnd,\eps$, we let $\game=\game_{f,\rnd,\eps} = \set{X_i,Y_i,H_i,O_i,O_i^-}$ be according to  \cref{bounds:def:game}, and for $i\in [\rnd]$, we let $\cX_i$ be according to \cref{def:ratio} and $\cY_{i-1}$ be according to \cref{lemma:Final}. We assume \wlg that $\rnd$ is larger than some universal constant and that $\size{\eps} \leq  4\sqrt{\frac{\log{\rnd}}{\ms{1}}}$ (Otherwise, Hoeffding's inequality (\cref{claim:Hoeffding}) yields that $\veo_1^- \notin  [\frac1{\rnd^2} ,1 - \frac1{\rnd^2}]$ and the proof follows by \cref{bounds:lemma:DeterminedGames}).

The following sub-lemmas are the main building blocks for proving \cref{lemma:Final}. The first one (proved in \cref{subsubsection:OnlyWithOi}) states that an appropriate bound of each round bias, yields a bound on the game bias.
\begin{lemma}\label{proposition:OnlyWithOi}
	Assume that for every $i\in [\rnd-\floor{\log^{2.5}\rnd}]$ and $y\in\cY_{i-1}$, there exists $\const > 0$ and a set $\widehat{\aset}_{i,y} \subseteq \Supp(\aux_i \mid Y_{i-1} =y)$ such that
	\begin{enumerate}
		
		\item $\pr{\aux_i \notin \widehat{\aset}_{i,y} \mid Y_{i-1} = y} \leq \frac3{\rnd^2}$, and\label{proposition:OnlyWithOi:1}
		
		\item $\size{\eo_i(y)-\eo_i(y,\paux)} \leq \const \cdot \frac{\sqrt{\log{\rnd}}}{\ml{i+1}}$ for every $\paux \in \widehat{\aset}_{i,y}$. \label{proposition:OnlyWithOi:2}
		
	\end{enumerate}
	
	Then $$\bias(\game) \leq \varphi(\const) \cdot \frac{\log^{3}\rnd}{\rnd}$$ for a universal function $\varphi$.
\end{lemma}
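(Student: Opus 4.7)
Fix an arbitrary fail-stop adversary $\Bc$ with induced stopping time $\tau$. By \cref{bounds:lemma:DeterminedGames} together with Hoeffding's inequality (\cref{claim:Hoeffding}) we may restrict attention to the ``active'' case $\veo_1^- \in [\rnd^{-2}, 1-\rnd^{-2}]$ and $|\eps| \leq 4\sqrt{\log\rnd/\ms{1}}$, since otherwise the game is already nearly determined and the target bound follows immediately. The plan is then to bound
\begin{equation*}
|\Ex[\veo_\tau - \veo_\tau^-]| \leq \sum_{i=1}^\rnd \Ex\bigl[|\veo_i - \veo_i^-|\,\mathbb{1}[\tau=i]\bigr],
\end{equation*}
by splitting into the ``covered'' range $i \leq \rnd - \floor{\log^{2.5}\rnd}$ (where the hypothesis directly applies) and the ``tail'' range of the last $\floor{\log^{2.5}\rnd}$ rounds (handled via the concentration of $Y_{\rnd-\floor{\log^{2.5}\rnd}}$).

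For the tail, I would condition on $Y_{\rnd-\floor{\log^{2.5}\rnd}}$ and note that the remaining sub-game has $\ms{\rnd-\floor{\log^{2.5}\rnd}+1} = \Theta(\log^5\rnd)$ coins, so by \cref{claim:Hoeffding} its starting value $\veo^-$ lies outside $[\rnd^{-2},1-\rnd^{-2}]$ unless $Y_{\rnd-\floor{\log^{2.5}\rnd}}$ falls into a window of width $O(\log^3\rnd)$ centered around $-\eps\ms{\rnd-\floor{\log^{2.5}\rnd}+1}$. Outside that window, the sub-game contributes at most $2/\rnd$ to the bias by applying \cref{bounds:lemma:DeterminedGames} to the sub-game. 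Since $Y_{\rnd-\floor{\log^{2.5}\rnd}}$ has standard deviation $\Theta(\rnd)$, a density bound shows that the probability of landing inside the window is $O(\log^3\rnd/\rnd)$, and inside the window the trivial estimate $|\veo_i-\veo_i^-|\leq 1$ suffices to give a tail contribution of $O(\log^3\rnd/\rnd)$.

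For the covered range, introduce $G_i = \{Y_{i-1}\in\cY_{i-1}\}\cap\{\aux_i\in\widehat{\aset}_{i,Y_{i-1}}\}$, on which the hypothesis gives $|\veo_i-\veo_i^-|\leq B_i \eqdef \const\sqrt{\log\rnd}/\ml{i+1}$, and $\Pr[\neg G_i] \leq \Pr[Y_{i-1}\notin\cY_{i-1}] + 3\rnd^{-2}$, with the first term polynomially small by \cref{claim:Hoeffding}. The key structural fact is the identity $\Ex[\veo_i - \veo_i^-\mid Y_{i-1}] = 0$: the attacker can only extract positive bias by correlating its abort indicator $a_i(Y_{i-1},\aux_i)$ with $\gamma_i \eqdef \veo_i - \veo_i^-$. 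Cauchy--Schwarz conditional on $Y_{i-1}$ yields $|\Ex[a_i\gamma_i\mid Y_{i-1}]| \leq B_i\sqrt{\Pr[a_i=1\mid Y_{i-1}]}$ on $G_i$, and a second Cauchy--Schwarz across rounds combined with the single-abort normalization $\sum_i\Pr[\tau=i]\leq 1$ bounds the covered contribution by $\sqrt{\sum_i B_i^2 \cdot \Pr[\tau\geq i]}$; combining this with the additional decay of $\Pr[Y_{i-1}\in\cY_{i-1}]$ along the random walk (which parallels the tail-window density bound) absorbs the remaining $\log$-factors and yields $O(\log^3\rnd/\rnd)$.

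The main obstacle is making the covered-range estimate tight enough: the naive bounds $\sum_i B_i \approx \log^{1.5}\rnd$ and $\max_i B_i \approx \log^{-2}\rnd$ are both much larger than the target $\log^3\rnd/\rnd$, so the sharp argument must crucially couple (i) the zero-mean property of $\gamma_i$ conditional on $Y_{i-1}$, (ii) the ``only one abort'' constraint on $\tau$, and (iii) the decay of $\Pr[Y_{i-1}\in\cY_{i-1}]$ along the walk. Getting the $\log$-factors between the conditional variance, the abort probability, and the active-window probability to cancel correctly against one another will be the technical heart of the argument; with those lined up, combining the covered and tail estimates yields the claimed bound $\varphi(\const)\cdot \log^3\rnd/\rnd$.
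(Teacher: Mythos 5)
Your reduction to the active case and your treatment of the tail rounds (conditioning on $Y_{\rnd-\floor{\log^{2.5}\rnd}}$, a window of width $O(\log^{3}\rnd)$ outside of which \cref{bounds:lemma:DeterminedGames} applies, and a density bound inside) do match the spirit of the paper's proof. The genuine gap is in the covered range, which is the heart of the lemma, and you flag it yourself: the Cauchy--Schwarz route cannot be pushed to the target, so no amount of ``lining up log-factors'' will close it. Quantitatively, with $B_i=\const\sqrt{\log\rnd}/\ml{i+1}$ one has $\sum_i B_i^2=\const^2\log\rnd\sum_{j\ge \floor{\log^{2.5}\rnd}}j^{-2}\approx \const^2\log^{-1.5}\rnd$, so your intermediate bound $\sqrt{\sum_i B_i^2\pr{\tau\ge i}}$ is at best of order $\log^{-0.75}\rnd$; even pushing the typicality probabilities $\pr{Y_{i-1}\in\cY_{i-1}}\lesssim \ml{i}\sqrt{\log\rnd}/\rnd$ inside the Cauchy--Schwarz gives only $\sqrt{\sum_i B_i^2\cdot \ml{i}\sqrt{\log\rnd}/\rnd}\approx \log^{1.25}\rnd/\sqrt{\rnd}$, still a factor of roughly $\sqrt{\rnd}$ away from $\log^{3}\rnd/\rnd$. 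Any second-moment argument of this shape tops out around $\widetilde{O}(1/\sqrt{\rnd})$, whereas the whole point of the lemma is to beat that; moreover the zero-mean identity $\eex{\veo_i-\veo_i^-\mid Y_{i-1}}=0$ buys nothing beyond the pointwise bound $|\gamma_i|\le B_i$ once the adversary correlates its abort indicator with the sign of $\gamma_i$.

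The missing idea is a first-moment argument that makes the bound \emph{strategy-independent}. The paper first replaces $\Bc$ by an adversary that never aborts in the last $\floor{\log^{2.5}\rnd}$ rounds nor at or after the first round $J$ in which $Y$ leaves the typical set (cost $O(\log^{3}\rnd/\rnd)$, via \cref{claim:PrEiNew,claim:YiPurpose,bounds:lemma:DeterminedGames}), and pays a further $3/\rnd$ so that the hint lies in $\widehat{\aset}_{i,Y_{i-1}}$ at the abort round. For the resulting adversary the abort round $I''$ satisfies $I''\le J$, hence its single-round gain is at most $\const\sqrt{\log\rnd}/\ml{I''+1}\le \const\sqrt{\log\rnd}/\ml{J+1}$, a quantity depending only on the first-exit time $J$ and not on the strategy. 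It then remains to bound $\eex{1/\ml{J+1}}$: the density bound $\pr{Y_{i-1}\in\cY_{i-1}}\le 12\ml{i}\sqrt{\log\rnd}/\rnd$ for $i\ge \rnd/2$ (\cref{claim:PrEiNew,claim:constraintsNew}) yields cumulative constraints on the exit-time probabilities $\pr{L_j}$, and the rearrangement inequality \cref{claim:linearproblem} converts these into $\sum_j \pr{L_j}/\ml{j+1}=O(\log^{1.5}\rnd/\rnd)$, giving $O(\const\log^{2}\rnd/\rnd)$ for this part and $O((\const+1)\log^{3}\rnd/\rnd)$ overall. Without this mechanism (or some comparable replacement for your Cauchy--Schwarz step), your covered-range estimate does not close, so the proposal as written does not prove the lemma.
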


The following lemma (proved  in  \cref{subsubsection:biasBoundOneRound}) relates the bias that can be obtained in a given round, to the $\ratioo$ function.
\begin{lemma}\label{proposition:biasBoundOneRound}
	For  $i \in [\rnd]$, $y\in \Supp(Y_{i-1})$ and $\paux \in \Supp(\aux_i \mid Y_{i-1}=y, X_i \in \cX_i)$, it holds that
	$$\size{\eo_{i}(y) - \eo_{i}(y,\paux)} \leq \ex{x\la X_i\mid x\in \cX_i}{\size{\eo_{i+1}(y+x) - \eo_{i+1}(y)}\cdot \size{1-\ratioo_{i,y,\paux}(x)}} + 2\cdot(q + q_\paux),$$
	for $q = \Pr[X_i \notin \cX_i]$ and $q_\paux = \Pr[X_i \notin \cX_i \mid Y_{i-1} = y, \aux_i= \paux]$.\footnote{It can be easily shown that $\eo_{i}(y) - \eo_{i}(y,\paux) = \Ex_{x\la X_i}[(\eo_{i+1}(y+x) - \eo_{i+1}(y))\cdot (1-\frac{\pr{X_i=x \mid Y_{i-1} = y, \aux_i = \paux}}{\pr{X_i= x\mid Y_{i-1}=y}})]$. However, \cref{proposition:biasBoundOneRound} allows us to ignore ``non-typical'' $x$'s.}
\end{lemma}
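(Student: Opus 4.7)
\begin{proofsketch}[Proposal for \cref{proposition:biasBoundOneRound}]
The plan is to write both $\eo_i(y)$ and $\eo_i(y,\paux)$ as expectations over $X_i$ of the ``tail value'' $\eo_{i+1}(y+X_i)$, subtract a zero-sum pivot to obtain a telescoping, and then split the resulting sum according to whether $X_i$ lies in the typical set $\cX_i$ or not. Once this is done, the typical part will naturally factor through $\ratioo$, and the atypical part will be controlled by the masses $q$ and $q_\paux$.

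First I would use the structure of $\game_{f,\rnd,\eps}$ to observe two facts: (a) $X_i$ is independent of $Y_{i-1}$ (since the $X_j$'s are i.i.d.\ samples from $\Beroo{\ml{i},\eps}$), so $\Pr[X_i=x \mid Y_{i-1}=y] = \Pr[X_i=x]$; and (b) $\aux_i = f(i,Y_i)$, so conditional on $Y_i = y+x$ the random variable $\aux_i$ is independent of $Y_m - Y_i$. Combining (b) with the Markov structure of the game gives
\begin{align*}
\eo_i(y) &= \sum_x \eo_{i+1}(y+x)\cdot \pr{X_i=x},\\
\eo_i(y,\paux) &= \sum_x \eo_{i+1}(y+x)\cdot \pr{X_i=x \mid Y_{i-1}=y, \aux_i=\paux}.
\end{align*}
Subtracting, and then subtracting the identically zero pivot $\eo_{i+1}(y) \cdot \sum_x [\pr{X_i=x} - \pr{X_i=x \mid Y_{i-1}=y, \aux_i=\paux}]$, yields
\begin{align*}
\eo_i(y) - \eo_i(y,\paux) = \sum_x \bigl(\eo_{i+1}(y+x) - \eo_{i+1}(y)\bigr)\cdot \bigl(\pr{X_i=x} - \pr{X_i=x \mid Y_{i-1}=y, \aux_i=\paux}\bigr).
\end{align*}

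Next I would split the sum into $x\in\cX_i$ and $x\notin\cX_i$. On the atypical part, I use $\size{\eo_{i+1}(y+x)-\eo_{i+1}(y)}\leq 1$ and the triangle inequality to bound the contribution by $\sum_{x\notin\cX_i}(\pr{X_i=x} + \pr{X_i=x \mid Y_{i-1}=y,\aux_i=\paux}) = q + q_\paux$. On the typical part, write $p(x)=\pr{X_i=x}$, $p_\paux(x)=\pr{X_i=x\mid Y_{i-1}=y,\aux_i=\paux}$ and $\tilde{p}(x) = p(x)/(1-q) = \pr{X_i=x\mid X_i\in\cX_i}$, so that by the definition of $\ratioo$ we have $p_\paux(x) = (1-q_\paux)\cdot \tilde{p}(x)\cdot \ratioo_{i,y,\paux}(x)$ for $x\in\cX_i$. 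Then the key algebraic identity
\begin{align*}
p(x) - p_\paux(x) = \tilde{p}(x)\bigl[(q_\paux - q) + (1-q_\paux)(1 - \ratioo_{i,y,\paux}(x))\bigr]
\end{align*}
together with $1-q_\paux\leq 1$, $\size{q_\paux-q}\leq q+q_\paux$, and $\size{\eo_{i+1}(y+x)-\eo_{i+1}(y)}\leq 1$ gives that the typical contribution is at most
\begin{align*}
\ex{x\la X_i\mid x\in\cX_i}{\size{\eo_{i+1}(y+x)-\eo_{i+1}(y)}\cdot\size{1-\ratioo_{i,y,\paux}(x)}} + (q+q_\paux).
\end{align*}
Adding the two parts yields the claimed bound with constant $2$ in front of $q+q_\paux$.

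I do not anticipate a substantive obstacle here: the statement is essentially bookkeeping that isolates the ``ratio'' factor on which the main tool (\cref{lemma:Final}) relies. The only subtle point is making sure that the conditional independence of $\aux_i$ from $Y_m-Y_i$ given $Y_i$ is invoked correctly, and that the atypical mass is folded in with a clean constant; both follow directly from the game's definition (\cref{bounds:def:game}) and the identity above.
\end{proofsketch}
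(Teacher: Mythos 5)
Your proposal is correct and takes essentially the same route as the paper's proof: both condition on $X_i$, use the conditional independence of $\aux_i$ from the future increments given $Y_i$, extract $\ratioo_{i,y,\paux}$ via Bayes on the typical set, center at $\eo_{i+1}(y)$ (using that the conditional mean of $\ratioo$ is $1$ / the zero-sum pivot), and absorb the atypical and normalization mismatches into $2(q+q_\paux)$; the only difference is the order of the algebra (you subtract the pivot first, the paper does it last). One cosmetic slip: the $X_j$'s are independent but not identically distributed ($X_j\la \Beroo{\ml{j},\eps}$), though only independence is needed, so the argument is unaffected.
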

Intuitively, the above tells that if $\aux_i$ is unlikely to tell  much information about  $X_i$, reflected by $\ratioo_{i,y,\aux_i}(X_i)$ being close to one,  then the bias of round $i$ is small.

We will also use  the following two simple facts. The first one states some useful properties of the sets $\cX_{i}$'s.
\begin{claim}\label{claim:XiiPurpose}
	The following holds for every $i \in [\rnd]$.
	\begin{enumerate}
		
		\item $\pr{X_i \notin \cX_i} < \frac1{\rnd^3}$,\label{claim:XiiPurpose:1}
		
		\item $\ex{x\la X_i\mid x\in \cX_i}{\size{x}} < \ex{x\la X_i}{\size{x}}$,\label{claim:XiiPurpose:2}\nonumber
		
		\item $\ex{x\la X_i\mid x\in \cX_i}{x^2} < \ex{x\la X_i}{x^2}$.\label{claim:XiiPurpose:3}
	\end{enumerate}
\end{claim}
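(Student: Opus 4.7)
The plan is to prove the three items separately. Item~1 is a tail bound and will follow from Hoeffding's inequality (\cref{claim:Hoeffding}), while Items~2 and~3 will follow from the elementary monotonicity observation that conditioning on an event of the form $\set{\size{X_i}\leq T}$ cannot inflate the expectation of $\size{X_i}$ or of $X_i^2$.

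For Item~1, recall that $X_i\sim \Beroo{\ml{i},\eps}$ and that the running bound $\size{\eps}\leq 4\sqrt{\log\rnd/\ms{1}}$ is in force; combined with $\ms{1}=\rnd(\rnd+1)/2\geq \rnd^2/2$ and $\ml{i}\leq \rnd$, this gives $\size{\eps\cdot\ml{i}}\leq 4\sqrt{2\log\rnd}$. The argument then splits on the size of $\ml{i}$: when $\ml{i}\leq 16\log\rnd$, the threshold $T:=4\sqrt{\log\rnd\cdot\ml{i}}$ already exceeds the diameter $\ml{i}$ of $\Supp(X_i)$, so $\cX_i=\Supp(X_i)$ and the probability is zero; otherwise $\ml{i}\geq 32$ (once $\rnd$ is above a universal constant), and the elementary estimate $4\sqrt{\log\rnd\cdot\ml{i}}-4\sqrt{2\log\rnd}\geq 3\sqrt{\log\rnd\cdot\ml{i}}$ combined with the triangle inequality turns $\set{\size{X_i}>T}$ into $\set{\size{X_i-\eps\ml{i}}\geq 3\sqrt{\log\rnd\cdot\ml{i}}}$, which \cref{claim:Hoeffding} bounds by $2\exp(-9\log\rnd/2)<1/\rnd^3$.

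For Items~2 and~3, I set $T=4\sqrt{\log\rnd\cdot\ml{i}}$ and observe that both $\size{x}$ and $x^2$ are bounded above by $T$ and $T^2$, respectively, on $\cX_i$, while on $\Supp(X_i)\setminus\cX_i$ they strictly exceed these same bounds. It follows that $\ex{x\la X_i\mid x\notin\cX_i}{\size{x}}\geq \ex{x\la X_i\mid x\in\cX_i}{\size{x}}$, and decomposing $\ex{x\la X_i}{\size{x}}$ via the law of total expectation immediately yields the desired inequality; the argument for $\ex{x\la X_i}{x^2}$ is identical. When $\pr{X_i\notin\cX_i}=0$ the inequality collapses to equality; this trivial degeneration is harmless for every downstream use of the claim, so the strict signs in the statement can be read as $\leq$.

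The only real bookkeeping obstacle lies in the case analysis for Item~1: one must verify that whenever the Hoeffding tail bound is too weak to yield $1/\rnd^3$ directly, the truncation set $\cX_i$ already equals $\Supp(X_i)$ and so the bound holds trivially. The constants in the definition of $\cX_i$, and in particular the factor $4$, appear to be tuned precisely so that this dichotomy goes through cleanly once $\rnd$ exceeds a universal constant.
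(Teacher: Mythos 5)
Your proof is correct and follows essentially the same route as the paper: for Item~1 it absorbs the mean shift $\eps\cdot\ml{i}$ and applies Hoeffding (\cref{claim:Hoeffding}), and for Items~2 and~3 it uses the threshold/total-expectation monotonicity argument. The only differences are cosmetic: the paper avoids your case split by observing that $\abs{\eps}\cdot\ml{i} < \sqrt{\log\rnd\cdot\ml{i}}$ holds uniformly for all $i\in[\rnd]$ once $\rnd$ exceeds a universal constant, and your caveat that the strict inequalities in Items~2--3 degenerate to equalities when $\cX_i\supseteq\Supp(X_i)$ (so they should be read as $\leq$) is a correct observation about the statement that is indeed harmless for its downstream use.
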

\begin{proof}
	For \cref{claim:XiiPurpose:1}, compute
	\begin{align*}
	\pr{X_i \notin \cX_i}\\
	&= \pr{\abs{X_i} > 4\sqrt{\log{\rnd}\cdot \ml{i}}}\nonumber\\
	&< \pr{\abs{X_i - \eps\cdot \ml{i}} > 3\cdot \sqrt{\log{\rnd}\cdot \ml{i}}}\nonumber\\
	&\leq 2\cdot \exp\left(-\frac{3^2 \cdot \log{\rnd} \cdot \ml{i}}{2\cdot \ml{i}}\right)\nonumber\\
	&< \frac1{\rnd^3},\nonumber
	\end{align*}
	where the first inequality holds since $\abs{\eps} \leq 4\sqrt{\frac{\log{\rnd}}{\ms{1}}}$ yields that $\abs{\eps}\cdot \ml{i} < \sqrt{\log{\rnd}\cdot \ml{i}}$ and the second inequality holds by Hoeffding's inequality (\cref{claim:Hoeffding}).
	
	For \cref{claim:XiiPurpose:2}, compute
	\begin{align*}
	\ex{x\la X_i}{\abs{x}}& =  \ppr{x\la X_i}{X_i \in \cX_i}  \cdot \ex{x\la X_i\mid x\in \cX_i}{\abs{x}} +  \ppr{x\la X_i}{X_i \notin \cX_i}  \cdot \ex{x\la X_i\mid x\notin \cX_i}{\abs{x}}\\
	&> \ppr{x\la X_i}{X_i \in \cX_i}  \cdot \ex{x\la X_i\mid x\in \cX_i}{\abs{x}} +  \ppr{x\la X_i}{X_i \notin \cX_i}  \cdot \ex{x\la X_i\mid x\in \cX_i}{\abs{x}}\nonumber\\
	&= \ex{x\la X_i\mid x\in \cX_i}{\abs{x}},\nonumber
	\end{align*}
	where the inequality holds since $\ex{x\la X_i\mid x\notin \cX_i}{\abs{x}} > 4\sqrt{\log{\rnd} \cdot \ml{i}} \geq \ex{x\la X_i\mid x\in \cX_i}{\abs{x}}$. The proof of \cref{claim:XiiPurpose:3} is analogous to the above.
\end{proof}

The second claim bounds the change of the expected game value in a single round.
\begin{claim}\label{claim:GameValueDiffOneRound}
	For $i \in [\rnd-\floor{\log^{2.5}\rnd}]$, $x \in \cX_i$ and $y \in \cY_{i-1}$, it holds that $$\abs{\eo_{i+1}(y+x) - \eo_{i+1}(y)}\leq \frac{\abs{x}}{\sqrt{\ms{i+1}}}.$$
\end{claim}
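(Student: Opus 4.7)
The plan is to reduce the bound to a sum over binomial PMF values. I would begin by writing $\eo_{i+1}(z) = \vBeroo{\ms{i+1},\eps}(-z)$ and, assuming WLOG $x \geq 0$ (the case $x \leq 0$ is symmetric), observing
\[
\abs{\eo_{i+1}(y+x) - \eo_{i+1}(y)} = \ppr{Z\la \Beroo{\ms{i+1},\eps}}{-y-x \leq Z < -y} = \sum_{t \in S}\Beroo{\ms{i+1},\eps}(t),
\]
where $S$ is the set of support points of $\Beroo{\ms{i+1},\eps}$ in $[-y-x,-y)$. Since consecutive support points are spaced by $2$, we have $\abs{S} \leq \abs{x}$.

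Next, I would bound each term by the maximum PMF. The mode of $\Beroo{\ms{i+1},\eps}$ lies near $\eps\cdot \ms{i+1}$, and the standing assumption $\abs{\eps} \leq 4\sqrt{\log\rnd/\ms{1}}$ together with $\ms{i+1} \leq \ms{1}$ gives $\abs{\eps\cdot \ms{i+1}} \leq 4\sqrt{\log\rnd \cdot \ms{i+1}}$, which sits comfortably inside $\ms{i+1}^{3/5}$ because $\ms{i+1} \geq \floor{\log^{2.5}\rnd}(\floor{\log^{2.5}\rnd}+1)/2 \gtrsim \log^5\rnd$ throughout the allowed range of $i$. Hence \cref{prop:binomProbEstimation} applies at the mode and yields
\[
\max_{t}\Beroo{\ms{i+1},\eps}(t) \leq (1+o(1))\sqrt{\frac{2}{\pi\cdot \ms{i+1}}} < \frac{1}{\sqrt{\ms{i+1}}}
\]
for $\rnd$ larger than a universal constant (the error term from \cref{prop:binomProbEstimation} vanishes, and $\sqrt{2/\pi}<1$ absorbs the residual slack). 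Combining with $\abs{S} \leq \abs{x}$ yields the claimed inequality.

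The main subtlety, and the step I expect to need the most care, is that one cannot simply invoke \cref{prop:gameValuesDifferenceBound}: at the boundary $i = \rnd - \floor{\log^{2.5}\rnd}$, both $\abs{y}$ and $\ms{i+1}^{3/5}$ are of order $\log^3\rnd$, but with constants arranged so that the hypothesis $\abs{k},\abs{k'}\leq n^{3/5}$ of that proposition fails. The direct-summation argument sidesteps this obstacle, since it only requires the mode (whose magnitude is far smaller than $\ms{i+1}^{3/5}$) to lie in the regime where \cref{prop:binomProbEstimation} is valid.
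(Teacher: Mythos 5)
Your route is genuinely different from the paper's: the paper disposes of this claim in one line by checking the hypotheses of \cref{prop:gameValuesDifferenceBound} (via the asserted inequality $\size{x},\size{y+x}\le 5\sqrt{\log\rnd\cdot\ms{i}}<\ms{i}^{3/5}$) and citing it with $n=\ms{i+1}$, whereas you re-derive the bound by writing the difference as a sum of at most $\abs{x}$ point probabilities and bounding each by the maximum of the PMF, estimated at the mode via \cref{prop:binomProbEstimation}. Your reservation about the citation route is substantively correct: an inequality of the form $C\sqrt{\log\rnd\cdot\ms{i}}\le\ms{i+1}^{3/5}$ needs $\ms{i+1}\gtrsim C^{10}\log^{5}\rnd$, while the claim's range only guarantees $\ms{i+1}\ge\floor{\log^{2.5}\rnd}(\floor{\log^{2.5}\rnd}+1)/2$, so in the last $\Theta(\log^{2.5}\rnd)$ admissible rounds the hypotheses of \cref{prop:gameValuesDifferenceBound} are not literally met for every $y\in\cY_{i-1}$; the claim is still true there, and your summation argument (which is in fact how \cref{prop:gameValuesDifferenceBound} is proven in the appendix, only applied with the estimate at every point of the range rather than at the mode) is the right repair.

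However, your own verification of the one hypothesis you do need commits the same constant-level slip. You bound the mode by $\abs{\eps\cdot\ms{i+1}}\le4\sqrt{\log\rnd\cdot\ms{i+1}}$ (after weakening via $\ms{i+1}\le\ms{1}$) and assert this sits ``comfortably'' inside $\ms{i+1}^{3/5}$ because $\ms{i+1}\gtrsim\log^{5}\rnd$; but $4\sqrt{\log\rnd\cdot\ms{i+1}}\le\ms{i+1}^{3/5}$ is equivalent to $\ms{i+1}\ge 4^{10}\log^{5}\rnd$, which fails at the boundary where $\ms{i+1}\approx\tfrac12\log^{5}\rnd$ --- exactly the failure mode you flag in the paper's check. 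The fix is immediate with ingredients you already have: do not discard $\ms{1}$ from the standing bound on $\eps$. From $\size{\eps}\le4\sqrt{\log\rnd/\ms{1}}$ you get $\abs{\eps}\cdot\ms{i+1}\le 4\sqrt{\log\rnd}\cdot\ms{i+1}/\sqrt{\ms{1}}$, and $4\sqrt{\log\rnd}\cdot\ms{i+1}^{2/5}\le\sqrt{\ms{1}}$ holds for every admissible $i$ since $\ms{i+1}\le\ms{1}$ and $\ms{1}^{1/10}\ge 4\sqrt{\log\rnd}$ once $\rnd$ exceeds a universal constant. With that repair (plus an explicit remark that the binomial PMF is unimodal, so its maximum is attained at a support point within $O(1)$ of $\eps\cdot\ms{i+1}$, and the routine check that the error term of \cref{prop:binomProbEstimation} is $o(1)$ there, which your parameters do give), your argument is correct and indeed covers the boundary rounds that the paper's one-line check does not.
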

\begin{proof}
	Note that $\size{x}, \size{y+x} \leq 5\cdot \sqrt{\log \rnd \cdot \ms{i}} < \ms{i}^{\frac35}$, that $\size{\eps} \leq 4\sqrt{\frac{\log \rnd}{\ms{1}}} < \ms{i}^{-\frac25}$ and that $\abs{\eo_{i+1}(y+x) - \eo_{i+1}(y)} = \size{\vBeroo{\ms{i+1},\eps}(-y-x) - \vBeroo{\ms{i+1},\eps}(-y)}$. Therefore, the proof immediately follows by \cref{prop:gameValuesDifferenceBound}.
\end{proof}

\mparagraph{Putting it together}
\begin{proof}[Proof of \cref{lemma:Final}]
	Let $i \in [\rnd-\floor{\log^{2.5}\rnd}]$, $y\in\cY_{i-1}$ and $\aset_{i,y}$ be the set that satisfies assumptions \ref{lemma:Final:1} and \ref{lemma:Final:2} of \cref{lemma:Final}, let $\aset'_{i,y} = \aset_{i,y} \bigcap \Supp(\aux_i \mid Y_{i-1}=y, X_i \in \cX_i)$ and let $\widehat{\aset}_{i,y} = \set{\paux \in \aset'_{i,y} \mid \pr{X_i\notin \cX_i\mid Y_{i-1}=y,\aux_i=\paux} \leq \frac1{\rnd}}$. We first show that $\pr{\aux_i \notin \widehat{\aset}_{i,y} \mid Y_{i-1} = y} \leq \frac3{\rnd^2}$. Next, we use \cref{proposition:biasBoundOneRound} for bounding $\size{\eo_{i}(y,\paux) - \eo_{i}(y)}$ for every $\paux \in \widehat{\aset}_{i,y}$ and the proof will follow by \cref{proposition:OnlyWithOi}.
	
	For the first part, let $\cs_{i,y} = \set{\paux \in \Supp(\aux_i) \mid \pr{X_i\notin \cX_i\mid Y_{i-1}=y,\aux_i=\paux} \leq \frac1{\rnd}}$ and
	assume by contradiction that $\pr{\aux_i \notin \cs_{i,y} \mid Y_{i-1}=y} > \frac1{\rnd^2}$. It follows that
	\begin{align*}
	\pr{X_i\notin \cX_i}
	&= 	\pr{X_i\notin \cX_i \mid Y_{i-1} = y}\\
	&\geq \pr{X_i\notin \cX_i \mid Y_{i-1}=y,\aux_i \notin \cs_{i,y}}\cdot \pr{\aux_i \notin \cs_{i,y} \mid Y_{i-1}=y}\\
	&> \frac1{\rnd}\cdot \frac1{\rnd^2}\\
	&= \frac1{\rnd^3},
	\end{align*}
	In contradiction to \cref{claim:XiiPurpose:1} of \cref{claim:XiiPurpose}.
	Therefore,
	\begin{align}\label{claim:GoodCommonValuesOfA:11}
	\pr{\aux_i \notin \cs_{i,y} \mid Y_{i-1}=y} \leq \frac1{\rnd^2}.
	\end{align}
	In addition, note that
	\begin{align}\label{claim:GoodCommonValuesOfA:12}
	\pr{\aux_i \notin \Supp(\aux_i \mid Y_{i-1}=y, X_i \in \cX_i) \mid Y_{i-1}=y}
	\leq \pr{X_i \notin \cX_i}
	\leq \frac1{\rnd^2}.
	\end{align}
	Using simple union bound, we conclude from \cref{claim:GoodCommonValuesOfA:11,claim:GoodCommonValuesOfA:12} that
	\begin{align}
	\pr{\aux_i \notin \widehat{\aset}_{i,y} \mid Y_{i-1} = y} 
	\leq \pr{\aux_i \notin \aset_{i,y} \mid Y_{i-1} = y} + \frac2{\rnd^2}
	\leq \frac3{\rnd^2},
	\end{align}
	where the second inequality holds by assumption \ref{lemma:Final:1} of \cref{lemma:Final}.

	For the second part, note that for every $\paux \in \widehat{\aset}_{i,y}$ it holds that
	\begin{align}
	\size{\eo_{i}(y,\paux) - \eo_{i}(y)}
	&\leq \ex{x\la X_i\mid x\in \cX_i}{\abs{\eo_{i+1}(y+x) - \eo_{i+1}(y)}\cdot \abs{ 1-\ratioo_{i,y,\paux}(x)}} + \frac{4}{\rnd}\\
	&\leq \ex{x\la X_i\mid x\in \cX_i}{\frac{\abs{x}}{\sqrt{\ms{i+1}}} \cdot \left(\const\cdot\sqrt{\frac{ \log{\rnd}}{\ml{i+1}}}\bigl(\frac {\abs{x}}{\sqrt{\ml{i}}} + 1\bigr)\right)} + \frac{4}{\rnd}\nonumber\\
	&= \ex{x\la X_i\mid x\in \cX_i}{\frac{\abs{x}}{\sqrt{\frac12\ml{i}\ml{i+1}}} \cdot \left(\const\cdot\sqrt{\frac{ \log{\rnd}}{\ml{i+1}}}\bigl(\frac {\abs{x}}{\sqrt{\ml{i}}} + 1\bigr)\right)} + \frac{4}{\rnd}\nonumber\\\nonumber\\
	&= \frac{\sqrt{2}\const\cdot \sqrt{\log{\rnd}}}{\ml{i+1}}
	\cdot \ex{x\la X_i\mid x\in \cX_i}{\frac{x^2}{\ml{i}} + \frac{\abs{x}}{\sqrt{\ml{i}}}} + \frac{4}{\rnd}\nonumber\\
	&\leq \frac{\sqrt{2}\const\cdot \sqrt{\log{\rnd}}}{\ml{i+1}}
	\cdot \left(\frac{2\cdot \ml{i}}{\ml{i}} + \frac{\sqrt{2\cdot\ml{i}}}{\sqrt{\ml{i}}}\right) + \frac{4}{\rnd}\nonumber\\
	&\leq \frac{(5\const + 4)\cdot \sqrt{\log{\rnd}}}{\ml{i+1}}.\nonumber
	\end{align}
	The first inequality holds by \cref{proposition:biasBoundOneRound} (recalling \cref{claim:XiiPurpose:1} of \cref{claim:XiiPurpose} and that $\pr{X_i\notin \cX_i\mid Y_{i-1}=y,\aux_i=\paux} \leq \frac1{\rnd}$ by the definition of $\widehat{\aset}_{i,y}$). The second inequality holds by \cref{claim:GameValueDiffOneRound} and by assumption \ref{lemma:Final:2} of \cref{lemma:Final}. The third inequality holds by \cref{fact:E_x_m} (recalling \cref{claim:XiiPurpose:2,claim:XiiPurpose:3} of \cref{claim:XiiPurpose}).
	
	In conclusion, we proved that for every $i \in [\rnd-\floor{\log^{2.5}\rnd}]$ and $y\in\cY_{i-1}$, the set $\widehat{\aset}_{i,y}$ satisfies the constrains  of \cref{proposition:OnlyWithOi}. Hence, \cref{proposition:OnlyWithOi} yields that $\bias(\game) \leq \varphi(5\const + 4) \cdot \frac{\log^{3}\rnd}{\rnd}$, for some universal function $\varphi$, as required.
\end{proof}

\subsubsection{Proving \texorpdfstring{\cref{proposition:OnlyWithOi}}{First Lemma}}\label{subsubsection:OnlyWithOi}
We will use the following facts.  The first claim yields that if $Y_{i-1} \notin \cY_{i-1}$ (\ie $\abs{Y_{i-1}}$ is untypically large), then the expected value of the game at round $i$ is almost determined.

\begin{claim}\label{claim:YiPurpose}
For every $i \in [\rnd]$ and $y \in \Supp(Y_{i-1}) \setminus \cY_{i-1}$, it holds that $$\pr{\sum_{j=i}^{\rnd}X_{j} \geq -y} \notin \left[\frac1{\rnd^2},1-\frac1{\rnd^2}\right].$$
\end{claim}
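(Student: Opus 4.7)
The plan is a direct tail-bound argument via Hoeffding's inequality (\cref{claim:Hoeffding}). First I would use that $X_i,X_{i+1},\ldots,X_\rnd$ are independent and that $X_j$ is distributed as $\Beroo{\ml{j},\eps}$, so $S:=\sum_{j=i}^{\rnd}X_j$ is distributed as $\Beroo{\ms{i},\eps}$ with mean $\eps\cdot\ms{i}$. Hence
\[
\pr{S\geq -y}=\pr{S-\eps\ms{i}\geq -(y+\eps\ms{i})}.
\]

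The assumption $y\notin\cY_{i-1}$ is exactly $|y+\eps\ms{i}|>4\sqrt{\log\rnd\cdot\ms{i}}$, which I would split into two symmetric cases. If $y+\eps\ms{i}>4\sqrt{\log\rnd\cdot\ms{i}}$, then $-(y+\eps\ms{i})$ is a large negative number, so
\[
\pr{S<-y}\leq\pr{|S-\eps\ms{i}|\geq y+\eps\ms{i}}\leq 2\exp\!\left(-\frac{(y+\eps\ms{i})^2}{2\ms{i}}\right)\leq 2\exp(-8\log\rnd)\leq\frac{1}{\rnd^2},
\]
by \cref{claim:Hoeffding}, and therefore $\pr{S\geq -y}\geq 1-\frac{1}{\rnd^2}$. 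Symmetrically, if $y+\eps\ms{i}<-4\sqrt{\log\rnd\cdot\ms{i}}$, then $-(y+\eps\ms{i})$ is large positive and
\[
\pr{S\geq -y}\leq\pr{|S-\eps\ms{i}|\geq -(y+\eps\ms{i})}\leq 2\exp\!\left(-\frac{(y+\eps\ms{i})^2}{2\ms{i}}\right)\leq\frac{1}{\rnd^2}.
\]
In either case $\pr{S\geq -y}\notin[\frac{1}{\rnd^2},\,1-\frac{1}{\rnd^2}]$, as required.

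There is no real obstacle here: the claim is essentially a restatement of the fact that the Binomial distribution concentrates at scale $\sqrt{\ms{i}}$ around its mean, and the cutoff $4\sqrt{\log\rnd\cdot\ms{i}}$ in the definition of $\cY_{i-1}$ was chosen precisely so that the Hoeffding bound yields $2\rnd^{-8}\leq\rnd^{-2}$ (for $\rnd$ above some absolute constant; the small-$\rnd$ case is trivial since $[\rnd^{-2},1-\rnd^{-2}]$ becomes empty). The only minor point worth stating explicitly in the write-up is that $X_j\sim\Beroo{\ml{j},\eps}$ sums to $\Beroo{\ms{i},\eps}$ by independence, so Hoeffding's inequality applies as stated for a $\oo$-valued Binomial.
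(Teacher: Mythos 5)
Your proof is correct and follows essentially the same route as the paper: identify $\sum_{j=i}^{\rnd}X_j$ as a $\Beroo{\ms{i},\eps}$ variable with mean $\eps\ms{i}$, note that $y\notin\cY_{i-1}$ means $\abs{y+\eps\ms{i}}>4\sqrt{\log\rnd\cdot\ms{i}}$, and apply Hoeffding's inequality (\cref{claim:Hoeffding}) to push the tail below $\frac{1}{\rnd^2}$, handling the two signs symmetrically (the paper writes out only the case $y+\eps\ms{i}\leq 0$ and declares the other analogous). No gap; your explicit remark about the small-$\rnd$ regime matches the paper's standing assumption that $\rnd$ exceeds a universal constant.
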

\begin{proof}
Let $Z_i \eqdef \sum_{j=i}^{\rnd}X_{j}$. We assume that $y + \eps \cdot \ms{i} \leq 0$, where the proof of the case $y + \eps \cdot \ms{i} > 0$ is analogous. Note that since $y \notin \cY_{i-1}$, then $-(y + \eps \cdot \ms{i}) > 4\sqrt{\log{\rnd} \cdot \ms{i}}$, and since $Z_i$ is distributed according to $\Beroo{\ms{i},\eps}$, then $\eex{Z_i} = \eps \cdot \ms{i}$. Therefore, Hoeffding's inequality (\cref{claim:Hoeffding}) yields that
\begin{align*}
\pr{Z_i\geq -y}&= \pr{Z_i-\eps\cdot \ms{i} \geq -(y + \eps\cdot \ms{i})}\\
&\leq \pr{Z_i-\eps\cdot \ms{i} \geq 4\sqrt{\log{\rnd} \cdot \ms{i}}}\nonumber\\
&\leq 2\cdot \exp\left(-\frac{16 \cdot \ms{i}\log{\rnd}}{2\cdot \ms{i}}\right)\nonumber\\
&< \frac1{\rnd^2}.\nonumber
\end{align*}
\end{proof}

We associate the following events with $\game$. For $i\in [\rnd]$, let $E_i$ be the event that $Y_{i-1} \in \cY_{i-1}$ and for $i \in (\rnd)$ let $L_i = E_1 \bigcap E_2 \bigcap \ldots \bigcap E_i \bigcap \neg E_{i+1}$, letting $E_{\rnd+1} =\emptyset$. In words, $E_i$ is the event that $\abs{Y_{i-1}}$ is not large, and $L_i$ is the event that $i$ is the minimal index such that $\abs{Y_i}$ is large (where $L_\rnd$ is the event that all the $Y_i$'s are not large). Note that $\set{L_j}_{j\in (\rnd)}$ are disjoint events and that $\pr{\bigcup_{j\in (\rnd)} L_j} = 1$. We use the following fact.

\begin{claim}\label{claim:PrEiNew}
For integer $i \in [\frac{\rnd}{2},\rnd]$, it holds that $\Pr[E_i] \leq \frac{12\cdot\ml{i}\sqrt{\log{\rnd}}}{\rnd}$.
\end{claim}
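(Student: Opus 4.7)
My plan is to observe that $Y_{i-1}$ is simply a Binomial random variable, and to reduce $\Pr[E_i]$ to a local-limit estimate on Binomial probabilities. Concretely, since $\game = \game_{f,\rnd,\eps}$ has offset zero, $X_0 = 0$ and so $Y_{i-1} = \sum_{j=1}^{i-1} X_j$ is distributed exactly as $\Beroo{n_i,\eps}$ for $n_i \eqdef \ms{1}-\ms{i} = \sum_{j=1}^{i-1} \ml{j}$. Hence
$$\Pr[E_i] \;=\; \sum_{t \in \cY_{i-1}} \Beroo{n_i,\eps}(t),$$
so I want an upper bound on each summand together with a count of the summands.

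For the per-point upper bound, I would apply \cref{prop:binomProbEstimation} to conclude $\Beroo{n_i,\eps}(t) \le (1+\error)\cdot \sqrt{2/(\pi n_i)}$, since the exponential factor $e^{-(t-\eps n_i)^2/(2n_i)}$ is always at most one. To check its hypotheses, I would combine the standing assumption $\size{\eps} \le 4\sqrt{\log\rnd/\ms{1}}$ (from the preamble of \cref{subsection:Final}) with the identity $\ms{i} = \ml{i}(\ml{i}+1)/2$ and the hypothesis $i \ge \rnd/2$ to derive $n_i \ge \rnd^2/3$; this easily gives $\size{\eps}\le n_i^{-2/5}$ and, since every $t \in \cY_{i-1}$ satisfies $\size{t} = O(\ml{i}\sqrt{\log\rnd}) = O(\rnd\sqrt{\log\rnd})$, also $\size{t}\le n_i^{3/5}$ once $\rnd$ exceeds a universal constant.

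For counting, I would note that $\cY_{i-1}$ is an interval of length $8\sqrt{\log\rnd \cdot \ms{i}}$ and that $\Supp(Y_{i-1})$ has spacing $2$, so the number of surviving summands is at most $4\sqrt{\log\rnd \cdot \ms{i}}+1$. The elementary bound $\sqrt{\ms{i}} \le (\ml{i}+1)/\sqrt{2}$ (immediate from $\ms{i} = \ml{i}(\ml{i}+1)/2$) then upgrades this to $2\sqrt{2}\,\ml{i}\sqrt{\log\rnd}+1$ summands. Multiplying by the per-point bound and using $\sqrt{n_i}\ge \rnd/\sqrt{3}$ yields
$$\Pr[E_i] \;\le\; \bigl(2\sqrt{2}\,\ml{i}\sqrt{\log\rnd}+1\bigr)\cdot (1+\error)\cdot \sqrt{\frac{6}{\pi\rnd^2}},$$
which is at most $12\ml{i}\sqrt{\log\rnd}/\rnd$ provided $\rnd$ is larger than a universal constant (already assumed).

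The main obstacle is purely bookkeeping: verifying that the hypotheses of \cref{prop:binomProbEstimation} truly hold under the standing assumptions on $\eps$ and $i$, and keeping the numerical constants tight enough (absorbing both the relative error $\error$ and the extra $+1$ in the lattice-point count) to land on the explicit factor $12$ rather than a larger constant.
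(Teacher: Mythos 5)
Your proposal is correct and follows essentially the same route as the paper's proof: identify $Y_{i-1}$ as $\Beroo{\ms{1}-\ms{i},\eps}$, lower-bound $\ms{1}-\ms{i}$ by $\Omega(\rnd^2)$ using $i\geq \rnd/2$, apply \cref{prop:binomProbEstimation} for a pointwise bound of order $1/\rnd$, and multiply by the size of $\cY_{i-1}$. The only blemish is the count of lattice points, where $\sqrt{\ms{i}}\leq (\ml{i}+1)/\sqrt{2}$ gives $2\sqrt{2}\,(\ml{i}+1)\sqrt{\log \rnd}+1$ rather than $2\sqrt{2}\,\ml{i}\sqrt{\log \rnd}+1$ summands, a slip absorbed by the slack in the constant $12$.
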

\begin{proof}
Note that $Y_{i-1}$ is the outcome of $\ms{1} - \ms{i}$ coins. Compute
\begin{align}
\ms{1} - \ms{i} &= \frac12 \left(\ml{1}(\ml{1} + 1) - \ml{i}(\ml{i} + 1)\right)\\
&= \frac12 \left(\rnd(\rnd + 1) - (\rnd-i+1)(\rnd-i+2)\right)\nonumber\\
&\geq  \frac12 \left(\rnd(\rnd + 1) - (\frac{\rnd}2+1)(\frac{\rnd}2 + 2)\right)\nonumber\\
&\geq \frac{\rnd^2}{4}.\nonumber
\end{align}
\cref{prop:binomProbEstimation} yields that the probability $Y_{i-1}$ equals a given value in $\cY_{i-1}$ is at most $\frac{1}{\sqrt{(\ms{1} - \ms{i})}} \leq \frac{2}{\rnd}$ (recall that we only care about large enough $m$). Since $\size{\cY_{i-1}} < 8\sqrt{\ms{i} \log{\rnd}}$, it follows that
\begin{align*}
\Pr[E_i] &\leq 8\sqrt{\ms{i} \log{\rnd}} \cdot \frac{2}{\rnd}\\
&= \frac{16\sqrt{\ms{i} \log{\rnd}}}{\rnd}\\
&= \frac{16\sqrt{\frac12\cdot \ml{i}\bigl(\ml{i} + 1\bigr) \log{\rnd}}}{\rnd}\\
&\leq \frac{12\cdot \ml{i}\sqrt{\log{\rnd}}}{\rnd}.
\end{align*}
\end{proof}

The following claim bounds the sum $\sum_{j=i}^{\rnd}\pr{L_j}$ for every integer $i \in [\frac{\rnd}{2},\rnd]$.

\begin{claim}\label{claim:constraintsNew}
For integer $i \in [\frac{\rnd}{2},\rnd]$, it holds that $\sum_{j=i}^{\rnd}\pr{L_j} \leq \frac{12\cdot \ml{i}\sqrt{\log{\rnd}}}{\rnd}$.
\end{claim}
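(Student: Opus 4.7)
My plan is to observe that the disjoint union $\bigcup_{j=i}^{\rnd} L_j$ simplifies to a single event that is already covered by \cref{claim:PrEiNew}. The events $\{L_j\}_{j \in (\rnd)}$ are constructed as a partition keyed on the \emph{first} index at which $Y_j$ leaves the typical set $\cY_j$ (with $L_\rnd$ being the event that no such exit ever occurs). So summing $\Pr[L_j]$ for $j \geq i$ equals the probability that every $Y_0, Y_1, \ldots, Y_{i-1}$ stays inside its typical set, i.e., the probability of $E_1 \cap E_2 \cap \ldots \cap E_i$.

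Concretely, I would first write
\begin{align*}
\sum_{j=i}^{\rnd}\pr{L_j} = \pr{\bigcup_{j=i}^{\rnd} L_j}
\end{align*}
using that the $L_j$'s are pairwise disjoint (as noted just before \cref{claim:PrEiNew}). Then I would unfold the definition of $L_j$ to show $\bigcup_{j=i}^{\rnd} L_j = E_1 \cap E_2 \cap \cdots \cap E_i$: any outcome in which the first failure index is some $j \geq i$ (or in which there is no failure at all, captured by $L_\rnd$) is exactly an outcome in which $E_1, \ldots, E_i$ all hold, and conversely.

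Finally, I would bound
\begin{align*}
\pr{E_1 \cap \cdots \cap E_i} \leq \pr{E_i} \leq \frac{12 \cdot \ml{i} \sqrt{\log \rnd}}{\rnd},
\end{align*}
where the last inequality is exactly \cref{claim:PrEiNew}, which applies since $i \in [\rnd/2, \rnd]$. There is no real obstacle here; the only thing to be careful about is the set-theoretic identity $\bigcup_{j=i}^{\rnd} L_j = E_1 \cap \cdots \cap E_i$, which requires handling the boundary case $j = \rnd$ (where $L_\rnd = E_1 \cap \cdots \cap E_\rnd$ by the convention $E_{\rnd+1} = \emptyset$) correctly so that the union indeed exhausts all outcomes in $E_1 \cap \cdots \cap E_i$.
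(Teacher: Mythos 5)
Your proposal is correct and follows essentially the same route as the paper: both use disjointness to turn the sum into the probability of $\bigcup_{j=i}^{\rnd} L_j$, relate that union to the event $E_i$, and invoke \cref{claim:PrEiNew}. The only difference is cosmetic — the paper directly uses the containment $\bigcup_{j=i}^{\rnd} L_j \subseteq E_i$, while you first establish the exact identity $\bigcup_{j=i}^{\rnd} L_j = E_1 \cap \cdots \cap E_i$ and then relax it, which is a valid (slightly more detailed) version of the same argument.
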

\begin{proof}
Since $\set{L_j}_{j=0}^{\rnd}$ are disjoint events and since $\bigcup_{j=i}^{\rnd} L_j \subseteq E_i$, it follows that
$$\sum_{j=i}^{\rnd}\pr{L_j} = \Pr[\bigcup_{j=i}^{\rnd} L_i] \leq \Pr[E_i] \leq \frac{12\cdot \ml{i}\sqrt{\log{\rnd}}}{\rnd},$$
where the last inequality holds by \cref{claim:PrEiNew}.
\end{proof}

\mparagraph{Putting it together}
\begin{proof}[Proof of \cref{proposition:OnlyWithOi}]
Let $\Bc$ be an algorithm and let $\Bc'$ be the algorithm that operates like $\Bc$ with the following difference: if $\Bc$ aborts (\ie output $1$) at round $i$, and $i > \rnd-\log^{2.5}\rnd$ or $i\geq i'$, for $i'$ being the minimal index with $\overline{E_{i'}}$, then $\Bc'$ does not abort, and outputs $0$'s till the end of the game. Combining \cref{claim:PrEiNew,claim:YiPurpose,bounds:lemma:DeterminedGames} yields that
\begin{align}
\size{\bias_{\Bc}(\game) - \bias_{\Bc'}(\game)} \leq \frac{1}{\rnd} + \frac{12 \cdot \log^{3}\rnd}{\rnd}
\end{align}
Let $\Bc''$ be the strategy that acts like $\Bc'$, but does not abort (even if $\Bc'$ does) in rounds $\set{i,\ldots,\rnd}$, for $i$ being the minimal index with $\aux_i \notin \widehat{\aset}_{i,Y_{i-1}}$ and let $I'' = I(\game,\Bc'')$ be according to \cref{bounds:def:GameBias}. Since we assume that $\pr{\aux_i \notin \widehat{\aset}_{i,y} \mid Y_{i-1} = y} \leq \frac3{\rnd^2}$ for every $i\in [\rnd-\floor{\log^{2.5}\rnd}]$ and $y\in\cY_{i-1}$,  a simple averaging argument yields that
\begin{align}
\size{\bias_{\Bc'}(\game) - \bias_{\Bc''}(\game)}
\leq \pr{\exists i\in [\rnd-\floor{\log^{2.5}\rnd}]\colon Y_{i-1}\in \cY_{i-1} \land  \aux_i \notin \widehat{\aset}_{i,Y_{i-1}}}
\leq \frac{3}{\rnd}.
\end{align}

Let $J \in (\rnd)$ be the index for which $L_J$ happens (\ie $J$ is the minimal index such that $Y_J \notin \cY_{J}$). The definition of $\Bc''$ yields that $I'' \leq J$, $I'' \leq \rnd-\log^{2.5}\rnd$, $Y_{I''-1} \in \cY_{I''-1}$ and $\aux_{I''} \in \widehat{\aset}_{I'',Y_{I''-1}}$. Since, by assumption, $\size{\eo_i(y)-\eo_i(y,\paux)} \leq \const \cdot \frac{\sqrt{\log{\rnd}}}{\ml{i+1}}$ for every $i\in [\rnd-\floor{\log^{2.5}\rnd}]$, $y\in\cY_{i-1}$ and $\paux \in \widehat{\aset}_{i,y}$, it follows that
\begin{align}
\abs{\veo_{I''}^- - \veo_{I''}} = \abs{\eo_{I''}(Y_{I''-1}) - \eo_{I''}(Y_{I''-1}, \aux_{I''})} \leq
\const \cdot \frac{\sqrt{\log{\rnd}}}{\ml{I''+1}}
\leq \const \cdot \frac{\sqrt{\log{\rnd}}}{\ml{J+1}}.
\end{align}
We conclude that
\begin{align}
\bias_{\Bc''}(\game) 
&= \size{\ex{i \la I''}{\veo_{i}^- - \veo_{i}}}\\
&\leq \ex{i \la I''}{\size{\veo_{i}^- - \veo_{i}}}\nonumber\\
&\leq \sum_{i=0}^{\rnd-1}{\pr{L_i}\cdot \frac{\const \cdot \sqrt{\log{\rnd}}}{\ml{i+1}}}\nonumber\\
&\leq \const \cdot \sqrt{\log{\rnd}}\cdot \bigl(\sum_{i=0}^{\ceil{\frac{\rnd}{2}}-1}{\frac{\pr{L_i}}{\ml{i+1}}} + \sum_{i=\ceil{\frac{\rnd}{2}}}^{\rnd-1}{\frac{\pr{L_i}}{\ml{i+1}}}\bigr)\nonumber\\
&\leq \const \cdot \sqrt{\log{\rnd}}\cdot \bigl(
\frac{1}{\ml{\ceil{\frac{\rnd}{2}}}} + \frac{12\cdot \sqrt{\log{\rnd}}}{\rnd} \cdot \sum_{i=\ceil{\frac{\rnd}{2}}}^{\rnd-1}{\frac{1}{\ml{i+1}}}\bigr)\nonumber\\
&\leq \const \cdot \sqrt{\log{\rnd}}\cdot \bigl(\frac{2}{\rnd} + \frac{12\cdot \sqrt{\log{\rnd}}}{\rnd} \cdot \sum_{i=\ceil{\frac{\rnd}{2}}}^{\rnd-1}{\frac{1}{\rnd-i}}\bigr)\nonumber\\
&\leq \const \cdot \sqrt{\log{\rnd}}\cdot \bigl(\frac{2}{\rnd} + \frac{12\cdot \log^{1.5}\rnd}{\rnd}\bigr)\nonumber\\
&\leq 13 \const \cdot \frac{\log^{2}\rnd}{\rnd}.\nonumber
\end{align}
The fourth inequality holds by \cref{claim:constraintsNew,claim:linearproblem}, and the one before last
inequality holds since $\sum_{i=1}^{\floor{\frac{\rnd}{2}}}\frac{1}{i} \leq \log{\rnd}$. Hence, $\bias(\game) \leq  13 \const \cdot \frac{\log^{2}\rnd}{\rnd} + \frac4{\rnd} + \frac{12\cdot \log^{3}\rnd}{\rnd} \leq (13\const + 13)  \frac{\log^{3}\rnd}{\rnd}$.
\end{proof}

\subsubsection{Proving \texorpdfstring{\cref{proposition:biasBoundOneRound}}{Second Lemma}}\label{subsubsection:biasBoundOneRound}
The following claim states a more convenient, yet equivalent, expression for the ratio function.
\begin{claim}\label{claim:ratioDifferentVal}
For $x \in \cX_i$, $y \in \Supp(Y_{i-1})$ and $\paux \in \Supp(\aux_i \mid Y_{i-1}=y, X_i \in \cX_i)$, it holds that $$\ratioo_{i,y,\paux}(x) = \frac{\Pr[\aux_i=\paux \mid Y_{i-1} = y, X_i = x]}{\Pr[\aux_i=\paux \mid Y_{i-1} = y, X_i \in \cX_i]}.$$
\end{claim}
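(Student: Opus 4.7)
The plan is to verify the stated identity by a direct application of Bayes' rule, exploiting the fact that conditioning on $X_i = x$ for some specific $x \in \cX_i$ automatically implies the event $X_i \in \cX_i$.

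First I would rewrite the numerator of $\ratioo_{i,y,\paux}(x)$, namely $\Pr[X_i = x \mid Y_{i-1}=y, \aux_i=\paux, X_i \in \cX_i]$, by observing that since $x \in \cX_i$, we have
\begin{align*}
\Pr[X_i = x \mid Y_{i-1}=y, \aux_i=\paux, X_i \in \cX_i] = \frac{\Pr[X_i = x \mid Y_{i-1}=y, \aux_i=\paux]}{\Pr[X_i \in \cX_i \mid Y_{i-1}=y, \aux_i=\paux]}.
\end{align*}
Then I would apply Bayes' rule to both terms on the right-hand side, writing
\begin{align*}
\Pr[X_i = x \mid Y_{i-1}=y, \aux_i=\paux] &= \frac{\Pr[\aux_i=\paux \mid Y_{i-1}=y, X_i=x]\cdot \Pr[X_i=x\mid Y_{i-1}=y]}{\Pr[\aux_i=\paux\mid Y_{i-1}=y]},
\end{align*}
and similarly for the denominator with $X_i=x$ replaced by $X_i \in \cX_i$. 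The common factor $\Pr[\aux_i = \paux \mid Y_{i-1}=y]$ cancels.

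Next I would simplify the denominator of $\ratioo$, namely $\Pr[X_i=x\mid Y_{i-1}=y, X_i\in\cX_i]$, again using $x\in\cX_i$, to get
\begin{align*}
\Pr[X_i=x\mid Y_{i-1}=y, X_i\in\cX_i] = \frac{\Pr[X_i=x\mid Y_{i-1}=y]}{\Pr[X_i\in\cX_i\mid Y_{i-1}=y]}.
\end{align*}
Substituting these two expressions into the definition of $\ratioo_{i,y,\paux}(x)$, the factors $\Pr[X_i=x\mid Y_{i-1}=y]$ and $\Pr[X_i\in\cX_i\mid Y_{i-1}=y]$ cancel, leaving exactly $\Pr[\aux_i=\paux\mid Y_{i-1}=y, X_i=x] / \Pr[\aux_i=\paux\mid Y_{i-1}=y, X_i\in\cX_i]$, as claimed.

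This proof is entirely mechanical Bayesian manipulation, so I do not anticipate any real obstacle. The only point requiring minor care is to check that all conditional probabilities appearing as denominators are strictly positive: $\Pr[\aux_i=\paux\mid Y_{i-1}=y, X_i\in\cX_i] > 0$ by the hypothesis $\paux\in\Supp(\aux_i\mid Y_{i-1}=y, X_i\in\cX_i)$, and $\Pr[X_i=x\mid Y_{i-1}=y] > 0$ since $X_i$ is independent of $Y_{i-1}$ with $x\in\cX_i\subseteq\Supp(X_i)$, which in turn guarantees positivity of all the intermediate denominators.
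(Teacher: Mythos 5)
Your proposal is correct and follows essentially the same route as the paper: both proofs are the same mechanical Bayes-rule computation, using the fact that $x\in\cX_i$ lets one collapse the extra conditioning on $X_i\in\cX_i$ (the paper's intermediate identities are exactly your two simplification steps, just organized starting from the right-hand side instead of from the definition of $\ratioo$). Your added remark on positivity of the conditional denominators is a fine, if routine, bit of extra care.
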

\begin{proof}
	A simple calculation yields that
	\begin{align}\label{claim:ratioDifferentVal:1}
	\frac{\pr{\aux_i=\paux \mid Y_{i-1} = y, X_i = x}}{\pr{\aux_i=\paux \mid Y_{i-1} = y, X_i \in \cX_i}}
	= \frac{\pr{X_i = x \mid Y_{i-1} = y, \aux_i=\paux}}{\pr{X_i = x \mid Y_{i-1} = y}} \cdot \frac{\pr{X_i \in \cX_i \mid Y_{i-1} = y}}{\pr{X_i \in \cX_i \mid Y_{i-1} = y, \aux_i = \paux}}
	\end{align}
	Since $x \in \cX_i$, it follows that
	\begin{align}\label{claim:ratioDifferentVal:2}
	\pr{X_i = x \mid Y_{i-1} = y, X_i \in \cX_i} = \frac{\pr{X_i = x \mid Y_{i-1} = y}}{\pr{X_i \in \cX_i \mid Y_{i-1} = y}}
	\end{align}
	and
	\begin{align}\label{claim:ratioDifferentVal:3}
	\pr{X_i = x \mid Y_{i-1} = y, X_i \in \cX_i, \aux_i = \paux} = \frac{\pr{X_i = x \mid Y_{i-1} = y, \aux_i = \paux}}{\pr{X_i \in \cX_i \mid Y_{i-1} = y, \aux_i = \paux}}
	\end{align}
	We conclude that
	\begin{align*}
	\frac{\pr{\aux_i=\paux \mid Y_{i-1} = y, X_i = x}}{\pr{\aux_i=\paux \mid Y_{i-1} = y, X_i \in \cX_i}}
	= \frac{\pr{X_i = x \mid Y_{i-1} = y, X_i \in \cX_i}}{\pr{X_i = x \mid Y_{i-1} = y, X_i \in \cX_i, \aux_i = \paux}}
	= \ratioo_{i,y,\paux}(x).
	\end{align*}
\end{proof}

We now ready to prove \cref{proposition:biasBoundOneRound}.
\begin{proof}[Proof of \cref{proposition:biasBoundOneRound}]
Let $p = \Pr[X_i \in \cX_i] = 1-q$ and $p_\paux = \Pr[X_i \in \cX_i \mid Y_{i-1} = y, \aux_i= \paux] = 1-q_\paux$.
Then,
\begin{align}\label{lemma:DifferenceOneRound:1}
\eo_{i}(y) &=\Pr[Y_{\rnd}\geq 0 \mid Y_{i-1} = y]\\
&= p \cdot \Pr[Y_{\rnd}\geq 0 \mid Y_{i-1} = y,X_i \in \cX_i] +
q \cdot \Pr[Y_{\rnd}\geq 0 \mid Y_{i-1} = y,X_i \notin \cX_i]\nonumber\\
&= p \cdot \ex{x\la X_i\mid x\in \cX_i}{\Pr[Y_{\rnd}\geq 0 \mid Y_{i-1} = y, X_i = x]}
+ q \cdot p',\nonumber\\
&= p \cdot \ex{x\la X_i\mid x\in \cX_i}{\eo_{i+1}(y+x)}
+ q \cdot p',\nonumber\\
&= p_\paux \cdot \ex{x\la X_i\mid x\in \cX_i}{\eo_{i+1}(y+x)} + (p - p_\paux)\cdot \ex{x\la X_i\mid x\in \cX_i}{\eo_{i+1}(y+x)}
+ q \cdot p',\nonumber
\end{align}
for $p'=\Pr[Y_{\rnd}\geq 0 \mid Y_{i-1} = y,X_i \notin \cX_i]$. In addition,
\begin{align}\label{lemma:DifferenceOneRound:2}
\eo_{i}(y,\paux) &= \Pr[Y_{\rnd} \geq 0 \mid Y_{i-1} = y, \aux_i= \paux]\\
&= p_\paux \cdot \Pr[Y_{\rnd} \geq 0 \mid Y_{i-1} = y, \aux_i= \paux,X_i \in \cX_i]
+ q_\paux \cdot \Pr[Y_{\rnd} \geq 0 \mid Y_{i-1} = y, \aux_i= \paux,X_i \notin \cX_i]\nonumber\\
&= p_\paux \cdot \frac{\Pr[Y_{\rnd} \geq 0 \land \aux_i= \paux \mid Y_{i-1} = y,X_i \in \cX_i]}{\Pr[\aux_i= \paux \mid Y_{i-1} = y,X_i \in \cX_i]}
+ q_\paux \cdot p''\nonumber\\
&= p_\paux \cdot \frac{\ex{x\la X_i\mid x\in \cX_i}{\Pr[Y_{\rnd} \geq 0 \land \aux_i= \paux \mid Y_{i-1} = y, X_i = x]}}{\Pr[\aux_i= \paux \mid Y_{i-1} = y,X_i \in \cX_i]}
+ q_\paux \cdot p''\nonumber\\
&= p_\paux \cdot \frac{\ex{x\la X_i\mid x\in \cX_i}{\Pr[Y_{\rnd} \geq 0 \mid Y_{i-1} = y, X_i = x] \cdot \Pr[\aux_i= \paux \mid Y_{i-1} = y, X_i = x]}}{\Pr[\aux_i= \paux \mid Y_{i-1} = y,X_i \in \cX_i]} + q_\paux \cdot p''\nonumber\\
&= p_\paux \cdot \ex{x\la X_i\mid x\in \cX_i}{\eo_{i+1}(y+x) \cdot \frac{\Pr[\aux_i= \paux \mid Y_{i-1} = y, X_i = x]}{\Pr[\aux_i= \paux \mid Y_{i-1} = y,X_i \in \cX_i]}} + q_\paux \cdot p''\nonumber\\
&= p_\paux \cdot \ex{x\la X_i\mid x\in \cX_i}{\eo_{i+1}(y+x) \cdot \ratioo_{i,y,\paux}(x)} + q_\paux \cdot p'',\nonumber
\end{align}
for $p''=\Pr[Y_{\rnd} \geq 0 \mid Y_{i-1} = y, \aux_i= \paux,X_i \notin \cX_i]$, where the last equality holds by \cref{claim:ratioDifferentVal}.
Combing \cref{lemma:DifferenceOneRound:1,lemma:DifferenceOneRound:2} yields that
\begin{align*}
\lefteqn{\size{\eo_{i}(y) - \eo_{i}(y,\paux)}}\\
&\leq p_\paux \cdot \size{\ex{x\la X_i\mid x\in \cX_i}{\eo_{i+1}(y+x) \cdot (1 - \ratioo_{i,y,\paux}(x))}} + \abs{p - p_\paux} + q + q_\paux\\
&\leq \size{\ex{x\la X_i\mid x\in \cX_i}{(\eo_{i+1}(y+x) - \eo_{i+1}(y)) \cdot (1 - \ratioo_{i,y,\paux}(x))}} + \abs{q - q_\paux} + q + q_\paux\\
&\leq \ex{x\la X_i\mid x\in \cX_i}{\size{\eo_{i+1}(y+x) - \eo_{i+1}(y)} \cdot \size{1 - \ratioo_{i,y,\paux}(x)}}
+ 2\cdot(q + q_\paux),
\end{align*}
where the second inequality holds since
\begin{align*}
\lefteqn{\size{\ex{x\la X_i\mid x\in \cX_i}{\eo_{i+1}(y+x) \cdot (1 -\ratioo_{i,y,\paux}(x))}}}\\
&\leq \size{\ex{x\la X_i\mid x\in \cX_i}{(\eo_{i+1}(y) + \eo_{i+1}(y+x) - \eo_{i+1}(y)) \cdot (1 - \ratioo_{i,y,\paux}(x))}}\\
&\leq \size{\ex{x\la X_i\mid x\in \cX_i}{\eo_{i+1}(y) \cdot (1 - \ratioo_{i,y,\paux}(x))}} +
\size{\ex{x\la X_i\mid x\in \cX_i}{(\eo_{i+1}(y+x) - \eo_{i+1}(y)) \cdot (1 - \ratioo_{i,y,\paux}(x))}}\\
&= \size{\eo_{i+1}(y) \cdot (1 - \ex{x\la X_i\mid x\in \cX_i}{\ratioo_{i,y,\paux}(x)})} +
\size{\ex{x\la X_i\mid x\in \cX_i}{(\eo_{i+1}(y+x) - \eo_{i+1}(y)) \cdot (1 - \ratioo_{i,y,\paux}(x))}}\\
&= \size{\ex{x\la X_i\mid x\in \cX_i}{\eo_{i+1}(y) \cdot (1 - 1)}} +
\size{\ex{x\la X_i\mid x\in \cX_i}{(\eo_{i+1}(y+x) - \eo_{i+1}(y)) \cdot (1 - \ratioo_{i,y,\paux}(x))}}\\
&=  \size{\ex{x\la X_i\mid x\in \cX_i}{(\eo_{i+1}(y+x) - \eo_{i+1}(y)) \cdot (1 - \ratioo_{i,y,\paux}(x))}}.
\end{align*}
\end{proof}

\subsection{Bounding the Simple Binomial Game}\label{sec:Bernoulli}
\remove{
Recall that $a\pm b$ stands for the interval $[a-b,a+b]$ and that $f(\cs_1,\ldots,\cs_k) \eqdef \set{f(x_1,\ldots,x_j) \colon x_i\in \cs_i}$, \eg $f(1\pm 0.1) = \set{f(x) \colon x\in [.9,1.1]}$.}
In this section we prove \cref{lemma:ValueBiomialGame} restated below.
\begin{definition}[simple game -- Restatement of \cref{def:ValueBiomialGame}]\label{bounds:def:ValueBiomialGame}
	\SimpleGameDef
\end{definition}

\begin{lemma}[Restatement of \cref{lemma:ValueBiomialGame}]\label{bounds:lemma:ValueBiomialGame}
	\SimpleGameLemma<\cref{bounds:def:ValueBiomialGame}>
\end{lemma}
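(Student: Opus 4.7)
The plan is to apply our main tool, \cref{lemma:Final}, to the simple game $\game_{f,\rnd,\eps}$.  For each $i \in [\rnd - \floor{\log^{2.5}\rnd}]$ and each $y \in \cY_{i-1}$ it suffices to exhibit a set $\aset_{i,y}\subseteq \zo$ satisfying the two assumptions of that lemma; assumption~(2) is a pointwise bound on $\abs{1-\ratioo_{i,y,\paux}(x)}$ for typical $x\in \cX_i$.  I may assume throughout that $\abs{\eps}\leq 4\sqrt{\log\rnd/\ms{1}}$, since otherwise Hoeffding's inequality (\cref{claim:Hoeffding}) forces $\veo_1^- \notin [\rnd^{-2},1-\rnd^{-2}]$ and the conclusion is immediate from \cref{bounds:lemma:DeterminedGames}.

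Via \cref{claim:ratioDifferentVal}, and writing $p_i^y(x) := \vBeroo{\ms{i+1},\eps}(-y-x)$ and $\bar p_i^y := \Ex_{x'\la X_i \mid x'\in \cX_i}[p_i^y(x')]$, the two ratios become $\ratioo_{i,y,1}(x) = p_i^y(x)/\bar p_i^y$ and $\ratioo_{i,y,0}(x) = (1-p_i^y(x))/(1-\bar p_i^y)$; in particular
\begin{align*}
\abs{1-\ratioo_{i,y,\paux}(x)} = \frac{\abs{p_i^y(x) - \bar p_i^y}}{\min(\bar p_i^y,\, 1-\bar p_i^y)}.
\end{align*}
The total-law identity $\Ex_{x'\la X_i}[p_i^y(x')] = \vBeroo{\ms{i},\eps}(-y)$, combined with \cref{claim:XiiPurpose}, gives $\bar p_i^y = \vBeroo{\ms{i},\eps}(-y) + O(\rnd^{-3})$.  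A second-order Taylor expansion of $p_i^y$ around $x=0$, justified via the Gaussian density estimate of \cref{prop:binomProbEstimation}, yields
\begin{align*}
p_i^y(x) - p_i^y(0) = -\phi(z)\cdot\frac{x}{\sqrt{\ms{i+1}}} + \phi(z)\cdot z\cdot\frac{x^2}{2\ms{i+1}} + O(\abs{u}^3),
\end{align*}
where $u := x/\sqrt{\ms{i+1}}$, $z := -(y+\eps\ms{i+1})/\sqrt{\ms{i+1}}$, and $\phi(t) := \frac{1}{\sqrt{2\pi}}e^{-t^2/2}$ is the standard normal density.  The constraints $y\in \cY_{i-1}$ and $\abs{\eps}\leq 4\sqrt{\log\rnd/\ms{1}}$ give $\abs{z} = O(\sqrt{\log\rnd})$, while $i\leq \rnd - \floor{\log^{2.5}\rnd}$ gives $\abs{u} = O(1/\log^{3/4}\rnd)$, so the truncation error is negligible.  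Averaging against \cref{fact:E_x_m} produces an analogous expansion for $\bar p_i^y - p_i^y(0)$ whose leading terms are of size $\phi(z)\cdot O(\abs{\eps}\ml{i}/\sqrt{\ms{i+1}} + \abs{z}\ml{i}/\ms{i+1})$.

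I will then split into cases on the magnitude of $\bar p_i^y$.  In the middle regime $\bar p_i^y\in [\rnd^{-2}, 1-\rnd^{-2}]$, take $\aset_{i,y}= \zo$ (so assumption~(1) is vacuous) and invoke the Mills-type bound $\phi(z)/\min(\bar p_i^y, 1-\bar p_i^y) = O(\sqrt{\log\rnd})$, valid because $\bar p_i^y\geq \rnd^{-2}$ forces $\abs{z}=O(\sqrt{\log\rnd})$.  In the extreme regime $\bar p_i^y < \rnd^{-2}/2$ (and symmetrically for the upper tail), take $\aset_{i,y}=\{0\}$; then $\Pr[\aux_i\neq 0 \mid Y_{i-1}=y] \leq \bar p_i^y + \Pr[X_i\notin \cX_i] \leq \rnd^{-2}$ verifies assumption~(1), while $\min(\bar p_i^y, 1-\bar p_i^y) = 1-\bar p_i^y = \Theta(1)$ makes the ratio bound immediate.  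In either case, substituting $\ms{i+1}\asymp \ml{i}\ml{i+1}/2$ and $\ml{i+1}\geq \log^{5/2}\rnd$ into the Taylor estimate reduces it to the required form $\const\cdot\sqrt{\log\rnd/\ml{i+1}}\cdot(\abs{x}/\sqrt{\ml{i}} + 1)$.  The main obstacle is the middle regime: the denominator $\bar p_i^y$ can be as small as $\rnd^{-2}$, so the Mills factor $\phi(z)/\bar p_i^y$ costs a full $\sqrt{\log\rnd}$; combined with the $\log^2\rnd$ already inherited from \cref{lemma:Final}, this is exactly what produces the final $\log^3\rnd/\rnd$ bound.
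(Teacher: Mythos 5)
Your proposal is correct in substance, but it verifies the key hypothesis of \cref{lemma:Final} by a genuinely different route than the paper. The paper never analyzes the binary hint directly: it first invokes the data-processing lemma (\cref{lemma:InformationIncreaseValue}) to replace $f$ by the richer hint $\tau(i,y)=y+t$ with $t\la\Beroo{\ms{i+1},\eps}$, so that via \cref{claim:ratioDifferentVal} the quantity $\ratioo_{i,y,\paux}(x)$ becomes an average of ratios of binomial \emph{point masses} $\Beroo{\ms{i+1},\eps}(t-x')/\Beroo{\ms{i+1},\eps}(t-x)$, which are uniformly $1\pm O(\cdot)$ over a single Hoeffding-typical hint set (\cref{prop:binomProbRelation,prop:mainBound}); no case analysis on the size of the conditional success probability is needed, and the $\sqrt{\log\rnd}$ loss enters only through the typical range of $t$. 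You instead keep the $\zo$ hint and control the tail ratio $p_i^y(x)/\bar p_i^y$ directly, paying the $\sqrt{\log\rnd}$ through a Mills-ratio bound and splitting into a middle and an extreme regime; this is more self-contained for this particular lemma, at the cost of tail estimates and a case split that the paper's reduction avoids (the paper's template also has the advantage of being reused almost verbatim for the hypergeometric and vector games). Two steps of your sketch need tightening when written out: (i) the remainder of your Taylor expansion must carry the Gaussian density factor, \ie it is $O\bigl(\phi(z)(1+z^2)\abs{u}^3\bigr)$ rather than an absolute $O(\abs{u}^3)$ --- the unweighted form is too large to divide by $\bar p_i^y$ when $\bar p_i^y$ is polynomially small (this comes for free if you really obtain the expansion by summing the point estimates of \cref{prop:binomProbEstimation}); and (ii) the bound $\phi(z)/\bar p_i^y=O(\sqrt{\log\rnd})$ does not follow merely from ``$\bar p_i^y\ge \rnd^{-2}$ forces $\abs{z}=O(\sqrt{\log\rnd})$'' --- you also need the lower bound $\bar p_i^y\gtrsim \phi(z)/(\abs{z}+1)$, \ie a binomial-tail-to-Gaussian-tail estimate in the spirit of \cref{app:missinProofs:gameValueEstimation} (note also that, as written, your two regimes leave the sliver $\bar p_i^y\in(\rnd^{-2}/2,\rnd^{-2})$ uncovered; extend the middle-regime argument down to $\rnd^{-2}/2$). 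With these repairs your argument goes through and yields the same $O(\log^3\rnd/\rnd)$ bound as the paper's proof.
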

\begin{proof}
We  view the function $f$ as the composition $g \circ \tau$, where $\tau(i,y)$ outputs $y + t$, for  $t \la\Beroo{\ms{i+1},\eps}$, and $g(y + t)$ outputs $1$ if $y+t \geq 0$, and zero otherwise.
Using \cref{lemma:InformationIncreaseValue}, for bounding the value of  $\game_{f,\rnd,\eps}$ it
suffices to bound that of $\game_{\tau,\rnd,\eps}$. We would also like to assume that $\size{\eps} \leq  4\sqrt{\frac{\log{\rnd}}{\ms{1}}}$. Indeed, if this is not the case, then $\veo_1^- \notin  [\frac1{\rnd^2} ,1 - \frac1{\rnd^2}]$, and the proof follows  by \cref{bounds:lemma:DeterminedGames}. Therefore, in the following we assume that $\size{\eps}  \leq 4 \sqrt{\frac{\log{\rnd}}{\ms{1}}}$.

In the following, we fix $i\in [\rnd-\floor{\log^{2.5}\rnd}]$ and $y \in \cY_{i-1}$, where $\cY_{i-1}$ is according to \cref{lemma:Final}. Let
\begin{align}
\aset_{i,y} = \set{\paux \in \Z \colon \abs{\paux-y} \leq 8\cdot \sqrt{\log{\rnd}\cdot \ms{i}}},
\end{align}
Since $(\aux_i - y)$ is distributed according to $\Beroo{\ms{i},\eps}$ (given that $Y_{i-1} = y$) and since $\size{\eps \cdot \ms{i}} \leq 4\cdot \sqrt{\log{\rnd} \cdot\ms{i}}$, Hoeffding's inequality yields that
\begin{align}\label{eq:Simple:ProbA}
\pr{\aux_i \notin \aset_{i,y} \mid Y_{i-1} = y}
&= \pr{\abs{\aux_i - y} > 8\cdot \sqrt{\log{\rnd}\cdot \ms{i}} \mid Y_{i-1} = y}\\
&\leq \pr{\abs{(\aux_i - y) - \eps\cdot \ms{i}} > 4\cdot \sqrt{\log{\rnd}\cdot \ms{i}} \mid Y_{i-1} = y}\nonumber\\
&\leq 2\cdot \exp\left(-\frac{16 \cdot \ms{i}\log{\rnd}}{2\cdot \ms{i}}\right)\nonumber\\
&\leq \frac1{\rnd^2}.\nonumber
\end{align}
Fix $\paux = y + t \in \aset'_{i,y} \eqdef \aset_{i,y} \bigcap \Supp(\aux_i \mid Y_{i-1}=y, X_i \in \cX_i)$, and let $t_0 = t - \eps \cdot \ms{i}$. Note that $\size{t_0} = \size{t - \eps \cdot \ms{i}} = \size{\paux - y - \eps \cdot \ms{i}} \leq \size{\paux-y} + \size{\eps \cdot \ms{i}} \leq 12\sqrt{\ms{i}\log{\rnd}}$. In addition, note that since $y + t \in \aset'_{i,y}$, there exists $x_0 \in \cX_i$ such that $t-x_0 \in \Supp(\Beroo{\ms{i+1},\eps})$, where $\cX_i$ is according to \cref{def:ratio}. Therefore, we can deduce that $t-x \in \Supp(\Beroo{\ms{i+1},\eps})$ for every $x\in \cX_i$. The latter holds since $\size{t-x} \leq \size{t} + \size{x} < (8+4)\cdot \sqrt{\log{\rnd}\cdot \ms{i}} < \ms{i+1}$ (recalling that $i\in [\rnd-\floor{\log^{2.5}\rnd}]$ for large $\rnd$) and since $x$ has the same parity as $x_0$ (all the elements of $\cX_i$ has the same parity since $\cX_i\subseteq \Supp(X_i)$).

Fix $x \in \cX_i$ and compute
\begin{align}\label{eq:Simple:oneOverRatio}
\frac{1}{\ratioo_{i,y,\paux}(x)}&= \frac{\Pr[\aux_i=y+t \mid Y_{i-1} = y, X_i \in \cX_i]}{\Pr[\aux_i=y+t \mid Y_{i-1} = y, X_i = x]}\\
&= \ex{x' \la X_i\mid x'\in \cX_i}{\frac{\Beroo{\ms{i+1},\eps}(t-x')}{\Beroo{\ms{i+1},\eps}(t-x)}}\nonumber\\
&\in \ex{x' \la X_i\mid x'\in \cX_i}{\exp\left(\frac{- 2\cdot t_0\cdot x + x^2 + 2\cdot t_0\cdot x' - x'^2}{2\cdot \ms{i+1}}\right)} \cdot \left(1 \pm \xi_1 \cdot \frac{\log^{1.5}\rnd}{\sqrt{\ms{i+1}}}\right)\nonumber\\
&\subseteq \left(1 \pm \xi_2\cdot \sqrt{\frac{\log{\rnd}}{\ml{i+1}}}\bigl(1 + \frac{\abs{x}}{\sqrt{\ml{i}}}\bigr)\right)\cdot \left(1 \pm \xi_1 \cdot \frac{\log^{1.5}\rnd}{\sqrt{\ms{i+1}}}\right)\nonumber\\
&\subseteq 1 \pm \xi_3\cdot \sqrt{\frac{\log{\rnd}}{\ml{i+1}}}\bigl(1 + \frac{\abs{x}}{\sqrt{\ml{i}}}\bigr),\nonumber
\end{align}
for some constants $\xi_1,\xi_2,\xi_3\in \R^+$ (independent of the game). The first transition holds by \cref{claim:ratioDifferentVal}, the third one by \cref{prop:binomProbRelation}, and the fourth one by \cref{prop:mainBound}.

Recalling that $i \leq \rnd-\log^{2.5}\rnd$, it follows that
\begin{align}
 \xi_3\cdot \sqrt{\frac{\log{\rnd}}{\ml{i+1}}}\cdot \bigl(1 + \frac{\abs{x}}{\sqrt{\ml{i}}}\bigl) \in O\left(\frac{\log{\rnd}}{\sqrt{\ml{i+1}}}\right) \in o(1)
\end{align}
Since $\frac{1}{1\pm z} \subseteq 1\pm 2z$ for every $z \in (-0.5,0.5)$, we deduce from \cref{eq:Simple:oneOverRatio} that
\begin{align}
\ratioo_{i,y,\paux}(x)
&\in 1 \pm 2\xi_3 \cdot \sqrt{\frac{\log{\rnd}}{\ml{i+1}}} \cdot  \bigl(1 + \frac{\abs{x}}{\sqrt{\ml{i}}}\bigl)
\end{align}
and thus
\begin{align}
\abs{1 - \ratioo_{i,y,\paux}(x)} \leq  2\xi_3 \cdot \sqrt{\frac{\log{\rnd}}{\ml{i+1}}} \cdot  \bigl(1 + \frac{\abs{x}}{\sqrt{\ml{i}}}\bigl)
\end{align}
Finally, since the above holds for every $i\leq \rnd-\log^{2.5}\rnd$, $y\in \cY_{i-1}$, $\paux\in \aset_{i,y}$ and $x\in \cX_i$, and recalling \cref{eq:Simple:ProbA}, we can apply \cref{lemma:Final} to get that $\bias(\game_{\tau,\rnd,\eps}) \leq \xi\cdot \frac{\log^{3}\rnd}{\rnd}$, for some universal constant  $\xi>0$.
\end{proof}

\subsection{Bounding the Hypergeometric Binomial Game}\label{sec:HG}
In this section we prove \cref{lemma:ValueHgGame} restated below.\HyperRecalls
\begin{definition}[hypergeometric game -- Restatement of \cref{def:ValueHgGame}]\label{bounds:def:ValueHgGame}
	\HyperGameDef
\end{definition}

\begin{lemma}[Restatement of \cref{lemma:ValueHgGame}]\label{bounds:lemma:ValueHgGame}
	\HyperGameLemma<\cref{bounds:def:ValueHgGame}>
\end{lemma}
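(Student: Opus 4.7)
The plan is to mimic the proof of \cref{bounds:lemma:ValueBiomialGame} (the simple game) with the binomial estimates replaced by their hypergeometric counterparts. Write $f = g \circ \tau$, where $\tau(i,y)$ outputs $y + t$ for $t \la \Hyp{2 \ms{1},p,\ms{i+1}}$ and $g(z)=1$ iff $z \ge 0$. Since $f(i,y)$ outputs $1$ with probability $\vHyp{2 \ms{1},p,\ms{i+1}}(-y) = \pr{y + t \ge 0}$, the two functions have the same output distribution, so by \cref{lemma:InformationIncreaseValue} it suffices to bound $\bias(\game_{\tau,\rnd,\eps})$. As in the simple-game proof, we may assume $\abs{\eps} \le 4\sqrt{\log \rnd / \ms{1}}$; otherwise $\veo_1^-$ is already essentially deterministic and \cref{bounds:lemma:DeterminedGames} gives the bound directly.

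The main work is to verify the hypotheses of \cref{lemma:Final} for $\game_{\tau,\rnd,\eps}$. Fix $i \in [\rnd - \floor{\log^{2.5} \rnd}]$ and $y \in \cY_{i-1}$, let $\mu_i = \tfrac{p \cdot \ms{i+1}}{2\ms{1}}$ be the mean of $\Hyp{2\ms{1},p,\ms{i+1}}$, and define
\begin{align*}
\aset_{i,y} = \left\{ \paux \in \Z : \abs{\paux - y - \mu_i} \le 8\sqrt{\log \rnd \cdot \ms{i+1}} \right\}.
\end{align*}
Since $\abs{p} \le \const \sqrt{\ms{1} \log \rnd}$, the hypergeometric Hoeffding inequality (\cref{fact:hyperHoeffding}) yields $\pr{\aux_i \notin \aset_{i,y} \mid Y_{i-1} = y} \le 1/\rnd^2$, verifying assumption \ref{lemma:Final:1}. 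For assumption \ref{lemma:Final:2}, fix $\paux = y + t \in \aset'_{i,y}$ and $x \in \cX_i$; by \cref{claim:ratioDifferentVal},
\begin{align*}
\frac{1}{\ratioo_{i,y,\paux}(x)} = \ex{x' \la X_i \mid x' \in \cX_i}{\frac{\Hyp{2\ms{1},p,\ms{i+1}}(t-x')}{\Hyp{2\ms{1},p,\ms{i+1}}(t-x)}}.
\end{align*}
An analogue of \cref{prop:hyperProbRelation} expresses the ratio inside the expectation as $(1 \pm O(\log^{1.5} \rnd / \sqrt{\ms{i+1}})) \cdot \exp\bigl(\tfrac{-2(t-\mu_i)x + x^2 + 2(t-\mu_i)x' - x'^2}{\ms{i+1}}\bigr)$; applying \cref{prop:mainBound} to the resulting expectation (with $t-\mu_i$ playing the role of the shifted bias term) then gives $\abs{1 - \ratioo_{i,y,\paux}(x)} \le O(\sqrt{\log \rnd / \ml{i+1}})(1 + \abs{x}/\sqrt{\ml{i}})$, and \cref{lemma:Final} concludes the argument.

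The main obstacle is obtaining the hypergeometric ratio estimate for $\Hyp{2\ms{1}, p, \ms{i+1}}$: the proposition \cref{prop:hyperProbRelation} stated in the preliminaries handles only the balanced case $\Hyp{2n, p, n}$, whereas here $\ms{i+1}$ can be much smaller than $\ms{1}$. Extending the estimate is a routine but careful Stirling-type calculation: write $\Hyp{2n, p, \ell}(k)$ as a ratio of binomial coefficients, take logarithms, and Taylor-expand around the mean $\mu = \ell(n+p/2)/(2n)$, tracking errors for $\abs{k - \mu} = O(\sqrt{\ell \log \rnd})$. Once this generalization is in hand, the remainder of the argument follows the simple-game proof essentially line by line, with the hypergeometric concentration playing the role of binomial concentration and the restriction $\abs{p} \le \const \sqrt{\ms{1} \log \rnd}$ ensuring that all the Taylor expansions stay in their regime of validity.
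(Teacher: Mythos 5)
Your overall skeleton (reduce to $\game_{\tau,\rnd,\eps}$, verify the two hypotheses of \cref{lemma:Final}, finish with \cref{claim:ratioDifferentVal} and \cref{prop:mainBound}) is the right one, but it is a genuinely different route from the paper's, and the step you defer is exactly where the difficulty lies. The paper never proves, nor needs, a ratio estimate for the \emph{unbalanced} distribution $\Hyp{2\ms{1},p,\ms{i+1}}$: it instead enriches the hint, viewing the draw from $\Hyp{2\ms{1},p,\ms{i+1}}$ as a two-stage sample --- first a uniform $2\ms{i+1}$-size subset of the $2\ms{1}$ coordinates, whose weight $p'$ is revealed as part of the hint, then $\ms{i+1}$ elements of that subset --- so that by \cref{lemma:InformationIncreaseValue} it suffices to bound the game whose hint is $(p',y+t)$ with $t\la\Hyp{2\ms{i+1},p',\ms{i+1}}$, a \emph{balanced} hypergeometric that is exactly the case covered by \cref{prop:hyperProbRelation}. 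Your plan requires extending that proposition to general sample sizes, which is precisely what the paper's trick is designed to avoid.

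The concrete gap is that this extension is asserted rather than proven, and the form you wrote down is incorrect in the main regime. The variance of $\Hyp{2\ms{1},p,\ms{i+1}}$ is, up to lower-order corrections, $\ms{i+1}\bigl(1-\frac{\ms{i+1}}{2\ms{1}}\bigr)$, so the Gaussian-type exponent in the ratio should have denominator $2\ms{i+1}\bigl(1-\frac{\ms{i+1}}{2\ms{1}}\bigr)$; this interpolates between $\ms{i+1}$ in the first rounds (where $\ms{i+1}\approx\ms{1}$) and $2\ms{i+1}$ once $\ms{i+1}\ll\ms{1}$, where sampling without replacement is essentially the binomial $\Beroo{\ms{i+1},p/2\ms{1}}$. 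Your display uses denominator $\ms{i+1}$ throughout, i.e.\ it is off by a factor approaching $2$ in the exponent for all but the earliest rounds. Because the exponent is $o(1)$ on $\aset'_{i,y}\times\cX_i$, this constant-factor error would not destroy the final bound $\abs{1-\ratioo_{i,y,\paux}(x)}\le O\bigl(\sqrt{\log\rnd/\ml{i+1}}\bigr)\bigl(1+\abs{x}/\sqrt{\ml{i}}\bigr)$ once the estimate is actually established with the correct variance and with the error terms in $\abs{p},\abs{t},\ms{i+1},\ms{1}$ tracked in the style of \cref{prop:hyperProbEstimation}; but as written the key intermediate claim is false, and the ``routine Stirling computation'' is the entire technical content you would still need to supply (it is more delicate than the balanced case, since the three binomial coefficients no longer share the same scale). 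Either carry out that generalized local estimate with the correct normalization, or adopt the paper's hint-augmentation step, which lets you reuse the balanced-case proposition verbatim and makes the rest of your argument go through essentially as you describe.
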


\begin{proof}

We view the function $f$ as $g \circ \tau$, for $\tau(i,y)$ being the output of the following process. A random subset $\cI$ of size $2\cdot\ms{i+1}$ is uniformly chosen from $[2\cdot\ms{1}]$, where the output of $\tau$ is set to $(\w(\vct_{\cI}),y+t)$, for $\vct \in \oo^{2\cdot\ms{1}}$ with $\w(\vct) = p$ and for $t \la \Hyp{2\ms{i+1},\w(\vct,\cI),\ms{i+1}}$. The function $g$ on input $(p',y')$ outputs one if $y' \geq 0$, and zero otherwise. Since $\pr{g \circ \tau(i,y) = 1} = \pr{f(i,y) = 1}$, by \cref{lemma:InformationIncreaseValue} it suffices to bound the bias of the game $\game_{\tau,\rnd,\eps}$. In addition, as in the proof of  \cref{bounds:lemma:ValueBiomialGame}, we can assume \wlg that $\size{\eps} \leq  4\sqrt{\frac{\log{\rnd}}{\ms{1}}}$. Fix $i\in [\rnd-\floor{\log^{2.5}\rnd}]$ and $y \in \cY_{i-1}$, where $\cY_{i-1}$ is according to \cref{lemma:Final}. Let
\begin{align*}
\aset_{i,y} = \set{(p',y') \in \Z^2 \colon  \abs{p'},\abs{y'-y} \leq (\const+8) \sqrt{\log{\rnd} \cdot \ms{i+1}}},
\end{align*}

Since $\aux_i = (p',y+t)$ for $p' \la \Hyp{2\ms{1},p,2\ms{i+1}}$ and $t \la \Beroo{\ml{i},\eps} + \Hyp{2\ms{i+1},p',\ms{i+1}}$ (given that $Y_{i-1} = y$), it follows that
\begin{align*}
\lefteqn{\pr{\abs{\aux_i[0]} > (\const+8) \sqrt{\log{\rnd} \cdot \ms{i+1}}}}\\
&\leq \pr{\abs{\aux_i[0] - \frac{p\cdot \ms{i+1}}{\ms{1}}} > 8 \sqrt{\log{\rnd} \cdot \ms{i+1}}}\\
&\leq \exp\left(-\frac{64 \cdot \ms{i+1}\log{\rnd}}{2\cdot \ms{i+1}}\right)\\
&\leq \frac{1}{\rnd^4},
\end{align*}
where the second inequality holds by Hoeffding's inequality for hypergeometric distribution (\cref{fact:hyperHoeffding}). In addition, given that $\aux_i[0] = p'$ for $\abs{p'} \leq (\const+8) \sqrt{\log{\rnd} \cdot \ms{i+1}}$, it holds that $\bigl(\aux_i[1] - (y+X_i)\bigr)$ is distributed according to $\Hyp{2\ms{i+1},p',\ms{i+1}}$. This yields that
\begin{align*}
\lefteqn{\pr{\abs{\aux_i[1] - y} > (\const+8) \sqrt{\log{\rnd} \cdot \ms{i+1}} \mid Y_{i-1} = y}}\\
&\leq \pr{\abs{\aux_i[1] - (y+X_i)} > (\const+7) \sqrt{\log{\rnd} \cdot \ms{i+1}} \mid Y_{i-1} = y}\\
&\leq \pr{\abs{\bigl(\aux_i[1] - (y+X_i)\bigr) - \frac{p'}{2}} > 3 \sqrt{\log{\rnd} \cdot \ms{i+1}} \mid Y_{i-1} = y}\\
&\leq \exp\left(-\frac{9 \cdot \ms{i+1}\log{\rnd}}{2\cdot \ms{i+1}}\right)\\
&\leq \frac{1}{\rnd^4}.
\end{align*}
The first inequality holds since $\abs{X_i} \leq \ml{i} < \sqrt{\log{\rnd} \cdot \ms{i+1}}$, the second one holds since $\frac{\abs{p'}}{2} < \frac{\const+8}{2} \sqrt{\log{\rnd} \cdot \ms{i+1}}$, and the third one by Hoeffding's inequality for hypergeometric distribution (\cref{fact:hyperHoeffding}). It follows that
\begin{align}\label{eq:Hyp:ProbA}
\pr{\aux_i \notin \aset_{i,y} \mid Y_{i-1}=y} \leq \frac{1}{\rnd^4} + \frac{1}{\rnd^4} < \frac1{\rnd^2}
\end{align}

Fix $\paux = (p',y +t) \in \aset'_{i,y} \eqdef  \aset_{i,y} \bigcap \Supp(\aux_i \mid Y_{i-1} = y, X_i \in \cX_i)$, where $\cX_i$ is according to \cref{def:ratio}. Note that by the same arguments introduced in the analogous case in \cref{sec:Bernoulli}, it holds that $t-x \in \Supp(\Hyp{2\ms{i+1},p',\ms{i+1}})$ for every $x \in \cX_i$.

Fix $x \in \cX_i$ and compute
\begin{align}\label{eq:Hyper:oneOverRatio}
\frac{1}{\ratioo_{i,y,\paux}(x)}&= \frac{\Pr[\aux_i=(p',y + t) \mid Y_{i-1} = y, X_i \in \cX_i]}{\Pr[\aux_i=(p',y + t) \mid Y_{i-1} = y, X_i=x]}\\
&= \ex{x' \la X_i\mid x'\in \cX_i}{\frac{\Hyp{2\ms{i+1},p',\ms{i+1}}(t-x')}{\Hyp{2\ms{i+1},p',\ms{i+1}}(t-x)}}\nonumber\\
&\in \ex{x' \la X_i\mid x'\in \cX_i}{\exp\left(\frac{-2(t-\frac{p'}2)x + x^2 + 2(t-\frac{p'}2)x' - x'^2}{\ms{i+1}}\right)}\cdot
\left(1 \pm  \varphi_1(\const) \cdot \frac{\log^{1.5}\rnd}{\sqrt{\ms{i+1}}}\right)\nonumber\\
&\subseteq \left(1 \pm \varphi_2(\const) \sqrt{\frac{\log{\rnd}}{\ml{i+1}}}\bigl(1 + \frac{\abs{x}}{\sqrt{\ml{i}}}\bigr)\right)\cdot \left(1 \pm \varphi_1(\const) \cdot \frac{\log^{1.5}\rnd}{\sqrt{\ms{i+1}}}\right)\nonumber\\
&\subseteq 1 \pm \varphi_3(\const) \sqrt{\frac{\log{\rnd}}{\ml{i+1}}}\bigl(1 + \frac{\abs{x}}{\sqrt{\ml{i}}}\bigr),\nonumber
\end{align}
for some functions $\varphi_1,\varphi_2,\varphi_3\colon \R^+ \mapsto \R^+$ (independent of the game). The first transition holds by \cref{claim:ratioDifferentVal}, the third one by \cref{prop:hyperProbRelation} and the fourth one by \cref{prop:mainBound}.

Recalling that $i \leq \rnd-\log^{2.5}\rnd$, it follows that
\begin{align}
 \varphi_3(\const) \cdot \sqrt{\frac{\log{\rnd}}{\ml{i+1}}}\cdot \bigl(1 + \frac{\abs{x}}{\sqrt{\ml{i}}}\bigl) \in o(1)
\end{align}
Since $\frac{1}{1\pm z} \subseteq 1\pm 2z$ for every $z \in (-0.5,0.5)$, we deduce from \cref{eq:Hyper:oneOverRatio} that
\begin{align}
\ratioo_{i,y,\paux}(x)
&\in 1 \pm 2\varphi_3(\const) \sqrt{\frac{\log{\rnd}}{\ml{i+1}}} \cdot  \bigl(1 + \frac{\abs{x}}{\sqrt{\ml{i}}}\bigr)
\end{align}
and thus
\begin{align}
\abs{1 - \ratioo_{i,y,\paux}(x)}
&\leq 2\varphi_3(\const) \sqrt{\frac{\log{\rnd}}{\ml{i+1}}}\cdot \bigl(1 + \frac{\abs{x}}{\sqrt{\ml{i}}}\bigr)
\end{align}
Finally, since the above holds for every $i\leq \rnd-\log^{2.5}\rnd$, $y\in \cY_{i-1}$, $\paux\in \aset_{i,y}$ and $x\in \cX_i$, and recalling \cref{eq:Hyp:ProbA}, we can apply \cref{lemma:Final} to get that $\bias(\game_{\tau,\rnd,\eps}) \leq \varphi(\const) \cdot \frac{\log^{3}\rnd}{\rnd}$, for some universal function $\varphi\colon \R^+ \mapsto \R^+$.
\end{proof}

\subsection{Bounding the Vector Binomial Game}\label{sec:Vector}
In this section we prove \cref{lemma:ValueVectorGame} restated below.\VectorRecalls

\begin{definition}[vector game -- Restatement of \cref{def:ValueVectorGame}]\label{bounds:def:ValueVectorGame}
	\VectorGameDef
\end{definition}

\begin{lemma}[Restatement of \cref{lemma:ValueVectorGame}]\label{bounds:lemma:ValueVectorGame}
	\VectorGameLemma<\cref{bounds:def:ValueVectorGame}>
\end{lemma}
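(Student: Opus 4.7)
The plan is to mimic the template of \cref{bounds:lemma:ValueBiomialGame,bounds:lemma:ValueHgGame}: factor $f = g\circ\tau$ for a more informative $\tau$, invoke \cref{lemma:InformationIncreaseValue} to pass from $\bias(\game_{f,\rnd,0})$ to $\bias(\game_{\tau,\rnd,0})$, and finish with \cref{lemma:Final}. The reduction exploits the fact that, conditional on $Y_i$, the $\const\cdot\ms{1}$ coordinates of $\aux_i$ are i.i.d., so the weight $\w(\aux_i)$ is a sufficient statistic for $Y_i$. I would therefore let $\tau(i,y)$ output a single integer drawn from $\Beroo{\const\cdot\ms{1},\eps'(y)}$ with $\eps'(y)\eqdef \sBias{\ms{1}}{\vBeroo{\ms{i+1},0}(-y)}$, and let $g(s)$ return a uniformly random vector in $\oo^{\const\ms{1}}$ of weight $s$; then $g\circ\tau(i,y)$ and $f(i,y)$ coincide in distribution, so it suffices to bound $\bias(\game_{\tau,\rnd,0})$.

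Next, fix $i\in[\rnd-\floor{\log^{2.5}\rnd}]$ and $y\in\cY_{i-1}$ (where $\cY_{i-1}$ is as in \cref{lemma:Final}), and define
\[
\aset_{i,y} \eqdef \set{s\in\Z \colon \size{s - \const\cdot\ms{1}\cdot \eps'(y)} \le \xconst\cdot\sqrt{\ms{1}\log\rnd}}
\]
for a suitable universal constant $\xconst$. Since $\aux_i\la\Beroo{\const\ms{1},\eps'(y+X_i)}$ conditional on $Y_{i-1}=y$, Hoeffding's inequality combined with the tail bound on $X_i$ (\cref{claim:XiiPurpose}) yields $\pr{\aux_i\notin\aset_{i,y}\mid Y_{i-1}=y}\le \rnd^{-2}$. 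For $\paux$ in $\aset'_{i,y}\eqdef \aset_{i,y}\cap\Supp(\aux_i\mid Y_{i-1}=y,X_i\in\cX_i)$ and $x\in\cX_i$, \cref{claim:ratioDifferentVal} gives
\[
\frac{1}{\ratioo_{i,y,\paux}(x)} = \ex{x'\la X_i\mid x'\in\cX_i}{\frac{\Beroo{\const\ms{1},\eps'(y+x')}(\paux)}{\Beroo{\const\ms{1},\eps'(y+x)}(\paux)}}.
\]
Applying \cref{prop:binomProbEstimation} to numerator and denominator, the inner ratio equals $(1\pm\mathrm{err})\cdot\exp\bigl(\paux(\eps_1-\eps_2) - \tfrac{\const\ms{1}}{2}(\eps_1^2-\eps_2^2)\bigr)$ with $\eps_1 = \eps'(y+x')$ and $\eps_2 = \eps'(y+x)$. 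Substituting the linearization $\eps'(z) = z/\sqrt{\ms{i+1}\ms{1}} \pm O(\log^{1.5}\rnd/\sqrt{\ms{i+1}\ms{1}})$ from \cref{prop:epsDiff} (invoked with $n=\ms{i+1}$, $n'=\ms{1}$, $\eps=0$, $k=-z$) and rearranging, the exponent collapses to the canonical form
\[
\frac{\alpha x + \beta x^2 + \alpha' x' + \beta' x'^2}{\ms{i+1}} + o(1)
\]
with $\alpha = -\alpha' = -(\paux\sqrt{\ms{i+1}/\ms{1}} - \const y)$ and $\beta = -\beta' = \const/2$.

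The bounds $\size{\paux}\le O(\sqrt{\ms{1}\log\rnd})$ on $\aset_{i,y}$ and $\size{y}\le 4\sqrt{\log\rnd\cdot\ms{i}}$ on $\cY_{i-1}$ together imply $\size{\alpha},\size{\alpha'}\le \sqrt{O(\ms{i}\log\rnd)}$, fitting the hypothesis of \cref{prop:mainBound} (after a routine rescaling that absorbs the $\const/2$ coefficient into the proposition's universal $\const$). Invoking \cref{prop:mainBound} together with the trivial inversion $1/(1\pm z)\subseteq 1\pm 2z$ for $\size{z}<1/2$ yields $\size{1-\ratioo_{i,y,\paux}(x)}\le 2\varphi(\const)\sqrt{\log\rnd/\ml{i+1}}\cdot (1+\size{x}/\sqrt{\ml{i}})$, matching the hypothesis of \cref{lemma:Final}. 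Applying \cref{lemma:Final} delivers $\bias(\game_{\tau,\rnd,0})\le \varphi'(\const)\log^3\rnd/\rnd$, and hence the same bound for $\bias(\game_{f,\rnd,0})$.

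The main obstacle is controlling the error from linearizing $\eps'$ sharply enough for the exponent computation. The additive error $O(\log^{1.5}\rnd/\sqrt{\ms{i+1}\ms{1}})$ from \cref{prop:epsDiff} is independent of $z$, so multiplying it by $\size{\paux}=O(\sqrt{\ms{1}\log\rnd})$ feeds a correction of order $O(\log^2\rnd/\sqrt{\ms{i+1}})$ into the linear term $\paux(\eps_1-\eps_2)$ of the exponent. This correction is $o(1)$ only because the hypothesis $i\le \rnd-\floor{\log^{2.5}\rnd}$ forces $\sqrt{\ms{i+1}}\ge \Omega(\log^{2.5}\rnd)$; the same accounting will tame the cubic and quartic remainders in the Taylor expansion of $\log\Beroo{n,\eps}(s)$ in $\eps$. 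Once these higher-order corrections are absorbed, the argument is a direct analogue of the ones in \cref{sec:Bernoulli,sec:HG}.
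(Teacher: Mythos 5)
Your proposal is correct in outline and ends up doing the same core computation as the paper, but it gets there by a somewhat different route. The paper's proof of \cref{lemma:ValueVectorGame} never invokes \cref{lemma:InformationIncreaseValue}: it keeps the full vector hint $v$, writes $\pr{\aux_i=v\mid Y_i=y+x}$ exactly as $2^{-q}(1+\eps_i(y+x))^{q/2+\w(v)/2}(1-\eps_i(y+x))^{q/2-\w(v)/2}$, and controls it with the elementary bounds $1+z\le e^z$ and $1+z\ge e^{z-z^2}$, whereas you first observe that the weight $\w(\aux_i)$ is a sufficient statistic, reduce to a scalar "weight game" via $f=g\circ\tau$ and \cref{lemma:InformationIncreaseValue}, and then reuse \cref{prop:binomProbEstimation} exactly as in the simple-game proof. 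This is a legitimate and arguably cleaner packaging: the sufficiency reduction makes explicit what the paper exploits implicitly (all of its formulas depend on $v$ only through $\w(v)$), and it lets you recycle the Bernoulli machinery of \cref{sec:Bernoulli}; the small discrepancy between your exponent (coefficient $q/2$ on $\eps^2$) and the paper's (coefficient $(q-\w(v))/2$) sits inside the multiplicative error of \cref{prop:binomProbEstimation}, so nothing is lost. From the linearization of $\sBias{\ms{1}}{\cdot}$ via \cref{prop:epsDiff} onward (your $\alpha=-\alpha'=-(\paux\sqrt{\ms{i+1}/\ms{1}}-\const y)$, $\beta=-\beta'=\const/2$, then \cref{prop:mainBound} and \cref{lemma:Final}) you are doing precisely what the paper does after expanding its bracketed $\alpha$. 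The one step you should spell out rather than wave at is the appeal to \cref{prop:mainBound}: as stated it requires integer coefficients with $\size{\beta},\size{\beta'}\le 1$, and your $\beta=\const/2$ can exceed $1$, so you need to re-derive the proposition (via the proof of \cref{app:missinProofs:estimatingPolyX}, which tolerates $\size{\beta}\le\const$ at the cost of letting $\varphi$ depend on $\const$) rather than cite it verbatim; note the paper's own invocation at the analogous point is comparably informal (its $\alpha$ depends on the integration variable $x'$), so this is a matter of tightening, not a failure of the approach.
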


\begin{proof}
For $i \in [\rnd]$ and $y \in \Z$, let $\eps_{i}(y) \eqdef \sBias{\ms{1}}{\eo_{i+1}(y)}$  (recall that $\eo_{i+1}(y) = \vBeroo{\ms{i+1},0}(-y)$) and let $q = \const \cdot \ms{1}$.
Fix $i\in[\rnd- \floor{\log^{2.5}\rnd}]$ and $y\in \cY_{i-1}$, where $\cY_{i-1}$ is according to \cref{lemma:Final}. Note that
\begin{align}
\pr{f(i,y) = v} = 2^{-q}\cdot (1 + \eps_{i}(y))^{\frac q2 + \frac{\w(v)}2} \cdot (1 - \eps_{i}(y))^{\frac q2 -\frac{\w(v)}2}
\end{align}
for every $v \in \oo^{q}$. Let
\begin{align}
\aset_{i,y} = \set{v \in \oo^{q} \colon  \size{\w(v)} \leq \sqrt{d\cdot \log{\rnd} \cdot q}},
\end{align}
for $d = d(\const)$ to be determined by the analysis. In the following we let ${s_i} = \ms{i+1}\cdot \ms{1}$. Since $\rnd$ is large, \cref{prop:epsDiff} yields that $\eps_{i}(y+x) \in \frac{y+x}{\sqrt{{s_i}}} \pm \frac{\log^{2}\rnd}{\sqrt{{s_i}}}$ for every $x \in \Supp(X_i)$. Since $y\in \cY_{i-1}$, it follows that $\size{y} \leq 4\sqrt{\log{\rnd} \cdot \ms{i}}$. Therefore, $\size{\frac{y+x}{\sqrt{s_i}}} \leq \frac{4\sqrt{\log{\rnd} \cdot \ms{i}}+\ml{i}}{\sqrt{\ms{i+1}\cdot \ms{1}}} \leq \frac{5\sqrt{\log{\rnd} \cdot \ms{i}}}{\sqrt{\ms{i+1}\cdot \ms{1}}} \leq \frac{6\sqrt{\log{\rnd}}}{\sqrt{\ms{1}}} = \sqrt{\frac{36\const\cdot \log{\rnd}}{q}}$,
and thus,
$\size{\eps_{i}(y+x)} \leq (36\const + 1)\cdot \sqrt{\frac{\log{\rnd}}{q}}$ for every $x \in \Supp(X_i)$. By setting $d = (5 + 72\const)^2$, Hoeffding's bound yields that the following holds for every $x \in \Supp(X_i)$.
\begin{align}
\lefteqn{\pr{\aux_i \notin \aset_{i,y} \mid Y_i = y+x}}\nonumber\\
&= \ppr{z \la \Beroo{q,\eps_{i}(y+x)}}{\abs{z} > \sqrt{d\cdot \log{\rnd} \cdot q}}\nonumber\\
&\leq \ppr{z \la \Beroo{q,\eps_{i}(y+x)}}{\abs{z - 2q\cdot \eps_{i}(y+x)} > \sqrt{d\cdot \log{\rnd} \cdot q} - 2q\cdot \eps_{i}(y+x)}\nonumber\\
&\leq \ppr{z \la \Beroo{q,\eps_{i}(y+x)}}{\abs{z - 2q\cdot \eps_{i}(y+x)} > \sqrt{d\cdot \log{\rnd} \cdot q} - 2\cdot(36\const+1)\cdot\sqrt{\log{\rnd} \cdot q}}\nonumber\\
&= \ppr{z \la \Beroo{q,\eps_{i}(y+x)}}{\abs{z - 2q\cdot \eps_{i}(y+x)} > 3\cdot\sqrt{\log{\rnd} \cdot q}}\nonumber\\
&\leq \frac1{\rnd^2}.\nonumber
\end{align}
Thus,
\begin{align}\label{eq:Vector:ProbA}
\pr{\aux_i \notin \aset_{i,y} \mid Y_{i-1}=y}
= \ex{x \la X_i}{\pr{\aux_i \notin \aset_{i,y} \mid Y_i = y+x}}
\leq \frac1{\rnd^2}
\end{align}
Fix $x \in \cX_i$ and $v \in \aset'_{i,y} \eqdef  \aset_{i,y} \bigcap \Supp(\aux_i \mid Y_{i-1} = y, X_i \in \cX_i)$, where $\cX_i$ is as defined in \cref{lemma:Final}.
Compute
\begin{align}
\Pr[\aux_i=v \mid Y_i = y+x]
&= 2^{-q}\cdot (1 + \eps_{i}(y+x))^{\frac q2 + \frac{\w(v)}2}(1 - \eps_{i}(y+x))^{\frac q2 -\frac{\w(v)}2}\\
&= 2^{-q}\cdot (1 - \eps_{i}^2(y+x))^{\frac q2 - \frac{\w(v)}2}(1 + \eps_{i}(y+x))^{\w(v)}\nonumber
\end{align}
Since $1 + z \leq e^z$ for $z \in \R$, it holds that
\begin{align}
\Pr[\aux_i=v \mid Y_i = y+x]
&\leq 2^{-q}\cdot \exp\left(-\eps_{i}^2(y+x)\cdot(\frac q2 - \frac{\w(v)}2) + \eps_{i}(y+x) \cdot \w(v)\right)\label{eq:Vector:upperbound}
\end{align}
Since $\rnd$ is large, \cref{prop:epsDiff} yields that  $\eps_{i}(y+x) \in (-\frac12,\frac12)$. Using the inequality $1 + z \geq e^{z-z^2}$ for $z \in(-\frac12,\frac12)$, we deduce that
\begin{align}\label{eq:Vector:lowerbound}
\Pr[\aux_i=v \mid Y_i = y+x]
&\geq 2^{-q}\cdot e^{(-\eps_{i}^2(y+x)-\eps_{i}^4(y+x))(\frac q2 - \frac{\w(v)}2)}\cdot e^{(\eps_{i}(y+x)-\eps_{i}^2(y+x)) \cdot \w(v)}\\
&= 2^{-q}\cdot e^{-\eps_{i}^2(y+x)\cdot(\frac q2 - \frac{\w(v)}2)}\cdot e^{\eps_{i}(y+x) \cdot \w(v)}
\cdot e^{-\eps_{i}^4(y+x)\cdot(\frac q2 - \frac{\w(v)}2) - \eps_{i}^2(y+x)\cdot\w(v)}\nonumber\\
&\geq 2^{-q}\cdot \exp\left(-\eps_{i}^2(y+x)\cdot(\frac q2 - \frac{\w(v)}2) + \eps_{i}(y+x) \cdot \w(v) \right)\cdot (1 - \error(x)),\nonumber
\end{align}
for
\begin{align}
\error(x) \eqdef \size{1 - \exp\left(-\eps_{i}^4(y+x)\cdot(\frac q2 - \frac{\w(v)}2) - \eps_{i}^2(y+x) \cdot \w(v)\right)}
\end{align}
Using \cref{eq:Vector:upperbound,eq:Vector:lowerbound}, we can now write
\begin{align}
 \Pr[\aux_i=v \mid Y_i = y+x] \in  2^{-q}\cdot \exp\left(-\eps_{i}^2(y+x)\cdot(\frac q2 - \frac{\w(v)}2) + \eps_{i}(y+x) \cdot \w(v) \right) (1 \pm \error(x))
\end{align}

Let $x' \in \cX_i$, and assume \wlg that $\error(x) \geq \error(x')$. We show next that $\error(x) \in o(1)$. Hence, since $\frac{1\pm z}{1\pm z} \subseteq 1 \pm 4z$ for every $z \in [0,\frac12]$, it holds that
\begin{align}
\lefteqn{\frac{\Pr[\aux_i=v \mid Y_i = y+x']}{\Pr[\aux_i=v \mid Y_i = y+x]}}\\
&\in \frac{\exp\left(-\eps_{i}^2(y+x')\cdot (\frac q2 - \frac{\w(v)}2) + \eps_{i}(y+x') \cdot \w(v) \right)}
{\exp\left(-\eps_{i}^2(y+x)(\frac q2 - \frac{\w(v)}2) + \eps_{i}(y+x) \cdot \w(v) \right)}
\cdot (1 \pm 4\cdot \error(x))\nonumber\\
&= \exp\left((\eps_{i}(y+x)-\eps_{i}(y+x'))\left[
(\eps_{i}(y+x)+\eps_{i}(y+x'))(\frac q2 - \frac{\w(v)}2) - \w(v)
\right]\right)(1 \pm 4\cdot \error(x))\nonumber\\
&\subseteq \exp\left(\bigl(\frac{x-x'}{\sqrt{{s_i}}} \pm \frac{\log^{2}\rnd}{\sqrt{{s_i}}}\bigr)\left[
\bigl(\frac{2y+x+x'}{\sqrt{{s_i}}} \pm \frac{\log^{2}\rnd}{\sqrt{{s_i}}}\bigr)
\cdot (\frac q2 - \frac{\w(v)}2) - \w(v)\right]\right)\cdot (1 \pm 4\cdot \error(x))\nonumber\\
&\subseteq \exp\left(\frac{x-x'\pm \log^{2}\rnd}{\ms{i+1}}\cdot \left[
\frac{2y+x+x'\pm \log^{2}\rnd}{\ms{1}} \cdot
(\frac q2 - \frac{\w(v)}2) \pm \sqrt{\frac{\ms{i+1}}{\ms{1}}}\cdot\w(v)\right]\right)\cdot (1 \pm 4\cdot \error(x)),\nonumber
\end{align}
and therefore,
\begin{align}
\frac{\Pr[\aux_i=v \mid Y_i = y+x']}{\Pr[\aux_i=v \mid Y_i = y+x]}
&\in \exp\left(\bigl(\frac{x-x'}{\ms{i+1}} \pm \frac{\log^{2}\rnd}{\ms{i+1}}\bigr) \cdot \alpha\right)\cdot (1 \pm 4\cdot \error(x))\label{sec:vector:relation_with_alpha}
\end{align}
for some $\alpha \in (\frac{2y+x+x'}{\ms{1}} \pm \frac{\log^{2}\rnd}{\ms{1}}) \cdot (\frac q2 - \w(v)) \pm \sqrt{\frac{\ms{i+1}}{\ms{1}}}\cdot \w(v)$. The
third transition of the previous calculation holds by \cref{prop:epsDiff}.
By taking large enough $d' = d'(\const) >0$, we can bound $\size{\alpha}$ and $\error(x)$ by
\begin{align}
\abs{\alpha}  \leq d' \cdot \sqrt{\log{\rnd} \cdot \ms{i+1}}
\end{align}
and
\begin{align}
\error(x)
&= \abs{1 - \exp\left(-\eps_{i}^4(y+x)(\frac q2 - \frac{\w(v)}2) - \eps_{i}^2(y+x) \cdot \w(v)\right)}\nonumber\\
&\leq \max\left(\size{ 1 - \exp\left(-\left(\frac{y+x}{\sqrt{{s_i}}} \pm \frac{\log^{2}\rnd}{\sqrt{{s_i}}}\right)^4(\frac q2 - \frac{\w(v)}2) - \left(\frac{y+x}{\sqrt{{s_i}}} \pm \frac{\log^{2}\rnd}{\sqrt{{s_i}}}\right)^2\w(v)\right)}\right)\nonumber\\
&\leq 1 - \exp\left(-\left(3\cdot \sqrt{\frac{\log{\rnd}}{\ms{1}}}\right)^4\cdot (\const + 1)\cdot \ms{1} -
\left(3\cdot \sqrt{\frac{\log{\rnd}}{\ms{1}}}\right)^2\cdot \sqrt{d \cdot \ms{1} \cdot \log{\rnd} }\right)\nonumber\\
&\leq 1 - (1 - \frac{d' \cdot \log^{1.5}\rnd}{\sqrt{\ms{1}}})\nonumber\\
&= d' \cdot \frac{\log^{1.5}\rnd}{\sqrt{\ms{1}}},\nonumber
\end{align}
where the second transition holds by \cref{prop:epsDiff} and the first inequality holds by the bounds on $\abs{y}$, $\abs{x}$ and $\abs{w(v)}$.
Since $\frac{\log^{2}\rnd}{\ms{i+1}}\cdot \alpha \in o(1)$ and since $e^y \in 1 \pm 2y$ for $y \in (-0.5,0.5)$, it follows that
\begin{align}
\exp\left(\frac{\log^{2}\rnd}{\ms{i+1}}\cdot \alpha\right) \\
&\leq 1 + 2\cdot \frac{\log^{2}\rnd}{\ms{i+1}}\cdot \alpha \nonumber\\
&\leq 1 + 2\cdot \frac{\log^{2}\rnd}{\ms{i+1}}\cdot d'\cdot \sqrt{\log{\rnd} \cdot \ms{i+1}} \nonumber\\
&\leq 1 + 2d' \cdot \frac{\log^{2.5} \rnd}{\sqrt{\ms{i+1}}}. \nonumber
\end{align}
Therefore, \cref{sec:vector:relation_with_alpha} yields that
\begin{align}\label{eq:Vector:1}
 \frac{\Pr[\aux_i=v \mid Y_i = y+x']}{\Pr[\aux_i=v \mid Y_i = y+x]} \in \exp\left(\frac{x-x'}{\ms{i+1}} \cdot \alpha\right)\cdot \left(1 \pm \frac{\log^{3}\rnd}{\sqrt{\ms{i+1}}}\right)
\end{align}
and thus
\begin{align}\label{eq:Vector:oneOverRatio}
\frac{1}{\ratioo_{i,y,v}(x)}&= \frac{\Pr[\aux_i=v \mid Y_{i-1} = y, X_i \in \cX_i]}{\Pr[\aux_i=v \mid Y_{i-1} = y, X_i=x]}\\
&= \ex{x' \la X_i\mid x'\in \cX_i}{\frac{\Pr[\aux_i=v \mid Y_i = y + x']}{\Pr[\aux_i=v \mid Y_i = y+x]}}\nonumber\\
&\in \ex{x' \la X_i\mid x'\in \cX_i}
{\exp\left(\frac{\alpha (x - x')}{\ms{i+1}}\right) \cdot \left(1 \pm \frac{\log^{3}\rnd}{\ms{i+1}}\right)}\nonumber\\
&\subseteq \left(1 \pm \varphi_1(d) \cdot \sqrt{\frac{\log{\rnd}}{\ml{i+1}}}\cdot(1 + \frac{\abs{x}}{\sqrt{\ml{i}}})\right)\cdot \left(1 \pm \frac{\log^{3}\rnd}{\ms{i+1}}\right)\nonumber\\
&\subseteq 1 \pm \varphi_2(d) \sqrt{\frac{\log{\rnd}}{\ml{i+1}}}\cdot \bigl(1 + \frac{\abs{x}}{\sqrt{\ml{i}}}\bigl),\nonumber
\end{align}
for some universal functions $\varphi_1,\varphi_2\colon \R^+ \mapsto \R^+$. The first transition holds by \cref{claim:ratioDifferentVal}, the third one by \cref{eq:Vector:1} and the fourth one by \cref{prop:mainBound}. Recalling that $i \leq \rnd-\log^{2.5}\rnd$, it  follows that
\begin{align}
 \sqrt{\frac{\log{\rnd}}{\ml{i+1}}}\cdot \bigl(1 + \frac{\abs{x}}{\sqrt{\ml{i}}}\bigl) \in o(1)
\end{align}
Since $\frac{1}{1\pm z} \subseteq 1\pm 2z$ for every $z \in (-0.5,0.5)$, we deduce from \cref{eq:Vector:oneOverRatio} that
\begin{align}
\ratioo_{i,y,v}(x)
&\in 1 \pm 2\varphi_2(d) \cdot \sqrt{\frac{\log{\rnd}}{\ml{i+1}}} \cdot  \bigl(1 + \frac{\abs{x}}{\sqrt{\ml{i}}}\bigl)\\
&= 1 \pm \varphi_3(\const) \cdot \sqrt{\frac{\log{\rnd}}{\ml{i+1}}} \cdot  \bigl(1 + \frac{\abs{x}}{\sqrt{\ml{i}}}\bigl),\nonumber
\end{align}
where $\varphi_3(\const) = 2\cdot \varphi_2(d(\const))$. Thus
\begin{align}
\abs{1 - \ratioo_{i,y,v}(x)}\leq \varphi_3(\const) \cdot\sqrt{\frac{\log{\rnd}}{\ml{i+1}}}\cdot (1 + \frac{\abs{x}}{\sqrt{\ml{i}}})
\end{align}
Finally, since the above holds for every $i\leq \rnd-\log^{2.5}\rnd$, $y\in \cY_{i-1}$, $v\in \aset_{i,y}$ and $x\in \cX_i$, and recalling \cref{eq:Vector:ProbA}, we can apply \cref{lemma:Final}  to get that $\bias(\game_{f,\rnd,\eps}) \le \varphi(\const) \cdot \frac{\log^{3}\rnd}{\rnd}$, for some universal function $\varphi\colon \R^+ \mapsto \R^+$.
\end{proof}

\subsection*{Acknowledgment}
We are very grateful to Yuval Ishai, Yishay Mansour, Eran Omri and Alex Samorodnitsky for very useful discussions. We also thank Eran for encouraging us to tackle this beautiful problem. We also thans the anonymous referees for their very useful comments.

\bibliographystyle{abbrvnat}
\bibliography{crypto}

\appendix
\section{Missing Proofs}\label{sec:missinProofs}
This section contains missing proofs for statements given in \cref{sec:prelim:BasicInq,sec:prelim:Binomial,sec:prelim:HypGeo}.

\subsection{Basic Inequalities}\label{app:missinProofs:BasicInq}

\remove{

\begin{fact}[Restatement of \cref{fact:statisticaldist}]\label{app:missinProofs:statisticaldist}
	\factstatisticaldist
\end{fact}
\begin{proof}
Compute
\begin{align*}
\SD\bigl((P,Q),(P',Q')\bigr)
&= \frac12 \sum_{u \in \Uni, v \in \Uni'}\size{P(u)Q(v) - P'(u)Q'(v)}\\
&\leq \frac12 \sum_{u \in \Uni, v \in \Uni'}P(u)\size{Q(v) - Q'(v)} + \frac12 \sum_{u \in \Uni, v \in \Uni'}\size{P(u) - P'(u)}Q'(v)\\
&= \frac12 \sum_{v \in \Uni'}\size{Q(v) - Q'(v)}\cdot \sum_{u \in \Uni}P(u) + \frac12 \sum_{u \in \Uni}\size{P(u) - P'(u)} \cdot \sum_{v\in \Uni'}Q'(v)\\
&= \SD(P,P') + \SD(Q,Q')
\end{align*}
\end{proof}

}

\begin{proposition}[Restatement of \cref{claim:linearproblem}]\label{app:missinProofs:linearproblem}
\claimlinearproblem
\end{proposition}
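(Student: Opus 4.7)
The plan is to reindex and then apply summation by parts (Abel summation). Setting $r = n+1-j$ and $q_r = p_{n+1-r}$ for $r \in \{1,\ldots,N\}$ with $N = n+1-k$, the hypothesis becomes
\begin{align*}
S_m := \sum_{r=1}^m q_r \;\le\; \alpha \, m \qquad \forall m \in \{1,\ldots,N\},
\end{align*}
and the desired inequality becomes $\sum_{r=1}^N \tfrac{q_r}{r} \le \alpha \sum_{r=1}^N \tfrac{1}{r}$. So the problem reduces to a standard statement: if the partial sums of a nonnegative sequence grow at most linearly with slope $\alpha$, then the weighted average against the decreasing weights $1/r$ is bounded by $\alpha$ times the sum of those weights.

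Next, I would write $q_r = S_r - S_{r-1}$ (with $S_0 = 0$) and apply summation by parts:
\begin{align*}
\sum_{r=1}^N \frac{q_r}{r} \;=\; \sum_{r=1}^N \frac{S_r - S_{r-1}}{r} \;=\; \frac{S_N}{N} + \sum_{r=1}^{N-1} S_r \left(\frac{1}{r} - \frac{1}{r+1}\right).
\end{align*}
The key point is that the coefficients $\tfrac{1}{r} - \tfrac{1}{r+1} = \tfrac{1}{r(r+1)}$ are nonnegative, so I may substitute the upper bound $S_r \le \alpha r$ term by term without reversing any inequalities. This yields
\begin{align*}
\sum_{r=1}^N \frac{q_r}{r} \;\le\; \alpha + \alpha \sum_{r=1}^{N-1} \frac{r}{r(r+1)} \;=\; \alpha + \alpha \sum_{r=2}^{N} \frac{1}{r} \;=\; \alpha \sum_{r=1}^{N} \frac{1}{r},
\end{align*}
which translates back (via $r = n+1-j$) to the claimed inequality.

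There is no real obstacle here: the argument is a one-line Abel summation once the indices are reversed so that the tail constraints become prefix-sum constraints on a nonnegative sequence, and the weights $1/r$ pair naturally with such constraints because $1/r$ is decreasing in $r$. The only thing to be slightly careful about is handling the boundary term $S_N/N$ (which is bounded by $\alpha$, contributing the $r=1$ term of the right-hand side) so that all terms of $\alpha \sum_r 1/r$ are accounted for exactly.
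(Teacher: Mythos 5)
Your proof is correct, and it takes a genuinely different route from the paper's. The paper argues by an extremal/exchange argument: it considers a feasible configuration $\set{p_j}$ maximizing the objective $\sum_j \frac{p_j}{n+1-j}$, and shows that any maximizer must have all entries equal to $\alpha$ — otherwise one can either raise the last non-$\alpha$ entry (if it is at index $k$) or shift mass $\delta = \alpha - p_{\ist}$ from $p_{\ist-1}$ to $p_{\ist}$, strictly increasing the objective while preserving all tail constraints, a contradiction; the bound then follows since the all-$\alpha$ configuration attains exactly $\alpha \sum_j \frac{1}{n+1-j}$. You instead reverse the index so the tail constraints become prefix-sum constraints $S_m \le \alpha m$ and apply Abel summation, using that the weights $1/r$ are decreasing so the coefficients $\frac1r - \frac1{r+1}$ are nonnegative; your computation is correct, including the boundary term $S_N/N \le \alpha$ which supplies the $r=1$ term. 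Your approach is a direct calculation that sidesteps the (implicit) compactness needed for the paper's maximizer to exist, and in fact it never uses nonnegativity of the $p_j$'s, only the prefix/tail constraints, so it proves a slightly more general statement; the paper's exchange argument, on the other hand, makes the extremal configuration $p_j \equiv \alpha$ explicit, which gives some intuition for why the bound is tight.
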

\begin{proof}
We prove the proposition by showing that for every set $\cs = \set{p_j}_{j=k}^{n}$ satisfying the proposition's constrains, it holds that $\val(\cs) \eqdef \sum_{j=k}^{n} \frac{p_j}{(n+1-j)} \leq \sum_{j=k}^{n}\frac{\alpha}{(n+1-j)}$. Let $\cs = \set{p_j}_{j=k}^{n}$ be a set that satisfying the proposition's constrains with maximal $\val(\cs)$. Assume not all elements of $\cs$ equal $\alpha$, and let $\ist \in \set{k,k+1, \ldots, n}$ be the largest index such that $p_{\ist} \neq \alpha$. By the proposition's constrains, it follows that
$\sum_{j=\ist}^{n}p_j \leq \alpha \cdot (n +1-\ist)$. Since $\sum_{j=\ist+1}^{n}p_j = \alpha \cdot (n-\ist)$, it follows that $p_{\ist} + \alpha (n - \ist) \leq \alpha \cdot (n +1-\ist)$, and thus $
p_{\ist} \leq \alpha$. Since we assume $p_{\ist} \neq \alpha$, it follows that $p_{\ist} < \alpha$.

Assume $\ist = k$, then by changing $p_{\ist}$ to $\alpha$, we get a set $\cs'$ with $\val(\cs') > \val(\cs)$ that fulfills the proposition's constrains, in contradiction to the maximality of $\cs$.

Assume $\ist > k$ and let $\delta = \alpha - p_{\ist} > 0$.
Let $\cs' = \set{p_j'}_{j=k}^{n}$ defined by
$$p_j' = \left\{
           \begin{array}{ll}
             p_j + \delta, & j = \ist,\\
             p_j-\delta , & j = \ist-1,\\
             p_j,        &\hbox{otherwise.}
           \end{array}
         \right.
 $$
Note that $\cs'$ fulfills proposition's constrains, and
$$\val(\cs') = \sum_{j=k}^{n}\frac{p_j^{'}}{n - j + 1} = \sum_{j=k}^{n}\frac{p_j}{n - j + 1} + \frac{\delta}{n - \ist + 1} - \frac{\delta}{n - \ist + 2}> \sum_{j=k}^{n}\frac{p_j}{n - j + 1} = \val(\cs),$$
in contraction to the maximality of $\cs$.
\end{proof}

\subsection{Facts About the Binomial Distribution}\label{app:missinProofs:Binomial}

Recall that for $a\in \R$ and $b\geq 0$,  $a\pm b$ denotes for the interval $[a-b,a+b]$, and that given sets $\cs_1,\ldots,\cs_k$ and $k$-input function $f$,  $f(\cs_1,\ldots,\cs_k) = \set{f(x_1,\ldots,x_j) \colon x_i\in \cs_i}$, \eg $f(1\pm 0.1) = \set{f(x) \colon x\in [.9,1.1]}$.

We use the following estimation of the binomial coefficient.
\begin{proposition}\label{app:missinProofs:binomCoeffEstimation}
Let $n \in \N$ and $t \in \Z$ be such that $\abs{t} \leq n^{\frac{3}{5}}$ and $\frac{n+t}{2} \in (n)$. Then
\begin{align*}
\binom{n}{\frac{n+t}{2}} \cdot 2^{-n} \in (1 \pm \error) \cdot  \sqrt{\frac{2}{\pi}} \cdot \frac{1}{\sqrt{n}} \cdot e^{-\frac{t^2}{2n}},
\end{align*}
for  $\error = \xconst \cdot (\frac{\abs{t}^3}{n^2} + \frac{1}{n})$ and a universal constant $\xconst$.
\end{proposition}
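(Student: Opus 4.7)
The plan is to apply Stirling's approximation in the form $n! = \sqrt{2\pi n}(n/e)^n (1 + O(1/n))$ to each of the three factorials $n!$, $((n+t)/2)!$, $((n-t)/2)!$ appearing in $\binom{n}{(n+t)/2}$, and then reduce the resulting closed-form expression to the target by a careful Taylor expansion.

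First I would split the target into a \emph{shape factor} and an \emph{exponential factor}. After cancelling the $e^n$ terms, Stirling gives
\begin{align*}
\binom{n}{\tfrac{n+t}{2}} \cdot 2^{-n}
&= \sqrt{\tfrac{n}{2\pi \cdot \tfrac{n+t}{2} \cdot \tfrac{n-t}{2}}} \cdot \frac{n^n}{(n+t)^{(n+t)/2}(n-t)^{(n-t)/2}} \cdot (1 + O(\tfrac{1}{n})),
\end{align*}
where the $O(1/n)$ collects the three Stirling errors. The shape factor equals $\sqrt{\tfrac{2}{\pi n}} \cdot (1 - t^2/n^2)^{-1/2}$, which lies in $\sqrt{\tfrac{2}{\pi n}} \cdot (1 \pm O(t^2/n^2))$ because $|t|\leq n^{3/5}$ implies $t^2/n^2 \leq n^{-4/5} \to 0$.

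Next, for the exponential factor, I would take the logarithm
\begin{align*}
-A \;:=\; n\log n - \tfrac{n+t}{2}\log(n+t) - \tfrac{n-t}{2}\log(n-t),
\end{align*}
pull out $\log n$ from each factorial, and rewrite
\begin{align*}
A \;=\; \tfrac{n}{2}\log\!\left(1 - \tfrac{t^2}{n^2}\right) + \tfrac{t}{2}\log\!\left(\tfrac{1+t/n}{1-t/n}\right).
\end{align*}
Taylor-expanding $\log(1\pm t/n)$ around zero (valid since $|t/n|\leq n^{-2/5}$), the $t/n$-linear and $(t/n)^3$ contributions cancel in the right combination, leaving $A = \tfrac{t^2}{2n} + \tfrac{t^4}{12n^3} + O(t^6/n^5)$. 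Hence $e^{-A} = e^{-t^2/(2n)} \cdot (1 \pm O(t^4/n^3))$, using $|e^x - 1| \leq 2|x|$ when $|x|$ is small enough, which holds in our regime.

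Finally, I would combine the three multiplicative errors: the Stirling $O(1/n)$, the shape-factor $O(t^2/n^2)$, and the exponential $O(t^4/n^3)$. For integer $t$ (guaranteed by $(n+t)/2 \in (n)$), either $t=0$ (in which case all $t$-dependent terms vanish) or $|t| \geq 1$, so $t^2/n^2 \leq |t|^3/n^2$. Moreover $t^4/n^3 = (|t|/n)\cdot |t|^3/n^2 \leq |t|^3/n^2$. Thus all errors are absorbed into $\xconst \cdot (|t|^3/n^2 + 1/n)$ for a universal $\xconst$, completing the bound. The only delicate step is making the Taylor expansion rigorous with explicit remainder control, but since $|t/n| \leq n^{-2/5}$ is small, standard remainder estimates suffice; no step should present a genuine obstacle.
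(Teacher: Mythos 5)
Your proposal is correct and takes essentially the same route as the paper: Stirling's approximation applied to the three factorials, followed by an elementary expansion of the resulting closed form (the paper substitutes $m=n/2$, $k=t/2$ and uses $1+x \in e^{x\pm x^2}$, while you expand the logarithm directly and let the $t^2/n$ and $t^4/n^3$ terms combine — a cosmetic difference, and your error bookkeeping $t^2/n^2,\, t^4/n^3 \le \abs{t}^3/n^2$ is sound). The only thing to add is the explicit caveat, as in the paper, that the expansions are carried out for $n$ above a universal constant, with the finitely many small values of $n$ (and their admissible $t$) absorbed by enlarging $\xconst$.
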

\begin{proof}
In the following we focus on $n \geq 200$, smaller $n$'s are handled by setting the value of $\xconst$ to be large enough on these values.
We also assume that $n$ and $t$ are even, the proof of the odd case is analogous.
Let $m \eqdef \frac{n}{2} \geq 100$ and $k \eqdef \frac{t}{2}$.
Stirling's formula states that for every $\ell \in \N$ it holds that $1 \leq
\frac{\ell!}{\sqrt{2\pi \ell} \cdot (\frac{\ell}{e})^\ell} \leq e^{\frac{1}{12\ell}}$ which implies $\ell! \in (1 \pm \frac{1}{\ell})\sqrt{2\pi \ell} \cdot \ell^\ell \cdot e^{-\ell}$.
Compute
\begin{align*}
\binom{2m}{m + k}
&= \frac{(2m)!}{(m+k)!(m-k)!}\\
&\in \frac{(1 \pm \frac{1}{2m})\sqrt{2\pi \cdot 2m}(2m)^{2m} e^{-2m}}
{(1 \pm \frac{1}{m + k})\sqrt{2\pi (m+k)}(m+k)^{m+k} e^{-(m+k)} \cdot
(1 \pm \frac{1}{m-k})\sqrt{2\pi (m-k)}(m-k)^{m-k} e^{-(m-k)}}\nonumber\\
&\subseteq \frac{\sqrt{2\pi \cdot 2m}(2m)^{2m} e^{-2m}}{\sqrt{2\pi (m+k)}(m+k)^{m+k} e^{-(m+k)} \cdot \sqrt{2\pi (m-k)}(m-k)^{m-k} e^{-(m-k)}}\cdot (1 \pm \frac{20}{m})\nonumber\\
&= \frac{(2m)^{2m+\frac{1}{2}}}{\sqrt{2 \pi} \cdot (m+k)^{m+k+\frac{1}{2}} \cdot (m-k)^{m-k+\frac{1}{2}}}\cdot (1 \pm \frac{20}{m})\nonumber\\
&=2^{2m}\cdot \frac{1}{\sqrt{\pi m}\cdot (1 + \frac{k}{m})^{m+k+\frac{1}{2}} \cdot (1 - \frac{k}{m})^{m-k+\frac{1}{2}} }\cdot (1 \pm \frac{20}{m})\nonumber\\
&= 2^{2m}\cdot \frac{1}{\sqrt{\pi m}\cdot
(1 - \frac{k^2}{m^2})^{m-k+\frac{1}{2}} \cdot (1 + \frac{k}{m})^{2k}}\cdot (1 \pm \frac{20}{m}),\nonumber
\end{align*}
where the third transition holds by the bound on $m$ and $k$ which yields $\frac{(1 \pm \frac{1}{m})}{(1 \pm \frac{1}{m+k})(1 \pm \frac{1}{m-k})} \subseteq (1 \pm \frac{20}{m})$.
Since $1 + x \in e^{x\pm x^2}$ for $x \in (-0.5,0.5)$, it follows that
\begin{align}
\binom{n}{\frac{n+t}{2}} \cdot 2^{-n}
&= \binom{2m}{m + k} \cdot 2^{-2m}\\
&\in \frac{1}{\sqrt{\pi m} \cdot
e^{(-\frac{k^2}{m^2}\pm \frac{k^4}{m^4})(m-k+\frac{1}{2})} \cdot
e^{(\frac{k}{m}\pm \frac{k^2}{m^2})\cdot 2k}}\cdot (1 \pm \frac{20}{m})\nonumber\\
&= \frac{1}{\sqrt{\pi m}} \cdot e^{-\frac{k^2}{m}} \cdot e^{-\frac{k^3}{m^2} \pm \frac{3\abs{k}^3}{m^2} + \frac{k^2}{2m^2} \pm \frac{k^4}{m^4}(m-k+\frac{1}{2})}\cdot (1 \pm \frac{20}{m})\nonumber\\
&\subseteq \frac{1}{\sqrt{\pi m}} \cdot e^{-\frac{k^2}{m}} \cdot e^{\pm \frac{5\abs{k}^3}{m^2}}\cdot (1 \pm \frac{20}{m})\nonumber\\
&\subseteq \frac{1}{\sqrt{\pi m}} \cdot e^{-\frac{k^2}{m}} \cdot(1 \pm \frac{10\abs{k}^3}{m^2})\cdot (1 \pm \frac{20}{m})\nonumber\\
&\subseteq \frac{1}{\sqrt{\pi}} \cdot \bigl(1 \pm 20\cdot (\frac{\abs{k}^3}{m^2} + \frac{1}{m})\bigl)\cdot \frac{1}{\sqrt{m}} \cdot e^{-\frac{k^2}{m}},\nonumber\\
&\subseteq \sqrt{\frac{2}{\pi}} \cdot \bigl(1 \pm 40\cdot (\frac{\abs{t}^3}{n^2} + \frac{1}{n})\bigl) \cdot \frac{1}{\sqrt{n}} \cdot e^{-\frac{t^2}{2n}}\nonumber
\end{align}
where the third transition holds by the bounds on $m$ and $k$, and the fourth one holds since $\frac{4\abs{k}^3}{m^2} < 1$ and since $e^x \in 1 \pm 2x$ for every $\abs{x} < 1$.
\end{proof}

Recall that for $n\in \N$ and $\eps \in [-1,1]$, we let $\Beroo{n,\eps}$ be the binomial distribution induced by the sum of $n$ independent random variables over $\oo$, each takes the value $1$ with probability $\frac{1}{2}(1+\eps)$ and $-1$ otherwise. The following proposition uses the previous estimation for the binomial coefficient for achieving an estimation for the binomial probability $\Beroo{n,\eps}(t) \eqdef \ppr{x \la \Beroo{n,\eps}}{x = t}$.

\begin{proposition}[Restatement of \cref{prop:binomProbEstimation}]\label{app:missinProofs:binomProbEstimation}
\propBinomProbEstimation
\end{proposition}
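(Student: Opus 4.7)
The plan is to factor $\Beroo{n,\eps}(t)$ into the unbiased binomial coefficient, on which \cref{app:missinProofs:binomCoeffEstimation} is directly applicable, times a correction term $(1+\eps)^{(n+t)/2}(1-\eps)^{(n-t)/2}$ that captures the bias, and then to estimate the correction term by a Taylor expansion of its logarithm. We may assume $n$ is larger than a universal constant (otherwise the statement is absorbed into $\xi$), so in particular $\size{\eps}\leq n^{-2/5} < 1/2$.

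First, write
\[
\Beroo{n,\eps}(t) \;=\; \binom{n}{(n+t)/2}\Bigl(\tfrac{1+\eps}{2}\Bigr)^{(n+t)/2}\Bigl(\tfrac{1-\eps}{2}\Bigr)^{(n-t)/2} \;=\; \binom{n}{(n+t)/2} 2^{-n}\cdot (1+\eps)^{(n+t)/2}(1-\eps)^{(n-t)/2}.
\]
By \cref{app:missinProofs:binomCoeffEstimation}, using the hypothesis $\size{t}\leq n^{3/5}$, the first factor equals $\sqrt{2/\pi}\cdot n^{-1/2}\cdot e^{-t^2/(2n)}$ up to a multiplicative error of $1\pm \xconst\bigl(\abs{t}^3/n^2+1/n\bigr)$. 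Thus it only remains to show that the correction factor equals $e^{t\eps - n\eps^2/2}$ up to a multiplicative error bounded by $1\pm O(\eps^2\abs{t}+n\eps^4)$, because $-t^2/(2n)+t\eps - n\eps^2/2 = -(t-\eps n)^2/(2n)$, which gives the claimed Gaussian.

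Second, I would symmetrize the logarithm of the correction factor:
\[
\ln\Bigl((1+\eps)^{(n+t)/2}(1-\eps)^{(n-t)/2}\Bigr) \;=\; \tfrac{n}{2}\ln(1-\eps^2) + \tfrac{t}{2}\ln\!\Bigl(\tfrac{1+\eps}{1-\eps}\Bigr).
\]
For $|\eps|<1/2$, standard Taylor estimates give $\ln(1-\eps^2)=-\eps^2 + r_1$ with $|r_1|\leq 2\eps^4$, and $\ln((1+\eps)/(1-\eps))=2\eps + r_2$ with $|r_2|\leq 2\eps^3$. Summing yields
\[
\tfrac{n}{2}\ln(1-\eps^2) + \tfrac{t}{2}\ln\!\bigl(\tfrac{1+\eps}{1-\eps}\bigr) \;=\; t\eps -\tfrac{n\eps^2}{2} + \Delta,
\]
with $|\Delta|\leq n\eps^4 + \abs{t}\eps^3 \leq n\eps^4 + \eps^2\abs{t}$ (the last inequality is $\eps\leq 1$). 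Exponentiating and using $e^x\in 1\pm 2\abs{x}$ for $|x|\leq 1/2$, which is satisfied since $n\eps^4\leq n\cdot n^{-8/5}=n^{-3/5}$ and $\eps^2\abs{t}\leq n^{-4/5}\cdot n^{3/5}=n^{-1/5}$ under the hypotheses, one obtains
\[
(1+\eps)^{(n+t)/2}(1-\eps)^{(n-t)/2} \;\in\; e^{t\eps - n\eps^2/2}\cdot\bigl(1\pm 2(\eps^2\abs{t}+n\eps^4)\bigr).
\]

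Finally, multiplying the two multiplicative estimates and bounding the product of two factors of the form $(1\pm a)(1\pm b)\subseteq 1\pm(a+b+ab)\subseteq 1\pm O(a+b)$ (again using that both $a,b$ are $o(1)$ by the hypothesis), we get the desired bound with $\xi$ taken to be a sufficiently large universal constant that dominates all the absolute constants accumulated above. The main obstacle is entirely bookkeeping: carefully keeping track of which terms in the Taylor series of $\ln(1\pm\eps)$ survive, and verifying that each of them is absorbed by one of the four summands $\eps^2\abs{t}$, $1/n$, $\abs{t}^3/n^2$, or $\eps^4 n$ in the final error expression; no new idea beyond those already present in \cref{app:missinProofs:binomCoeffEstimation} is needed.
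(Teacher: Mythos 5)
Your proposal is correct and follows essentially the same route as the paper's proof: factor $\Beroo{n,\eps}(t)$ into the unbiased coefficient $\binom{n}{(n+t)/2}2^{-n}$, handled by \cref{app:missinProofs:binomCoeffEstimation}, times the bias correction, which is then estimated by elementary exponential/Taylor bounds producing exactly the error terms $\eps^2\abs{t}$ and $\eps^4 n$. The only cosmetic difference is that the paper rewrites the correction as $(1-\eps^2)^{(n-t)/2}(1+\eps)^{t}$ and applies $1+x\in e^{x\pm x^2}$, whereas you take the symmetrized logarithm directly; the bookkeeping is equivalent.
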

\begin{proof}
In the following we focus on $n \geq 200$, smaller $n$'s are handled by setting the value of $\xconst$ to be large enough on these values. Let $\xconst_1$ be the universal constant from \cref{app:missinProofs:binomCoeffEstimation}. Compute
\begin{align}
\Beroo{n, \eps}(t)
&= \binom{n}{\frac{n+t}{2}}2^{-n}(1+\eps)^{\frac{n+t}{2}}(1-\eps)^{\frac{n-t}{2}}\\
&\in \sqrt{\frac{2}{\pi}}(1 \pm \xconst_1\cdot (\frac{\abs{t}^3}{n^2} + \frac{1}{n})) \cdot \frac{1}{\sqrt{n}} \cdot e^{-\frac{t^2}{2n}} \cdot (1-\eps^2)^{\frac{n-t}{2}}(1+\eps)^{t},\nonumber
\end{align}
where the second transition holds by \cref{app:missinProofs:binomCoeffEstimation}.
Since $1 + x \in e^{x\pm x^2}$ for $x \in (-0.5,0.5)$, it follows that:
\begin{align*}
\Beroo{n, \eps}(t)
&\in \sqrt{\frac{2}{\pi}}(1 \pm \xconst_1\cdot (\frac{\abs{t}^3}{n^2} + \frac{1}{n})) \cdot \frac{1}{\sqrt{n}} \cdot e^{-\frac{t^2}{2n}} \cdot e^{(-\eps^2\pm \eps^4)\cdot \frac{n-t}{2}}e^{(\eps \pm \eps^2) t}\\
&\subseteq \sqrt{\frac{2}{\pi}}(1 \pm \xconst_1\cdot (\frac{\abs{t}^3}{n^2} + \frac{1}{n})) \cdot \frac{1}{\sqrt{n}} \cdot e^{-\frac{t^2}{2n} - \frac{\eps^2 n}{2} + \eps t} \cdot e^{\pm(2\eps^2 \abs{t} + \frac{\eps^4 n}{2})}\nonumber\\
&\subseteq  \sqrt{\frac{2}{\pi}}(1 \pm \xconst_1\cdot (\frac{\abs{t}^3}{n^2} + \frac{1}{n})) \cdot \frac{1}{\sqrt{n}} \cdot e^{-\frac{(t-\eps n)^2}{2n}}(1 \pm 4\cdot (\eps^2 \abs{t} + \eps^4 n))\nonumber\\
&\subseteq \sqrt{\frac{2}{\pi}}(1 \pm \xconst\cdot (\eps^2 \abs{t} + \frac{\abs{t}^3}{n^2} + \frac{1}{n} + \eps^4 n)) \cdot \frac{1}{\sqrt{n}} \cdot e^{-\frac{(t-\eps n)^2}{2n}},\nonumber
\end{align*}
where $\xconst = 4\xconst_1 + 4$. Note that since $e^x \in 1 \pm 2x$ for every $\abs{x} < 1$, and since $2\eps^2\abs{t} + \frac{\eps^4 n}{2} < 2n^{-\frac15} + \frac12 n^{-\frac35} < 1$, it follows that $e^{\pm(2\eps^2 \abs{t} + \frac{\eps^4 n}{2})} \subseteq 1 \pm 4\cdot (\eps^2 \abs{t} + \eps^4 n)$ which yields the third transition. In addition, note that $\frac{\abs{t}^3}{n^2} + \frac{1}{n} < n^{-\frac15} + \frac{1}{n} < 1$, which implies the last transition.
\end{proof}

Using the above estimation for the binomial probability, the following proposition estimate the relation between two binomial probabilities.
\begin{proposition}[Restatement of \cref{prop:binomProbRelation}]\label{app:missinProofs:binomProbRelation}
\propBinomProbRelation
\end{proposition}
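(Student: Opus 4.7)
The plan is to apply \cref{prop:binomProbEstimation} to the numerator and denominator separately and then carefully simplify the ratio of the two Gaussian-style expressions. To apply \cref{prop:binomProbEstimation} we need $|t - x|, |t-x'| \leq n^{3/5}$ and $|\eps| \leq n^{-2/5}$. Under our hypotheses, $|t \pm x|, |t \pm x'| \leq 2\const\sqrt{n\log n}$ and $|\eps| \leq \const\sqrt{\log n/n}$, both of which satisfy the required bounds provided $n$ is larger than some constant depending on $\const$ (for smaller $n$ the proposition is trivial by choosing $\varphi(\const)$ large enough).

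Applying \cref{prop:binomProbEstimation} to both $\Beroo{n,\eps}(t-x)$ and $\Beroo{n,\eps}(t-x')$, the prefactors $\sqrt{2/(\pi n)}$ cancel in the ratio, and I would simplify the exponent
\[
\frac{(t-x-\eps n)^2 - (t-x'-\eps n)^2}{2n}
= \frac{-2(t-\eps n)x + x^2 + 2(t-\eps n)x' - x'^2}{2n},
\]
using the algebraic identity $(a-u)^2 - (a-v)^2 = -2au + u^2 + 2av - v^2$ with $a = t-\eps n$. This exactly matches the exponent in the statement.

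Next I would bound each component of the error $\xi(\eps^2|t-x| + 1/n + |t-x|^3/n^2 + \eps^4 n)$ (and analogously for $t-x'$) by $\varphi(\const)\cdot \log^{1.5}n/\sqrt{n}$:
(i) $\eps^2|t-x| \leq \const^2 (\log n / n) \cdot 2\const\sqrt{n\log n} = 2\const^3 \log^{1.5}n/\sqrt{n}$;
(ii) $|t-x|^3/n^2 \leq 8\const^3(n\log n)^{1.5}/n^2 = 8\const^3 \log^{1.5}n/\sqrt{n}$;
(iii) $\eps^4 n \leq \const^4 \log^2 n/n$, which is $o(\log^{1.5}n/\sqrt{n})$ and hence absorbed;
(iv) $1/n$ is likewise dominated. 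Thus both factors of the form $(1 \pm \error_1)$ and $(1 \pm \error_2)$ appearing in the numerator and denominator are in $1 \pm \varphi_1(\const)\log^{1.5}n/\sqrt{n}$ for some function $\varphi_1$.

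Finally, I would combine using the standard observation that $(1 \pm a)/(1 \pm b) \subseteq 1 \pm 4\max(a,b)$ when $a, b \leq 1/2$, which holds here for $n$ large enough, to conclude that the ratio of the two binomial probabilities lies in $(1 \pm 4\varphi_1(\const)\log^{1.5}n/\sqrt{n})$ times the desired exponential. Setting $\varphi(\const) = 4\varphi_1(\const)$ yields the claim. The whole argument is essentially a careful bookkeeping exercise; the only ``obstacle'' is keeping track of which error terms dominate, and verifying that the two largest terms ($\eps^2|t|$ and $|t|^3/n^2$) both happen to give the same order $\log^{1.5}n/\sqrt{n}$, which is what dictates the stated error rate.
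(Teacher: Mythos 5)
Your proposal is correct and follows essentially the same route as the paper's proof: apply \cref{prop:binomProbEstimation} to numerator and denominator (after checking the hypotheses hold for $n$ large in terms of $\const$, with small $n$ absorbed into $\varphi$), cancel the $\sqrt{2/(\pi n)}$ prefactors, expand the difference of squares in the exponent, bound every error component by $O_\const(\log^{1.5}n/\sqrt{n})$, and combine via the $\frac{1\pm y}{1\pm y}\subseteq 1\pm 4y$ observation. The only difference is cosmetic bookkeeping (you bound the four error terms individually, the paper folds them into a single coefficient $\xconst(\const^4+10\const^3+1)$), so nothing further is needed.
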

\begin{proof}
Let $\xconst$ be the constant from \cref{app:missinProofs:binomProbEstimation}. There exists a function $\vartheta \colon \R^+ \mapsto \N$ such that $n^{\frac35} > 2\const \cdot \sqrt{n\log{n}}$ and $\xconst\cdot(\const^4 + 10\const^3 + 1)\cdot \frac{\log^{1.5}n}{\sqrt{n}} < \frac12$  for every $n \geq \vartheta(\const)$. In the following we focus on $n \geq \vartheta(\const)$, where smaller $n$'s are handled by setting the value of $\varphi(\const)$ to be large enough on these values. Let $\varphi(\const) \eqdef 4\cdot \xconst \cdot (\const^4 + 10\const^3 + 1)$. It follows that

\begin{align*}
\frac{\Beroo{n,\eps}(t-x')}{\Beroo{n,\eps}(t-x)}
&\in \frac{\left(1 \pm   \xconst \cdot  (\eps^2 \abs{t-x'} + \frac{1}{n} + \frac{\abs{t-x'}^3}{n^2} + \eps^4 n)\right)\cdot \sqrt{\frac{2}{\pi}} \cdot \frac{1}{\sqrt{n}} \cdot e^{-\frac{(t - \eps n - x')^2}{2n}}}{\left(1 \pm   \xconst \cdot  (\eps^2 \abs{t-x} + \frac{1}{n} + \frac{\abs{t-x}^3}{n^2} + \eps^4 n)\right)\cdot \sqrt{\frac{2}{\pi}}\cdot \frac{1}{\sqrt{n}}\cdot e^{-\frac{(t - \eps n - x)^2}{2n}}}\\
&\subseteq \frac{\left(1 \pm   \xconst \cdot  (\const^4 + 10\const^3 + 1)\cdot \frac{\log^{1.5}n}{\sqrt{n}}\right)\cdot e^{-\frac{(t - \eps n - x')^2}{2n}}}{\left(1 \pm   \xconst \cdot  (\const^4 + 10\const^3 + 1)\cdot \frac{\log^{1.5}n}{\sqrt{n}}\right)\cdot e^{-\frac{(t - \eps n - x)^2}{2n}}}\nonumber\\
&\subseteq (1 \pm \varphi(\const) \cdot  \frac{\log^{1.5}n}{\sqrt{n}}) \cdot \exp\left(\frac{(t - \eps n - x)^2}{2n} - \frac{(t - \eps n - x')^2}{2n}\right)\nonumber\\
&= (1 \pm \varphi(\const) \cdot  \frac{\log^{1.5}n}{\sqrt{n}}) \cdot \exp\left(\frac{- 2\cdot (t - \eps n)\cdot x + x^2 + 2\cdot (t - \eps n)\cdot x' - x'^2}{2n}\right),\nonumber
\end{align*}
where the first transition holds by \cref{app:missinProofs:binomProbEstimation}, the second one holds by the bounds on $\size{t}$, $\size{x}$, $\size{x'}$ and $\size{\eps}$, and the third one holds since $\frac{1\pm y}{1\pm y} \subseteq 1 \pm 4y$ for every $y \in [0,\frac12]$.
\end{proof}

Recall that for $n \in \N$ and $k \in \Z$ we let $\vBeroo{n, \eps}(k) \eqdef \ppr{x \la \Beroo{n,\eps}}{x \geq  k} = \sum_{t\geq k}\Beroo{n,\eps}(t)$. 
Assuming that $n$ is larger than some universal constant, the following proposition gives a useful bound on the probability of the event that a binomial distribution is in a certain range of value.

\begin{proposition}[Restatement of \cref{prop:gameValuesDifferenceBound}]\label{app:missinProofs:gameValuesDifferenceBound}
\propGameValuesDifferenceBound
\end{proposition}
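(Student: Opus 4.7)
The plan is to reduce the bound on the difference of tail probabilities to a pointwise bound on the Binomial point mass $\Beroo{n,\eps}(t)$, and then sum over the range $[k,k')$. Assume \wlg that $k \leq k'$, so that
\begin{align*}
|\vBeroo{n,\eps}(k) - \vBeroo{n,\eps}(k')| = \sum_{t=k}^{k'-1} \Beroo{n,\eps}(t).
\end{align*}
The number of summands is $k'-k$, and every $t$ in this range satisfies $|t| \leq \max(|k|,|k'|) \leq n^{3/5}$, so \cref{app:missinProofs:binomProbEstimation} (with the same hypotheses as the present proposition) applies to every nonzero term.

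Next I would show that each term $\Beroo{n,\eps}(t)$ is bounded by $1/\sqrt{n}$. By \cref{app:missinProofs:binomProbEstimation}, for $t \in \Supp(\Beroo{n,\eps})$ with $|t| \leq n^{3/5}$,
\begin{align*}
\Beroo{n,\eps}(t) \leq (1+\error)\cdot \sqrt{\tfrac{2}{\pi}}\cdot \tfrac{1}{\sqrt{n}}\cdot e^{-(t-\eps n)^2/(2n)} \leq (1+\error)\cdot \sqrt{\tfrac{2}{\pi}}\cdot \tfrac{1}{\sqrt{n}},
\end{align*}
where $\error = \xi\cdot(\eps^2|t| + 1/n + |t|^3/n^2 + \eps^4 n)$. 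Using $|\eps| \leq n^{-2/5}$ and $|t| \leq n^{3/5}$, each of the four summands defining $\error$ is at most $n^{-1/5}$, so $\error \leq 4\xi\cdot n^{-1/5}$. Since $\sqrt{2/\pi} < 1$ (concretely $\sqrt{2/\pi} < 0.8$), for $n$ larger than a universal constant we have $(1+\error)\sqrt{2/\pi} \leq 1$, yielding $\Beroo{n,\eps}(t) \leq 1/\sqrt{n}$ for every $t$ in the required range. Terms with $t \notin \Supp(\Beroo{n,\eps})$ contribute zero.

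Summing over $t \in \{k,\ldots,k'-1\}$ gives
\begin{align*}
|\vBeroo{n,\eps}(k) - \vBeroo{n,\eps}(k')| \leq (k'-k)\cdot \tfrac{1}{\sqrt{n}} = \tfrac{|k-k'|}{\sqrt{n}}.
\end{align*}
The main (only mild) obstacle is the quantitative check that the slack factor $\sqrt{2/\pi}$ absorbs the multiplicative error $(1+\error)$; this is where the hypothesis that $n$ exceeds a universal constant, together with the polynomial gap between the bounds on $|t|,|\eps|$ and their critical scales $\sqrt{n}$ and $1/\sqrt{n}$, enters.
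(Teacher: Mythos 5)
Your proof is correct and follows essentially the same route as the paper: both reduce the difference of tails to the telescoping sum $\sum_{t=k}^{k'-1}\Beroo{n,\eps}(t)$ and bound each term by $\frac{1}{\sqrt{n}}$ via \cref{app:missinProofs:binomProbEstimation}, using the largeness of $n$ so that the factor $(1+\error)\sqrt{2/\pi}$ is at most $1$. Your explicit check that $\error \leq 4\xi\, n^{-1/5}$ is a slightly more detailed version of the paper's shortcut of simply taking $\error \leq 0.1$.
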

\begin{proof}
By \cref{app:missinProofs:binomProbEstimation}, for every $t \in \Z$ with $\size{t} \leq n^{\frac35}$, it holds that
\begin{align*}
\Beroo{n, \eps}(t) &\in (1 \pm 0.1) \cdot \sqrt{\frac{2}{\pi}} \cdot \frac{1}{\sqrt{n}} \cdot e^{-\frac{(t-\eps n)^2}{2n}},
\end{align*}
and therefore
\begin{align*}
\Beroo{n, \eps}(t)
&\leq \frac{1}{\sqrt{n}} \cdot e^{-\frac{(t-\eps n)^2}{2n}}
\leq \frac{1}{\sqrt{n}}.
\end{align*}
Assume \wlg that $k' \ge k$, it holds that $\vBeroo{n, \eps}(k) - \vBeroo{n, \eps}(k') = \sum _{t = k}^{k'} \Beroo{n, \eps}(t)$, which by the bound above, is at most  $\frac{(k'-k)}{\sqrt{n}}$.
\end{proof}

Recall that the function $\Phi\colon \R \mapsto (0,1)$ defined as $\Phi(x) \eqdef \frac{1}{\sqrt{2\pi}}\int_{x}^{\infty}e^{-\frac{t^2}{2}}dt$ is the cumulative distribution function of the standard normal distribution. The following fact and proposition are the first steps towards estimating the value of $\vBeroo{n,\eps}(k)$ in \cref{app:missinProofs:gameValueEstimation}.


\begin{fact}[\cite{AbramowitzS64}]\label{app:missinProofs:normalBound}
For $x \geq 0$ it holds that
\begin{align*}
\sqrt{\frac{2}{\pi}} \cdot \frac{e^{-\frac{x^2}{2}}}{x + \sqrt{x^2 + 4}} \leq
\Phi(x) \leq \sqrt{\frac{2}{\pi}}\cdot \frac{e^{-\frac{x^2}{2}}}{x + \sqrt{x^2 + \frac{8}{\pi}}}.
\end{align*}
\end{fact}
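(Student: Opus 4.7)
The bounds are Mills ratio bounds. I would reformulate them in terms of the Mills ratio $M(x) := e^{x^2/2}\int_x^\infty e^{-t^2/2}dt = \sqrt{2\pi}\,e^{x^2/2}\Phi(x)$. The stated inequalities are equivalent to
$$U_4(x) \le M(x) \le U_{8/\pi}(x), \qquad U_c(x) := \frac{2}{x+\sqrt{x^2+c}}.$$
The plan is to exploit the ODE that $M$ satisfies. Differentiating under the integral sign gives $M'(x) = xM(x) - 1$, and an elementary computation with $U_c$ (using $U_c'(x) = -U_c(x)/\sqrt{x^2+c}$) yields
$$xU_c(x) - U_c'(x) - 1 = -\phi_c(x), \qquad \phi_c(x) = \frac{(c-2)\sqrt{x^2+c} - 2x}{\sqrt{x^2+c}\,(x+\sqrt{x^2+c})^2}.$$

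Now set $g_c(x) := M(x) - U_c(x)$. Subtracting the two identities gives the linear ODE $g_c'(x) - x g_c(x) = -\phi_c(x)$. The integrating factor $e^{-x^2/2}$ together with the boundary condition $g_c(x)\to 0$ as $x\to\infty$ (both $M$ and $U_c$ decay like $1/x$) yields the representation
$$g_c(x) = e^{x^2/2}\int_x^\infty \phi_c(t)\, e^{-t^2/2}\,dt.$$
So the sign of $g_c$ is controlled entirely by a weighted integral of $\phi_c$. This is the engine of the proof: I have traded two transcendental inequalities for sign conditions on an explicit elementary function.

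For the lower bound I would use $c = 4$, where $\phi_4(x) = \frac{2(\sqrt{x^2+4}-x)}{\sqrt{x^2+4}(x+\sqrt{x^2+4})^2}$ is strictly positive on $[0,\infty)$; hence $g_4(x) > 0$, giving $M(x) > U_4(x)$. The upper bound with $c = 8/\pi$ is the harder case and is the main obstacle: $\phi_{8/\pi}$ is positive near $0$ (since $8/\pi > 2$) but negative for large $x$ (since $8/\pi < 4$), so it changes sign. The saving grace is the calibration at the origin, $M(0) = \sqrt{\pi/2} = U_{8/\pi}(0)$, which forces $g_{8/\pi}(0) = 0$ and hence $\int_0^\infty \phi_{8/\pi}(t)e^{-t^2/2}dt = 0$.

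To close out the upper bound I would verify that $\phi_{8/\pi}$ has exactly one sign change on $(0,\infty)$: the numerator $(8/\pi - 2)\sqrt{t^2 + 8/\pi} - 2t$ is strictly decreasing in $t$, so it crosses zero at most once. Combined with the vanishing of the total weighted integral, this ``$+$ then $-$'' pattern implies that the partial integral $\int_0^x \phi_{8/\pi}(t)e^{-t^2/2}dt$ is non-negative for every $x \geq 0$, and therefore $\int_x^\infty \phi_{8/\pi}(t)e^{-t^2/2}dt \leq 0$. This yields $g_{8/\pi}(x) \leq 0$, i.e., $M(x) \leq U_{8/\pi}(x)$, completing the proof.
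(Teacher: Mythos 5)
Your proof is correct, but note that the paper does not prove this statement at all: it is imported as a known fact from Abramowitz--Stegun (these are the classical Birnbaum/Komatsu-type Mills-ratio bounds), so your contribution is a self-contained derivation rather than an alternative to an argument in the text. The derivation itself checks out. Writing $M(x)=\sqrt{2\pi}\,e^{x^2/2}\Phi(x)$, the claimed bounds are indeed equivalent to $U_4\le M\le U_{8/\pi}$; the identity $xU_c(x)-U_c'(x)-1=-\phi_c(x)$ with your $\phi_c$ is verified by a direct computation (the numerator collapses to $2x+(2-c)\sqrt{x^2+c}$), and $M'(x)=xM(x)-1$ is just the product rule plus the fundamental theorem of calculus (no differentiation under the integral sign is needed). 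The integrating-factor representation $g_c(x)=e^{x^2/2}\int_x^\infty\phi_c(t)e^{-t^2/2}dt$ is legitimate since $M$ and $U_c$ are bounded on $[0,\infty)$, so the boundary term $e^{-T^2/2}g_c(T)$ vanishes. For $c=4$ the function $\phi_4$ is strictly positive, giving the lower bound; for $c=8/\pi$ the calibration $M(0)=U_{8/\pi}(0)=\sqrt{\pi/2}$ forces the total weighted integral to vanish, and since the numerator $(8/\pi-2)\sqrt{t^2+8/\pi}-2t$ has derivative $(8/\pi-2)\tfrac{t}{\sqrt{t^2+8/\pi}}-2<0$ it changes sign exactly once, so the single-sign-change argument correctly yields $\int_x^\infty\phi_{8/\pi}(t)e^{-t^2/2}dt\le 0$ for all $x\ge 0$ and hence the upper bound. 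A nice feature of your route is that it makes transparent why these two constants are extremal: $c=4$ is dictated by the behavior of $\phi_c$ as $x\to\infty$, and $c=8/\pi$ by the calibration at $x=0$.
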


\begin{proposition}\label{app:missinProofs:estimateSumWithIntegral}
Let $n \in \N$, $\eps \in (-1,1)$ and $k,k' \in \Z$ be such that $k' \geq k \geq \frac{\eps n}{2}$. Then
\begin{align*}
\abs{\sum_{t=k}^{k'}e^{-\frac{(2t-\eps n)^2}{2n}} - \int_{k}^{k'} e^{-\frac{(2t-\eps n)^2}{2n}}dt}
\leq e^{-\frac{(2k - \eps n)^2}{2n}}.
\end{align*}
\end{proposition}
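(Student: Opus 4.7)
The plan is to exploit monotonicity of the integrand, which reduces the claim to a standard integral-versus-sum comparison. Define $f(t) = e^{-\frac{(2t-\eps n)^2}{2n}}$ and compute $f'(t) = -\frac{2(2t-\eps n)}{n}\cdot f(t)$. Under the hypothesis $t \ge k \ge \frac{\eps n}{2}$ we have $2t - \eps n \ge 0$, so $f'(t) \le 0$ on the entire interval $[k,k']$. Hence $f$ is (weakly) decreasing on $[k,k']$, which is the only structural fact I need from the Gaussian.

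Once monotonicity is in hand, I would apply the standard Riemann-sum bracketing for a decreasing function. For every integer $t$ with $k \le t \le k'-1$ the bound $f(t+1) \le \int_t^{t+1} f(s)\,ds \le f(t)$ holds, so summing over $t = k, \ldots, k'-1$ gives
\begin{align*}
\sum_{t=k+1}^{k'} f(t) \;\le\; \int_k^{k'} f(s)\,ds \;\le\; \sum_{t=k}^{k'-1} f(t).
\end{align*}
Adding $f(k)$ to the right inequality and $f(k')$ to the left inequality, and letting $S = \sum_{t=k}^{k'} f(t)$, yields $f(k') \le S - \int_k^{k'} f(s)\,ds \le f(k)$. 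Since $f(k) \ge f(k') \ge 0$, I conclude $\bigl|S - \int_k^{k'} f(s)\,ds\bigr| \le f(k) = e^{-\frac{(2k-\eps n)^2}{2n}}$, which is exactly the stated bound.

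There is no real obstacle here: the only step requiring care is confirming the sign of $2t - \eps n$, which the hypothesis $k \ge \eps n/2$ is precisely designed to guarantee (this is also why the bound is stated with $f(k)$ on the right-hand side rather than $\max(f(k),f(k'))$ --- monotonicity makes $f(k)$ the dominant term). The proof is a few lines and uses nothing beyond the monotonicity of $f$ and the elementary estimate of a Riemann sum by an integral.
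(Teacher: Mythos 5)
Your proof is correct and follows essentially the same route as the paper: both establish that $f(t)=e^{-\frac{(2t-\eps n)^2}{2n}}$ is decreasing on $[k,k']$ because $k\geq \frac{\eps n}{2}$, and then bracket the sum against the integral via a standard Darboux/Riemann comparison to get a discrepancy of at most $f(k)$. Your version even yields the slightly sharper two-sided bound $f(k')\leq \sum_{t=k}^{k'}f(t)-\int_k^{k'}f \leq f(k)$, but the argument is the same.
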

\begin{proof}
Consider the function $f(t) = e^{-\frac{(2t-\eps n)^2}{2n}}$. The function $f$ obtains its maximum at $t = \frac{\eps n}{2}$ and is monotonic decreasing in $[\frac{\eps n}{2},\infty)$. In particular, it is decreasing
in $[k,\infty)$. Since $\sum_{t=k}^{k'}f(t)$ is an upper Darboux sum of $f$ \wrt $\{k,k+1, \ldots, k'+1\}$, it holds that
$\int_{k}^{k'}f(t)dt \leq \int_{k}^{k'+1}f(t)dt \leq \sum_{t=k}^{k'}f(t)$. In addition, since $\sum_{t=k+1}^{k'}f(t)$ is a lower Darboux sum of $f$ \wrt $\{k,k+1, \ldots, k'\}$, it holds that  $\sum_{t=k}^{k'}f(t) \leq \int_{k}^{k'}f(t)dt + f(k)$. The proof follows, since the difference between the above sums is at most  $f(k) = e^{-\frac{(2k - \eps n)^2}{2n}}$.
\end{proof}

We are now ready for estimating $\vBeroo{n,\eps}(k)$ using the function $\Phi$.

\begin{proposition}\label{app:missinProofs:gameValueEstimation}
Let $n \in \N$, $k\in \Z$, $\eps \in [-1,1]$ and $\const > 0$ be such that $\size{\eps} \leq \const\cdot \sqrt{\frac{\log{n}}{n}}$ and $\abs{k} < \const\cdot \sqrt{n\log{n}}$. Then
\begin{align*}
\vBeroo{n, \eps}(k) \in \Phi(\frac{k - \eps n}{\sqrt{n}}) \pm \error,
\end{align*}
for $\error = \varphi(\const) \cdot \frac{\log^{1.5}n}{\sqrt{n}}\cdot e^{-\frac{(k-\eps n)^2}{2n}}$ and a universal function $\varphi$.
\end{proposition}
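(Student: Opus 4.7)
My plan is to replace each binomial probability in $\vBeroo{n,\eps}(k)=\sum_{t\ge k,\,t\equiv n\bmod 2}\Beroo{n,\eps}(t)$ by the Gaussian estimate from \cref{app:missinProofs:binomProbEstimation}, then convert the resulting sum of Gaussians to a one-sided integral via \cref{app:missinProofs:estimateSumWithIntegral}, and finally identify this integral with $\Phi((k-\eps n)/\sqrt{n})$. First, I would truncate the sum to $|t|\le n^{3/5}$: the omitted tails on both sides are super-polynomially small by \cref{claim:Hoeffding} (using $|\eps|=O(n^{-2/5})$) and by \cref{app:missinProofs:normalBound}, so they are trivially absorbed in $\error$. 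On the truncated range, \cref{app:missinProofs:binomProbEstimation} gives a uniform multiplicative error $(1\pm O(\log^{1.5}n/\sqrt{n}))$ (each of the four additive terms in the error of that proposition is $O(\log^{1.5}n/\sqrt{n})$ under our hypotheses), and I factor this error out of the entire sum.

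In the main case $k\ge \eps n$, I would reparametrize $t=2u+r$ with $r=n\bmod 2$ so that the summand becomes $e^{-(2u-(\eps n-r))^2/(2n)}$ with unit step in $u$. The hypothesis $k\ge \eps n$ translates to the lower summation index exceeding the mode $(\eps n-r)/2$, so \cref{app:missinProofs:estimateSumWithIntegral} bounds the sum-to-integral gap by $e^{-(k-\eps n)^2/(2n)}$. The substitution $v=2u+r$ identifies the resulting integral with $\tfrac12\int_k^\infty e^{-(v-\eps n)^2/(2n)}\,dv=\tfrac12\sqrt{2\pi n}\cdot\Phi((k-\eps n)/\sqrt{n})$, and the prefactor $\sqrt{2/(\pi n)}$ inherited from \cref{app:missinProofs:binomProbEstimation} cancels exactly. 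Combined with the earlier multiplicative error and the bound $\Phi((k-\eps n)/\sqrt{n})\le\tfrac12 e^{-(k-\eps n)^2/(2n)}$ (valid for $k\ge \eps n$ by \cref{app:missinProofs:normalBound}), this yields the desired additive bound.

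In the case $k<\eps n$, the Gaussian summand is non-monotonic on $[k,\infty)$, so \cref{app:missinProofs:estimateSumWithIntegral} cannot be applied directly. I would instead exploit the symmetry $X\sim\Beroo{n,\eps}\Leftrightarrow-X\sim\Beroo{n,-\eps}$, giving the identity $\vBeroo{n,\eps}(k)=1-\vBeroo{n,-\eps}(-k+c)$ for the unique $c\in\{1,2\}$ making $-k+c\equiv n\bmod 2$. Since $-k+c>-\eps n$ in this regime, the previous case applies to the right-hand side with parameters $(n,-\eps,-k+c)$; combining with the identity $\Phi(-x)=1-\Phi(x)$ and absorbing the $O(1)$ parity shift in the argument of $\Phi$ gives the claim.

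The main obstacle I expect is the careful conversion of this $O(1)$ parity shift in Case B into an additive error with the correct Gaussian scaling. A naive bound $|\Phi(x+O(1/\sqrt{n}))-\Phi(x)|=O(1/\sqrt{n})$ is insufficient because the target error has the small prefactor $e^{-(k-\eps n)^2/(2n)}$, which can be as small as $n^{-\Omega(\const^2)}$. The resolution is to use the sharp density bound $|\Phi(x+\delta)-\Phi(x)|\le|\delta|\cdot\max_{u\in[x,x+\delta]}\phi(u)$ with $\phi(u)=\frac{1}{\sqrt{2\pi}}e^{-u^2/2}$; since in Case B the relevant argument $x_0=(\eps n-k)/\sqrt{n}$ is positive and the shift $c/\sqrt{n}>0$ keeps us on the decreasing side of $\phi$, the maximum equals $\phi(x_0)=\frac{1}{\sqrt{2\pi}}e^{-(k-\eps n)^2/(2n)}$, producing precisely the factor $e^{-(k-\eps n)^2/(2n)}$ required. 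An analogous argument shows that the shift changes $e^{-(k-\eps n)^2/(2n)}$ itself by only a $(1+o(1))$ factor in our regime.
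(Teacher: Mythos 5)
Your overall route is the same as the paper's: estimate each term of the tail sum by \cref{app:missinProofs:binomProbEstimation}, pass from the resulting sum of Gaussians to an integral via \cref{app:missinProofs:estimateSumWithIntegral}, identify that integral with $\Phi(\frac{k-\eps n}{\sqrt n})$, and reduce the case $k<\eps n$ to the case $k\ge\eps n$ by the symmetry $X\mapsto -X$. The step that fails as written is your error accounting on the truncated range. You truncate at $\abs{t}\le n^{3/5}$ and assert that there each of the four error terms of \cref{app:missinProofs:binomProbEstimation}, namely $\eps^2\abs{t}$, $\frac1n$, $\frac{\abs{t}^3}{n^2}$ and $\eps^4 n$, is $O(\frac{\log^{1.5}n}{\sqrt n})$. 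That is false on your range: at $\abs{t}$ of order $n^{3/5}$ one has $\frac{\abs{t}^3}{n^2}=\Theta(n^{-1/5})$ and $\eps^2\abs{t}$ can be as large as $\const^2\log n\cdot n^{-2/5}$, both exceeding $\frac{\log^{1.5}n}{\sqrt n}$ by a polynomial factor. Hence the uniform multiplicative factor you pull out of the whole sum is only $1\pm O(n^{-1/5})$, and after multiplying by $\Phi(\frac{k-\eps n}{\sqrt n})\le \frac12 e^{-\frac{(k-\eps n)^2}{2n}}$ you get an additive error of order $n^{-1/5}\cdot e^{-\frac{(k-\eps n)^2}{2n}}$, weaker than the claimed $\varphi(\const)\cdot\frac{\log^{1.5}n}{\sqrt n}\cdot e^{-\frac{(k-\eps n)^2}{2n}}$ by roughly $n^{3/10}$.

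The repair is the paper's choice of truncation: cut the sum at $\ell=\Theta(\const\sqrt{n\log n})$ (the paper takes $\ell=4\ceil{\const\sqrt{n\log n}}$). At that scale all four error terms are indeed $O_\const(\frac{\log^{1.5}n}{\sqrt n})$, so factoring out a uniform multiplicative error is legitimate; the price is that the discarded binomial tail and the truncated piece of the integral are then only $n^{-\Omega(\const^2)}$ (via \cref{claim:Hoeffding} and \cref{app:missinProofs:normalBound}), not super-polynomially small, so you must check they are dominated by the error floor $\frac{\log^{1.5}n}{\sqrt n}\cdot e^{-\frac{(k-\eps n)^2}{2n}}\ge n^{-2\const^2-1}$; the paper does this after assuming \wlg $\const\ge 4$. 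Apart from this, your plan is sound. In particular, your case $k<\eps n$ is a valid mild variant of the paper's: the paper uses the exact identity $\vBeroo{n,\eps}(k)=1-\vBeroo{n,-\eps}(-k)+\Beroo{n,\eps}(k)$ and absorbs the single point mass $\Beroo{n,\eps}(k)=O(n^{-1/2})\cdot e^{-\frac{(k-\eps n)^2}{2n}}$ into the error, thereby avoiding your parity shift entirely, but your mean-value treatment of the shift (bounding $\abs{\Phi(x+\delta)-\Phi(x)}$ by $\abs{\delta}$ times the density at the endpoint nearer the origin, and noting the shift changes the Gaussian error factor only by $1+o(1)$) is correct.
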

\begin{proof}
Without loss of generality, assume that $\const \geq 4$. Note that there exists a function $\vartheta \colon \R^+ \mapsto \N$ such that $n^\frac35 > 5\const\cdot \sqrt{n\log{n}}$  for every $n \geq \vartheta(\const)$. In the following we focus on $n \geq \vartheta(\const)$, where smaller $n$'s are handled by setting the value of $\varphi(\const)$ to be large enough on these values. We also assume for simplicity that $n$ and $k$ are both even, where the proofs of the other cases are analogous. Let $\xconst_1$ be the constant defined in \cref{app:missinProofs:binomProbEstimation}, and let $\ell \eqdef 4\cdot \ceil{\const\sqrt{n\log n}}< 5\const\cdot \sqrt{n\log n}$. We start by handling the case $k \geq \eps n$. It holds that
\begin{align}\label{app:missinProofs:gameValueEstimation:1}
\sum_{t=k}^{\ell} \Beroo{n, \eps}(t)
&= \sum_{t=\frac{k}{2}}^{\frac{\ell}{2}} \Beroo{n, \eps}(2t)\\
&\in \sum_{t=\frac{k}{2}}^{\frac{\ell}{2}} \sqrt{\frac{2}{\pi}}(1 \pm \xconst_1 \cdot (\eps^2 \abs{2t} + \frac{\abs{2t}^3}{n^2} + \frac{1}{n} + \eps^4 n)) \cdot \frac{1}{\sqrt{n}} e^{-\frac{(2t-\eps n)^2}{2n}}\nonumber\\
&\subseteq \sum_{t=\frac{k}{2}}^{\frac{\ell}{2}} \sqrt{\frac{2}{\pi}}(1 \pm \varphi'(\const) \cdot \frac{\log^{1.5}n}{\sqrt{n}}) \cdot \frac{1}{\sqrt{n}} e^{-\frac{(2t-\eps n)^2}{2n}}\nonumber\\
&\subseteq (1 \pm \varphi'(\const) \cdot \frac{\log^{1.5}n}{\sqrt{n}}) \cdot A(n,k,\eps,\const), \nonumber
\end{align}
letting $\varphi'(\const) \eqdef \xconst_1\cdot (\const^4 + 1034\const^3 + 1)$ and $A(n,k,\eps,\const) \eqdef \sum_{t=\frac{k}{2}}^{\frac{\ell}{2}}  \sqrt{\frac{2}{\pi}} \cdot\frac{1}{\sqrt{n}} \cdot e^{-\frac{(2t-\eps n)^2}{2n}}$. The first transition holds since even $n$ yields that $\Beroo{n, \eps}(j) = 0$ for every odd $j$, the second one holds by \cref{app:missinProofs:binomProbEstimation} and the third one holds by the bounds on $\ell$, $\eps$ and $k$.

Compute
\begin{align}\label{app:missinProofs:gameValueEstimation:2}
A(n,k,\eps,\const)
&= \sum_{t=\frac{k}{2}}^{\frac{\ell}{2}}  \sqrt{\frac{2}{\pi}} \cdot\frac{1}{\sqrt{n}} \cdot e^{-\frac{(2t-\eps n)^2}{2n}}\\
&\in \int_{\frac{k}{2}}^{\frac{\ell}{2}} \sqrt{\frac{2}{\pi}} \cdot\frac{1}{\sqrt{n}} \cdot e^{-\frac{(2t-\eps n)^2}{2n}}dt \pm \frac{1}{\sqrt{n}}\cdot e^{-\frac{(k-\eps n)^2}{2n}}\nonumber\\
&= \int_{\frac{k-\eps n}{\sqrt{n}}}^{\frac{\ell - \eps n}{\sqrt{n}}} \frac{1}{\sqrt{2\pi}} \cdot e^{-\frac{x^2}{2}}dx \pm \frac{1}{\sqrt{n}}\cdot e^{-\frac{(k-\eps n)^2}{2n}}\nonumber\\
&= \Phi(\frac{k-\eps n}{\sqrt{n}}) - \Phi(\frac{\ell - \eps n}{\sqrt{n}}) \pm \frac{1}{\sqrt{n}}\cdot e^{-\frac{(k-\eps n)^2}{2n}}\nonumber\\
&\subseteq \Phi(\frac{k-\eps n}{\sqrt{n}}) \pm \frac{1}{n^{4c^2}} \pm \frac{1}{\sqrt{n}}\cdot e^{-\frac{(k-\eps n)^2}{2n}}\nonumber\\
&\subseteq \Phi(\frac{k-\eps n}{\sqrt{n}}) \pm \frac{2}{\sqrt{n}}\cdot e^{-\frac{(k-\eps n)^2}{2n}},\nonumber
\end{align}
where the second transition holds by \cref{app:missinProofs:estimateSumWithIntegral} (and since $k \geq \eps n$), the third one holds by letting $x = \frac{2t-\eps n}{\sqrt{n}}$, the fifth one holds by \cref{app:missinProofs:normalBound} which yields that $\Phi(\frac{\ell -\eps n}{\sqrt{n}}) \leq \Phi(3\const\sqrt{\log n}) \leq \frac{1}{n^{4c^2}}$, and the last one holds since $\frac{1}{\sqrt{n}}\cdot e^{-\frac{(k-\eps n)^2}{2n}} \geq \frac{1}{n^{2\const^2 + \frac{1}{2}}} \geq  \frac{1}{n^{4\const^2}}$.
Applying \cref{app:missinProofs:gameValueEstimation:2} on \cref{app:missinProofs:gameValueEstimation:1} yields that
\begin{align}
\sum_{t=k}^{\ell} \Beroo{n, \eps}(t)\label{app:missinProofs:gameValueEst:1}
&\in (1 \pm \varphi'(\const) \cdot \frac{\log^{1.5}n}{\sqrt{n}}) \cdot (\Phi(\frac{k-\eps n}{\sqrt{n}}) \pm \frac{2}{\sqrt{n}}\cdot e^{-\frac{(k-\eps n)^2}{2n}}) \\
&= \Phi(\frac{k-\eps n}{\sqrt{n}}) \pm \varphi'(\const) \cdot \frac{\log^{1.5}n}{\sqrt{n}} \cdot \Phi(\frac{k-\eps n}{\sqrt{n}}) \pm
2\cdot \varphi'(\const) \cdot \frac{\log^{1.5}n}{n}\cdot e^{-\frac{(k-\eps n)^2}{2n}} \pm \frac{2}{\sqrt{n}}\cdot e^{-\frac{(k-\eps n)^2}{2n}}\nonumber\\
&\subseteq \Phi(\frac{k-\eps n}{\sqrt{n}}) \pm \varphi''(\const)\cdot \frac{\log^{1.5}n}{\sqrt{n}}\cdot e^{-\frac{(k-\eps n)^2}{2n}},\nonumber
\end{align}
letting $\varphi''(\const) \eqdef 3\cdot \varphi'(\const) + 2$. We conclude that
\begin{align}
\vBeroo{n, \eps}(k) &= \sum_{t=k}^{n} \Beroo{n, \eps}(t)\label{app:missinProofs:gameValueEst:2}\\
&= \sum_{t=k}^{\ell} \Beroo{n, \eps}(t) + \ppr{x\la \Beroo{n,\eps}}{x > \ell}\nonumber\\
&\in \sum_{t=k}^{\ell} \Beroo{n, \eps}(t)\pm \frac{1}{n^{4c^2}}\nonumber\\
&\subseteq \left(\Phi(\frac{k-\eps n}{\sqrt{n}}) \pm \varphi''(\const)\cdot \frac{\log^{1.5}n}{\sqrt{n}}\cdot e^{-\frac{(k-\eps n)^2}{2n}}\right)
\pm \frac{1}{n^{4\const^2}}\nonumber\\
&\subseteq \Phi(\frac{k-\eps n}{\sqrt{n}}) \pm (\varphi''(\const)+1)\cdot \frac{\log^{1.5}n}{\sqrt{n}}\cdot e^{-\frac{(k-\eps n)^2}{2n}},\nonumber
\end{align}
where the third transition holds by Hoeffding's inequality (\cref{claim:Hoeffding}) and the fourth one holds by \cref{app:missinProofs:gameValueEst:1}.

It is left to handle the case $k < \eps n$. For such $k$, it holds that
\begin{align}
\vBeroo{n, \eps}(k) &= 1 - \vBeroo{n, -\eps}(-k) + \Beroo{n, \eps}(k)\\
&\in 1 - \vBeroo{n, -\eps}(-k) \pm \frac{1}{\sqrt{n}} \cdot e^{-\frac{(k-\eps n)^2}{2n}}\nonumber\\
&\subseteq \left(1 - \Phi(\frac{-k + \eps n}{\sqrt{n}}) \pm (\varphi''(\const)+1)\cdot \frac{\log^{1.5}n}{\sqrt{n}}\cdot e^{-\frac{(k-\eps n)^2}{2n}}\right)
\pm \frac{1}{\sqrt{n}} \cdot e^{-\frac{(k-\eps n)^2}{2n}}\nonumber\\
&\subseteq \Phi(\frac{k - \eps n}{\sqrt{n}}) \pm (\varphi''(\const)+2)\cdot \frac{\log^{1.5}n}{\sqrt{n}}\cdot e^{-\frac{(k-\eps n)^2}{2n}},\nonumber
\end{align}
where the second transition holds by evaluating the value of $\Beroo{n, \eps}(k)$ using \cref{app:missinProofs:binomProbEstimation}
and the third one holds by \cref{app:missinProofs:gameValueEst:2} applied to $-k$ and $-\eps$.
\end{proof}

Recall that for $n \in \N$ and $\delta \in [0,1]$ we let $\sBias{n}{\delta}$ be the value $\eps \in [-1,1]$ with $\vBeroo{n, \eps}(0) = \delta$. The following proposition gives an estimation for $\sBias{n}{\delta}$ using \cref{app:missinProofs:gameValueEstimation}.

\begin{proposition}\label{app:missinProofs:epsDiffImproved}
Let $n \in \N$, $\delta \in [0,1]$ and $\const > 0$ be such that $\delta \in (\frac{1}{n^c}, 1-\frac{1}{n^c})$. Then,
$$\sBias{n}{\delta} \in -\frac{\Phi^{-1}(\delta)}{\sqrt{n}} \pm \error$$
for $\error = \varphi(\const) \cdot \frac{\log^{1.5} n}{n}$ and a universal function $\varphi$.
\end{proposition}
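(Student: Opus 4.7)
The plan is to invert the approximation $\vBeroo{n,\eps}(0) \approx \Phi(-\eps\sqrt{n})$ supplied by \cref{app:missinProofs:gameValueEstimation}. Let $y_0 = \Phi^{-1}(\delta)$ and $\eps_0 = -y_0/\sqrt{n}$; the goal is to show that $\sBias{n}{\delta}$ lies within $\Delta := \varphi(\const)\cdot \log^{1.5}n/n$ of $\eps_0$. First I would use the tail bounds on $\Phi$ from \cref{app:missinProofs:normalBound} to argue that $\delta \in (n^{-\const}, 1-n^{-\const})$ forces $|y_0| \leq \const'\sqrt{\log n}$ for some $\const' = \const'(\const)$, and hence $|\eps_0| \leq \const'\sqrt{\log n/n}$, placing $\eps_0$ and its tiny $\Delta$-neighborhood inside the regime where \cref{app:missinProofs:gameValueEstimation} applies.

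Next, since $\eps \mapsto \vBeroo{n,\eps}(0)$ is strictly increasing on the relevant range (by stochastic dominance, or directly by computing its derivative), it suffices to exhibit the two inequalities $\vBeroo{n,\eps_0+\Delta}(0) > \delta > \vBeroo{n,\eps_0-\Delta}(0)$. I would prove the first; the second is symmetric. Applying \cref{app:missinProofs:gameValueEstimation} at $\eps = \eps_0 + \Delta$ gives
\begin{align*}
\vBeroo{n,\eps_0+\Delta}(0) = \Phi(y_0 - \Delta\sqrt{n}) + E, \qquad |E| \leq \varphi_1(\const')\cdot \frac{\log^{1.5}n}{\sqrt{n}}\cdot e^{-(\eps_0+\Delta)^2 n/2}.
\end{align*}
By the mean-value theorem, $\Phi(y_0 - \Delta\sqrt{n}) - \delta = \frac{\Delta\sqrt{n}}{\sqrt{2\pi}}\cdot e^{-\xi^2/2}$ for some $\xi$ between $y_0$ and $y_0 - \Delta\sqrt{n}$. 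Using $|y_0| \leq \const'\sqrt{\log n}$ and $\Delta\sqrt{n} = O(\log^{1.5}n/\sqrt{n})$, a brief calculation shows both $\xi^2 - y_0^2$ and $(\eps_0+\Delta)^2 n - y_0^2$ are $O(\log^2 n/\sqrt{n}) = o(1)$, so $e^{-\xi^2/2}$ and $e^{-(\eps_0+\Delta)^2 n/2}$ each equal $e^{-y_0^2/2}$ up to a $1+o(1)$ factor. The main term is therefore of magnitude $\Theta(\varphi(\const)\cdot \log^{1.5}n/\sqrt{n}\cdot e^{-y_0^2/2})$ while $|E| = O(\varphi_1(\const')\cdot \log^{1.5}n/\sqrt{n}\cdot e^{-y_0^2/2})$, so picking $\varphi(\const)$ sufficiently large in terms of $\varphi_1(\const')$ makes the main term strictly dominate $|E|$ for all large enough $n$, and gives $\vBeroo{n,\eps_0+\Delta}(0) > \delta$ as required.

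The main obstacle is precisely the last step: the error bound $|E|$ and the main-term lower bound both carry a factor $e^{-y_0^2/2}$ that can be as small as $n^{-\Theta(\const)}$, so the proof succeeds only because these two exponentials essentially cancel. Making that cancellation rigorous requires that the arguments $\xi^2$ and $(\eps_0+\Delta)^2 n$ match $y_0^2$ up to $o(1)$, and this is what dictates the scale of $\Delta$: if one tried $\Delta$ of order $\log^{1.5}n/\sqrt{n}$ instead, the cross term $2\eps_0 \Delta n$ in the expansion of $(\eps_0+\Delta)^2 n$ would already be $\omega(1)$ and the exponential ratio would blow up; with $\Delta = O(\log^{1.5}n/n)$ this cross term becomes $O(\log^2 n/\sqrt{n}) = o(1)$ and the bookkeeping closes.
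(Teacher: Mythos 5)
Your proposal is correct and follows essentially the same route as the paper's proof: sandwich $\sBias{n}{\delta}$ between $-\Phi^{-1}(\delta)/\sqrt{n}\pm\Delta$ via monotonicity of $\eps\mapsto\vBeroo{n,\eps}(0)$, apply \cref{app:missinProofs:gameValueEstimation} at the shifted $\eps$, and show the increment of $\Phi$ dominates the estimation error after observing that both carry a common Gaussian factor $e^{-\Phi^{-1}(\delta)^2/2}$ up to $1+o(1)$. The only difference is cosmetic: you bound the increment by the mean-value theorem, whereas the paper bounds the integral $\int_{x-\Delta}^{x}e^{-t^2/2}dt$ directly, but the cancellation argument and the choice of $\Delta=\Theta(\log^{1.5}n/n)$ are the same.
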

\begin{proof}
Let $\varphi'\colon \R^+ \mapsto \R^+$ be the function from \cref{app:missinProofs:gameValueEstimation}, and let $\varphi(\const) \eqdef 6\cdot\varphi'(\sqrt{2\const}+1) + 1$.  There exists a function $\vartheta \colon \R^+ \mapsto \N$ such that the two conditions
\begin{enumerate}
\item $\min(2\const,\const^2) \cdot \log n > 1$ \label{app:missinProofs:epsDiffImproved:bound1}

\item $\max(\sqrt{2\const}, \frac{1}{\sqrt{2\const}}) \cdot \max(\varphi^2(\const), 1) \cdot \frac{\log^2 n}{\sqrt{n}} < \frac18$ \label{app:missinProofs:epsDiffImproved:bound2}

\end{enumerate}
holds for every $n \geq \vartheta(\const)$. In the following we focus on $n \geq \vartheta(\const)$, where smaller $n$'s are handled by setting the value of $\varphi(c)$ to be large enough on these values.
Let $x \eqdef \Phi^{-1}(\delta)$ and $\Delta \eqdef \varphi(\const) \cdot \frac{\log^{1.5} n}{\sqrt{n}}$, and let $\eps^+ \eqdef -\frac{x}{\sqrt{n}} + \frac{\Delta}{\sqrt{n}}$ and $\eps^- \eqdef -\frac{x}{\sqrt{n}} -\frac{\Delta}{\sqrt{n}}$.
We prove that $\eps^- < \sBias{n}{\delta} < \eps^+$, yielding the required bound.
For simplicity, we focus on the upper bound, whereas the lower bound can be proven analogously.

Since $\delta \in (\frac{1}{n^\const}, 1-\frac{1}{n^\const})$, it follows by \cref{app:missinProofs:normalBound} and condition \ref{app:missinProofs:epsDiffImproved:bound1} that $\abs{x} \leq \sqrt{2\const \cdot \log n}$ and hence, using condition \ref{app:missinProofs:epsDiffImproved:bound2} it follows that $\abs{\eps^+} < (\sqrt{2\const} + 1) \cdot \sqrt{\frac{\log n}{n}}$.
Therefore, \cref{app:missinProofs:gameValueEstimation} yields that
\begin{align}
\vBeroo{n,\eps^+}(0)&\in \Phi(-\eps^+ \cdot \sqrt{n}) \pm \varphi'(\sqrt{2\const} + 1) \cdot \frac{\log^{1.5}{n}}{\sqrt{n}}\cdot e^{-\frac{{\eps^+}^2 \cdot n}{2}} \label{app:missinProofs:epsDiffImproved:basicBound}\\
&= \Phi(x - \Delta) \pm \varphi'(\sqrt{2\const} + 1) \cdot \frac{\log^{1.5}{n}}{\sqrt{n}}\cdot e^{-\frac{(x - \Delta)^2}{2}}\nonumber\\
&= \delta + \frac1{\sqrt{2\pi}}\cdot\int_{x-\Delta}^{x} e^{-\frac{t^2}{2}}dt \pm \varphi'(\sqrt{2\const} + 1) \cdot \frac{\log^{1.5}{n}}{\sqrt{n}}\cdot e^{-\frac{(x - \Delta)^2}{2}}.\nonumber
\end{align}
Note that
\begin{align}
e^{-\frac{(x-\Delta)^2}{2}} - e^{-\frac{x^2}{2}}
&= (1 - e^{\frac{-2\Delta x + \Delta^2}{2}}) \cdot e^{-\frac{(x-\Delta)^2}{2}}\\
&= (1 - e^{-2\Delta(\frac{x}{2} - \frac{\Delta}{4})}) \cdot e^{-\frac{(x-\Delta)^2}{2}}\nonumber\\
&\in (1 - e^{\pm 2\Delta\cdot \sqrt{2\const \cdot \log n}} ) \cdot e^{-\frac{(x-\Delta)^2}{2}}\nonumber\\
&\subseteq \pm 4\Delta\cdot \sqrt{2\const \cdot \log n} \cdot e^{-\frac{(x-\Delta)^2}{2}},\nonumber
\end{align}
where the third transition holds since $\abs{x}, \Delta < \sqrt{2\const\cdot \log n}$,
and the fourth one holds by the bound on $\Delta$ and using condition \ref{app:missinProofs:epsDiffImproved:bound2} since $e^y \in 1\pm 2\abs{y}$ for $y \in (-1,1)$. Therefore,
\begin{align}
\int_{x-\Delta}^{x} e^{-\frac{t^2}{2}}dt
&\in \Delta \cdot [\min(e^{-\frac{(x-\Delta)^2}{2}}, e^{-\frac{x^2}{2}}) , \max(e^{-\frac{(x-\Delta)^2}{2}}, e^{-\frac{x^2}{2}})] \label{app:missinProofs:epsDiffImproved:integralBound}\\
&\in \Delta \cdot e^{-\frac{(x-\Delta)^2}{2}} \cdot (1 \pm 4\Delta\cdot \sqrt{2\const\cdot \log n}).\nonumber
\end{align}
Applying \cref{app:missinProofs:epsDiffImproved:integralBound} on \cref{app:missinProofs:epsDiffImproved:basicBound} yields that
\begin{align}
\vBeroo{n,\eps^+}(0) - \delta
&\in \frac1{\sqrt{2\pi}}\cdot \Delta \cdot e^{-\frac{(x-\Delta)^2}{2}} \cdot \bigl(1 \pm 4\Delta\cdot \sqrt{2\const\cdot \log n}\bigr) \pm \varphi'(\sqrt{2\const} + 1) \cdot \frac{\log^{1.5}{n}}{\sqrt{n}}\cdot e^{-\frac{(x - \Delta)^2}{2}}\nonumber\\
&= \frac1{\sqrt{2\pi}}\cdot \left(\Delta \pm \bigl(4\Delta^2\cdot \sqrt{2\const\cdot \log n} + \sqrt{2\pi}\cdot \varphi'(\sqrt{2\const} + 1) \cdot \frac{\log^{1.5}{n}}{\sqrt{n}}\bigr)\right)\cdot e^{-\frac{(x-\Delta)^2}{2}}.\nonumber
\end{align}
By the definition of $\varphi$ and $\Delta$, and using condition \ref{app:missinProofs:epsDiffImproved:bound2}, it follows that
\begin{align}
\lefteqn{4\Delta^2\cdot \sqrt{2\const\cdot \log n} + \sqrt{2\pi}\cdot\varphi'(\sqrt{2\const} + 1) \cdot \frac{\log^{1.5}{n}}{\sqrt{n}}}\nonumber\\
&= 4\Delta \cdot \sqrt{2\const}\cdot \varphi(\const)\cdot \frac{\log^2 n}{\sqrt{n}} + \frac{\sqrt{2\pi}}{6}(\varphi(\const)-1)\cdot \frac{\log^{1.5}{n}}{\sqrt{n}} \nonumber\\
&< 4\Delta \cdot \frac18 + \frac12\Delta\nonumber\\
&= \Delta,\nonumber
\end{align}
and thus, $\vBeroo{n,\eps^+}(0) > \delta$, as required.
\end{proof}

In order to use \cref{app:missinProofs:epsDiffImproved} with $\delta = \vBeroo{n,\eps}(k)$, the following proposition first estimate the value of $\Phi^{-1}(\delta)$.

\begin{proposition}\label{app:missinProofs:PhiMinusOne}
Let $n \in \N$, $k\in \Z$, $\eps \in [-1,1]$ and $\const > 0$ be such that $\abs{k} \leq \const \cdot \sqrt{n \log n}$, $\size{\eps}  \leq \const \cdot \sqrt{\frac{\log n}{n}}$, and let $\delta = \vBeroo{n,\eps}(k)$. Then,
\begin{align*}
\Phi^{-1}(\delta) \in \frac{k - \eps n}{\sqrt{n}} \pm \error,
\end{align*}
for $\error = \varphi(\const) \cdot \frac{\log^{1.5} n}{\sqrt{n}}$ and a universal function $\varphi$.
\end{proposition}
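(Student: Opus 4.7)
\begin{proofsketch}
The plan is to reduce to \cref{app:missinProofs:gameValueEstimation} and then invert the relation between $\delta$ and $\Phi$ by hand. Set $x_0 := (k-\eps n)/\sqrt{n}$; since $\abs{k}\le \const\sqrt{n\log n}$ and $\abs{\eps}\le\const\sqrt{\log n /n}$, we have $\abs{x_0}\le 2\const\sqrt{\log n}$. Applying \cref{app:missinProofs:gameValueEstimation} gives a universal function $\varphi_1$ with
\begin{align*}
\delta=\vBeroo{n,\eps}(k)\in \Phi(x_0)\pm E,\qquad \abs{E}\le \varphi_1(\const)\cdot \tfrac{\log^{1.5} n}{\sqrt{n}}\cdot e^{-x_0^2/2}.
\end{align*}
The (easy) check that $\delta\in(0,1)$ for $n$ larger than a threshold depending on $\const$ ensures that $x':=\Phi^{-1}(\delta)$ is well-defined, and the target bound is $\abs{x'-x_0}\le\varphi(\const)\cdot\frac{\log^{1.5}n}{\sqrt{n}}$.

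The key estimate is a one-sided lower bound on the gap $\abs{\Phi(x_0\pm A)-\Phi(x_0)}$. Using $\Phi'(y)=-\frac{1}{\sqrt{2\pi}}e^{-y^2/2}$ and the fact that on the interval $[x_0,x_0+A]$ (or $[x_0-A,x_0]$) the minimum of $e^{-y^2/2}$ is attained at the endpoint farthest from the origin, we obtain
\begin{align*}
\abs{\Phi(x_0\pm A)-\Phi(x_0)}\;\ge\;\tfrac{A}{\sqrt{2\pi}}\cdot e^{-(\abs{x_0}+A)^2/2}
\end{align*}
for any $A>0$. Comparing with the bound on $\abs{E}$, the inequality $\frac{A}{\sqrt{2\pi}}e^{-(\abs{x_0}+A)^2/2}>\abs{E}$ reduces (after multiplying both sides by $\sqrt{2\pi}e^{x_0^2/2}$) to the purely numerical condition
\begin{align*}
A>\sqrt{2\pi}\,\varphi_1(\const)\cdot\tfrac{\log^{1.5} n}{\sqrt n}\cdot e^{\abs{x_0}\,A+A^2/2}.
\end{align*}

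Finally, I would argue by contradiction: set $A:=4\sqrt{2\pi}\,\varphi_1(\const)\cdot\frac{\log^{1.5}n}{\sqrt{n}}$ and suppose $\abs{x'-x_0}\ge A$. With this choice, the condition above becomes $e^{\abs{x_0}A+A^2/2}<4$, whose exponent is at most $O(\const\,\varphi_1(\const))\cdot\frac{\log^2 n}{\sqrt n}$, and hence tends to $0$ as $n$ grows. So for $n$ larger than a threshold $\vartheta(\const)$ the inequality holds, which means $\abs{\Phi(x_0\pm A)-\Phi(x_0)}>\abs{E}$; combined with the monotonicity of $\Phi$ and the identity $\Phi(x')-\Phi(x_0)=E$, this forces $x'$ to lie strictly between $x_0$ and $x_0\pm A$, contradicting $\abs{x'-x_0}\ge A$. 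Small $n$ (up to $\vartheta(\const)$) are absorbed by inflating $\varphi(\const)$.

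The main obstacle is that the usual mean value bound $\abs{x'-x_0}\le\sqrt{2\pi}\abs{E}e^{\xi^2/2}$ uses a point $\xi$ between $x_0$ and $x'$, whose distance from $x_0$ is exactly what we are trying to control; moreover $\Phi^{-1}$ has derivative blowing up like $e^{y^2/2}$ in the tails, so naively the error picks up an unacceptable $e^{\abs{x_0}^2/2}$ factor for $\abs{x_0}$ as large as $\sqrt{\log n}$. The resolution is precisely the $e^{-x_0^2/2}$ decay in the error bound of \cref{app:missinProofs:gameValueEstimation}, which exactly cancels the $e^{x_0^2/2}$ inflation of $\Phi^{-1}$'s derivative, leaving only the harmless cross term $e^{\abs{x_0}A+A^2/2}$ that is controlled by bootstrapping.
\end{proofsketch}
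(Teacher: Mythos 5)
Your proposal is correct and follows essentially the same route as the paper: both invoke \cref{app:missinProofs:gameValueEstimation} and then invert $\Phi$ by lower-bounding its change over a window of width $\Theta(\varphi_1(\const)\log^{1.5}n/\sqrt{n})$ around $(k-\eps n)/\sqrt{n}$ via the Gaussian density at the far endpoint, with the $e^{-x_0^2/2}$ decay of the estimation error cancelling the tail blow-up of $(\Phi^{-1})'$ and the cross term $e^{\abs{x_0}A+A^2/2}$ (the paper's $e^{(2k_0\Delta-\Delta^2)/(2n)}\geq\frac12$) controlled for large $n$. The only difference is presentational: you argue by contradiction where the paper proves the sandwich $\Phi(\frac{k_0+\Delta}{\sqrt n})\leq\delta\leq\Phi(\frac{k_0-\Delta}{\sqrt n})$ directly and then uses monotonicity of $\Phi$.
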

\begin{proof}
Let $\varphi'\colon \R^+ \mapsto \R^+$ be the function from \cref{app:missinProofs:gameValueEstimation}, let $\Delta \eqdef 2\varphi'(\const)\cdot \log^{1.5}n$ and let $k_0 \eqdef k-\eps n$. Note that there exists a function $\vartheta \colon \R^+ \mapsto \N$ such that $e^{-4\varphi'(\const)(\varphi'(\const)+\const)\cdot\frac{\log^3 n}{\sqrt{n}}} \geq \frac12$  for every $n \geq \vartheta(\const)$. In the following we focus on $n \geq \vartheta(\const)$, where smaller $n$'s are handled by setting the value of $\varphi(\const)$ to be large enough on these values.

We prove that $\Phi(\frac{k_0 + \Delta}{\sqrt{n}}) \leq \delta \leq \Phi(\frac{k_0 - \Delta}{\sqrt{n}})$, which yields the required bound since $\Phi$ is monotonic decreasing. We focus on the upper bound, whereas the lower bound can be proven analogously.
Since
\begin{align}\label{app:missinProofs:PhiMinusOne:1}
\frac{\Delta}{\sqrt{n}} \cdot e^{-\frac{k_0^2}{2n}} \geq \varphi'(\const) \cdot \frac{\log^{1.5}n}{\sqrt{n}}\cdot e^{-\frac{k_0^2}{2n}}
\end{align}
and
\begin{align}\label{app:missinProofs:PhiMinusOne:2}
\frac{\Delta}{\sqrt{n}} \cdot e^{-\frac{(k_0-\Delta)^2}{2n}}
&= \frac{\Delta}{\sqrt{n}} \cdot e^{-\frac{k_0^2}{2n}} \cdot e^{\frac{2k_0\Delta - \Delta^2}{2n}}\\
&\geq \frac{\Delta}{\sqrt{n}} \cdot e^{-\frac{k_0^2}{2n}} \cdot e^{-4\varphi'(\const)(\varphi'(\const)+\const)\cdot\frac{\log^3 n}{\sqrt{n}}}\nonumber\\
&\geq \frac{\Delta}{\sqrt{n}} \cdot e^{-\frac{k_0^2}{2n}} \cdot \frac12\nonumber\\
&= \varphi'(\const) \cdot \frac{\log^{1.5}n}{\sqrt{n}}\cdot e^{-\frac{k_0^2}{2n}},\nonumber
\end{align}
it follows that
\begin{align}
\delta
&\leq \Phi(\frac{k_0}{\sqrt{n}}) + \varphi'(\const) \cdot \frac{\log^{1.5}n}{\sqrt{n}}\cdot e^{-\frac{k_0^2}{2n}}\\
&\leq \Phi(\frac{k_0}{\sqrt{n}}) + \frac{\Delta}{\sqrt{n}}\cdot \min(e^{-\frac{k_0^2}{2n}}, e^{-\frac{(k_0-\Delta)^2}{2n}})\nonumber\\
&\leq \Phi(\frac{k_0}{\sqrt{n}}) + \int_{\frac{k_0 - \Delta}{\sqrt{n}}}^{\frac{k_0}{\sqrt{n}}}e^{-\frac{t^2}{2}}dt\nonumber\\
&= \Phi(\frac{k_0}{\sqrt{n}} - \frac{\Delta}{\sqrt{n}}),\nonumber
\end{align}
where the first inequality holds by \cref{app:missinProofs:gameValueEstimation} and the second one by \cref{app:missinProofs:PhiMinusOne:1} and \cref{app:missinProofs:PhiMinusOne:2}.
\end{proof}

We are now ready for estimating the value of $\sBias{n'}{\delta}$ for $\delta = \vBeroo{n,\eps}(k)$ and for some $n' \geq n$.

\begin{proposition}[Restatement of \cref{prop:epsDiff}]\label{app:missinProofs:epsDiff}
\propEpsDiff
\end{proposition}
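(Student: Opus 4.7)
The plan is to reduce the claim to a direct combination of the two preceding results: \cref{app:missinProofs:PhiMinusOne}, which estimates $\Phi^{-1}(\delta)$ in terms of $(k-\eps n)/\sqrt{n}$, and \cref{app:missinProofs:epsDiffImproved}, which expresses $\sBias{n'}{\delta}$ in terms of $-\Phi^{-1}(\delta)/\sqrt{n'}$. Chaining the two gives
\[
\sBias{n'}{\delta} \;\approx\; -\frac{\Phi^{-1}(\delta)}{\sqrt{n'}} \;\approx\; -\frac{(k-\eps n)/\sqrt{n}}{\sqrt{n'}} \;=\; \frac{\eps n - k}{\sqrt{n\cdot n'}},
\]
so the proof is essentially an accounting of the accumulated additive errors.

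The first step is to verify the hypothesis of \cref{app:missinProofs:epsDiffImproved}, namely that $\delta \in (\tfrac{1}{n'^{c'}},\, 1-\tfrac{1}{n'^{c'}})$ for some constant $c'$ depending only on $\const$. By \cref{app:missinProofs:gameValueEstimation}, $\delta$ is within $O(\varphi'(\const)\log^{1.5}n / \sqrt{n})$ of $\Phi\!\bigl(\tfrac{k-\eps n}{\sqrt{n}}\bigr)$, and the bounds $\abs{k}\leq \const\sqrt{n\log n}$ and $\abs{\eps}\leq \const\sqrt{\log n/n}$ yield $\bigl|\tfrac{k-\eps n}{\sqrt{n}}\bigr|\leq 2\const\sqrt{\log n}$. \cref{app:missinProofs:normalBound} then gives $\Phi\!\bigl(\tfrac{k-\eps n}{\sqrt{n}}\bigr) \geq 1/n^{O(\const^2)}$ (and symmetrically for $1-\delta$), so choosing $c' = 3\const^2$ (say) works for all large enough $n$, and since $n \leq n'$ this bound also holds with $n'$ in place of $n$.

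Next I apply \cref{app:missinProofs:epsDiffImproved} with parameters $n'$ and $c'=3\const^2$ to get
\[
\sBias{n'}{\delta} \in -\frac{\Phi^{-1}(\delta)}{\sqrt{n'}} \;\pm\; \varphi_1(\const)\cdot\frac{\log^{1.5} n'}{n'},
\]
and I apply \cref{app:missinProofs:PhiMinusOne} to get
\[
\Phi^{-1}(\delta) \in \frac{k-\eps n}{\sqrt{n}} \;\pm\; \varphi_2(\const)\cdot\frac{\log^{1.5} n}{\sqrt{n}}.
\]
Dividing the second inclusion by $\sqrt{n'}$ and substituting into the first yields
\[
\sBias{n'}{\delta} \in \frac{\eps n - k}{\sqrt{n\cdot n'}} \;\pm\; \frac{\varphi_2(\const)\log^{1.5} n}{\sqrt{n\cdot n'}} \;\pm\; \frac{\varphi_1(\const)\log^{1.5} n'}{n'}.
\]

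The main (mild) obstacle is to absorb the final term, which carries $\log n'$ rather than $\log n$, into a single error of the form $\varphi(\const)\log^{1.5} n / \sqrt{n\cdot n'}$. This amounts to showing that $\log^{1.5} n' / n' \leq C(\const)\log^{1.5} n / \sqrt{n\cdot n'}$, i.e.\ $(\log n'/\log n)^{1.5}\leq C(\const)\sqrt{n'/n}$. Writing $r = n'/n \geq 1$, the left-hand side is at most $(1+\log r/\log n)^{1.5}$, which for $n$ larger than some $\vartheta(\const)$ is dominated by the right-hand side uniformly in $r\geq 1$ (the function $r \mapsto (\log(rn))^{1.5}/\sqrt{r}$ peaks near $r=1$ for large $n$ and decays polynomially thereafter). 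Handling small $n$ by inflating the universal function $\varphi$ (as is done throughout this section) completes the proof.
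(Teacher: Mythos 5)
Your route is the same as the paper's: first show $\delta$ is polynomially bounded away from $0$ and $1$, then chain \cref{app:missinProofs:epsDiffImproved} (applied at $n'$) with \cref{app:missinProofs:PhiMinusOne} (applied at $n$), and finally absorb the $\varphi_1(\const)\log^{1.5}n'/n'$ term into an error of the form $\varphi(\const)\log^{1.5}n/\sqrt{n\cdot n'}$ using $n\le n'$. Your explicit treatment of that last absorption (monotonicity of $t\mapsto \log^{1.5}t/\sqrt{t}$ for $t$ beyond a universal constant, small $n$ handled by inflating $\varphi$) is correct and in fact spells out what the paper's proof compresses into ``since $n\le n'$''.

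There is, however, a genuine gap in your first step. You quote \cref{app:missinProofs:gameValueEstimation} as saying $\delta$ is within $O\bigl(\varphi'(\const)\log^{1.5}n/\sqrt{n}\bigr)$ of $\Phi\bigl(\tfrac{k-\eps n}{\sqrt n}\bigr)$, note $\Phi\bigl(\tfrac{k-\eps n}{\sqrt n}\bigr)\ge n^{-O(\const^2)}$ via \cref{app:missinProofs:normalBound}, and conclude $\delta\ge n^{-3\const^2}$. That deduction is invalid whenever the Gaussian value is smaller than the additive error: with $|k-\eps n|$ as large as $2\const\sqrt{n\log n}$ one has $\Phi\bigl(\tfrac{k-\eps n}{\sqrt n}\bigr)\approx n^{-2\const^2}/\mathrm{polylog}(n)$, which for any $\const>\tfrac12$ (certainly for the values of $\const$ the paper actually uses) is far below $\log^{1.5}n/\sqrt n$; the additive estimate then only yields $\delta\ge n^{-2\const^2}-\widetilde{O}(n^{-1/2})$, which is vacuous, and a priori $\delta$ could even be exponentially small, in which case \cref{app:missinProofs:epsDiffImproved} is not applicable at all. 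The fix is to keep the error of \cref{app:missinProofs:gameValueEstimation} in its stated form $\varphi'(\const)\frac{\log^{1.5}n}{\sqrt n}\,e^{-\frac{(k-\eps n)^2}{2n}}$, i.e.\ carrying the same Gaussian factor as the main term: combined with the lower bound of \cref{app:missinProofs:normalBound}, the error becomes an $o(1)$ \emph{multiplicative} perturbation of $\Phi\bigl(\tfrac{k-\eps n}{\sqrt n}\bigr)$, giving $\delta\ge \frac{1}{8\const\sqrt{\log n}\cdot n^{2\const^2}}\ge n^{-(2\const^2+1)}$ (and symmetrically for $1-\delta$), which is exactly how the paper justifies applying \cref{app:missinProofs:epsDiffImproved} with exponent $2\const^2+1$. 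With that repair, the remainder of your argument goes through unchanged.
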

\begin{proof}
Let $\varphi_1$ be the function from \cref{app:missinProofs:gameValueEstimation}, $\varphi_2$ be the function from \cref{app:missinProofs:epsDiffImproved}, $\varphi_3$ be the function from \cref{app:missinProofs:PhiMinusOne} and let $\varphi(\const) \eqdef \varphi_2(2\const^2+1)+\varphi_3(\const)$. There exists a function $\vartheta \colon \R^+ \mapsto \N$ such that the two conditions
\begin{enumerate}
\item $\min(\const,1) \cdot \log n > 4$ \label{app:missinProofs:epsDiff:bound1}

\item $\max(\const, \varphi_1(\const)) \cdot \frac{\log^2 n}{\sqrt{n}} < \frac18$ \label{app:missinProofs:epsDiff:bound2}

\end{enumerate}
holds for every $n \geq \vartheta(\const)$. In the following we focus on $n \geq \vartheta(\const)$, where smaller $n$'s are handled by setting the value of $\varphi(c)$ to be large enough on these values.
In order to use \cref{app:missinProofs:epsDiffImproved}, we first prove that $\delta \in (\frac{1}{n^{2\const^2+1}}, 1-\frac{1}{n^{2\const^2+1}})$.
Let $k_0 \eqdef k-\eps n$. For simplicity, we assume $k_0 \geq 0$, whereas the case $k_0 < 0$ holds by symmetry.
Compute
\begin{align}
\delta
&\in \Phi(\frac{k_0}{\sqrt{n}}) \pm \varphi_1(\const)\cdot \frac{\log^{1.5}n}{\sqrt{n}}\cdot e^{-\frac{k_0^2}{2n}}\\
&\subseteq \left(\frac1{\frac{k_0}{\sqrt{n}} + \sqrt{\frac{k_0^2}{n} + 4 \pm 2}} \pm \varphi_1(\const)\cdot \frac{\log^{1.5}n}{\sqrt{n}}\right)\cdot e^{-\frac{k_0^2}{2n}},\nonumber\\
&\subseteq \frac{1\pm\frac12}{\frac{k_0}{\sqrt{n}} + \sqrt{\frac{k_0^2}{n} + 4 \pm 2}} \cdot e^{-\frac{k_0^2}{2n}}\nonumber\\
&\subseteq (\frac{1}{8\const \cdot \sqrt{\log n}\cdot n^{2\const^2}}, \frac34)\nonumber\\
&\subseteq (\frac{1}{n^{2\const^2+1}}, 1-\frac{1}{n^{2\const^2+1}})\nonumber
\end{align}
where the first transition holds by \cref{app:missinProofs:gameValueEstimation}, the second one holds by \cref{app:missinProofs:normalBound}, the third one holds by condition \ref{app:missinProofs:epsDiff:bound2} and since $k_0 \leq 2\const \cdot \sqrt{n\log n}$, the fourth one also holds since $k_0 \leq 2\const \cdot \sqrt{n\log n}$ and the last one holds by conditions \ref{app:missinProofs:epsDiff:bound1} and \ref{app:missinProofs:epsDiff:bound2}.

Finally, it holds that
\begin{align}
\sBias{n'}{\delta}
&\in -\frac{\Phi^{-1}(\delta)}{\sqrt{n'}} \pm \varphi_2(2\const^2+1) \cdot \frac{\log^{1.5} (n')}{n'}\\
&\subseteq -\frac{\left(\frac{k - \eps n}{\sqrt{n}} \pm \varphi_3(\const) \cdot \frac{\log^{1.5} n}{\sqrt{n}}\right)}{\sqrt{n'}} \pm \varphi_2(2\const^2+1) \cdot \frac{\log^{1.5} (n')}{n'}\nonumber\\
&\subseteq \frac{\eps n - k}{\sqrt{n\cdot n'}} \pm \varphi(\const) \cdot \frac{\log^{1.5} n}{\sqrt{n\cdot n'}},\nonumber
\end{align}
where the first transition holds by \cref{app:missinProofs:epsDiffImproved}, the second one by \cref{app:missinProofs:PhiMinusOne}
and the last one holds since $n \leq n'$.
\end{proof}

For the following two propositions, recall that for $n\in \N$ and $i \in [n]$ we let $\NL{i} = n-i+1$ and $\NS{i} = \sum_{j=i}^{n}\NL{i} = \frac12\cdot \NL{i}(\NL{i}+1)$. The following proposition is the main step towards proving \cref{prop:mainBound}.

\begin{proposition}\label{app:missinProofs:estimatingPolyX}
Let $n\in \N$, integer $i \in [n-\floor{\log^{2.5}n}]$, $x,\beta,\alpha \in \Z$, $\eps \in [-1,1]$ and $\const > 0$ be such that $\size{\alpha} \leq \sqrt{\const \cdot  \NS{i} \cdot \log n}$,  $\size{x}  \leq \sqrt{\const \cdot \NL{i} \cdot \log{n}}$, $\size{\beta} \le 1$ and $\size{\eps} \leq \sqrt{\const \cdot \frac{\log n}{\NS{i}}}$. Then
\begin{align*}
\exp\left(\frac{\alpha \cdot x + \beta \cdot x^2}{\NS{i+1}}\right) \in 1 \pm \varphi(\const)\cdot
\frac{\sqrt{\log n} \cdot \abs{x}}{\NL{i}},
\end{align*}
for a universal function $\varphi$.
\end{proposition}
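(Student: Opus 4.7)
My plan is to decompose the exponent into its linear and quadratic contributions, bound each separately in terms of $\frac{\sqrt{\log n}\,|x|}{\NL{i}}$, and then linearize the exponential using $|e^y-1|\le 2|y|$ for $|y|\le 1$. The case $x=0$ is trivial (the exponent is $0$), so assume $|x|\ge 1$. The core identity to exploit is that $\NS{i}=\tfrac12\NL{i}(\NL{i}+1)$ and $\NS{i+1}=\tfrac12\NL{i}(\NL{i}-1)$, hence $\NS{i}/\NS{i+1}\le 2$ and $\NS{i+1}\ge \tfrac14\NL{i}^{2}$ once $\NL{i}\ge 2$.

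First I would bound the linear term. Using the hypotheses,
\begin{align*}
\frac{|\alpha x|}{\NS{i+1}}\;\le\;\frac{\sqrt{\const\cdot \NS{i}\log n}\,|x|}{\NS{i+1}}
\;=\;\sqrt{\const\log n}\cdot\frac{|x|}{\sqrt{\NS{i+1}}}\cdot\sqrt{\frac{\NS{i}}{\NS{i+1}}}
\;\le\;2\sqrt{2\const\log n}\cdot\frac{|x|}{\NL{i}},
\end{align*}
where the last step uses $\sqrt{\NS{i+1}}\ge \NL{i}/2$ and $\NS{i}/\NS{i+1}\le 2$. Next, for the quadratic term, using $|\beta|\le 1$ and the bound on $|x|$,
\begin{align*}
\frac{|\beta x^{2}|}{\NS{i+1}}\;\le\;\frac{\const\,\NL{i}\log n}{\NS{i+1}}\;\le\;\frac{4\const\log n}{\NL{i}},
\end{align*}
and since $|x|\ge 1$, this is at most $\tfrac{4\const\log n}{\NL{i}}\cdot|x|$. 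We can therefore absorb both estimates into a single bound of the form $|\alpha x+\beta x^{2}|/\NS{i+1}\le \psi(\const)\,\frac{\sqrt{\log n}\,|x|}{\NL{i}}$ for some explicit universal $\psi$ (say $\psi(\const)=2\sqrt{2\const}+4\const$), after noting that $\frac{\log n}{\NL{i}}\cdot|x|\le \sqrt{\log n}\cdot |x|\cdot\sqrt{\log n/\NL{i}}$ and $\sqrt{\log n/\NL{i}}\le 1$ using $\NL{i}\ge \floor{\log^{2.5}n}+1$.

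The final step is the linearization. Let $y=(\alpha x+\beta x^{2})/\NS{i+1}$. From the bound just derived and $|x|\le\sqrt{\const\NL{i}\log n}$,
\begin{align*}
|y|\;\le\;\psi(\const)\cdot\frac{\sqrt{\log n}}{\NL{i}}\cdot\sqrt{\const\NL{i}\log n}\;=\;\psi(\const)\sqrt{\const}\cdot\frac{\log n}{\sqrt{\NL{i}}},
\end{align*}
which, thanks to the hypothesis $i\le n-\floor{\log^{2.5}n}$, gives $|y|\le \psi(\const)\sqrt{\const}\,\log n/\log^{1.25}n=o(1)$. Hence for large enough $n$ we have $|y|\le\tfrac12$ and $|e^{y}-1|\le 2|y|$, which yields
\begin{align*}
\exp\!\left(\frac{\alpha x+\beta x^{2}}{\NS{i+1}}\right)\in 1\pm 2\psi(\const)\,\frac{\sqrt{\log n}\,|x|}{\NL{i}}.
\end{align*}
Setting $\varphi(\const)=2\psi(\const)$ and inflating it if necessary to absorb the finitely many small-$n$ values (as is done throughout this section), the conclusion follows. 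The only mild subtlety is making sure the $\beta x^{2}$ term is bounded by $\sqrt{\log n}\,|x|/\NL{i}$ rather than by the weaker $\log n/\NL{i}$; this is handled by the observation $|x|\ge 1$ combined with $|x|\le\sqrt{\const\NL{i}\log n}$, which is exactly what the assumption $i\le n-\floor{\log^{2.5}n}$ is designed to accommodate. I do not anticipate a genuine obstacle here; the proposition is essentially a bookkeeping exercise about the scales of $\NL{i}$ and $\NS{i}$.
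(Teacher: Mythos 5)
Your overall strategy --- bound the exponent and then linearize via $\abs{e^y-1}\le 2\abs{y}$ --- is the same as the paper's, and your treatment of the linear term $\alpha x/\NS{i+1}$ is correct. The genuine gap is in the quadratic term. You bound $\frac{\abs{\beta}x^2}{\NS{i+1}} \le \frac{4\const\log n}{\NL{i}}$ and then, using $\abs{x}\ge 1$, write it as at most $\frac{4\const\log n}{\NL{i}}\cdot\abs{x}$. This exceeds the target form $\psi(\const)\cdot\frac{\sqrt{\log n}\,\abs{x}}{\NL{i}}$ by a factor of $\sqrt{\log n}$, so it cannot be absorbed with a universal $\psi(\const)$: you would need $4\const\sqrt{\log n}\le\psi(\const)$, which fails for large $n$. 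The auxiliary observation you invoke, namely $\frac{\log n}{\NL{i}}\abs{x} \le \sqrt{\log n}\,\abs{x}\cdot\sqrt{\log n/\NL{i}} \le \sqrt{\log n}\,\abs{x}$, does not bridge this: it throws away the $1/\NL{i}$ factor entirely, so it establishes something much weaker than the claimed bound $\psi(\const)\frac{\sqrt{\log n}\,\abs{x}}{\NL{i}}$ and is of no help. (The smallness $\abs{y}=o(1)$ needed for the linearization still holds, but with your estimates the final error would only come out as $O_\const(1)\cdot\frac{\log n\,\abs{x}}{\NL{i}}$, not the stated $O_\const(1)\cdot\frac{\sqrt{\log n}\,\abs{x}}{\NL{i}}$.)

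The fix is exactly what the paper's proof does: apply the cap on $\abs{x}$ to only one of the two factors in $x^2$, i.e. $x^2\le \sqrt{\const\cdot\NL{i}\log n}\cdot\abs{x}$, so that $\frac{\abs{\beta}x^2}{\NS{i+1}} \le \frac{\sqrt{\const\cdot\NL{i}\log n}}{\NS{i+1}}\cdot\abs{x} \le \frac{4\sqrt{\const\log n}}{\NL{i}^{3/2}}\cdot\abs{x} \le 4\sqrt{\const}\cdot\frac{\sqrt{\log n}\,\abs{x}}{\NL{i}}$ (the paper carries this out with $\NS{i}$ after noting $\NS{i}/\NS{i+1}<2$, writing $\frac{x^2}{\NS{i}}\le\frac{\sqrt{\const\cdot\NL{i}\log n}}{\NS{i}}\abs{x}$). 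With that one change your argument goes through and essentially coincides with the paper's proof; in particular the case split $x=0$ versus $\abs{x}\ge 1$ becomes unnecessary, since the corrected bound already carries the factor $\abs{x}$.
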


\begin{proof}
Assume that $n\geq 4$. By taking the maximum possible values of $\abs{\alpha}$, $\abs{\beta}$ and $\abs{x}$ it follows that
\begin{align}
\abs{\frac{\alpha \cdot x + \beta \cdot x^2}{\NS{i+1}}}&\leq \frac{\sqrt{\const \cdot  \NS{i} \cdot \log n} \cdot \sqrt{\const \cdot \NL{i} \cdot \log n} + \const \cdot \NL{i} \cdot \log n}{\NS{i+1}}\\
&= \const \cdot \frac{\sqrt{2(n-i+2)}}{n-i}\cdot \log n + 2\const\cdot \frac{\log n}{n-i}\nonumber\\
&\leq 2\const \cdot \frac{\log n}{\sqrt{n-i}} + 2\const \cdot \frac{\log n}{n-i}\nonumber\\
&\leq 2\const \cdot \frac{1}{\log^{0.25} n}+ 2\const \cdot \frac{1}{\log^{1.5} n},\nonumber
\end{align}
where the second inequality holds since $\frac{n-i+2}{n-i} < 2$.
Therefore, there exists a function $\vartheta \colon \R^+ \mapsto \N$ such that $\abs{\frac{\alpha \cdot x + \beta \cdot x^2}{\NS{i+1}}} < 1$  for every $n \geq \vartheta(\const)$. In the following we focus on $n \geq \vartheta(\const)$, where smaller $n$'s are handled by setting the value of $\varphi(\const)$ to be large enough on these values.
Since $e^y \in 1 \pm 2\abs{y}$ for $y \in [-1,1]$, it follows that
\begin{align*}
\exp\left(\frac{\alpha \cdot x + \beta \cdot x^2}{\NS{i+1}}\right)
&\in 1 \pm 2\cdot \abs{\frac{\alpha \cdot x + \beta \cdot x^2}{\NS{i+1}}}\\
&\subseteq 1 \pm 4\cdot \abs{\frac{\alpha \cdot x + \beta \cdot x^2}{\NS{i}}}\\
&\subseteq 1 \pm 4\cdot \left(\frac{\abs{\alpha} \cdot \abs{x} + \abs{\beta} \cdot x^2}{\NS{i}}\right)\\
&\subseteq 1 \pm 4\cdot \left(\frac{\sqrt{\const \cdot \NS{i} \cdot \log{n}} \cdot \abs{x} + x^2}{\NS{i}}\right)\\
&\subseteq 1 \pm 4\cdot \left(\sqrt{\frac{\const \cdot \log{n}}{\NS{i}}}\cdot \abs{x} + \frac{\sqrt{\const \cdot \NL{i} \cdot \log n}}{\NS{i}}\cdot \abs{x}\right)\\
&= 1 \pm 4\cdot \left(\sqrt{\frac{\const \cdot \log{n}}{\frac12\NL{i}(\NL{i}+1)}}\cdot \abs{x} + \frac{\sqrt{\const \cdot \NL{i} \cdot \log n}}{\frac12\NL{i}(\NL{i}+1)}\cdot \abs{x}\right)\\
&\subseteq 1 \pm 8\cdot \sqrt{\frac{\const \cdot \log{n}}{\NL{i}}}\left(\frac{\abs{x}}{\sqrt{\NL{i}}} + \frac{\abs{x}}{\NL{i}}\right)\\
&\subseteq 1 \pm 16\sqrt{\const}\cdot \frac{\sqrt{\log n} \cdot \abs{x}}{\NL{i}},
\end{align*}
where the second transitions holds since $\frac{\NS{i}}{\NS{i+1}} < 2$, the fourth one holds by taking the maximum possible values of $\abs{\alpha}$ and $\abs{\beta}$ and fifth one by taking the  maximum possible value of $\abs{x}$.
\end{proof}

Using the above fact, we can prove \cref{prop:mainBound}.
\begin{proposition}[Restatement of \cref{prop:mainBound}]\label{app:missinProofs:mainBound}
\propMainBound
\end{proposition}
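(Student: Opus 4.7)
The plan is to split the exponential into a product of an $x$-part and an $x'$-part, pull the deterministic $x$-part out of the expectation, and apply \cref{app:missinProofs:estimatingPolyX} to each factor separately, then bound the resulting linear-in-$\abs{x'}$ error by taking expectation.

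First, observe that
\[
\exp\left(\frac{\alpha x + \beta x^2 + \alpha' x' + \beta' x'^2}{\NS{i+1}}\right)
= \exp\left(\frac{\alpha x + \beta x^2}{\NS{i+1}}\right) \cdot \exp\left(\frac{\alpha' x' + \beta' x'^2}{\NS{i+1}}\right),
\]
so the expectation equals $F(x) \cdot G$, where $F(x) = \exp((\alpha x + \beta x^2)/\NS{i+1})$ and $G = \Ex_{x' \la \Beroo{\NL{i},\eps} \mid x' \in \cs}[\exp((\alpha' x' + \beta' x'^2)/\NS{i+1})]$. Since $\abs{x}$ and (pointwise) $\abs{x'}$ are both at most $\sqrt{\const \cdot \NL{i} \cdot \log n}$ by the hypothesis $\cs \subseteq [-\sqrt{\const \NL{i} \log n}, \sqrt{\const \NL{i} \log n}]$, and since all of $\abs{\alpha}, \abs{\alpha'} \leq \sqrt{\const \NS{i} \log n}$, $\abs{\beta}, \abs{\beta'} \le 1$ and $\abs{\eps} \le \sqrt{\const \log n / \NS{i}}$, the hypotheses of \cref{app:missinProofs:estimatingPolyX} are met for both factors.

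Applying \cref{app:missinProofs:estimatingPolyX} directly gives $F(x) \in 1 \pm \varphi_0(\const) \cdot \frac{\sqrt{\log n}\, \abs{x}}{\NL{i}}$, and pointwise for each $x' \in \cs$ gives $\exp((\alpha' x' + \beta' x'^2)/\NS{i+1}) \in 1 \pm \varphi_0(\const) \cdot \frac{\sqrt{\log n}\, \abs{x'}}{\NL{i}}$, so taking expectation over $x'$,
\[
G \in 1 \pm \varphi_0(\const) \cdot \frac{\sqrt{\log n}}{\NL{i}} \cdot \ex{x' \la \Beroo{\NL{i},\eps} \mid x' \in \cs}{\abs{x'}}.
\]
The proposition's hypothesis bounds this conditional expectation by the unconditional one, and \cref{fact:E_x_m} then gives $\Ex_{x' \la \Beroo{\NL{i},\eps}}[\abs{x'}] \leq \sqrt{2\NL{i}}$, provided $\abs{\eps} \leq 1/\sqrt{\NL{i}}$. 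The latter holds for $n$ large enough since $\abs{\eps} \leq \sqrt{2\const \log n}/\NL{i}$ (using $\NS{i} \geq \NL{i}^2/2$) and the constraint $i \leq n - \floor{\log^{2.5} n}$ forces $\NL{i} \geq \log^{2.5} n \geq 2\const \log n$. Thus $G \in 1 \pm \varphi_0(\const)\sqrt{2\log n / \NL{i}}$.

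Multiplying the two bounds, $F(x) \cdot G \in (1 \pm \varphi_0(\const) \sqrt{\log n}\abs{x}/\NL{i})(1 \pm \varphi_0(\const) \sqrt{2\log n/\NL{i}})$, and noting $\sqrt{\log n}\abs{x}/\NL{i} = \sqrt{\log n/\NL{i}} \cdot \abs{x}/\sqrt{\NL{i}}$, a routine expansion (using that both error terms are $o(1)$ for large $n$ so cross terms are absorbed) yields the claim with $\varphi(\const) = O(\varphi_0(\const))$; small $n$ are handled by enlarging $\varphi$ on those values as done throughout the appendix. The only genuinely delicate point is the moment bound $\Ex[\abs{x'}] \leq \sqrt{2\NL{i}}$ and the verification that $\abs{\eps} \leq 1/\sqrt{\NL{i}}$ on the relevant range of $i$; everything else is bookkeeping.
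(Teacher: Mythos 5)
Your proof is correct and follows essentially the same route as the paper: factor the exponential into an $x$-part and an $x'$-part, apply \cref{app:missinProofs:estimatingPolyX} to each, then use the hypothesis $\ex{x'\la \Beroo{\NL{i},\eps}\mid x'\in\cs}{\abs{x'}}\leq \ex{x'\la \Beroo{\NL{i},\eps}}{\abs{x'}}$ together with \cref{fact:E_x_m} and multiply the two bounds. Your explicit verification that $\abs{\eps}\leq 1/\sqrt{\NL{i}}$ on the relevant range of $i$ (so that \cref{fact:E_x_m} applies) is a detail the paper leaves implicit, and it is handled correctly.
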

\begin{proof}
Let $\varphi'$ be the function from \cref{app:missinProofs:estimatingPolyX}. Compute
\begin{align*}
\lefteqn{\ex{x'\la \Beroo{\NL{i},\eps} \mid x'\in \cs}
{\exp\left(\frac{\alpha \cdot x + \beta \cdot x^2 + \alpha' \cdot x' + \beta' \cdot x'^2}{\NS{i+1}}\right)}}\\
&= \exp\left(\frac{\alpha \cdot x + \beta \cdot x^2}{\NS{i+1}}\right)\cdot \ex{x'\la \Beroo{\NL{i},\eps} \mid x'\in \cs}{\left(\frac{\alpha' \cdot x' + \beta' \cdot x'^2}{\NS{i+1}}\right)}\\
&\in \left(1 \pm \varphi'(\const)\cdot \frac{\sqrt{\log{n}}\cdot \abs{x}}{\NL{i}}\right)\cdot
\ex{x'\la \Beroo{\NL{i},\eps} \mid x'\in \cs}{1 \pm \varphi'(\const)\cdot \frac{\sqrt{\log{n}}\cdot \abs{x'}}{\NL{i}}}\\
&= \left(1 \pm \varphi'(\const)\cdot \frac{\sqrt{\log{n}}\cdot \abs{x}}{\NL{i}}\right)\cdot
\left(1 \pm \varphi'(\const)\cdot \frac{\sqrt{\log{n}}\cdot \ex{x'\la \Beroo{\NL{i},\eps} \mid x'\in \cs}{\abs{x'}}}{\NL{i}}\right)\\
&\subseteq \left(1 \pm \varphi'(\const)\cdot \frac{\sqrt{\log{n}}\cdot \abs{x}}{\NL{i}}\right)\cdot
\left(1 \pm 2\varphi'(\const)\cdot \sqrt{\frac{\log{n}}{\NL{i}}}\right)\\
&\subseteq 1 \pm 4(\varphi'(\const)+ \varphi'(\const)^2)\cdot \sqrt{\frac{\log{n}}{\NL{i}}}\cdot \left(1 + \frac{\abs{x}}{\sqrt{\NL{i}}}\right),
\end{align*}
where the second transition holds by \cref{app:missinProofs:estimatingPolyX} and the fourth one holds by \cref{fact:E_x_m}.
\end{proof}

\subsection{Facts About the Hypergeometric Distribution}\label{app:missinProofs:HypGeo}

Recall that for a vector $\vct \in \oo^\ast$ we let $\w(\vct) \eqdef \sum_{i \in [\size{\cI}]}\vct_i$, and given a set of indexes $\cI \subseteq [\size{\vct}]$, we let $\vct_{\cI} = (\vct_{i_1},\ldots,\vct_{i_{\size{\cI}}})$ where $i_1,\ldots,i_{\size{\cI}}$ are the ordered elements of $\cI$. In addition, recall that for $n\in \N$, $\ell \in [n]$, and an integer $p\in [-n,n]$, we define the hypergeometric probability distribution $\Hyp{n,p,\ell}$ by $\Hyp{n,p,\ell}(k) \eqdef \ppr{\cI}{\w(\vct_\cI) = k}$, where $\cI$ is an $\ell$-size set uniformly chosen from $[n]$ and $\vct \in \oo^n$ with $w(\vct)= p$. The following proposition gives an estimation for the hypergeometric probability $\Hyp{2n, p, n}(t)$ using the binomial coefficient's estimation done in \cref{app:missinProofs:binomCoeffEstimation}.

\remove{shows that the value of $\Hyp{2n, p, n}(t)$ is very close to the value of $\Beroo{\frac{n}{2},\frac{p}{n}}(t)$.
This fact helps us to prove \cref{prop:hyperProbRelation} by using \cref{prop:binomProbRelation}.}

\begin{proposition}[Restatement of \cref{prop:hyperProbEstimation}]\label{app:missinProofs:hyperProbEstimation}
\propHyperProbEstimation
\end{proposition}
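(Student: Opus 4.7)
The plan is to reduce the hypergeometric probability to a ratio of binomial coefficients and then apply the Stirling-type estimate \cref{app:missinProofs:binomCoeffEstimation} to each of the three binomial coefficients that appear. To set this up, fix any $\vct\in\oo^{2n}$ with $\w(\vct)=p$, so $\vct$ has $a\eqdef n+p/2$ coordinates equal to $+1$ and $b\eqdef n-p/2$ coordinates equal to $-1$ (note $p$ must be even for the support to be non-empty). A uniformly random $n$-subset $\cI\subseteq [2n]$ satisfies $\w(\vct_\cI)=t$ iff it contains exactly $k\eqdef (n+t)/2$ of the $+1$ coordinates and $n-k=(n-t)/2$ of the $-1$ coordinates, which gives
\begin{align*}
\Hyp{2n,p,n}(t) \;=\; \frac{\binom{a}{k}\binom{b}{n-k}}{\binom{2n}{n}}\;=\;\frac{\bigl(\binom{a}{k}2^{-a}\bigr)\bigl(\binom{b}{n-k}2^{-b}\bigr)}{\binom{2n}{n}2^{-2n}}.
\end{align*}

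Next I would set $s\eqdef t-p/2$ and observe that $2k-a=t-p/2=s$ and $2(n-k)-b=-s$, so each of the three binomial coefficients on the right is of the form $\binom{N}{(N+S)/2}2^{-N}$ with $(N,S)\in\{(a,s),(b,-s),(2n,0)\}$. For each such pair, the bounds $\size{p},\size{t}\leq n^{3/5}$ give $\size{S}\leq 2n^{3/5}\leq N^{3/5}$ (possibly after adjusting $\xi$ for small $n$), so \cref{app:missinProofs:binomCoeffEstimation} applies and yields
\begin{align*}
\binom{a}{k}2^{-a}&\in(1\pm\error_a)\sqrt{\tfrac{2}{\pi a}}\,e^{-s^2/(2a)},\\
\binom{b}{n-k}2^{-b}&\in(1\pm\error_b)\sqrt{\tfrac{2}{\pi b}}\,e^{-s^2/(2b)},\\
\binom{2n}{n}2^{-2n}&\in(1\pm\error_0)\sqrt{\tfrac{1}{\pi n}},
\end{align*}
with $\error_a,\error_b=O(\size{s}^3/n^2+1/n)$ and $\error_0=O(1/n)$. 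Taking the ratio, the three multiplicative error factors combine into a single factor of the form $1\pm O(\size{s}^3/n^2+1/n)$, which (using $\size{s}\leq\size{t}+\size{p}/2$) is absorbed into $1\pm\xi'(n+\size{p}^3+\size{t}^3)/n^2$.

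The remaining step is to simplify the main term $\frac{2}{\pi\sqrt{ab}}\cdot\sqrt{\pi n}\cdot e^{-s^2(a+b)/(2ab)}=\frac{2\sqrt{n}}{\sqrt{\pi(n^2-p^2/4)}}\,e^{-s^2 n/(n^2-p^2/4)}$ into the target form $\frac{2}{\sqrt{\pi n}}\,e^{-s^2/n}$ with controlled relative error. Writing $n^2-p^2/4=n^2(1-p^2/(4n^2))$ and using the standing bound $p^2/n^2\leq n^{-4/5}$, the prefactor equals $\frac{2}{\sqrt{\pi n}}\cdot(1\pm O(p^2/n^2))$, and for the exponent I use
\begin{align*}
\frac{s^2 n}{n^2-p^2/4}-\frac{s^2}{n}\;=\;\frac{s^2\,p^2/4}{n(n^2-p^2/4)}\;=\;O\!\left(\tfrac{s^2 p^2}{n^3}\right),
\end{align*}
which is $o(1)$ and can be converted into a multiplicative error of the same order via $e^x\in 1\pm 2\size{x}$ for small $\size{x}$. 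The terms $s^2 p^2/n^3$ and $p^2/n^2$ are both dominated (up to constants) by $\size{p}^3/n^2+\size{t}^3/n^2+1/n$, so they can be folded into the final error $\xi(n+\size{p}^3+\size{t}^3)/n^2$. The main obstacle is the bookkeeping of this last step: matching the concrete polynomial form of $\error$ requires that every relative error incurred (from Stirling, from expanding $\sqrt{1-p^2/(4n^2)}$, and from expanding the exponent) be bounded by that single expression, and choosing a uniform $\xi$ that works across the full range of $\size{p},\size{t}\leq n^{3/5}$ including small~$n$.
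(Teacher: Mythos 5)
Your plan is correct and is essentially the paper's own proof: the same decomposition $\Hyp{2n,p,n}(t)=\binom{n+p/2}{(n+t)/2}\binom{n-p/2}{(n-t)/2}/\binom{2n}{n}$, the same application of \cref{app:missinProofs:binomCoeffEstimation} to all three coefficients, and the same final simplification of the prefactor $1/\sqrt{1-p^2/(4n^2)}$ and of the exponent correction via $e^x\in 1\pm 2\size{x}$, with the errors folded into $\xi\cdot(n+\size{p}^3+\size{t}^3)/n^2$. One small caveat (shared with, and glossed over by, the paper): for extreme $t,p$ of opposite signs one can have $\size{t-p/2}$ up to $\tfrac32 n^{3/5}>(n\pm p/2)^{3/5}$, so the hypothesis of \cref{app:missinProofs:binomCoeffEstimation} is not literally met and one should note that its proof extends to this slightly larger range (with a larger constant), rather than relying on the inequality $2n^{3/5}\le N^{3/5}$, which fails for large $n$.
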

\begin{proof}
Let $\xconst_1$ be the constant from \cref{app:missinProofs:binomCoeffEstimation} and let $\omega \eqdef \frac{p}{2}$.
In the following we focus on $n \geq 1000(1 + \xconst_1^2)$, smaller $n$'s are handled by setting the value of $\xconst$ to be large enough on these values.
Note that for any vector $\vct \in\oo^{2n}$ with $\w(\vct) = p$, the number of ones in $\vct$ is $n+\omega$. It follows that
\begin{align}\label{app:missinProofs:hyperProbEstimation:1}
\Hyp{2n, p, n}(t)
&= \frac{\binom{n+\omega}{\frac{n+t}{2}} \cdot
\binom{n-\omega}{\frac{n-t}{2}}}{\binom{2n}{n}}\\
&= \frac{\binom{n+\omega}{\frac{n+\omega}{2} + \frac{t-\omega}{2}} \cdot
\binom{n-\omega}{\frac{n-\omega}{2} - \frac{t-\omega}{2}}}
{\binom{2n}{n}}\nonumber\\
&\in \frac{\sqrt{\frac{2}{\pi}}(1 \pm \xconst_1 \cdot (\frac{1}{n} + \frac{\abs{t-\omega}^3}{n^2}))\frac{1}{\sqrt{n+\omega}}e^{-\frac{(t-\omega)^2}{2(n+\omega)}}
\cdot \sqrt{\frac{2}{\pi}} \cdot (1 \pm \xconst_1 \cdot (\frac{1}{n} + \frac{\abs{t-\omega}^3}{n^2}))\frac{1}{\sqrt{n-\omega}}e^{-\frac{(t-\omega)^2}{2(n-\omega)}}}
{\sqrt{\frac{2}{\pi}} \cdot (1 \pm \xconst_1 \cdot \frac{1}{n})\frac{1}{\sqrt{2n}}}\nonumber\\
&\subseteq \sqrt{\frac{2}{\pi}} \cdot (1 \pm \xconst_2 \cdot (\frac{1}{n} + \frac{\abs{t-\omega}^3}{n^2}))\cdot
A(n,t,\omega),\nonumber
\end{align}
where the third transition holds by \cref{app:missinProofs:binomCoeffEstimation}, $\xconst_2 \eqdef 8\cdot (\xconst_1 + \xconst_1^2)$ and $A(n,t,\omega) \eqdef \sqrt{2n}\cdot \frac{1}{\sqrt{n+\omega}}e^{-\frac{(t-\omega)^2}{2(n+\omega)}}
\cdot \frac{1}{\sqrt{n-\omega}}e^{-\frac{(t-\omega)^2}{2(n-\omega)}}$. Compute
\begin{align}\label{app:missinProofs:hyperProbEstimation:2}
A(n,t,\omega)
&= \sqrt{\frac{2}{n}} \cdot
\frac{n}{\sqrt{n+\omega}\cdot \sqrt{n-\omega}} \cdot e^{-\frac{(t-\omega)^2}{2(n+\omega)}} \cdot e^{-\frac{(t-\omega)^2}{2(n-\omega)}}\\
&= \sqrt{\frac{2}{n}} \cdot \frac{1}{\sqrt{1-\frac{\omega^2}{n^2}}} \cdot e^{-\frac{(t-\omega)^2}{n}} \cdot e^{-(t-\omega)^2(\frac{1}{2(n+\omega)} + \frac{1}{2(n-\omega)} - \frac{1}{n})}\nonumber\\
&\in \sqrt{\frac{2}{n}} \cdot \left(1 \pm 2\cdot \frac{\omega^2}{n^2}\right) \cdot e^{-\frac{(t-\omega)^2}{n}} \cdot e^{-\frac{(t-\omega)^2\omega^2}{n(n^2-\omega^2)}}\nonumber\\
&\subseteq \sqrt{\frac{2}{n}}\cdot e^{-\frac{(t-\omega)^2}{n}} \cdot \left(1 \pm 2\cdot \frac{\omega^2}{n^2}\right) \cdot \left(1 \pm 2\cdot\frac{(t-\omega)^2\omega^2}{n(n^2-\omega^2)}\right)\nonumber\\
&\subseteq \sqrt{\frac{2}{n}} \cdot e^{-\frac{(t-\omega)^2}{n}} \cdot
\left(1 \pm  4\cdot (\frac{(t-\omega)^2\omega^2}{n^3}+\frac{\omega^2}{n^2})\right), \nonumber
\end{align}
where the third transition holds since $\frac{1}{\sqrt{1-x}} \in 1 \pm 2x$ for $x \in [0,\frac{1}{4}]$, and the fourth one holds since
$e^x \in 1\pm 2x$ for $\abs{x} < 1$.

We conclude from \cref{app:missinProofs:hyperProbEstimation:1,app:missinProofs:hyperProbEstimation:2} that
\begin{align*}
\Hyp{2n, p, n}(t)
&\in \sqrt{\frac{2}{\pi}}\cdot (1 \pm \xconst_3 \cdot(\frac{n + \abs{\omega}^3 + \abs{t}^3}{n^2})) \cdot \sqrt{\frac{2}{n}} \cdot e^{-\frac{(t-\omega)^2}{n}}\\
&\in (1 \pm \xconst \cdot(\frac{n + \abs{p}^3 + \abs{t}^3}{n^2})) \cdot \frac{2}{\sqrt{\pi\cdot n}} \cdot e^{-\frac{(t-\frac{p}2)^2}{n}},
\end{align*}
where $\xconst_3 \eqdef 16\cdot(1+\xconst_2)$ and $\xconst \eqdef 8\cdot \xconst_3$.
\end{proof}

Using the above estimation for the hypergeometric probability, the following proposition estimates the relation between two hypergeometric probabilities.
\begin{proposition}[Restatement of \cref{prop:hyperProbRelation}]\label{app:missinProofs:hyperProbRelation}
\propHyperProbRelation
\end{proposition}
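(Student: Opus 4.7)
The plan is to mirror, line for line, the proof strategy used for the binomial counterpart in \cref{app:missinProofs:binomProbRelation}, swapping in the hypergeometric probability estimation \cref{app:missinProofs:hyperProbEstimation} in place of the binomial one. First I would apply \cref{app:missinProofs:hyperProbEstimation} to each of the two hypergeometric probabilities appearing in the ratio, writing each as $\frac{2}{\sqrt{\pi n}} \exp(-(t-x-p/2)^2/n)$ times a multiplicative error factor of the form $1 \pm \xi \cdot (n + \size{p}^3 + \size{t-x}^3)/n^2$, and likewise with $x'$ in place of $x$. The support hypothesis $t-x, t-x' \in \Supp(\Hyp{2n,p,n})$ is precisely what licenses using the estimation at both points.

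Next I would plug the size hypotheses $\size{p}, \size{t}, \size{x}, \size{x'} \leq \const \sqrt{n \log n}$ into the error terms. All the cubic quantities $\size{p}^3$, $\size{t-x}^3$, $\size{t-x'}^3$ are bounded by a universal constant times $\const^3 (n\log n)^{3/2}$, so each multiplicative factor lies in $1 \pm \varphi_0(\const) \cdot \log^{1.5}n / \sqrt{n}$ for a suitable universal function $\varphi_0$. Dividing, the Gaussian prefactors $\frac{2}{\sqrt{\pi n}}$ cancel identically, and the ratio of two multiplicative errors is absorbed into $1 \pm 4\varphi_0(\const) \cdot \log^{1.5}n/\sqrt{n}$ once $n$ is large enough that each individual error is below $1/2$ (via the elementary bound $\frac{1\pm \eta}{1 \pm \eta} \subseteq 1 \pm 4\eta$). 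Small $n$ is absorbed into $\varphi$ by the standard device used throughout the paper (inflating $\varphi$ on any finite range).

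It then remains to combine the exponentials. Expanding the squares around the anchor $t - p/2$ gives
\begin{align*}
-\tfrac{(t-x-p/2)^2}{n} + \tfrac{(t-x'-p/2)^2}{n} = \tfrac{-2(t-p/2)x + x^2 + 2(t-p/2) x' - x'^2}{n},
\end{align*}
which matches the target exponent verbatim. Setting $\varphi(\const) = 4\varphi_0(\const)$ (or its constant-inflated version for small $n$) completes the argument.

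I do not expect any genuine obstacle: once \cref{app:missinProofs:hyperProbEstimation} is in hand, the derivation is essentially a two-line manipulation. The only bookkeeping care needed is to track how the universal constant $\xi$ from the estimation lemma combines with $\const$ to yield $\varphi(\const)$, and to verify that the error factor is genuinely bounded by $1/2$ for $n$ past some $\const$-dependent threshold so that the $(1\pm\cdot)/(1\pm\cdot)$ step is valid; this is completely analogous to the binomial case.
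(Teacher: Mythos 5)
Your proposal is correct and follows essentially the same route as the paper's proof: apply \cref{app:missinProofs:hyperProbEstimation} to numerator and denominator, bound the error factors by $O_\const(\log^{1.5}n/\sqrt{n})$ using the size hypotheses, combine the two $1\pm\cdot$ factors via $\frac{1\pm y}{1\pm y}\subseteq 1\pm 4y$, and expand the exponent, with small $n$ absorbed by inflating $\varphi$. The only bookkeeping point worth making explicit is that the $\const$-dependent threshold on $n$ must also guarantee $2\const\sqrt{n\log n}\leq n^{3/5}$ (not just that the errors are below $\tfrac12$), since the magnitude hypotheses of \cref{app:missinProofs:hyperProbEstimation}, not merely the support condition, must hold at $t-x$ and $t-x'$; the paper enforces exactly this in its choice of $\vartheta(\const)$.
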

\begin{proof}
Let $\xconst$ be the constant from  \cref{app:missinProofs:hyperProbEstimation}. There exists a function $\vartheta \colon \R^+ \mapsto \N$ such that $n^{\frac35} > 2\const\cdot \sqrt{n\log{n}}$ and $\xconst\cdot(10\const^3 + 1)\cdot \frac{\log^{1.5}n}{\sqrt{n}} < \frac12$  for every $n \geq \vartheta(\const)$. In the following we focus on $n \geq \vartheta(\const)$, where smaller $n$'s are handled by setting the value of $\varphi(\const)$ to be large enough on these values. Let $\varphi(\const) \eqdef 4\cdot \xconst \cdot (10\const^3 + 1)$. It follows that

\begin{align*}
\frac{\Hyp{2n, p, n}(t-x')}{\Hyp{2n, p, n}(t-x)}
&\in \frac{\left(1 \pm   \xconst \cdot  \frac{n+\abs{p}^3+\abs{t-x'}^3}{n^2}\right)\cdot \frac{2}{\sqrt{\pi\cdot n}} \cdot e^{-\frac{(t - \frac{p}{2} - x')^2}{n}}}{\left(1 \pm   \xconst \cdot  \frac{n+\abs{p}^3+\abs{t-x'}^3}{n^2}\right)\cdot \frac{2}{\sqrt{\pi\cdot n}}\cdot e^{-\frac{(t - \frac{p}{2} - x)^2}{n}}}\\
&\subseteq \frac{\left(1 \pm   \xconst \cdot  (10\const^3 + 1)\cdot \frac{\log^{1.5}n}{\sqrt{n}}\right)\cdot e^{-\frac{(t - \frac{p}{2} - x')^2}{n}}}{\left(1 \pm   \xconst \cdot  (10\const^3 + 1)\cdot \frac{\log^{1.5}n}{\sqrt{n}}\right)\cdot e^{-\frac{(t - \frac{p}{2} - x)^2}{n}}}\nonumber\\
&\subseteq (1 \pm \varphi(\const) \cdot  \frac{\log^{1.5}n}{\sqrt{n}}) \cdot \exp\left(\frac{(t - \frac{p}{2} - x)^2}{n} - \frac{(t - \frac{p}{2} - x')^2}{n}\right)\nonumber\\
&= (1 \pm \varphi(\const) \cdot  \frac{\log^{1.5}n}{\sqrt{n}}) \cdot \exp\left(\frac{- 2\cdot (t - \frac{p}{2})\cdot x + x^2 + 2\cdot (t - \frac{p}{2})\cdot x' - x'^2}{n}\right),\nonumber
\end{align*}
where the first transition holds by \cref{app:missinProofs:hyperProbEstimation}, the second one holds by the bounds on $\size{t}$, $\size{x}$, $\size{x'}$ and $\size{p}$, and the third one holds since $\frac{1\pm y}{1\pm y} \subseteq 1 \pm 4y$ for every $y \in [0,\frac12]$.
\end{proof}

\end{document}